\documentclass[11pt]{amsart}
\usepackage[
  left=2.5cm,
  right=2.5cm,
  top=3cm,
  bottom=3cm
]{geometry}
\usepackage[utf8]{inputenc}
\usepackage[T1]{fontenc}
\usepackage[british]{babel}
\usepackage[numbers,sort&compress]{natbib}

\usepackage{amsmath,amssymb,amsfonts,amsthm}

\def\smallsection#1{\smallskip\noindent\textbf{#1}.}
\usepackage{mathtools}
\usepackage{mathrsfs}
\usepackage{bbm}
\usepackage{dsfont}
\usepackage{stmaryrd}
\usepackage{bm}
\usepackage{braket}

\usepackage{graphicx}
\usepackage{pgf,tikz}
\usetikzlibrary{positioning,fit,calc,arrows.meta}

\usepackage{paralist}
\usepackage{comment}
\usepackage{soul}
\usepackage{xcolor}
\usepackage{framed}

\usepackage{natbib}

\usepackage{hyperref}
\usepackage{cleveref}

\definecolor{navyblue}{rgb}{0.0,0.0,0.5}
\hypersetup{
  colorlinks=true,
  linkcolor=black,
  citecolor=blue,
  urlcolor=navyblue,
  anchorcolor=blue
}

\newcommand{\nn}{\nonumber}
\newcommand{\1}{\mathbbm{1}}
\newcommand{\Tr}{\operatorname{Tr}}

\newcommand{\ceil}[1]{\left\lceil {#1} \right\rceil}

\newcommand{\kb}[1]{|#1\rangle\!\langle#1|}

\newtheorem{conditionA}{Condition}
\renewcommand{\theconditionA}{A}

\newtheorem{conditionB}{Condition}

\newcommand{\R}{\mathbb{R}}
\newcommand{\C}{\mathbb{C}}
\newcommand{\N}{\mathbb{N}}

\newcommand{\cH}{\mathcal{H}}

\newcommand{\cL}{\mathcal{L}}

\newcommand{\E}{\mathbb{E}}

\newcommand{\spec}{\operatorname{Sp}}

\newcommand{\eps}{\varepsilon}
\newcommand{\NA}{N_{\mathcal{A}}}
\renewcommand{\epsilon}{\varepsilon}
\renewcommand{\phi}{\varphi}

\newcommand{\cA}{\mathcal{A}}
\newcommand{\hatfM}{\widehat{f}_{\!\scalebox{0.6}{$\mathscr{M}$}}
}
\newcommand{\hatfMdelta}{\widehat{f}_{\!\scalebox{0.6}{$\mathscr{M}$}_\delta}}
\newcommand{\fM}{f_{\!\scalebox{0.6}{$\mathscr{M}$}}
}
\newcommand{\fMdelta}{f_{\!\scalebox{0.6}{$\mathscr{M}$}_\delta}}
\newcommand{\fMdeltakappa}{f_{\!\scalebox{0.6}{$\mathscr{M}$}_{\delta,\kappa}}}

\newcommand{\Ntot}{N_{\operatorname{tot}}}
\newcommand{\bin}[1]{}

\theoremstyle{plain}
\newtheorem{theorem}{Theorem}[section]
\newtheorem{lemma}[theorem]{Lemma}
\newtheorem{proposition}[theorem]{Proposition}
\newtheorem{corollary}[theorem]{Corollary}

\newtheorem{remark}[theorem]{Remark}

\theoremstyle{definition}
\newtheorem{definition}[theorem]{Definition}

\newtheorem{condition}[theorem]{Condition}

\theoremstyle{remark}

\newtheorem{claim}{Claim}[theorem]

\numberwithin{equation}{section}

\begin{document}

\title[Infinite-dimensional Quantum Gibbs sampling]{Quantum Gibbs Sampling in infinite dimensions:\\
\tiny{Generation, mixing times and circuit implementation}}

\author{Simon Becker}
\address{Bocconi University, Milan, Italy}
\email{simon.becker@unibocconi.it}

\author{Cambyse Rouz\'e}
\address{Inria, T\'el\'ecom Paris -- LTCI,  
Institut Polytechnique de Paris,  
91120 Palaiseau, France}
\email{cambyse.rouze@inria.fr}

\author{Robert Salzmann}
\address{RWTH Aachen, Department of Physics,  
Otto-Blumenthal-Strasse 20,  
52074 Aachen, Germany\\
Inria, ENS de Lyon, Université Claude Bernard Lyon 1, LIP, 69342, Lyon cedex 07, France}
\email{robert.salzmann@rwth-aachen.de}

\maketitle
\vspace{-0.3cm}

\begin{abstract}
\noindent 
We develop a rigorous and implementable framework for Gibbs sampling of infinite-dimensional quantum systems governed by unbounded Hamiltonians. Extending dissipative Gibbs samplers beyond finite dimensions raises fundamental obstacles, including ill-defined generators, the absence of spectral gaps on natural Banach spaces, and tensions between implementability and convergence guarantees.
We overcome these issues by constructing KMS-symmetric quantum Markov semigroups on separable Hilbert spaces that are both well-posed and efficiently implementable on qubit hardware. Our generation theory is based on the abstract framework of Dirichlet forms, adapted here to the case of algebras of bounded operators over separable Hilbert spaces.
Leveraging the spectral properties of our self-adjoint generators, we establish quantitative convergence results in trace distance, including regimes of fast thermalization.  In contrast, we also identify Hamiltonians for which a naive choice of generators guaranteeing implementability generally comes at the cost of losing convergence of the associated evolutions, thereby establishing a strong trade-off between implementability and convergence.
 Our framework applies to a wide class of models—including Schrödinger operators, Gaussian systems, and Bose–Hubbard Hamiltonians—and provides a unified approach linking rigorous infinite-dimensional analysis with algorithmic Gibbs state preparation.

\end{abstract}
\vspace{-0.3cm}
\tableofcontents

\section{Introduction}

Davies semigroups \cite{Davies1974} are among the most prominently used models for the thermalization of quantum systems into their Gibbs states. Formally, for a system at inverse temperature $\beta$ with an associated finite-dimensional Hilbert space $\mathcal{H}$ and Hamiltonian $H$, and a finite set of so-called \emph{bare jump} operators $\{A^\alpha\}_{\alpha\in\mathcal{A}}$ on $\mathcal{H}$ that is closed under taking adjoints, i.e.~$\{A^\alpha\}_{\alpha\in\mathcal{A}}=\{(A^\alpha)^\dagger\}_{\alpha\in\mathcal{A}}$ with trivial commutant, the generator in the Schr\"{o}dinger picture is defined as
\begin{align}\label{eq:Davies}
\mathcal{L}_{\operatorname{D}}(\rho)
&\!= \!\!\!\!\!\sum_{\substack{\alpha\in\mathcal{A}\\\omega\in B(H)}} \!\!\!\!\Upsilon(\omega)
   \Big( A^\alpha_\omega \rho (A^\alpha_\omega)^\dagger
   - \tfrac{1}{2}\{(A^\alpha_\omega)^\dagger A^\alpha_\omega, \rho\} \Big).
\end{align}
Here $B(H)=\operatorname{Sp}(H)-\operatorname{Sp}(H)$ denotes the set of Bohr frequencies of $H$. Given the spectral decomposition $H=\sum_{E\in\operatorname{Sp}(H)}EP_E$, the jump operators take the form
\begin{align*}
A^\alpha_\omega
&:= \sum_{\substack{E,E'\in\operatorname{Sp}(H)\\ E-E'=\omega}}
    P_{E} \, A^\alpha \, P_{E'}.
\end{align*}
The function $\Upsilon:\mathbb{R}\to\mathbb{R}_+$ encodes the rate at which each jump occurs, and in order for the evolution to fix the Gibbs state $\sigma_\beta := \frac{e^{-\beta H}}{\Tr(e^{-\beta H})}$, it satisfies the KMS symmetry condition $\Upsilon(-\omega) = e^{\beta \omega}\,\Upsilon(\omega)$. A well-known difficulty with Davies generators, already present in finite dimensions, is that the jump operators depend on the generally unknown spectral decomposition of $H$. This dependence renders both their circuit implementation and general proofs of convergence for the resulting dynamics particularly delicate. Recently, several alternatives to Davies generators have been proposed whose definitions avoid the spectral decomposition of $H$, making them appealing for quantum simulation tasks \cite{wocjan2023szegedy,rall2023thermal,chen2023quantum}. Previous methods generally achieved only approximate preparation of the target Gibbs state, and only under restrictive assumptions. By contrast, \cite{chen2023efficient} provided the first exact sampler based on Lindbladian dynamics. This result was later complemented by simpler Lindbladian constructions \cite{ding2025efficient,gilyen2024quantum} that require only finitely many jumps.
In \cite{ding2025efficient}, the jump operators are defined as matrix-valued integrals of the form
\begin{align}
L^\alpha
&:= \int e^{itH} A^\alpha e^{-itH}\, {f}(t)\, dt = \sum_{E,E'\in\operatorname{Sp}(H)} \,\widehat{f}(E{-}E')\, P_E A^\alpha P_{E'} = \sum_{\nu\in B(H)}\widehat{f}(\nu)\, A^\alpha_\nu,\label{def:Lalpha}
\end{align}
with a smooth and sufficiently fast decaying \textit{filter function} $f\in L^1(\mathbb{R})$. Note that we take the slightly unconventional definition $\widehat{f}(\nu) =\int_{\mathbb R} f(t) e^{i\nu t} \ dt$ from \cite{ding2025efficient} above. The integral formulation \eqref{def:Lalpha} permits implementation via oracle access to block encodings of the Hamiltonian evolution and bare jumps $A^\alpha$, after time-discretization. The associated Lindblad generator takes the GKLS form

\vspace{- 0.1 in}

\begin{align}\label{eq:GKLS}
\!\!\!\mathcal{L}_{\widehat{f},H}(\rho)
\!&=\! \!-i[B,\rho]
 + \!\!\sum_{\alpha\in\mathcal{A}}\!\! \Big( L^\alpha \rho(L^\alpha)^\dagger
 \!-\! \tfrac{1}{2}\{(L^\alpha)^\dagger L^\alpha, \rho\} \Big),
\end{align}
where the Hermitian operator $B$ is carefully chosen so that the Gibbs state remains a fixed point of the evolution, $\mathcal{L}_{\widehat{f},H}(\sigma_\beta)=0$.

It was recently shown that the convergence properties of the evolution generated by~$\mathcal{L}_{\widehat{f},H}$ are strictly weaker than those of~$\mathcal{L}_{\mathrm D}$ \cite{slezak2026polynomial}, revealing a fundamental tension between \emph{implementability} and \emph{efficiency}. This tension becomes even more pronounced for very large systems, for which the choice of the filter function $f$ becomes key \cite{rouze2024efficient}. This leads to the central question addressed in this paper:
\begin{center}
\textit{Is there a viable way to reconcile \textbf{efficiency} and \textbf{implementability} \\for the dissipative preparation of Gibbs states of infinite-dimensional systems?}
\end{center}

\noindent
To tackle this question, we develop a rigorous framework for Gibbs samplers of infinite-dimensional quantum systems that simultaneously ensures
\begin{itemize}
\item well-posedness of the dynamics, as in infinite dimensions the Lindblad generator may fail to generate a trace-preserving semigroup,
\item spectral convergence guarantees, as convergence properties change dramatically due to the absence of a spectral gap on the trace class operators in this setting, and
\item efficient implementation on finite-dimensional, qubit-base hardware.
\end{itemize}

\smallskip

 \subsection{Gibbs-preserving Markovian dynamics}\label{sec:GibbsMarkovintro}
 
\noindent  It is well-known that unbounded operators that formally satisfy a GKLS-type equation on a natural domain may fail to generate legitimate quantum Markovian dynamics; for instance, the two-photon pure birth process defined with a vanishing Hamiltonian and jump operator $L=(a^\dagger)^2$, where $a^\dagger$ denotes the creation operator over $L^2(\mathbb{R})$, does not preserve the trace \cite[Example~3.3]{chebotarev2000sufficient}. Related pathologies were subsequently identified by Fagnola et al., who proposed a resolution to the problem by imposing additional structural and domain conditions on the generators \cite{fagnola1999quantum,fagnola2000characterization,fagnola2001generators}.
Other approaches to the generation problem include the seminal works of Davies \cite{Davies1977,davies1976quantum,davies1977generator} and Holevo \cite{holevo1993generators,holevo1996quantum}, who established abstract sufficient conditions for unbounded generators of QMSs, albeit those are often difficult to verify in concrete many-body models. More recently, simpler and more explicit sufficient conditions for generation were obtained in \cite{gondolf2024energy} for classes of generators whose jump operators are polynomials in creation and annihilation operators. While well suited to a broad class of continuous-variable models, these results do not apply to the type of generators \eqref{eq:GKLS} considered in the present work.

Here, instead, we make use of the abstract theory of KMS-symmetric quantum Markov semigroups developed in \cite{Albeverio1977,Davies1992,Goldstein1995} in order to derive our generation theorem. Although the generator we consider is formally identical to that introduced in \cite{ding2025efficient}, additional compatibility conditions relating the Hamiltonian $H$, the jump operators, and the filter function are required in order to establish the well-posedness of the master equation: 
For the moment, we pick an arbitrary filter function $\widehat{f}:\mathbb{R}\to\mathbb{C}$ which simply needs to satisfy the symmetry condition
\begin{align}\label{eq:symmetryintro}
\overline{\widehat{f}(\nu)}=\widehat{f}(-\nu)\,e^{-\beta\nu/2}
\end{align}
and boundedness assumption 
\begin{align}
\sup_\nu\,|\widehat{f}(\nu)|,\   \sup_{\nu}e^{\frac{\beta\nu}{2}}|\widehat{f}(\nu)| <\infty.
\end{align}
In Proposition \ref{propDirichlettoSchro}, we show that these choices lead to a well-defined Lindbladian $\mathcal{L}$ generating a semigroup of quantum channels over a separable Hilbert space $\mathcal{H}$ formally defined as on the right-hand side of \Cref{eq:GKLS},
and with a coherent term $B$ satisfying the following:
for any $E,E'\in \operatorname{Sp}(H)$ with corresponding eigenstates $|E\rangle$, $|E'\rangle$,
\begin{equation}
\begin{split}\label{eq:Bintro} \langle E'&|B|E\rangle=\frac{i}{2}\operatorname{tanh}\Big(\tfrac{\beta(E'-E)}{4}\Big)\sum_{\alpha\in\mathcal{A}}\sum_{\substack{\nu_1,\nu_2\in B(H)\\\nu_2-\nu_1=E'-E}}\overline{\widehat{f}(\nu_1)}\widehat{f}(\nu_2)\,\langle E'|(A^\alpha)^\dagger P_{E+\nu_2}A^\alpha |E\rangle. 
\end{split}
\end{equation}
\smallskip

\noindent At this stage, the generator $\mathcal{L}_{\widehat{f},H}$ still seems to depend on the spectral decomposition of $H$. To circumvent this, we need representations of the jumps $L^\alpha$ and of the coherent term $B$ that are independent of the spectral decomposition of $H$. For this, we first 
 assume that $\widehat{f}$ is Schwartz, so that it has a smooth and rapidly decaying Fourier transform, giving the jumps a potentially discretizable integral formula as in the first identity of \Cref{def:Lalpha}. Next, we find a spectrally agnostic representation of the coherent part $B$. In finite dimensions, \cite{ding2025efficient} considered the drift part $G\rho+\rho \,G^\dagger$, with 
\begin{align*}
G&:=-iB-\frac{1}{2}\sum_\alpha (L^\alpha)^\dagger L^\alpha  =\sum_{\alpha\in\mathcal{A}}\sum_{\nu\in B(H)}g(\nu)\big((L^\alpha)^\dagger L^\alpha\big)_\nu,
\end{align*}
with $\widehat{g}(\nu):=-\frac{1}{2}\big(\operatorname{tanh}(-\beta \nu/4)+1\big) $. As such, since the function $\widehat{g}$ does not possess a smooth Fourier transform, $G$ does not seem to admit an integral representation.
In \cite{ding2025efficient}, the issue was resolved by introducing a compactly supported cut-off function $\kappa(\nu)=1$ for $\nu\in [-2\|H\|,2\|H\|]$, and denoting 
$
\widehat{g}_\kappa(\nu):=\widehat{g}(\nu)\,\kappa(\nu).
$
This way, $G$ coincides with the operator defined by replacing $\widehat{g}$ with $\widehat{g}_\kappa$. Moreover, since $\widehat{g}_\kappa$ admits a Fourier transform $g_\kappa$, we can write
\begin{align*}
G=\sum_{\alpha\in \mathcal{A}}\int_{\mathbb{R}} g_\kappa(t) \,e^{iHt}\big((L^\alpha)^\dagger L^\alpha\big) e^{-itH}\,dt.
\end{align*}
When $H$ is unbounded, a sharp cutoff function $\kappa$ regularizing the function $\widehat{g}_\kappa$, without modifying the generator $G$, no longer exists. Moreover, imposing a cut-off would eventually lead to a loss of the Gibbs-preservation property of the evolution generated by $\mathcal{L}_{\widehat{f},H}$. 

\smallskip
To resolve these conflicting constraints, we consider the following Gaussian weighted version of the generator $\mathcal{L}_{\widehat{f},H}$, which has already been considered in finite dimensions in \cite{gilyen2024quantum,slezak2026polynomial}: for a given width $\sigma_E\ge 0$,
\begin{align}
&\mathcal{L}_{\sigma_E,\widehat{f},H}(\rho)\!:=\!\sum_\alpha\sum_{\nu_1,\nu_2}\!e^{-\frac{(\nu_1-\nu_2)^2}{8\sigma_E^2}}\,\overline{\widehat{f}(\nu_1)}\widehat{f}(\nu_2) \Big(\!\!-i[B^\alpha_{\nu_1,\nu_2},\rho] +A^\alpha_{\nu_2}\rho (A^\alpha_{\nu_1})^\dagger\!-\!\frac{1}{2}\{(A^\alpha_{\nu_1})^\dagger A^\alpha_{\nu_2},\rho\}\Big),\nonumber
\end{align}
where 
\begin{align*}
B^\alpha_{\nu_1\nu_2}
&=\frac{i}{2}\tanh(\beta(\nu_1-\nu_2)/4)\,(A^\alpha_{\nu_1})^\dagger A^\alpha_{\nu_2}.
\end{align*}
This formal definition is shown to generate a semigroup of quantum channels with a unique fixed state $\sigma_{\beta}$ in \Cref{monotonicitygap}. The semigroup is, in fact, KMS-symmetric with respect to $\sigma_\beta$, which means that for any $t$, the Heisenberg-dual, weak$^*$-continuous semigroup generated by $\mathcal{L}_{\smash{\sigma_E,\widehat{f},H}}^\dagger$, is self-adjoint with respect to the so-called KMS scalar product on the bounded operators on $\mathcal{H}$
\begin{align}\label{eq:KMSsymmetry}
\langle \bullet,\bullet \rangle_{\sigma_\beta}:(X,Y)\mapsto \Tr\Big(\sigma_\beta^{\frac{1}{2}}X^\dagger \sigma_\beta^{\frac{1}{2}}Y\Big).
\end{align}
Moreover, $\mathcal{L}_{\smash{\sigma_E,\widehat{f},H}}$ interpolates between the generator $\mathcal{L}_{\smash{\widehat{f},H}}$ at  $\sigma_E=\infty$ and the Davies generator $\mathcal{L}_{\operatorname{D}}$ associated with $H$ at a rate of $\Upsilon=|\widehat{h}|^2$ at $\sigma_E=0$. Most importantly, the added Gaussian envelope helps in regularizing the drift: in Proposition \ref{integralrep}, we show that
\begin{align}\label{eq:LsigmaEintro}
\mathcal{L}_{\sigma_E,\widehat{f},H}(\rho)=G_{\sigma_E}\,\rho+\rho\, G_{\sigma_E}^\dagger+\Phi_{\sigma_E,\widehat{f},H}(\rho),
\end{align}
with 
\begin{align*}
 G_{\sigma_E}:=-\sum_{\alpha\in\mathcal{A}}\int_{-\infty}^\infty g(t)\,e^{-itH}((L^{\alpha})^\dagger L^\alpha)e^{itH}dt,
\end{align*}
with 
$g(t)=\frac{1}{2\pi}\int_{-\infty}^\infty \frac{e^{-\nu^2/8\sigma_E^2}}{1+e^{\beta\nu/2}} e^{-i\nu t}d\nu$, $X^{\alpha}_s:=e^{isH}L^\alpha e^{-isH}$, and a CP map 
\begin{align*}
\Phi_{\sigma_E,\widehat{f},H}(\rho):=\sigma_E\sqrt{\frac{2}{\pi}}\sum_{\alpha}\int_{\mathbb{R}}e^{-2\sigma_E^2 s^2}\, \, X_s^{\alpha} \rho (X_s^\alpha)^\dagger \,ds.
\end{align*}
In summary, whenever the function $f$ is Schwartz, we can construct a spectral-agnostic generator that exactly fixes the Gibbs state of $H$ at the inverse temperature $\beta$. 


\smallskip

 \subsection{Convergence guarantees via spectral analysis}\label{sec.fastconv}

\noindent Next, we argue that the evolution generated by $\mathcal{L}_{\sigma_E,\widehat{f},H}$ converges to its fixed point $\sigma_\beta$.
Given a subset $\mathscr{S}$ of input quantum states, we define the mixing time by
\[
t_{\operatorname{mix}}(\epsilon,\mathscr{S})
\!:=\!\inf\Bigl\{ t \ge 0 \Big| \| \rho_t - \sigma_\beta \|_1 \le \epsilon \,\forall\rho \in \mathscr{S}\! \Bigr\}
\]
where $\rho_t$ denotes the state $e^{t\mathcal{L}_{\sigma_E,\widehat{f},H}}(\rho)$ evolved according to the semigroup generated by $\mathcal{L}_{\sigma_E,\widehat{f},H}$, and $\sigma_\beta$ is its unique stationary state. 
Establishing polynomial-time convergence for Gibbs sampling dynamics is notoriously challenging, as the mixing time depends delicately on both the Hamiltonian $H$ and the inverse temperature $\beta$. In \cite{rouze2024efficient}, it was first proved that high-temperature Gibbs states of geometrically local Hamiltonians with locally finite-dimensional constituents can be prepared in time $t_{\mathrm{mix}}=\mathcal{O}(n)$; this was later improved in \cite{rouze2411optimal} to the optimal bound $t_{\mathrm{mix}}=\mathcal{O}(\log n)$. These mixing results were subsequently extended to several low-temperature regimes, including spin chains \cite{https://doi.org/10.48550/arxiv.2510.08533}, certain CSS codes above critical thermodynamic temperatures \cite{stengele2025modified}, and perturbations of Gaussian fermionic models \cite{vsmid2025rapid,vsmid2025polynomial,tong2025fast}. In contrast, exponential lower bounds below critical temperatures were also obtained through quantum extensions of the bottleneck lemma \cite{gamarnik2024slow}. Nevertheless, all such existing approaches currently rely fundamentally on the boundedness of the generator.

\smallskip
First, we contemplate the possibility of getting uniform convergence of the evolution over the entire set $\mathscr{S}\equiv \mathscr{S}(\mathcal{H})$ of states: given a strongly continuous semigroup over a Banach space $\mathscr{B}$, a necessary condition for it to converge in norm is that the spectrum of its generator $A$ is gapped, meaning that there exists a positive constant $\lambda_0>0$ such that for all $\lambda\in\operatorname{Sp}(A)\backslash \{0\}$ we have $\operatorname{Re}(\lambda)\le -\lambda_0$ \cite[Corollary 4.1.2]{salzmann2024robustness}.
 A natural Banach space on which one would want to study the dynamics generated by $\mathcal{L}_{\sigma_E,\widehat{f},H}$ is the space of trace-class operators over $\mathcal{H}$, and the presence of a gap would thus imply that the dynamics converges to $\sigma_\beta$ uniformly over the set of all input states. Unfortunately, this condition often fails, even in the simplest settings (see Proposition \ref{thm.qOUnogap}). This first observation, in sharp contrast with the finite-dimensional setting, strongly suggests the need to restrict the set $\mathscr{S}$ of input states.

\smallskip

Here again, the imposed KMS-symmetry condition \eqref{eq:KMSsymmetry} can be used to our advantage. Indeed, this condition implies that we can associate with the generator $\mathcal{L}_{\sigma_E,\widehat{f},H}$ a self-adjoint operator $L_{\sigma_E,\widehat{f},H}$ on the Hilbert space $\mathscr{T}_2$ of Hilbert-Schmidt operators on $\mathcal{H}$, via the defining property that for any $x\in \mathscr{T}_2$, 
\begin{align*}
\sigma_\beta^{\frac{1}{4}}e^{tL_{\sigma_E,\widehat{f},H}}(x)\sigma_\beta^{\frac{1}{4}}=e^{t\mathcal{L}_{\sigma_E,\widehat{f},H}}(\sigma_\beta^{\frac{1}{4}}x\sigma_\beta^{\frac{1}{4}}).
\end{align*}
By a standard use of H\"{o}lder's inequality, we conclude that for input states of the form $\rho=\sigma_{\beta}^{\frac{1}{4}}x\sigma_\beta^{\frac{1}{4}}$,
\begin{align*}
\big\|e^{t\mathcal{L}_{\sigma_E,\widehat{f},H
}}(\rho)-\sigma_{\beta}\big\|_1\le \|e^{tL_{\sigma_E,\widehat{f},H}}(x)-\sigma_{\beta}^{\frac{1}{2}}\|_2.
\end{align*}
Moreover, for a gapped self-adjoint semigroup, the above norm decays exponentially in $t$ with a rate $\lambda_2\equiv \operatorname{gap}(L_{\sigma_E,\widehat{f},H})$ given by the largest value for which $L_{\sigma_E,\widehat{f},H}$ satisfies the condition that $\operatorname{Re}(\lambda)\le -\lambda_2$ for all $\lambda\in \operatorname{Sp}(L_{\sigma_E,\widehat{f},H})\backslash \{0\}$, so we obtain the convergence
\begin{align*}
\big\|e^{t\mathcal{L}_{\sigma_E,\widehat{f},H}}(\rho)-\sigma_{\beta}\big\|_1\!\le\! e^{-\lambda_2t}\|x-\sigma_\beta^{\frac{1}{2}}\|_2\le e^{-\lambda_2 t}\|x\|_2,
\end{align*}
where in the last bound we also used that $x-\sqrt{\sigma_\beta}$ is orthogonal to $\sqrt{\sigma_\beta}$ due to the normalization of the state $\rho$. Thus, on the set $\mathscr{S}\equiv \mathscr{S}_2$ of states of the form $\rho={\sigma_\beta}^{1/4}x{\sigma_\beta}^{1/4}$ with $x\in \mathscr{T}_2$, the dynamics converges exponentially fast to $\sigma_\beta$. Interestingly, in \Cref{lem.gapmonoton} we also find that the spectral gap is monotonically decreasing with $\sigma_E$ (see also \cite{slezak2026polynomial}):
\begin{align*}
\operatorname{gap}(L_{\sigma_E,\widehat{f},H})\nearrow\qquad 
\text{ as } \qquad \sigma_E\searrow \,,
\end{align*}
which justifies the intuition that the Davies dynamics converges faster than that of \cite{ding2025efficient}.
In Theorem \ref{thm:fromL2toLp}, we push the analysis and prove that for any $p\in (1,\infty)$, there exists a constant $M_p<\infty$ such that on the set $\mathscr{S}\equiv \mathscr{S}_p$ of states of the form $\rho=\sigma_{\beta}^{1/2\hat{p}}x\sigma_{\beta}^{1/2\hat{p}}$  for some operator $x$ in the Schatten $p$ class $\mathscr{T}_p$ and $\hat{p}$ the H\"{o}lder conjugate of $p$  
\begin{align}\label{eq:Tpconvergence}
\|e^{t\mathcal{L}_{\sigma_E,\widehat{f},H}}(\rho)-\sigma_\beta\|_1\le M_p\, e^{-\lambda_2 t}\|x\|_{p}.
\end{align}
Thus, while we cannot prove uniform convergence for $p=1$, we can get arbitrarily close to it. The constraint that $\|x\|_p<\infty$ coincides with the condition that the Sandwiched Rényi divergence $\widehat{D}_p(\rho\|\sigma_\beta)<\infty$, and \eqref{eq:Tpconvergence} is equivalent to
\begin{align}\label{convgapDp}
\|e^{t\mathcal{L}_{\sigma_E,\widehat{f},H}}(\rho)-\sigma_\beta\|_1\le M_p\, e^{-\lambda_2 t}\,e^{\hat{p}\widehat{D}_p(\rho\|\sigma_\beta)}.
\end{align}
In \Cref{sec:spectralgaps}, we show that the generator $L_{\sigma_E,\widehat{f},H}$ is gapped for various models: first in \Cref{sec:gaussexamples}, we consider Gaussian models with quadratic Hamiltonians. All these results are valid for a Schwartz filter function $f$. In contrast, we show in \Cref{prop.no-gogap} that, beyond quadratic models and perturbations thereof, the gap closes for such $f$. This is generally a consequence of the induced decay of $\widehat{f}$ as $\nu\to-\infty$. Instead, in \Cref{sec:GeneralNumberviaBirthDeath}, we choose the Metropolis-type filter function already considered in \cite{ding2025efficient}
\begin{align}
\label{eq:filterFunction}
    \hatfM(\nu) = \exp\left(-\frac{\sqrt{1+(\beta\nu)^2}+ \beta\nu}{4}\right).
\end{align}

\noindent For this function, we show in Theorem \ref{thm:SpectralGapGeneralh(N)} that in the case of a single-mode bosonic system with the associated total photon number observable $N$, given $H=h(N)$ for some eventually non-decreasing function $h$ with large enough energy differences, the generator $L_{\sigma_E,\hatfM,H}$ remains gapped. 
In the companion paper \cite{BeckerRouzeSalzmannBose}, we derive analogous results for Bose--Hubbard models.

\bin{\subsection{Efficient implementations}

\noindent Our discussion thus far clearly illustrates the tension between the rate at which the sampler generated by
$\mathcal{L}_{\sigma_E,\widehat{f},H}$ converges to its equilibrium Gibbs state and the amount of resources required for its implementation as a quantum circuit: while smooth, rapidly decaying functions are required to justify the integral representation \eqref{def:Lalpha} of the jumps \(L^\alpha\)—with faster decay leading to shorter circuit implementations—such conditions on \(f\) result in gapless generators, a signature of non-convergent dynamics. }

\bin{ the present continuous variable setting, the bare jumps $a_i,a_i^\dagger$ are unbounded, which renders the question of their block encoding non-trivial.  }

\subsection{Efficient implementation}

In order to implement the sampler generated by the unbounded Lindbladian  $\cL_{\sigma_E,\widehat f,H}$ we consider in Section~\ref{sec:FiniteTruncBareJumps} a finite-dimensional approximation scheme which we then use to obtain an efficient circuit implementation on a qubit-based quantum computer in Section~\ref{sec:FullFiniteDimPipeline}. For that we consider for each truncation level $M\in\N$ a finite rank projection  $P_M$ and for each $\alpha\in\cA$ an $(M+1)$-rank projection $\pi^\alpha_M.$ Here, the truncated subspace $\operatorname{im}(P_M)$ serves as the \emph{system register} of the quantum device on which we aim to implement the Gibbs sampler approximately. For instance, in many-body or multi-mode systems, the number of bare jumps $|\mathcal{A}|$ is typically proportional to the number of particles or modes. Accordingly, one usually considers truncations for which the local register space is associated with the image of $\pi^\alpha_M$ and the dimension of the full system register satisfies
$
\log\left( \dim\bigl(\operatorname{im}(P_M)\bigr)\right)
= \mathcal{O}\bigl(|\mathcal{A}|\log(M)\bigr),
$ although, at this stage, we are in principle free to leave the dimension of the system register unspecified.

Using the projections $\pi^\alpha_M$ and $P_M$ we can define finite-dimensional truncations of the bare jumps and the Hamiltonian as 
\begin{align}
\left(A^\alpha\right)^{\le M} := \pi^\alpha_M A^\alpha\pi^\alpha_M\qquad\text{and}\qquad  H_{\le M} := P_M H P_M.
\end{align}
This allows us to define the finite-dimensional Lindblad generator
$\mathcal{L}^{\smash{\leq M}}_{\smash{\sigma_E,\widehat f,H_{\leq M}}}$
by replacing the bare jump operators and the Hamiltonian in the unbounded generator
$\mathcal{L}_{\smash{\sigma_E,\widehat f,H}}$
with $\bigl(A^\alpha\bigr)^{\smash{\leq M}}$ and $H_{\smash{\leq M}}$, respectively. If, in addition, the compatibility condition
$
\bigl(A^\alpha\bigr)^{\smash{\leq M}}\operatorname{im}(P_M)\subseteq \operatorname{im}(P_M)$
is satisfied, then
$\mathcal{L}^{\smash{\leq M}}_{\smash{\sigma_E,\widehat f,H_{\leq M}}}$
generates a quantum Markov semigroup on the finite-dimensional system register.

In Section~\ref{sec:FiniteTruncBareJumps}, we show that this finite-dimensional generator provides a good approximation to the target generator $\mathcal{L}_{\sigma_E,\widehat f,H}$ for large truncation parameter $M$. Since $\mathcal{L}_{\smash{\sigma_E,\widehat f,H}}$ is typically unbounded, the approximation has to be understood pointwise on suitably energy-constrained states. To make this precise, we introduce a self-adjoint, positive semidefinite energy observable $\NA$ and consider states whose expectation values with respect to suitable exponential weights in $\NA$ are finite. We then assume that both the truncated bare jumps and the truncated Hamiltonian approximate their original counterparts well on such inputs, with errors that are exponentially small in $M$ up to polynomial prefactors, see~\eqref{eq:AbstractFiniteTruncBareJumps} and~\eqref{eq:HamiltonianLeakTrunc}. We further require that the Hamiltonian evolution is compatible with the same energy constraint, in the sense that it drives low-energy states into higher-energy sectors only at a controlled exponential rate, c.f.~\eqref{eq:ExpBound}. As discussed in Sections~\ref{sec:TruncateJumps} and \ref{sec:FinitDimGenerator} all of these assumptions are naturally satisfied for multi-mode bosonic systems with bare jumps being the creation and annihilation operators, certain Hamiltonians constructed as bounded degree polynomials in $a_i$ and $a^\dagger_i$ and for the choice $\NA \equiv \sum_{i} a^\dagger_ia_i.$   

Under these assumptions, we show in Theorem~\ref{thm:SchwartzGibbsDynamics} for Schwartz filter functions $\widehat{f}$ and input states $\rho$ satisfying
\begin{align}
\label{eq:rhocIntro}
    \rho \le \mathfrak{c} \,\sigma_\beta
\end{align}
for some $\mathfrak{c}\ge 1$ that for evolution time $t\ge 0$ and accuracy $\eps>0$ we can achieve\footnote{Here, the $\widetilde{\mathcal{O}}$ notation hides constants independent of the displayed parameters and additionally suppresses subdominant $\operatorname{poly}\log\log$ factors.}  \begin{align}\label{eq:MetalIntro}
       &\left\|\left(e^{t\cL_{\sigma_E,\widehat f,H}}-e^{t\cL^{\le M}_{\sigma_E,\widehat f,H_{\le M}}}\right)(\rho)\right\|_1\le \eps,\qquad\text{with}\qquad M = \widetilde{\mathcal{O}} \left(\operatorname{poly}\left(\log\left(\frac{t\,\mathfrak{c}\,E_{\operatorname{Gibbs}}|\cA|}{ \eps}\right)\right)\right).
    \end{align}
Here, $E_{\operatorname{Gibbs}}$ denotes the expectation value of the mentioned exponential energy observable involving $\NA$ with respect to the Gibbs state.

In Section~\ref{sec:FullFiniteDimPipeline}, we then combine this result with the work of \cite{chen2023quantum,gilyen2024quantum,chen2023efficient,ding2025efficient}, in particular \cite[Theorem 18]{ding2025efficient}, which provide efficient circuit implementations of such finite dimensional Lindbladian dynamics by approximating the involved time integrals via linear combinations of unitaries \cite{LiWang23} given oracle access to the Hamiltonian evolution $e^{\smash{-isH_{\le M}}}$  and block encoding of the bare jumps of the truncated bare jumps $\left(A\right)^{\smash{\le M}}$. In particular, provided a state preparation circuit for input state $\rho$ in \eqref{eq:rhocIntro}, we find in Theorem~\ref{thm:CircuitDynamicsSchwartz} that
$e^{\smash{t\cL_{\sigma_E,\widehat f, H}}}(\rho)$ can be prepared on the finite dimensional system register within $\eps$-trace distance with order
\begin{align*}
\widetilde{\mathcal{O}}\left( \,t\,\operatorname{poly}\left(|\mathcal{A}|\,,\, \log\left(\frac{\mathfrak{c}\,E_{\operatorname{Gibbs}}}{\epsilon}\right)\right)\right) \text{ total Hamiltonian simulation time corresponding to }H_{\le M}.
\end{align*}
Therefore, given positivity of spectral gap, $\lambda_2\equiv \operatorname{gap}(L_{\sigma_E,\widehat f, H})>0,$ we show in Corollary~\ref{cor:GibbsCircuitSchwartz} that the Gibbs state of the Hamiltonian $H$ can be prepared via a finite dimensional circuit using order
\begin{align*}
\widetilde{\mathcal{O}}\left(\frac{1}{\lambda_2}\operatorname{poly}\left(|\mathcal{A}|\,,\,\log\left(\frac{\mathfrak{c}E_{\operatorname{Gibbs}}}{\epsilon}\right)\right)\right)\text{ Hamiltonian simulation time with respect to } H_{\le M}.
\end{align*}
As discussed above, for many Hamiltonians of interest it is necessary to move beyond Schwartz filter functions to obtain a positive spectral gap and consider instead filter functions such as  \eqref{eq:filterFunction}.
To extend our implementation theory to this choice, we consider in Section~\ref{sec:singular} for parameter $\delta>0$ the regularisation
\begin{align}\label{eq:hatfMdeltaeqIntro}
    \hatfMdelta(\nu) := \hatfM(\nu)e^{-\delta\eta_{2,\theta}(\nu)}\qquad \text{with} \qquad  \eta_{2,\theta}(\nu) := e^{(1+(\beta\nu)^2)^\theta} \qquad \text{ and }\qquad \theta\in(0,1/2).
\end{align}
 By a simple continuity bound argument, we show in Proposition~\ref{prop:DeltaTo0} closeness of the unbounded generators $\cL_{\smash{\sigma_E,\hatfM,H}}$ and $\cL_{\smash{\sigma_E,\hatfMdelta,H}}$ on certain energy-constrained states for small $\delta.$ As $\hatfMdelta$ is again a Schwartz function, we can combine this result with the finite-dimensional approximation scheme outlined above. In this way, Theorem~\ref{thm:SingularGibbsdynamics} shows that, given a state-preparation circuit for the state $\rho$ in~\eqref{eq:rhocIntro}, the state
$e^{\smash{t\mathcal{L}_{\sigma_E,\hatfMdelta, H}}}(\rho)$
can be prepared on the finite-dimensional system register with essentially the same resource requirements as those described above. In the case of positive spectral gap $\lambda_2\equiv \operatorname{gap}(L_{\smash{\sigma_E,\hatfM,H}})>0$, we analogously find in Corollary~\ref{cor:GibbsCircuitSingular} that the Gibbs state of the Hamiltonian $H$ can be prepared by a finite dimensional circuit given the same resource requirements as in the case for Schwartz filter functions.  

For many-body continuous-variable quantum systems of interest, both the quantity $\mathfrak{c}$ in~\eqref{eq:rhocIntro} and the Gibbs energy $E_{\operatorname{Gibbs}}$ typically scale exponentially with the number of particles or modes, much like the partition function of the Gibbs state. Owing to the logarithmic dependence of the above complexity bounds on these quantities, which stems from the exponential energy constraint underlying the finite-dimensional approximation scheme, the resulting implementations of the Lindblad dynamics and Gibbs state preparation remain efficient, with resource requirements that scale polynomially in the number of particles or modes.

\bigskip

\smallsection{Acknowledgement}
SB would like to thank Lin Lin for fruitful discussions on the Gibbs sampling of powers of the number operator. He would also like to thank Jeff Galkowski and Maciej Zworski for bringing the issue of defining Gibbs dynamics for Schr\"odinger operators to his attention. This led to Theorem \ref{thm:Schroedinger_operators}.
SB would also like to acknowledge support from the SNF Grant PZ00P2\_216019. 
CR is supported by France 2030 under the
French National Research Agency award number ''ANR-22-EXES-0013''. RS acknowledges support by the European Research Council (ERC Grant Agreement No.~948139 and ERC Grant AlgoQIP, Agreement No. 851716), from the Excellence Cluster Matter and Light for Quantum Computing (ML4Q-2), from the QuantERA II Programme of the
European Union’s Horizon 2020 research and innovation programme under Grant Agreement No
101017733 (VERIqTAS) as well as the government grant managed by the Agence Nationale de la
Recherche under the Plan France 2030 with the reference ANR-22-PETQ-0007.

\section{KMS-symmetric generators in infinite dimensions}\label{sec:generalframe}

\noindent In this section, we construct a family of quantum Gibbs samplers for infinite-dimensional quantum systems. Given a densely defined self-adjoint operator $(H,D(H))$ on a separable Hilbert space $\mathcal{H}$ with $H\ge -h_0 I$, $h_0\ge 0$, and inverse temperature $\beta>0$, we aim to prepare the corresponding Gibbs state of the form
\begin{align*}
\sigma_\beta:=\frac{e^{-\beta H}}{\mathcal{Z}(\beta)},\qquad \text{ with }\qquad \mathcal{Z}(\beta):=\Tr(e^{-\beta H})<\infty,
\end{align*}
for Hamiltonians for which the trace in the previous line is finite.
We recall that the finiteness of the partition function $\mathcal{Z(\beta)}$ for all $\beta>0$, also known as the \textit{Gibbs hypothesis}, directly implies that the spectrum is unbounded above, that the spectrum is discrete, and that the energy levels cannot become \emph{too dense} with increasing energy values. In contrast to the finite-dimensional setting, where generators of quantum dynamical semigroups have long been fully characterized \cite{Lindblad1976,Gorini1976}, extensions to unbounded jumps are more intricate. Here, we leverage a certain detailed balance condition that will allow us to define our evolutions following the abstract theory of KMS-symmetric quantum Markov semigroups developed in \cite{Albeverio1977,Davies1992,Goldstein1995,Cipriani1997}.

We consider a finite set $\{A^\alpha\}_{\alpha\in\mathcal{A}}$ of closed, densely defined jump operators $A^\alpha$ with a common domain $D\subseteq \cH$ that is invariant under taking adjoints, i.e. $\{A^\alpha\}_{\alpha\in \mathcal{A}}=\{(A^\alpha)^\dagger\}_{\alpha\in \mathcal{A}}$. We also assume that $D$ includes all the eigenstates of $H$, and we require the following condition throughout the paper

 

\begin{framed}
\begin{conditionA}\label{eq:condAalphas}
There exist some moments $0\le\gamma,  \mu$, with $\gamma\le \mu$, as well as a constant $C>0$ such that, denoting $\widetilde{H}:=H+(h_0+1)I$,
\begin{align}
\label{eq:BoundBareJumpsWithHam}
\|A^\alpha \widetilde{H}^{-\gamma}\|,\quad \|\widetilde{H}^\gamma\,A^\alpha \widetilde{H}^{-\mu}\|\le C.
\end{align}
Next, we consider a function $\widehat{f}:\mathbb{R}\to\mathbb{C}$ with 
\begin{align}
\label{eq:KMSFilterFunction}
\overline{\widehat{f}(\nu)}=\widehat{f}(-\nu)\,e^{-\beta\nu/2}\qquad \forall\nu\in\mathbb{R}.
\end{align}
We also assume there is a constant $C'>0$ such that
\begin{align}
\sup_\nu\,|\widehat{f}(\nu)|,\quad  \sup_{\nu}e^{\frac{\beta\nu}{2}}|\widehat{f}(\nu)| \le C'.
\end{align}
\end{conditionA}
\end{framed}

\noindent Writing the spectral and eigenvalue decompositions of $H$ as $H=\sum_{E\in\operatorname{Sp}(H)}EP_E=\sum_{i}E_i|E_i\rangle\langle E_i|$, with $E_0\le E_1\le \cdots $, where $\{\ket{E_i}\}_{i\in\mathbb{N}_0}$ denotes the energy eigenbasis of $H$ by slight abuse of notation ($\ket{E_i}\ne \ket{E_{i+1}}$, even if $E_i=E_{i+1}$), we will extensively make use of the subspaces
\begin{align*}
\mathcal{F}:=\operatorname{span}\{\ket{E_i}\}_{E_i\in\operatorname{Sp}(H)}\,,\qquad \mathscr{F}:=\operatorname{span}\{|E_i\rangle\langle E_j|\}_{E_i,E_j\in\operatorname{Sp}(H)} 
\end{align*}
of the Hilbert space $\mathcal{H}$, resp.~of the space $\mathscr{T}_1(\mathcal{H})$ of trace-class operators over $\mathcal{H}$. The next claim is standard. 
\begin{claim}\label{densityFFs}
The space $\mathcal{F}$ is dense in $\mathcal{H}$, while $\mathscr{F}$ is dense in $\mathscr{T}_1(\mathcal{H})$.
\end{claim}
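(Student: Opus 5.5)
The first step is to show that the eigenvectors $\{\ket{E_i}\}_{i\in\mathbb{N}_0}$ of $H$ form an orthonormal basis of $\mathcal{H}$. The Gibbs hypothesis $\mathcal{Z}(\beta)=\Tr(e^{-\beta H})<\infty$ says that $e^{-\beta H}$ is a positive trace-class operator, and in particular a compact one. It is also injective, since $e^{-\beta H}\psi=0$ forces $\psi=0$ by the spectral theorem. The spectral theorem for compact self-adjoint operators then provides an orthonormal eigenbasis of $e^{-\beta H}$ with strictly positive eigenvalues. Each of these eigenvectors is an eigenvector of $H$, because $H=-\beta^{-1}\log(e^{-\beta H})$ by functional calculus. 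Hence $\{\ket{E_i}\}$ is an orthonormal basis. Finite linear combinations of an orthonormal basis are dense, so $\mathcal{F}$ is dense in $\mathcal{H}$. The only point requiring care is the injectivity, which is what rules out a zero eigenspace of $e^{-\beta H}$ lying outside the span of the eigenvectors of $H$.

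For the density of $\mathscr{F}$ in $\mathscr{T}_1(\mathcal{H})$, I would use the finite-rank projections $Q_n:=\sum_{i\le n}\kb{E_i}$. For any $T\in\mathscr{T}_1(\mathcal{H})$, the compressed operator $Q_nTQ_n$ lies in $\mathscr{F}$, since it equals $\sum_{i,j\le n}\langle E_i|T|E_j\rangle\,|E_i\rangle\langle E_j|$. It then suffices to prove $\|T-Q_nTQ_n\|_1\to 0$. I would write
\begin{align*}
T-Q_nTQ_n=(I-Q_n)T+Q_nT(I-Q_n)
\end{align*}
and estimate each term by $\|(I-Q_n)T\|_1$ and $\|T(I-Q_n)\|_1$, using $\|Q_n\|\le 1$.

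The main step is to show that $\|(I-Q_n)T\|_1\to0$ when $Q_n\to I$ strongly. I would first prove this for finite-rank $T=\sum_k |u_k\rangle\langle v_k|$. In that case
\begin{align*}
\|(I-Q_n)T\|_1\le\sum_k\|(I-Q_n)u_k\|\,\|v_k\|\to0
\end{align*}
by strong convergence. For general $T$, finite-rank operators are dense in $\mathscr{T}_1$, for instance via truncations of the singular value decomposition. Since $\|(I-Q_n)S\|_1\le\|S\|_1$ uniformly in $n$, an $\varepsilon/3$ argument extends the convergence to all $T\in\mathscr{T}_1$. The term $\|T(I-Q_n)\|_1$ is handled the same way after taking adjoints, using $\|T(I-Q_n)\|_1=\|(I-Q_n)T^\dagger\|_1$.

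Combining these estimates gives $Q_nTQ_n\to T$ in trace norm, so $\mathscr{F}$ is dense in $\mathscr{T}_1(\mathcal{H})$.
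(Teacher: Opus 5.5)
Your proof is correct and follows essentially the route the paper has in mind: the paper only calls the claim standard, but later invokes ``the proof of'' it (in Lemma~\ref{mathFcoreW}, and similarly in Lemma~\ref{densityT1T2}) as the trace-norm convergence of the compressions $P_nxP_n\to x$ by spectral projections of $H$, which is exactly your $Q_nTQ_n\to T$ argument. Your first step, deriving completeness of the eigenbasis from compactness and injectivity of $e^{-\beta H}$, supplies a detail the paper takes for granted when it speaks of ``the energy eigenbasis''.
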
 

For Schr\"odinger operators, we then have the following theorem that shows that, for the set of bare jumps given by the creation and annihilation operators, i.e., $\{A^\alpha\}_{\alpha\in\cA}\equiv \{a_j,a^\dagger_j\}_{j=1}^m,$ Condition \theconditionA \ is satisfied under very general assumptions. 
A different perspective from our Dirichlet form approach, by directly verifying Davies' conditions to obtain a semigroup in the space of trace-class operators, has been pursued in \cite{GalkowskiZworskiDavies}.    
\begin{theorem}
\label{thm:Schroedinger_operators}
    We consider the Schr\"odinger operator $H=-\Delta+V.$ 
    Let $V$ be real-valued and satisfy 
\begin{itemize}
\item $V \in C^{\infty}(\mathbb R^d)$ with $
V(x)\ge c\langle x\rangle^r-c_0,$ with $ c>0,c_0 \ge 0, r\ge 1,$
as well as
\[
|\partial_x^\alpha V(x)|\le C_\alpha \langle x\rangle^{r-|\alpha|},
\qquad \alpha\in\mathbb N^d.
\]
Choose $\lambda>C_0+1$ and set 
\[
\widetilde H:=H+\lambda.
\]
Let $a_j,a_j^{\dagger}$ be the annihilation and creation operators.
Then for every $n\in\mathbb N$,
\[
\widetilde H^n a_j \widetilde H^{-n-1},\widetilde H^n a_j^{\dagger} \widetilde H^{-n-1}\in \mathcal B(L^2(\mathbb R^d)),
\]
i.e. $\gamma=1,\mu=2$ are admissible in Condition \theconditionA \  for the set of bare jumps being $\{A^\alpha\}_{\alpha\in\cA}\equiv \{a_j,a^\dagger_j\}_{j=1}^d.$
\item $V=\nu |x|^2 + W$ for $\nu>0$ with $W \in  L^{\max\{2,d/2\}}(\mathbb R^d)+ \langle x \rangle^{\alpha} L^{\infty}(\mathbb R^d)\text{ with } \alpha<2$, then $\gamma=1/2$ and $\mu=1$ are admissible in Condition \theconditionA \ for the set of bare jumps being $\{A^\alpha\}_{\alpha\in\cA}\equiv \{a_j,a^\dagger_j\}_{j=1}^d.$
\end{itemize}
\end{theorem}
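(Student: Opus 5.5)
The proof splits into the two bullet points. In both, $a_j,a_j^\dagger=(x_j\pm\partial_j)/\sqrt2$ are first-order differential operators with linear coefficients, so the task reduces to mapping properties of $x_j$ and $\partial_j$ relative to powers of $\widetilde H$. The plan is to use the Weyl–Hörmander pseudodifferential calculus with the metric adapted to the symbol $p(x,\xi)=|\xi|^2+V(x)+\lambda$. The key one-sided estimates are
\[
\langle x\rangle\lesssim p^{1/r}\le p \qquad\text{and}\qquad |\xi|\lesssim p^{1/2}\le p,
\]
using $r\ge1$ and $p\ge1$. Hence $x_j$ and $\xi_j$ are symbols dominated by $p$.

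\textbf{First bullet.} The derivative bounds $|\partial^\alpha V|\le C_\alpha\langle x\rangle^{r-|\alpha|}$ make $p$ an elliptic symbol in the class $S(p,g)$, with $g=\frac{dx^2}{\langle x\rangle^2}+\frac{d\xi^2}{\langle\xi\rangle^2}$ or a similar slowly varying metric. The steps are as follows.
\begin{itemize}
\item By standard functional calculus for elliptic, positive, self-adjoint pseudodifferential operators (Helffer–Robert, or Beals' characterisation), $\widetilde H^{s}$ is a pseudodifferential operator with symbol in $S(p^{s},g)$ for every real $s$. Here we take $\lambda>c_0+1$, so that $\widetilde H\ge1$.
\item The symbols $x_j,\xi_j$ lie in $S(p,g)$.
\item By symbolic composition, $\widetilde H^n a_j\widetilde H^{-n-1}$ has a symbol in $S(p^{n}\cdot p\cdot p^{-n-1},g)=S(1,g)$.
\item Calderón–Vaillancourt then gives boundedness on $L^2$.
\end{itemize}
Taking $n=0$ gives $\|a_j\widetilde H^{-1}\|<\infty$, which is the bound with $\gamma=1$. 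Taking $n=1$ gives $\|\widetilde H a_j\widetilde H^{-2}\|<\infty$, which is the bound with $\mu=2$.

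If I wish to avoid the full calculus, an alternative is an induction via commutators. The commutator $[\widetilde H,a_j]$ is a combination of $\partial_j$ and $\partial_jV$, and $|\partial_jV|\lesssim\langle x\rangle^{r-1}\lesssim p$ is relatively $\widetilde H$-bounded. Repeatedly moving $\widetilde H$ past $a_j$ produces only terms of the form $\widetilde H^{k}(\text{order}\le1)\widetilde H^{-k-1}$. These are controlled by the elliptic regularity estimate $\|\partial^\alpha x^\beta u\|\lesssim\|\widetilde H^{m}u\|$ for $|\alpha|/2+|\beta|/r\le m$, which follows from the symbol estimates.

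\textbf{Second bullet.} Write $H=H_0+W$ with $H_0=-\Delta+\nu|x|^2$ the harmonic oscillator, whose form domain is $D(H_0^{1/2})=\{u: \nabla u,\,xu\in L^2\}$. The steps are as follows.
\begin{itemize}
\item Show that $W$ is infinitesimally form-bounded with respect to $H_0$. The $L^{\max\{2,d/2\}}$ part is handled by Sobolev/Hardy-type inequalities. The $\langle x\rangle^{\alpha}L^\infty$ part with $\alpha<2$ is handled by $\langle x\rangle^\alpha\le\varepsilon|x|^2+C_\varepsilon$.
\item By KLMN, $\widetilde H$ has form domain $D(H_0^{1/2})$, with equivalent norms. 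Hence $\|a_ju\|\lesssim\|\widetilde H^{1/2}u\|$, which is $\gamma=1/2$.
\item For $\mu=1$, I need $\|\widetilde H^{1/2}a_j\widetilde H^{-1}\|<\infty$. Equivalently, $a_j$ must map $D(\widetilde H)$ into the form domain, i.e.\ $\nabla a_j u,\ x a_j u\in L^2$ for $u\in D(\widetilde H)$.
\end{itemize}

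This last point is the main obstacle, because $D(\widetilde H)$ need not equal $D(H_0)$ for such rough $W$. The plan is to work in the form sense by duality. Write
\[
\widetilde H^{1/2}a_j\widetilde H^{-1}=\widetilde H^{-1/2}\bigl(a_j\widetilde H+[\widetilde H,a_j]\bigr)\widetilde H^{-1}
\]
as a sesquilinear identity. The first term is bounded by $\|\widetilde H^{-1/2}a_j\|=\|a_j^\dagger\widetilde H^{-1/2}\|<\infty$, using $\gamma=1/2$ for $a_j^\dagger$. The commutator equals $[H_0,a_j]\pm\partial_jW/\sqrt2$, and $[H_0,a_j]$ is a multiple of $a_j$, which is again controlled. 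The term $\partial_jW$ has to be interpreted as the form $\langle W\phi,\partial_j\psi\rangle-\langle\partial_j\phi,W\psi\rangle$. Both pieces are bounded between $\widetilde H^{-1/2}$ on one side and $\widetilde H^{-1}$ on the other, provided $W$ maps $D(H_0^{1/2})$-type spaces appropriately. This is where the precise integrability exponent $\max\{2,d/2\}$ and $\alpha<2$ enter, via Hölder and Sobolev embeddings of $D(\widetilde H)\subset H^1\cap D(|x|)$. I expect this bookkeeping, rather than the smooth case, to be the delicate step.
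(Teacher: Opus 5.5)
Your first bullet is sound. It follows the same pseudodifferential strategy as the paper: put $\widetilde H^{n}$, $\widetilde H^{-n-1}$ and $a_j$ into symbol classes, compose, and apply an $L^2$-boundedness theorem. Your non-product weight $p\sim\langle\xi\rangle^2+\langle x\rangle^r$ is in fact what makes the inversion step legitimate. In the product-weight scattering classes $\Psi^{m,l}_{\mathrm{sc}}$ used in the paper, $\widetilde H$ is not totally elliptic, because $|\xi|^2+V$ is not bounded below by a multiple of $\langle\xi\rangle^2\langle x\rangle^r$.

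The genuine gap is the $\mu=1$ part of the second bullet. You treat a possible failure of $D(\widetilde H)=D(H_0)$ as an obstacle to bypass with form-sense commutator bookkeeping, using only $D(\widetilde H)\subset H^1\cap D(|x|)$. This cannot work, because the claim itself forces that domain inclusion. Take $D(\widetilde H^{1/2})=\{u\in H^1:\ xu\in L^2\}$ from your KLMN step. Then boundedness of $\widetilde H^{1/2}a_j\widetilde H^{-1}$ and $\widetilde H^{1/2}a_j^\dagger\widetilde H^{-1}$ for all $j$ says that $x_ju=(a_j+a_j^\dagger)u/\sqrt2$ and $\partial_ju=(a_j-a_j^\dagger)u/\sqrt2$ lie in $H^1\cap D(|x|)$ for every $u\in D(\widetilde H)$. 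That means $u\in H^2$ and $|x|^2u\in L^2$, i.e.\ $D(\widetilde H)\subset D(H_0)$. So if your worry were justified the statement would be false, and no bookkeeping can avoid proving this inclusion.

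You can see the bookkeeping break concretely. In $\langle W\phi,\partial_j\psi\rangle$ with $\phi$ only in the form domain, $W\phi$ is in general not in $L^2$. For the $L^p$ part it lies only in a negative Sobolev space, and for $|W|\lesssim\langle x\rangle^{\alpha}$ with $1<\alpha<2$ it needs an extra weight. So you would need $\partial_j\psi\in H^1$ or $\langle x\rangle\partial_j\psi\in L^2$, which is second-order information about $\psi\in D(\widetilde H)$ that form-domain knowledge does not supply.

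The missing idea is that the hypotheses make $W$ infinitesimally \emph{operator}-bounded, not merely form-bounded, with respect to $H_0=-\Delta+\nu|x|^2$.
\begin{itemize}
\item For the $L^p$ part, the exponent $\max\{2,d/2\}$ is the standard Reed--Simon class for $-\Delta$-boundedness with relative bound $0$ (strictly, $p>2$ is needed when $d=4$).
\item For the other part, combine $|W|\lesssim\langle x\rangle^{\alpha}\le\varepsilon|x|^2+C_\varepsilon$ with $\||x|^2u\|\lesssim\|H_0u\|+\|u\|$, which follows from the integration-by-parts identity for $\|H_0u\|^2$.
\end{itemize}
Kato--Rellich then gives $D(\widetilde H)=D(H_0)=\{u\in H^2:\ |x|^2u\in L^2\}$ with equivalent graph norms. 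Interpolation, or your KLMN step, gives $D(\widetilde H^{1/2})=\{u\in H^1:\ xu\in L^2\}$. Finally, $a_j$ and $a_j^\dagger$ map the first space boundedly into the second, and the second into $L^2$; the only non-obvious term, $x\nabla u\in L^2$, again follows by integrating by parts. This is the paper's argument, and it gives both $\gamma=1/2$ and $\mu=1$ without any commutator computation.
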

We stress that the first case includes trapping potentials to very high order and the second case includes singular potentials such as Coulomb interactions. 
The proof of this theorem is given in Appendix \ref{appendixC}.

\subsection{Dirichlet forms and Gibbs generators on Hilbert-Schmidt operators}

\noindent 
Next, for any Bohr frequency $\nu\in B(H)=\operatorname{Sp}(H)-\operatorname{Sp}(H)$ and label $\alpha\in\mathcal{A}$, we introduce the energy jump operators $(A^\alpha_\nu,\mathcal{F})$ by
\begin{align*}
A^\alpha_\nu\ket{E_i}:=P_{E_i+\nu}A^\alpha\ket{E_i}.
\end{align*}
Above, we also write $P_E=0$ whenever $E$ is not an eigenvalue of $H$.
With slight abuse of notation, we also denote $(A^\alpha_\nu)^\dagger=\big((A^\alpha)^\dagger\big)_{-\nu}$, where we recall that $\{A^\alpha\}_{\alpha\in\mathcal{A}}$ is invariant under adjoints. Similarly, we formally define the operators $(A^\alpha_{\nu_1})^\dagger A^\alpha_{\nu_2}$ on $\mathcal{F}$, whose actions on an energy eigenstate $\ket{E_i}$ are
\begin{align*}
(A^\alpha_{\nu_1})^\dagger A^\alpha_{\nu_2}\ket{E_i}=P_{E_i+\nu_2-\nu_1}(A^\alpha)^\dagger P_{E_i+\nu_2}A^\alpha\ket{E_i}.
\end{align*}
Next, we formally introduce the operator
\begin{align}\label{def:L}
L_{\widehat{f},H}(\lambda x+\mu\sqrt{\sigma_\beta})\!:=-\lambda \sum_{\substack{\alpha\in\mathcal{A}\\\nu_1,\nu_2\in B(H)}}\frac{\overline{\widehat{f}(\nu_1)}\widehat{f}(\nu_2)\,e^{\beta(\nu_1+\nu_2)/4}}{2\cosh((\nu_1-\nu_2)\beta/4)}\,(\delta^\alpha_{\nu_1})^{\dagger}\delta^\alpha_{\nu_2}(x)
\end{align}
where 
\begin{equation}
\label{eq:delta}\delta^\alpha_\nu(x):=e^{-\frac{\beta \nu}{4}}A^\alpha_\nu x-e^{\frac{\beta\nu}{4}}xA^\alpha_\nu
\end{equation}
and $(\delta_\nu^\alpha)^{\dagger}$ denotes the formal adjoint of $\delta^\alpha_\nu$ in the Hilbert space $\mathscr{T}_2(\mathcal{H})$ of Schatten-2 operators over $\cH$ endowed with the Hilbert-Schmidt inner product $\langle A,B\rangle:=\Tr(A^\dagger B)$, i.e.
\begin{align*}
(\delta^\alpha_{\nu})^\dagger(x)=e^{-\frac{\beta\nu}{4}}(A^\alpha_\nu)^\dagger x-e^{\frac{\beta\nu}{4}}x(A^\alpha_\nu)^\dagger.
\end{align*}

 \begin{lemma}\label{lem:definL}
 Under \Cref{eq:condAalphas}, the expression \eqref{def:L} defines a densely defined, negative semidefinite symmetric operator $(L_{\widehat{f},H},\operatorname{span}(\mathscr{F}\cup \{\sqrt{\sigma_\beta}\}))$ over $\mathscr{T}_2(\mathcal{H})$, with $\sqrt{\sigma_\beta}\in \operatorname{Ker}(L_{\widehat{f},H})$.
 \end{lemma}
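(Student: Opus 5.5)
The plan is to avoid the double Bohr-frequency sum in \eqref{def:L} by writing the weight $1/(2\cosh)$ as a Fourier integral. Then $L_{\widehat f,H}$ on $\mathscr{F}$ becomes an integral of ``squares'' $-(D^\alpha_t)^\dagger D^\alpha_t$, where each $D^\alpha_t$ involves only a bounded function of $H$ sandwiched with a bare jump. Symmetry and negativity then follow from positivity of the kernel.

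\textbf{Integral representation.} The hyperbolic secant is, up to scaling, its own Fourier transform, so there is a positive $\phi\in L^1(\mathbb{R})$ with
\[
\frac{1}{2\cosh(\beta s/4)}=\int_{\mathbb{R}}\phi(t)\,e^{ist}\,dt .
\]
Insert this with $s=\nu_1-\nu_2$ and set $c_t(\nu):=\widehat f(\nu)e^{\beta\nu/4}e^{-i\nu t}$. Condition~\ref{eq:condAalphas} gives $|c_t(\nu)|\le C'\min(e^{\beta\nu/4},e^{-\beta\nu/4})\le C'$, uniformly in $t$. Formally, \eqref{def:L} on $x\in\mathscr{F}$ becomes
\[
L_{\widehat f,H}(x)=-\int\phi(t)\sum_\alpha (D^\alpha_t)^\dagger D^\alpha_t(x)\,dt,\qquad D^\alpha_t:=\sum_\nu c_t(\nu)\,\delta^\alpha_\nu .
\]

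\textbf{$D^\alpha_t$ is well defined on $\mathscr{F}$.} Take $x=|E_i\rangle\langle E_j|$. The left part of $D^\alpha_t(x)$ is
\[
\sum_\nu c_t(\nu)e^{-\beta\nu/4}P_{E_i+\nu}A^\alpha|E_i\rangle=g_t(H-E_i)A^\alpha|E_i\rangle ,
\]
where $g_t$ is a bounded function. This vector lies in $\mathcal H$ because $\|A^\alpha|E_i\rangle\|\le C(E_i+h_0+1)^\gamma$. The right part is handled the same way, using $\langle E_j|A^\alpha_\nu=\big((A^\alpha)^\dagger_{-\nu}|E_j\rangle\big)^\dagger$ and closedness of $\{A^\alpha\}$ under adjoints. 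So $D^\alpha_t(x)$ has rank at most two, lies in $\mathscr{T}_2$, and its norm is bounded uniformly in $t$.

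\textbf{$(D^\alpha_t)^\dagger D^\alpha_t(x)$ lies in $\mathscr{T}_2$.} Here we must apply $(A^\alpha)^\dagger$ to vectors such as $g_t(H-E_i)A^\alpha|E_i\rangle$, which need not lie in a natural domain. Write
\[
(A^\alpha)^\dagger g_t(H-E_i)A^\alpha|E_i\rangle=\big[(A^\alpha)^\dagger\widetilde H^{-\gamma}\big]\,g_t(H-E_i)\,\big[\widetilde H^{\gamma}A^\alpha\widetilde H^{-\mu}\big]\,(E_i+h_0+1)^{\mu}|E_i\rangle .
\]
Every bracketed factor is bounded by Condition~\ref{eq:condAalphas}, applied to the adjoint partner $(A^\alpha)^\dagger=A^{\alpha'}$, and $g_t(H-E_i)$ commutes with $\widetilde H^\gamma$. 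The same argument handles the mixed terms and the terms acting from the right. This gives a $t$-uniform Hilbert--Schmidt bound, so the Bochner integral against $\phi\in L^1$ converges in $\mathscr{T}_2$. It then remains to check, by comparing matrix elements $\langle E_k|\cdot|E_l\rangle$, that this integral agrees with the formal expression \eqref{def:L}. For fixed $k,l$ only finitely many frequency pairs contribute to such an entry, so the interchange is legitimate.

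\textbf{Linear extension.} On $\sqrt{\sigma_\beta}$, set $L_{\widehat f,H}(\sqrt{\sigma_\beta})=0$; this is the $\lambda=0$ case of \eqref{def:L}. The map on $\operatorname{span}(\mathscr{F}\cup\{\sqrt{\sigma_\beta}\})$ is well defined since $\sqrt{\sigma_\beta}\notin\mathscr{F}$ when $\dim\mathcal H=\infty$; in finite dimensions one instead checks consistency directly, which the identity below provides. Density in $\mathscr{T}_2$ follows exactly as in Claim~\ref{densityFFs}, since $\mathscr{F}$ already contains all finite-rank matrices in the eigenbasis.

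\textbf{Key identity $\delta^\alpha_\nu(\sqrt{\sigma_\beta})=0$.} On eigenvectors one has
\[
A^\alpha_\nu\sigma_\beta^{1/2}=e^{\beta\nu/2}\sigma_\beta^{1/2}A^\alpha_\nu ,
\]
and therefore
\[
\delta^\alpha_\nu(\sigma_\beta^{1/2})=e^{\beta\nu/4}\sigma_\beta^{1/2}A^\alpha_\nu-e^{\beta\nu/4}\sigma_\beta^{1/2}A^\alpha_\nu=0 .
\]
Hence $D^\alpha_t(\sqrt{\sigma_\beta})=0$, which can be checked entrywise; $\sqrt{\sigma_\beta}$ is diagonal and the Gibbs weights make all sums converge.

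\textbf{Symmetry and negativity.} For $x,y\in\mathscr{F}$,
\[
\langle x,L_{\widehat f,H}y\rangle=-\int\phi(t)\sum_\alpha\langle D^\alpha_tx,D^\alpha_ty\rangle\,dt .
\]
This is Hermitian in $(x,y)$, so $L_{\widehat f,H}$ is symmetric on $\mathscr{F}$. It is nonpositive on the diagonal because $\phi\ge0$. For pairs involving $\sqrt{\sigma_\beta}$ we need $\langle\sqrt{\sigma_\beta},L_{\widehat f,H}x\rangle=0$. This follows from the same formula once it is justified with $\sqrt{\sigma_\beta}$ as one argument, by expanding $\sqrt{\sigma_\beta}$ in its diagonal entries and using $D^\alpha_t\sqrt{\sigma_\beta}=0$. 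Together these give symmetry, negative semidefiniteness, and $\sqrt{\sigma_\beta}\in\operatorname{Ker}L_{\widehat f,H}$ on the whole span.

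\textbf{Main obstacle.} I expect the hardest part to be the domain bookkeeping in the second step: showing $(A^\alpha)^\dagger g(H)A^\alpha|E_i\rangle\in\mathcal H$ via the moment condition, and justifying the exchange of the $t$-integral with the infinite frequency sums. The pairing with $\sqrt{\sigma_\beta}$, which is not in $\mathscr{F}$, should be the second most delicate point.
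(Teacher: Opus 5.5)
Your route differs from the paper's. You factor $L_{\widehat f,H}=-\int\phi(t)\sum_\alpha(D^\alpha_t)^\dagger D^\alpha_t\,dt$ at the operator level. The paper instead bounds the weighted double sum of $\|(\delta^\alpha_{\nu_1})^\dagger\delta^\alpha_{\nu_2}(|E_i\rangle\langle E_j|)\|_2$ directly, using the $\operatorname{sech}$ decay in $\nu_1-\nu_2$ and the Gibbs hypothesis. It uses the Fourier representation of $\operatorname{sech}$ only for the form $\mathcal{E}_{\widehat f,H}$, where $x_t$ is first order. Your route forces you to control the second-order object $(D^\alpha_t)^\dagger D^\alpha_t(x)$ at fixed $t$, where that decay is gone. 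Here there is a genuine gap: your argument for this step treats the wrong operator. The left-multiplication part of $(D^\alpha_t)^\dagger$ is not $(A^\alpha)^\dagger$ but the filtered operator
\[
\sum_{\nu}\overline{\widehat f(\nu)}\,e^{i\nu t}\,(A^\alpha_\nu)^\dagger=\sum_{E,E'\in\operatorname{Sp}(H)}\overline{\widehat f(E'-E)}\,e^{i(E'-E)t}\,P_{E}(A^\alpha)^\dagger P_{E'},
\]
essentially $e^{-itH}(L^\alpha)^\dagger e^{itH}$, which is a Schur multiplier in the energy basis. It is applied to the non-eigenvector $v=g_t(H-E_i)A^\alpha|E_i\rangle$. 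Your bracket factorization only shows $(A^\alpha)^\dagger v\in\mathcal H$. There is no single bounded function of $H$ that you could commute past $\widetilde H^\gamma$, and Condition~A does not make the filtered jump relatively $\widetilde H^\gamma$-bounded. The paper gets relative boundedness of $L^\alpha$ only under \Cref{eq:F_1Kill} (with a power $\kappa_1>\gamma$) or for Schwartz $f$. Your description of $D^\alpha_t$ as a bounded function of $H$ sandwiched with a bare jump is valid only on eigenvectors.

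The missing ingredient is summability over the intermediate energy. Write the vector as $\sum_{E'}k_{E'}(H)(A^\alpha)^\dagger P_{E'}v$ with $k_{E'}(\omega)=\overline{\widehat f(E'-\omega)}e^{i(E'-\omega)t}$ and $|k_{E'}|\le C'$. You then need
\[
\sum_{E'\in\operatorname{Sp}(H)}|\widehat f(E'-E_i)|\,\big\|(A^\alpha)^\dagger P_{E'}A^\alpha|E_i\rangle\big\|\le C^2(E_i+h_0+1)^{\mu}\sum_{E'\in\operatorname{Sp}(H)}|\widehat f(E'-E_i)|<\infty .
\]
This holds because $|\widehat f(\nu)|\le C'\min\{1,e^{-\beta\nu/2}\}$ together with the Gibbs hypothesis makes $\sum_{E'}|\widehat f(E'-E_i)|$ finite. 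The right-acting terms need $\sum_{E'}|\widehat f(E'-E_j)|<\infty$, after rewriting $|\widehat f(\nu)|e^{\beta\nu/2}=|\widehat f(-\nu)|$ via \eqref{eq:KMSFilterFunction}. With this, which is exactly what the paper's four-term estimates encode, your $t$-uniform Hilbert--Schmidt bound goes through.

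Your justification for interchanging the $t$-integral with the frequency sums is also false. For $(A^\alpha_{\nu_1})^\dagger A^\alpha_{\nu_2}x$ and $xA^\alpha_{\nu_2}(A^\alpha_{\nu_1})^\dagger$, a fixed entry $\langle E_k|\cdot|E_l\rangle$ receives contributions from infinitely many pairs $(\nu_1,\nu_2)$, one per intermediate energy. You therefore need dominated convergence, which the same summability supplies.

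For the pairing with $\sqrt{\sigma_\beta}$, the paper avoids the domain issue you flag. It checks $\langle\sqrt{\sigma_\beta},(\delta^\alpha_{\nu_1})^\dagger\delta^\alpha_{\nu_2}(x)\rangle=0$ pair by pair through a direct trace computation.
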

 
 \begin{proof}
We consider $x=|E_i\rangle\langle E_j|$,
\begin{align}
\label{eq:deltaTriangle}
\nn\|(\delta^\alpha_{\nu_1})^\dagger\delta^\alpha_{\nu_2}x\|_2&\le \!e^{-\frac{\beta(\nu_1+\nu_2)}{4}}\|(A^\alpha_{\nu_1} )^\dagger A^\alpha_{\nu_2}x\|_2+e^{\frac{\beta(\nu_1+\nu_2)}{4}}\|xA^\alpha_{\nu_2}(A^\alpha_{\nu_1})^\dagger\|_2\\
&\quad +e^{\frac{\beta(\nu_2-\nu_1)}{4}}\|(A^\alpha_{\nu_1})^\dagger x A^\alpha_{\nu_2}\|_2+e^{\frac{\beta(\nu_1-\nu_2)}{4}}\|A^\alpha_{\nu_2}x(A^\alpha_{\nu_1})^\dagger\|_2.
\end{align}

We treat each term in \eqref{eq:deltaTriangle} one by one: For the first one we see 
\begin{align*}
&\sum_{\nu_1,\nu_2} \frac{|\widehat{f}(\nu_1)\widehat{f}(\nu_2)|\,}{2\cosh((\nu_1-\nu_2)\beta/4)}\|(A^\alpha_{\nu_1} )^\dagger A^\alpha_{\nu_2}|E_i\rangle\langle E_j|\|_2 \\
&\le \sum_{E',E\in\spec(H)} \frac{|\widehat{f}(E'-E)\widehat{f}(E'-E_i)|\,}{2\cosh((E_i-E)\beta/4)}\|P_E(A^\alpha)^\dagger P_{E'}A^\alpha|E_i\rangle\langle E_j|\|_2  \\
&\le \|\widehat f\|_\infty \sum_{E,E'\in\spec(H)} \frac{|\widehat{f}(E'-E)|}{2\cosh((E_i-E)\beta/4)}\|P_E(A^\alpha)^\dagger P_{E'}A^\alpha|E_i\rangle\langle E_j|\|_2 \\
&\le\frac{e^{\beta E_i/4}}{2}\|\widehat f\|_\infty\|\widetilde H^{-\gamma}A^\alpha\| \|\widetilde H^{\gamma}A^\alpha\widetilde H^{-\mu}\| (E_i+1+h_0)^\mu\sum_{E,E'\in\spec(H)}|\widehat{f}(E'-E)| e^{-\beta E/4}.
\end{align*}
It remains to argue that the last sum above converges. Indeed, by the  boundedness conditions of $\widehat{f}$ in \Cref{eq:condAalphas}: 
\begin{align*}
\sum_{E,E'\in\spec(H)}|\widehat{f}(E'-E)| e^{-\beta E/4}&=\sum_{E,E'\in\spec(H)}|\widehat{f}(E'-E)|^{3/4} |\widehat{f}(E'-E)|^{1/4}e^{\beta(E'-E)/8}e^{\beta(E-E')/8}e^{-\beta E/4}\\
&=C' \sum_{E,E'\in\spec(H)}e^{-\beta(E+E')/8}<\infty
\end{align*}
where the convergence is ensured by the Gibbs hypothesis.
For the second term in \eqref{eq:deltaTriangle} we see
\begin{align*}
    &\sum_{\nu_1,\nu_2} \frac{|\widehat{f}(\nu_1)\widehat{f}(\nu_2)|\,}{2\cosh((\nu_1-\nu_2)\beta/4)}e^{\frac{\beta(\nu_1+\nu_2)}{2}}\||E_i\rangle\langle E_j|A^\alpha_{\nu_2}(A^\alpha_{\nu_1})^\dagger\|_2\\&\le\sup_{\nu_1\in\R} \left(e^{\beta \nu_1/2}|\widehat f(\nu_1)| \right)\|\widehat f\|_\infty\sum_{E',E\in\spec(H)} \frac{e^{\frac{\beta(E_j-E)}{2}}}{2\cosh((E'-E_j)\beta/4)}\||E_i\rangle\langle E_j|A^\alpha P_E(A^\alpha)^\dagger P_{E'}\|_2 \\&\le \frac{e^{\beta E_j/4}}{2}\sup_{\nu_1\in\R} \left(e^{\beta \nu_1/2}|\widehat f(\nu_1)| \right)\|\widehat f\|_\infty \| A^\alpha \widetilde H^{-\gamma}\| \|\widetilde H^{-\mu}A^\alpha \widetilde H^{\gamma}\| e^{\beta E_j/2}(E_j+h_0+1)^\mu \times \\&\qquad\times \left(\sum_{E\in\spec(H)}e^{-\beta E/2}\right)\left(\sum_{E'\in\spec(H)}e^{-\beta E'/4}\right)<\infty\,,
\end{align*}
 For the third term in \eqref{eq:deltaTriangle} we argue as
\begin{align*}
    &\sum_{\nu_1,\nu_2} \frac{|\widehat{f}(\nu_1)\widehat{f}(\nu_2)|\,}{2\cosh((\nu_1-\nu_2)\beta/4)}
e^{\frac{\beta\nu_2}{2}}\|(A^\alpha_{\nu_1})^\dagger \ket{E_i}\!\bra{E_j}A^\alpha_{\nu_2}\|_2\\&\le e^{\frac{\beta E_j}{2}} \|\widehat f\|^2_\infty \sum_{E,E''\in\spec(H)} \frac{e^{\frac{-\beta E''}{2}}}{2\cosh((E''+E_i-E-E_j)\beta/4)}
 \|P_E(A^\alpha)^\dagger \ket{E_i}\!\bra{E_j}A^\alpha P_{E''}\|_2 \\&\le \frac{e^{\frac{\beta (E_j+E_i)}{4}}}{2} \|\widehat f\|^2_\infty\|\widetilde H^{-\gamma}A^\alpha\|^2(E_i+h_0+1)^{\gamma}(E_j+h_0+1)^{\gamma}  \left(\sum_{E\in\spec(H)} e^{-\beta E/4}\right)^2<\infty.
\end{align*}
Finally, for the fourth term in \eqref{eq:deltaTriangle} we argue as
\begin{align*}
    &\sum_{\nu_1,\nu_2} \frac{|\widehat{f}(\nu_1)\widehat{f}(\nu_2)|\,}{2\cosh((\nu_1-\nu_2)\beta/4)}
e^{\frac{\beta\nu_1}{2}}\|A^\alpha_{\nu_2} \ket{E_i}\!\bra{E_j}(A^\alpha_{\nu_1})^\dagger\|_2\\&\le  \left(\sup_{\nu_1\in\R}e^{\beta\nu_1/2}|\widehat f(\nu_1)|\right)\sum_{E',E'''\in\spec(H)} \frac{|\widehat{f}(E'''-E_i)|\,}{2\cosh((E'+E_i-E'''-E_j)\beta/4)}
\|P_{E'''}A^\alpha \ket{E_i}\!\bra{E_j}(A^\alpha)^\dagger P_E'\|_2 \\&\lesssim \frac{e^{\beta E_j/4}}{2}\|A^\alpha \widetilde H^{-\gamma}\|^2\sum_{E',E'''\in\spec(H)} |\widehat f(E'''-E_i)| e^{\beta (E'''-E_i)/4} e^{-\beta E'/4}\\
&\le C'\frac{e^{\beta E_j/4}}{2}\|A^\alpha \widetilde H^{-\gamma}\|^2\sum_{E',E'''\in\spec(H)}  e^{-\beta (E'''-E_i)/8} e^{-\beta E'/4}<\infty
\end{align*}
where the last bound is once again a consequence of the Gibbs hypothesis. From this we can conclude that
\begin{align*}
    \|L_{\widehat f,H}\ket{E_i}\!\bra{E_j}\|_2<\infty.
\end{align*}
Moreover, for any $x\in \mathscr{F}$, denoting $x_\nu:=\widehat{f}(\nu)e^{\smash{\frac{\beta\nu}{4}}}\delta_\nu^\alpha x$ and $x_t:=\sum_{\smash{\nu\in B(H)}}e^{\smash{it\nu}}x_\nu$, 
\begin{align}
\mathcal{E}_{\widehat{f},H}(x)&:=-\langle x,L_{\widehat{f},H}x\rangle=\sum_{\substack{\alpha\in\mathcal{A}\\\nu_1,\nu_2\in B(H)}}\frac{1}{2\cosh((\nu_1-\nu_2)\beta/4)}\,\langle x_{\nu_1},x_{\nu_2}\rangle\label{eq:Dirichletformdef} \\
& =\sum_{\substack{\alpha\in\mathcal{A}\\\nu_1,\nu_2\in B(H)}}\!\!\!\int_{-\infty}^\infty \frac{1}{\beta \cosh(2\pi t/\beta)}\,\langle e^{it\nu_1}x_{\nu_1},e^{it\nu_2}x_{\nu_2}\rangle dt\nonumber
 =\sum_{\alpha\in\mathcal{A}}\int_{-\infty}^\infty \frac{1}{\beta \cosh(2\pi t/\beta)}\,\|x_t\|_2^2 \,dt,\nonumber 
\end{align}
where the above manipulations are justified by Fubini's theorem, together with
\begin{align*}
\|x_t\|_2\le \sum_{\nu}|\widehat{f}(\nu)|e^{\frac{\beta\nu}{4}}\|\delta^\alpha_\nu(x)\|_2<\infty,
\end{align*}
where we used $\beta^{-1}\int_{-\infty}^\infty\cosh(2\pi t/\beta)^{-1}dt=\frac{1}{2}<\infty$ together with \Cref{eq:condAalphas} so that
\begin{align}
&\sum_{\nu}|\widehat{f}(\nu)|\,e^{\frac{\beta\nu}{4}}\,\|\delta^\alpha_\nu(x)\|_2\\
&\qquad\le \sum_{\nu}|\widehat{f}(\nu)|\|A^\alpha_\nu\ket{E_i}\bra{E_j}\|_2+\sum_{\nu}|\widehat{f}(\nu)|e^{\frac{\beta\nu}{2}}\|\ket{E_i}\bra{E}_jA^\alpha_\nu\|_2\nonumber\\
&\qquad= \sum_{E}|\widehat{f}(E-E_i)|\|P_EA^\alpha \ket{E_i}\bra{E_j}\|_2+\sum_{E'}|\widehat{f}(E_j-E')|e^{\frac{\beta(E_j-E')}{2}}\|\ket{E_i}\bra{E_j}A^\alpha P_{E'}\|_2\nonumber\nonumber\\
&\qquad= \sum_{E}|\widehat{f}(E-E_i)|\|P_EA^\alpha \ket{E_i}\bra{E_j}\|_2+\sum_{E'}|\widehat{f}(E'-E_j)|\|\ket{E_i}\bra{E_j}A^\alpha P_{E'}\|_2\nonumber\nonumber\\
&\qquad<\infty.\label{eq.deltaEEbraket}
\end{align}
Thus, $L_{\widehat{f},H}$ is negative semidefinite on $\mathscr{F}$. It remains to verify that for any $x\in \mathscr{F}$, $\langle \sqrt{\sigma_\beta},x\rangle=0$. For this, it suffices to compute, for any two Bohr frequencies $\nu_1,\nu_2\in B(H)$ and $x=\ket{E_i}\bra{E_j}$, the inner products
\begin{align*}
\sqrt{\mathcal{Z}(\beta)}\langle \sqrt{\sigma_\beta},(\delta^\alpha_{\nu_1})^\dagger\delta^\alpha_{\nu_2}(x)\rangle&=e^{-\frac{\beta(\nu_1+\nu_2)}{4}}e^{-\frac{\beta E_j}{2}}\bra{ E_j}(A^\alpha_{\nu_1})^\dagger A^\alpha_{\nu_2}\ket{E_i}\\
&+e^{\frac{\beta(\nu_1+\nu_2)}{4}}e^{-\frac{\beta E_i}{2}}\bra{E_j}A^\alpha_{\nu_2}(A^\alpha_{\nu_1})^\dagger \ket{E_i}\\
&-\sqrt{\mathcal{Z}(\beta)}\,e^{\frac{\beta(\nu_1-\nu_2)}{4}}\bra{E_j}(A^\alpha_{\nu_1})^\dagger \sqrt{\sigma_\beta }A^\alpha_{\nu_2}\ket{E_i}\\
&-\sqrt{\mathcal{Z}(\beta)}\,e^{\frac{\beta(\nu_2-\nu_1)}{4}}\bra{E_j}A^\alpha_{\nu_2}\sqrt{\sigma_\beta}(A^\alpha_{\nu_1})^\dagger\ket{E_i} \\
&=\delta_{E_i+\nu_2,E_j+\nu_1}e^{-\frac{\beta(\nu_1+\nu_2)}{4}}e^{-\frac{\beta E_j}{2}}\bra{E_j}(A^\alpha_{\nu_1})^\dagger P_{E_i+\nu_2}A^\alpha_{\nu_2}\ket{E_i}\\
&+\delta_{E_j-\nu_2,E_i-\nu_1}e^{\frac{\beta(\nu_1+\nu_2)}{4}}e^{-\frac{\beta E_i}{2}}\bra{E_j}A^\alpha_{\nu_2}P_{E_i-\nu_1}(A^\alpha_{\nu_1})^\dagger\ket{E_i}\\
&-\delta_{E_i+\nu_2,E_j+\nu_1}e^{\frac{\beta(\nu_1-\nu_2)}{4}}e^{-\frac{\beta(E_i+\nu_2)}{2}}\bra{E_j}(A^\alpha_{\nu_1})^\dagger P_{E_i+\nu_2}A^\alpha_{\nu_2}\ket{E_i}\\
&-\delta_{E_i-\nu_1,E_j-\nu_2}e^{\frac{\beta(\nu_2-\nu_1)}{4}}e^{\frac{-\beta(E_i-\nu_1)}{2}}\bra{E_j}A^\alpha_{\nu_2}P_{E_i-\nu_1}(A^\alpha_{\nu_1})^\dagger\ket{E_i}=0.
\end{align*}
Therefore, $\langle\sqrt{\sigma_\beta},L_{\widehat{f},H}(x)\rangle=0$ for all $x\in\mathscr{F}$ by the definition of $L_{\widehat{f},H}(x)$. Thus, $L_{\widehat{f},H}$ remains symmetric over $\operatorname{span}(\mathscr{F}\cup\{\sqrt{\sigma_{\beta}}\})$, and negativity follows directly from the observation that $\mathcal{E}_{\widehat{f},H}(\lambda x+\mu\sqrt{\sigma_\beta})=|\lambda|^2\mathcal{E}_{\widehat{f},H}(x)$ for all $x\in\mathscr{F}$ and all $\lambda,\mu\in\mathbb{C}$.
\end{proof}

\noindent In what follows, we denote by $\mathscr{F}_{\sigma_\beta}:=\operatorname{span}(\mathscr{F}\cup\{\sqrt{\sigma_\beta}\})$. Next, by the Friedrichs extension recalled below, the quadratic form $\mathcal{E}_{\widehat{f},H}$ over $\mathscr{F}_{\sigma_\beta}\times \mathscr{F}_{\sigma_\beta}$ defined as $\mathcal{E}_{\widehat{f},H}(x,y):=-\langle x,L_{\widehat{f},H}y\rangle$ induces the generator of a strongly continuous semigroup
\cite[Lemma 10.16, Theorem 10.7]{Schmdgen2012}: let $\mathcal{E}$ be a quadratic form on $\mathscr{T}_2(\mathcal{H})_{\mathbb{R}}$, namely a non-negative definite, symmetric bilinear form $D(\mathcal{E})\times D(\mathcal{E})\to\mathbb{R}$, where $D(\mathcal{E})$ is a dense subspace of $\mathscr{T}_2(\mathcal{H})_{\mathbb{R}}$. The form $\mathcal{E}$ is said to be closed if $D(\mathcal{E})$ equipped with the norm $\|x\|_{D(\mathcal{E})}:=\sqrt{\|x\|_2^2+\mathcal{E}(x)}$ is a Hilbert space. $\mathcal{E}$ is said to be closable if it admits a closed extension, i.e., there exists a closed quadratic form $\mathcal{E}'$ such that $D(\mathcal{E})\subset D(\mathcal{E}')$ and $\mathcal{E}'$ coincide with $\mathcal{E}$ on $D(\mathcal{E})\times D(\mathcal{E})$.
In the case of a closed form $(\mathcal{E},D(\mathcal{E}))$, we extend it to a form $ \mathcal{E}':\mathscr{T}_2(\mathcal{H})\to \mathbb{R}_+\cup \{+\infty\} $ by setting $\mathcal{E}'(x)=+\infty$ whenever $x\notin D(\mathcal{E})$. In that case, $\mathcal{E}'$ is lower semicontinuous on $\mathscr{T}_2(\mathcal{H})$ \cite[Proposition 10.1]{Schmdgen2012}: for any sequence $x_n\to x$ in $\mathscr{T}_2(\mathcal{H})$,
\begin{align}\label{eq.lsc}
\mathcal{E}'\big(\lim_{n\to\infty} x_n\big)\le \liminf_{n\to\infty}\mathcal{E}'\big(x_n\big).
\end{align}
In what follows, we do not distinguish between $\mathcal{E}$ and $\mathcal{E}'$. Finally, given a densely defined, closed, non-negative symmetric linear form $\mathcal{E}$ over $\mathscr{T}_2(\mathcal{H})$, a subspace $\mathscr{D}\subseteq D(\mathcal{E})$ is called a form core for $\mathcal{E}$ if for every $x\in D(\mathcal{E})$, there exists a sequence $(x_n)\subseteq \mathscr{D}$ such that $\|x-x_n\|_{D(\mathcal{E})}\to 0$ as $n\to\infty$.

\begin{lemma}[Friedrichs extension]\label{lem:Friedrichs}
Let $L$ be a densely defined negative semidefinite  symmetric operator and define $\mathcal{E}(x,y):=-\langle L x,y \rangle$ with $x,y \in D(L)$. Then $\mathcal{E}$ is closable, and $D(L)$ is a form core for its closure.
Conversely, any closed quadratic form $\mathcal{E}$ admits a unique densely defined non-positive self-adjoint operator $L$ on  the real Hilbert space $\mathscr{T}_2(\mathcal{H})_{\mathbb{R}}$ of Hermitian elements in $\mathscr{T}_2$, defined on the set of elements $x\in D(\mathcal{E})$ for which there exists $y\in \mathscr{T}_2(\mathcal{H})_{\mathbb{R}}$ such that, for all $ z\in D(\mathcal{E})$, $\mathcal{E}(x,z)=-\Tr(zy)$. In that case, $L(x)=y$. 
\end{lemma}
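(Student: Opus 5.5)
The first part is the classical closability of forms coming from semibounded symmetric operators, and the second part is the representation theorem for closed forms. I would prove the two halves separately, in the real Hilbert space $\mathscr{T}_2(\mathcal{H})_{\mathbb{R}}$.

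\emph{Closability.} Set $\|x\|_{\mathcal{E}}^2:=\|x\|_2^2+\mathcal{E}(x)$ on $D(L)$. Let $H_{\mathcal{E}}$ be the abstract completion of $(D(L),\|\cdot\|_{\mathcal{E}})$. The inclusion $D(L)\hookrightarrow \mathscr{T}_2$ is contractive, so it extends to a contraction $\iota:H_{\mathcal{E}}\to\mathscr{T}_2$.

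The key step is to show that $\iota$ is injective. Take a sequence $x_n\in D(L)$ that is $\|\cdot\|_{\mathcal{E}}$-Cauchy with $x_n\to 0$ in $\mathscr{T}_2$. For any fixed $m$,
\[
\|x_n\|_{\mathcal{E}}^2=\langle x_n,x_n\rangle_{\mathcal{E}}=\langle x_n-x_m,x_n\rangle_{\mathcal{E}}+\langle x_m,x_n\rangle_{\mathcal{E}}.
\]
Since $x_m\in D(L)$, we have $\langle x_m,x_n\rangle_{\mathcal{E}}=\langle (I-L)x_m,x_n\rangle_2$. This tends to $0$ as $n\to\infty$. The first term is bounded by $\|x_n-x_m\|_{\mathcal{E}}\sup_k\|x_k\|_{\mathcal{E}}$, which is small for $n,m$ large. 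Hence $\|x_n\|_{\mathcal{E}}\to 0$, and $\iota$ is injective.

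So $\iota(H_{\mathcal{E}})$ is a subspace of $\mathscr{T}_2$ on which the closure $\bar{\mathcal{E}}$ is well defined. It is closed by construction, extends $\mathcal{E}$, and contains $D(L)$ as a dense subspace in the form norm. This is exactly the form-core statement. I expect this injectivity argument to be the main obstacle; everything else is bookkeeping.

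\emph{Representation theorem.} Let $\mathcal{E}$ be closed, so $(D(\mathcal{E}),\langle\cdot,\cdot\rangle_{\mathcal{E}})$ is a Hilbert space. For $y\in\mathscr{T}_2$, the functional $z\mapsto\langle y,z\rangle_2$ is bounded on $D(\mathcal{E})$. By Riesz, there is a unique $Ry\in D(\mathcal{E})$ with
\[
\langle Ry,z\rangle_{\mathcal{E}}=\langle y,z\rangle_2 \quad\text{for all } z\in D(\mathcal{E}).
\]
The map $R$ has the following properties:
\begin{itemize}
\item it is bounded with $\|R\|\le 1$, since $\|Ry\|_2^2\le \|Ry\|_{\mathcal{E}}^2=\langle y,Ry\rangle_2\le\|y\|_2\|Ry\|_2$;
\item it is symmetric and positive, because $\langle Ry,y'\rangle_2=\langle Ry,Ry'\rangle_{\mathcal{E}}$;
\item it is injective, because $Ry=0$ forces $\langle y,z\rangle_2=0$ on the dense set $D(\mathcal{E})$;
\item its range is dense in $D(\mathcal{E})$ for the form norm, because the orthogonal complement of the range there is trivial.
\end{itemize}

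Now define $L:=I-R^{-1}$ on $\operatorname{ran}R$. Since $R$ is a bounded, positive, injective self-adjoint operator, $R^{-1}$ is self-adjoint and $R^{-1}\ge I$, so $L$ is self-adjoint and $L\le 0$. Unwinding the definitions, $x\in D(L)$ is equivalent to the existence of $y$ with $\mathcal{E}(x,z)=-\langle y,z\rangle_2$ for all $z\in D(\mathcal{E})$, and then $Lx=y$. This is the characterisation given in the statement.

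Uniqueness follows from the same characterisation. Any self-adjoint $L'$ with this property is an extension of $L$ or a restriction of it. Two self-adjoint operators related by extension must coincide, so $L'=L$.
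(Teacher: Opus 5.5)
Your proof is correct and takes essentially the same approach as the paper. The paper gives no argument of its own here and simply cites \cite[Lemma 10.16, Theorem 10.7]{Schmdgen2012}; your steps are the standard textbook argument behind those results: closability by showing that an $\mathcal{E}$-Cauchy sequence in $D(L)$ tending to $0$ in $\mathscr{T}_2$ has form norm tending to $0$, and the representation theorem via the Riesz map $R$ (which is $JJ^*$ for the embedding $J$ of the form domain into $\mathscr{T}_2$) with $L=I-R^{-1}$.
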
 
\noindent The above construction readily implies that $\mathcal{E}_{\widehat{f},H}$ is closable. For simplicity of notation, we continue to denote its closure by $\mathcal{E}_{\widehat{f},H}$ over the domain $D(\mathcal{E}_{\widehat{f},H})\times D(\mathcal{E}_{\widehat{f},H})$, so that $\mathscr{F}_{\sigma_\beta}$ is a form core for $\mathcal{E}_{\widehat{f},H}$. By Friedrichs extension, it also admits a unique densely defined non-positive self-adjoint operator on $\mathscr{T}_2(\mathcal{H})_{\mathbb{R}}$, called the generator of $\mathcal{E}_{\widehat{f},H}$. Since the latter extends $L_{\widehat{f},H}$, we also identify the two by abuse of notation. In other words, $L_{\widehat{f},H}$ generates a strongly continuous, symmetric semigroup of contractions over $\mathscr{T}_2(\cH)$ \cite[Proposition 6.14]{Schmdgen2012}, which we denote by $\{e^{\smash{t L_{\widehat{f},H}}}\}_{t\ge 0}$.

\smallskip

Next, we argue that the semigroup $\{e^{t L_{\widehat{f},H}}\}_{t\ge 0}$ induces a semigroup of quantum channels over the trace-class operators $\mathscr{T}_1(\mathcal{H})$. This is achieved by leveraging the connection between semigroups of completely positive, trace preserving maps over $\mathscr{T}_1(\mathcal{H})$ and completely Markov semigroups over $\mathscr{T}_2(\mathcal{H})$: given a faithful quantum state $\omega$ on $\mathcal{H}$, we say that a densely defined, closed, non-negative definite, symmetric bilinear form $\mathcal{E}$ is a Dirichlet form (with respect to $\omega$) if it satisfies the following condition: for any self-adjoint $x\in D(\mathcal{E})$, denoting by $x_+$ the positive part of $x$ and $x_\wedge:=x-(x-\sqrt{\omega})_+$, it holds that $x_+,x_\wedge\in D(\mathcal{E})$ and $
\mathcal{E}(x_+),\mathcal{E}(x_\wedge)\le \mathcal{E}(x)$.
A strongly continuous, symmetric semigroup $\{T_t\}_{t\ge 0}$ contraction is said to be Markov (with respect to $\omega$) if and only if $0\le x\le \sqrt{\omega}$ implies that $0\le T_t(x)\le \sqrt{\omega}$ for all $t\ge0$. More generally, let ${\mathscr{M}}_n$ denote the algebra of $n\times n$ matrices acting on $\mathbb{C}^n$. We let $\mathcal{E}^{(n)}$ denote the quadratic form on $\mathscr{T}_2(\mathbb{C}^n\otimes \mathcal{H})_{\mathbb{R}}$, where the state on ${\mathscr{M}}_n$ is taken as the usual trace, given by $D(\mathcal{E}^{(n)}):={\mathscr{M}}_n\otimes D(\mathcal{E})$, with $\mathcal{E}^{(n)}(\{x_{ij}\})=\sum_{i,j=1}^n\mathcal{E}(x_{ij})$.
Then the form $\mathcal{E}$ is said to be $n$-Dirichlet if the form $\mathcal{E}^{(n)}$ is Dirichlet for the underlying state $\omega^{(n)}:=\tau_n\otimes \omega$, where $\tau_n$ denotes the maximally mixed state in ${\mathscr{M}}_n$, and completely Dirichlet if it is $n$-Dirichlet for all $n$. Similarly, the semigroup $\{T_t\}_{t\ge 0}$ is $n$-Markov if for all $0\le x\le \sqrt{\tau_n\otimes \omega}$, $0\le \operatorname{id}_n\otimes T_t(x)\le \sqrt{\tau_n\otimes \omega}$, and completely Markov if it is $n$-Markov for all $n$. Next, we make use of the equivalence between completely Dirichlet forms and completely Markov semigroups \cite[Theorem 5.7]{Goldstein1995} to prove the complete Markov property of the semigroup generated by $L_{\widehat{f},H}$:
\begin{proposition}\label{GibbsDirichlet}
The closed quadratic form $\mathcal{E}_{\widehat{f},H}$ is completely Dirichlet. Moreover, the strongly continuous, symmetric semigroup $\{e^{\smash{tL_{\widehat{f},H}}}\}_{t\ge 0}$ is completely Markov with respect to the state $\sigma_\beta$.
\end{proposition}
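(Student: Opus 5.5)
The plan is to check the completely Dirichlet property directly on the form core $\mathscr{F}_{\sigma_\beta}$ and then transfer it to the closure. Once the form is known to be completely Dirichlet, the complete Markov property of $\{e^{tL_{\widehat f,H}}\}_{t\ge0}$ follows from \cite[Theorem 5.7]{Goldstein1995}.

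The first step is to write $\mathcal{E}_{\widehat f,H}$ as a superposition of derivation-type forms. By \eqref{eq:Dirichletformdef}, for $x\in\mathscr{F}$,
\[
\mathcal{E}_{\widehat f,H}(x)=\sum_\alpha\int_{\mathbb R}\frac{\|x_t\|_2^2}{\beta\cosh(2\pi t/\beta)}\,dt,\qquad x_t=\sum_\nu e^{it\nu}\widehat f(\nu)e^{\beta\nu/4}\delta^\alpha_\nu(x).
\]
Writing $\delta^\alpha_\nu(x)=e^{-\beta\nu/4}A^\alpha_\nu x-e^{\beta\nu/4}xA^\alpha_\nu$, this becomes
\[
x_t=V^\alpha_t\,x-x\,W^\alpha_t,\qquad V^\alpha_t:=\sum_\nu e^{it\nu}\widehat f(\nu)A^\alpha_\nu,\qquad W^\alpha_t:=\sum_\nu e^{it\nu}\widehat f(\nu)e^{\beta\nu/2}A^\alpha_\nu.
\]
The relation $W^\alpha_t=\sigma_\beta^{-1/4}\,\widetilde V^{\alpha}_t\,\sigma_\beta^{1/4}$-type identities, together with \eqref{eq:KMSFilterFunction}, should yield the standard GNS-symmetric structure. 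Concretely, $x_t=\sigma_\beta^{1/4}\,\partial_{t,\alpha}\big(\sigma_\beta^{-1/4}x\sigma_\beta^{-1/4}\big)\sigma_\beta^{1/4}$, where $\partial_{t,\alpha}(X)=[K^\alpha_t,X]$ is an ordinary commutator derivation with a single operator $K^\alpha_t=\sigma_\beta^{-1/4}V^\alpha_t\sigma_\beta^{1/4}$. This is the structure of Cipriani's \cite{Cipriani1997} KMS-symmetric derivation forms.

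The second step is to prove the Dirichlet inequalities for each such piece. I would proceed via the equivalent semigroup-level criterion: a form $\|\sigma^{1/4}[K,\sigma^{-1/4}\,\cdot\,\sigma^{-1/4}]\sigma^{1/4}\|_2^2$, or more precisely its symmetrized version, is the Dirichlet form of the Lindbladian $X\mapsto K^\dagger XK-\frac12\{K^\dagger K,X\}$ plus its KMS-adjoint part. A cleaner route is therefore to truncate. Let $P_n$ be the spectral projection of $H$ onto its first $n$ eigenvalues, and define $\mathcal{E}_n$ by replacing $A^\alpha$ with $P_nA^\alpha P_n$. In finite dimensions, $\mathcal{E}_n$ is exactly the Dirichlet form of the finite-dimensional KMS-detailed-balance Lindbladian of \cite{ding2025efficient}, restricted to $\operatorname{im}P_n$ with the Gibbs state of $P_nHP_n$. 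That Lindbladian generates CPTP maps, and it is known to be $\sigma$-KMS-symmetric. By the finite-dimensional version of Goldstein–Lindsay, the symmetric embedding of the Heisenberg semigroup is completely Markov, so $\mathcal{E}_n$ is completely Dirichlet. One subtlety: the reference state must be the restricted Gibbs state, so I would instead work with $\sqrt{\sigma_\beta}P_n$ and note that the Dirichlet conditions for $x$ supported in $P_n\mathscr{T}_2P_n$ hold relative to $P_n\sqrt{\sigma_\beta}P_n$, up to normalization constants that rescale harmlessly.

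The third step, which I expect to be the main obstacle, is passing to the limit. For $x\in\mathscr{F}$ one has $\mathcal{E}_n(x)\to\mathcal{E}_{\widehat f,H}(x)$, by dominated convergence using exactly the absolutely summable bounds in the proof of Lemma~\ref{lem:definL}. The difficulty is that $x_+$ and $x_\wedge$ need not lie in $\mathscr{F}$. For $x\in P_m\mathscr{F}P_m$, however, $x_+$ and $x_\wedge$ lie in $P_m\mathscr{T}_2P_m+\mathbb{C}\sqrt{\sigma_\beta}$-type spaces: $x_+$ is finite rank in $\operatorname{im}P_m$, and $x_\wedge=x-(x-\sqrt{\sigma_\beta})_+$, where $(x-\sqrt{\sigma_\beta})_+$ is finite rank up to a piece acting diagonally on the complement. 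This allows convergence $\mathcal{E}_n(x_+)\to\mathcal{E}(x_+)$ to be checked by the same estimates. In general, I would use lower semicontinuity \eqref{eq.lsc}, approximate arbitrary $x\in D(\mathcal{E})$ by $x_k\in\mathscr{F}_{\sigma_\beta}$ in form norm, use the continuity of $x\mapsto x_+$ in $\mathscr{T}_2$, and conclude $\mathcal{E}(x_+)\le\liminf\mathcal{E}(x_{k,+})\le\lim\mathcal{E}(x_k)=\mathcal{E}(x)$. The same argument works for $x_\wedge$ and for the amplifications $\mathcal{E}^{(n)}$, which have the same structure with $A^\alpha$ replaced by $I\otimes A^\alpha$. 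Together these give the completely Dirichlet property, and \cite[Theorem 5.7]{Goldstein1995} then gives the complete Markov property.
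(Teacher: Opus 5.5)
Your argument is correct, but it reaches the core inequalities by a different route than the paper.

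The paper stays in infinite dimensions throughout. On the core $\mathscr{M}_n\otimes\mathscr{F}_{\sigma_\beta}$ it writes $L_{\widehat f,H}=K+T$, where $K$ is a sum of left and right multiplications and $T$ is completely positive by the Fourier representation of $\operatorname{sech}$ in \eqref{eq.FourierintsechTmap}. It then checks the Beurling--Deny inequality directly:
\begin{itemize}
\item in $\mathcal{E}^{(n)}(\mathbf{x})-\mathcal{E}^{(n)}(\mathbf{x}_+)$, the $K$-parts of the cross terms vanish because $\mathbf{x}_+\mathbf{x}_-=0$, and the $T$-parts are nonnegative;
\item the $\wedge$-inequality additionally uses $L_{\widehat f,H}\sqrt{\sigma_\beta}=0$;
\item lower semicontinuity and the core property conclude, exactly as in your last step.
\end{itemize}

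You instead import the complete Dirichlet property from finite dimensions and pass to the limit. In finite dimensions, a GKLS generator with $G+G^\dagger=-\sum_\alpha (L^\alpha)^\dagger L^\alpha$ gives a CPTP semigroup whose KMS embedding is completely Markov. This buys you positivity for free from the manifestly CP term $\sum_\alpha L^\alpha\rho(L^\alpha)^\dagger$, rather than from positive-definiteness of the $\operatorname{sech}$ kernel.

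The price is two steps you only assert and should carry out.
\begin{itemize}
\item You must identify $\mathcal{E}_n$ with the Dirichlet form of the truncated Lindbladian. This is precisely the algebraic computation of Proposition~\ref{propDirichlettoSchro}, done on $P_n\mathcal{H}$, including the trace-preservation identity above.
\item You must handle the limit on elements with a $\sqrt{\sigma_\beta}$-component. Suppose $\mathbf{x}\in\mathscr{M}_k\otimes\mathscr{F}_{\sigma_\beta}$ is supported in $P_m$ up to its $\sqrt{\sigma_\beta}$-part. Then for $n\ge m$, $\mathbf{x}$, $\mathbf{x}_+$ and $\mathbf{x}_\wedge$ all commute with $I\otimes P_n$, so the truncated form only sees their $P_n$-blocks. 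On those blocks the finite-dimensional inequalities hold, after the harmless rescaling of the reference state. Moreover $\sqrt{\sigma_\beta}$ lies in the kernel of every truncated form, and the remaining convergence follows by dominated convergence from \eqref{eq.deltaEEbraket}.
\end{itemize}

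Contrary to your remark, for $x\in\mathscr{F}$ itself both $x_+$ and $x_\wedge$ already lie in $\mathscr{F}$. Only elements with a $\sqrt{\sigma_\beta}$-component leave $\mathscr{F}$.

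Your Step~1 plays no role, and as stated it is inaccurate. The structure is KMS-symmetric, not GNS-symmetric. Also, a single commutator $[K^\alpha_t,\cdot\,]$ with non-normal $K^\alpha_t$ need not even give a $*$-invariant form. It is the pairing of each jump with its adjoint through \eqref{eq:KMSFilterFunction} that supplies this.
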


\begin{proof}
From the above discussion, it suffices to show that the 
form $\mathcal{E}_{\widehat{f},H}$ is completely Dirichlet. First, the invariance under taking adjoints directly follows from the invariance of the set of jumps under adjoint taking. Next, let $\mathbf{x}\equiv \{x_{ij}\}\in {\mathscr{M}}_n\otimes \mathscr{F}_{\sigma_\beta}$ be a self-adjoint operator with decomposition $\mathbf{x}=\mathbf{x}_+-\mathbf{x}_-$ into positive and negative parts. Decomposing $L_{\widehat{f},H}=K+T$ with, given $x\in\mathscr{F}$,

\begin{align*}
&K(x):=-\frac{1}{2}\sum_{\alpha\in\mathcal{A}}\sum_{\nu_1,\nu_2\in B(H)}\frac{\overline{\widehat{f}(\nu_1)}\widehat{f}(\nu_2)\,e^{\beta(\nu_1+\nu_2)/4}}{\cosh((\nu_1-\nu_2)\beta/4)}  \Big((A^\alpha_{\nu_1})^\dagger A^\alpha_{\nu_2}x+x A^\alpha_{\nu_2}(A^\alpha_{\nu_1})^\dagger\Big)\,,
\end{align*}
and 
\begin{align}
T(x)&:=\frac{1}{2}\sum_{\alpha\in\mathcal{A}}\sum_{\nu_1,\nu_2\in B(H)}\frac{\overline{\widehat{f}(\nu_1)}\widehat{f}(\nu_2)\,e^{\beta(\nu_1+\nu_2)/4}}{\cosh((\nu_1-\nu_2)\beta/4)} \Big((A^\alpha_{\nu_1})^\dagger x A^\alpha_{\nu_2}+A^\alpha_{\nu_2}x(A^\alpha_{\nu_1})^\dagger \Big)\nonumber \\
&= \frac{1}{2}\sum_{\alpha\in\mathcal{A}}  \int_{-\infty}^\infty \frac{1}{\beta\operatorname{cosh}(2\pi t/\beta)}\,\Big((R^\alpha_t)^\dagger x R^\alpha_t+R^\alpha_tx(R^\alpha_t)^\dagger \Big) \,dt\label{eq.FourierintsechTmap} \,,
\end{align}
with $R^\alpha_t:=\sum_{\nu\in B(H)}e^{it\nu+\beta\nu/4}A^\alpha_\nu\widehat{f}(\nu)$, we have that
\begin{align*}
\mathcal{E}_{\widehat{f},H}^{(n)}(\mathbf{x})-\mathcal{E}_{\widehat{f},H}^{(n)}(\mathbf{x}_+)&=-\langle \mathbf{x}_-,(\operatorname{id}_n\otimes L)(\mathbf{x}_-)\rangle  +\langle \mathbf{x}_+,(\operatorname{id}_n\otimes L)(\mathbf{x}_-)\rangle+\langle \mathbf{x}_-,(\operatorname{id}_n\otimes L)(\mathbf{x}_+)\rangle\\
&\ge \langle \mathbf{x}_+,(\operatorname{id}_n\otimes L)(\mathbf{x}_-)\rangle+\langle \mathbf{x}_-,(\operatorname{id}_n\otimes L)(\mathbf{x}_+)\rangle\\
&=\langle \mathbf{x}_+,({\operatorname{id}}_n\otimes T)(\mathbf{x}_-)\rangle+\langle \mathbf{x}_-,({\operatorname{id}}_n\otimes T)(\mathbf{x}_+)\rangle\ge 0\,.
\end{align*}
In the penultimate line, we use the fact that the terms involving the map $K$ are all identically zero due to the cyclicity of the trace, the definition of $K$ as a sum of left and right multiplication operators, and the fact that $\mathbf{x}_+\mathbf{x}_-=\mathbf{x}_-\mathbf{x}_+=0$. The last line follows from the complete positivity of $T$, which is apparent from the Fourier integral representation of $\operatorname{sech}$ (cf. Eq. \eqref{eq.FourierintsechTmap}), and since $\mathbf{x}_-,\mathbf{x}_+$ are positive semidefinite. We conclude using the form core property of $\mathscr{M}_n\otimes \mathscr{F}_{\sigma_\beta}$ with respect to the form norm $\|\cdot \|_{D(\mathcal{E})}$
 and lower semicontinuity of $\mathcal{E}$ (cf.~\Cref{eq.lsc}). Similarly, we can prove that $\mathcal{E}_{\widehat{f},H}^{(n)}(\mathbf{x}_\wedge)\le \mathcal{E}_{\widehat{f},H}^{(n)}(\mathbf{x})$, simply from $\mathbf{x}_\wedge,\mathbf{x}-\mathbf{x}_\wedge\ge 0$ and $\mathbf{x}_\wedge (\mathbf{x}-\mathbf{x}_\wedge)=(\mathbf{x}-\sigma_\beta^{\smash{(n)\frac{1}{2}}})_+(\mathbf{x}-\sigma_\beta^{\smash{(n)\frac{1}{2}}})_-=0$.

\end{proof}

\subsection{From Hilbert to Schr\"{o}dinger}\label{appendixHilberttoSchro}

Next, we show how to define a semigroup of quantum channels on the space $\mathscr{T}_1(\mathcal{H})$ of trace-class operators from the completely Markov semigroup constructed in \Cref{GibbsDirichlet}. We start by deriving a simple density argument: in what follows, given a faithful quantum state $\omega$ on $\mathcal{H}$ and two real numbers $\lambda\le  \mu$, we denote by $B_{[\lambda,\mu]}(\omega)$ the set of self-adjoint, trace-class operators $x$ with $\lambda \,\omega \le x\le \mu\,\omega$.

\begin{lemma}\label{densityT1T2}
Let $\omega$ be a faithful quantum state on $\mathcal{H}$. Then the set of positive semidefinite, trace-class operators $\mathscr{T}_1(\mathcal{H})_+$ satisfies
\begin{align*}
\mathscr{T}_1(\mathcal{H})_+=\overline{\bigcup_{\lambda\in\mathbb{R}_+}B_{[0,\lambda]}(\omega)}^{\|\cdot \|_1}\,.
\end{align*}
\end{lemma}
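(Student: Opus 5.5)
The inclusion ``$\supseteq$'' is immediate. Each $B_{[0,\lambda]}(\omega)$ consists of positive trace-class operators, and $\mathscr{T}_1(\mathcal{H})_+$ is closed in trace norm, since positivity is preserved under weak limits and trace-norm convergence implies weak convergence. The substance is therefore ``$\subseteq$'': given $\rho\in\mathscr{T}_1(\mathcal{H})_+$ and $\epsilon>0$, we must produce $x\ge0$ with $x\le\lambda\omega$ for some $\lambda$ and $\|\rho-x\|_1\le\epsilon$.

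The plan is to reduce to rank-one operators and then approximate vectors.

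\emph{Reduction to rank one.} Write $\rho=\sum_k p_k|\psi_k\rangle\langle\psi_k|$ by the spectral theorem. Truncating to finitely many terms costs an arbitrarily small trace norm. Each set $\bigcup_\lambda B_{[0,\lambda]}(\omega)$ is a convex cone closed under finite sums (with $\lambda$ adding up), so it suffices to approximate a single $|\psi\rangle\langle\psi|$ with $\|\psi\|=1$.

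\emph{Choice of approximants.} Let $\omega=\sum_j w_j|e_j\rangle\langle e_j|$ with all $w_j>0$, which is possible since $\omega$ is faithful. Take $\phi\in\operatorname{span}\{e_1,\dots,e_n\}$, for instance the projection of $\psi$ onto this span. Let $P_n$ be the projection onto this span. Then
\begin{align*}
|\phi\rangle\langle\phi|\le\|\phi\|^2P_n\le\|\phi\|^2\big(\min_{j\le n}w_j\big)^{-1}\omega,
\end{align*}
where the minimum is strictly positive by faithfulness. Hence $|\phi\rangle\langle\phi|\in B_{[0,\lambda]}(\omega)$ with $\lambda=\|\phi\|^2/\min_{j\le n}w_j$.

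\emph{Trace-norm estimate.} We have
\begin{align*}
\||\psi\rangle\langle\psi|-|\phi\rangle\langle\phi|\|_1\le\|\psi-\phi\|\,(\|\psi\|+\|\phi\|)\le 2\|\psi-\phi\|,
\end{align*}
and $\|\psi-\phi\|\to0$ as $n\to\infty$ because $\{e_j\}$ is an orthonormal basis. Summing the finitely many approximations gives the claim.

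The only delicate point is that $\omega$ is not bounded below, so an arbitrary $\rho$ is not dominated by a multiple of $\omega$. This is exactly why the argument projects onto finitely many eigenvectors of $\omega$, where $\omega$ is invertible. Everything else is routine.
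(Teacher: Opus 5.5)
Your proof is correct and follows the paper's idea: compress onto the span of finitely many eigenvectors of $\omega$, where $P_n\le(\min_{j\le n}w_j)^{-1}\omega$, and let $n\to\infty$. The only differences are that you first truncate the spectral decomposition of $\rho$ and treat rank-one terms, which makes elementary the trace-norm convergence $P_n\rho P_n\to\rho$ that the paper takes as standard, and that you also write out the easy inclusion $\supseteq$.
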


\begin{proof}
 
 Denote by $P_K$ the projection onto the subspace corresponding to the first $K$ largest eigenvalues of $\omega$. Then, for any positive semidefinite, trace-class operator $a$, $0\le P_ka P_k\le \|a\|P_K\le \frac{\|a\|}{\lambda_K}\omega$, where $\lambda_K$ denotes the $K$-th largest eigenvalue of $\omega$. Moreover, we can clearly see that $\|P_Ka P_K-a\|_1\to 0$ as $K\to\infty$. The result follows.
\end{proof}

\noindent  Next, we consider the completely Dirichlet form $\mathcal{E}_{\widehat{f},H}$ over $\mathscr{T}_2(\mathcal{H})_{\mathbb{R}}$ with the associated generator $L_{\widehat{f},H}$.
Using the non-commutative Radon-Nikodym theorem (see \cite[Lemma 2.2c]{schmitt1986radon} or \cite[Lemma 1.5]{Goldstein1995}), for any $0\le y\le \sigma_\beta$, there exists $X\in \mathscr{B}(\mathcal{H})_+$ with $\|X\|\le 1$ and $y=\sigma_\beta^{\smash{\scriptstyle 1/2}}X\sigma_\beta^{\smash{\scriptstyle 1/2}}$. We denote $x=\sigma_\beta^{\smash{\scriptstyle 1/4}}X\sigma_\beta^{\smash{\scriptstyle 1/4}}\in\mathscr{T}_2(\mathcal{H})_+$, so that $y=\sigma_{\beta}^{\smash{1/4}}x\sigma_\beta^{\smash{1/4}}$. Uniqueness of $X$ (and thus $x$) follows from the fact that $\sigma_\beta^{\smash{1/2}}X\sigma_\beta^{\smash{1/2}}=\sigma_\beta^{\smash{1/2}}Y\sigma_\beta^{\smash{1/2}}\Rightarrow X=Y$ by the faithfulness of $\sigma_\beta$. Next, given the continuous embedding 
\begin{align*}
\iota_2:\mathscr{T}_2(\mathcal{H})\to\mathscr{T}_1(\mathcal{H})\,,\qquad \iota_2(x)=\sigma_\beta^{\frac{1}{4}}x\sigma_\beta^{\frac{1}{4}},
\end{align*}
we can define induced maps on the trace-class operators as
\begin{align}\label{eqcLtoL}
\Phi_t\circ \iota_2(x)=\iota_2\circ e^{tL_{\widehat{f},H}}(x).
\end{align}
By the complete Markov property, 
\begin{align*}
0\le x\le \sqrt{ \sigma_\beta}\quad \Rightarrow\quad  0\le e^{tL_{\widehat{f},H}}(x)\le \sqrt{ \sigma_\beta}\quad \Rightarrow \quad 0\le  \Phi_t(y)=  \iota_2\Big( e^{tL_{\widehat{f},H}}(x) \Big)\le \sigma_\beta.
\end{align*}

\begin{lemma}\label{T1semigroupgeneration}
The maps $\Phi_t$ defined above can be extended by linearity to a strongly continuous semigroup of uniformly bounded, completely positive, trace preserving linear maps on $\mathscr{T}_1(\mathcal{H})_{\mathbb{R}}$ with $\sup_{t\ge 0}\|\Phi_t\|_{1\to 1}\le 1$ and $\Phi_t(\sigma_\beta)=\sigma_\beta$ for all $t\ge 0$.
\end{lemma}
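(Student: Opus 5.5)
The idea is to build $\Phi_t$ on the order interval $B_{[0,1]}(\sigma_\beta)$, show it is an $\|\cdot\|_1$-contraction there, and then extend by linearity and density (Lemma \ref{densityT1T2}). Several properties are immediate from \eqref{eqcLtoL} together with the complete Markov property of Proposition \ref{GibbsDirichlet}.

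\emph{Invariance.} Since $L_{\widehat{f},H}\sqrt{\sigma_\beta}=0$ by Lemma \ref{lem:definL}, we have $e^{tL_{\widehat{f},H}}\sqrt{\sigma_\beta}=\sqrt{\sigma_\beta}$, and hence $\Phi_t(\sigma_\beta)=\sigma_\beta$.

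\emph{Trace preservation.} Write $\Tr(\iota_2(x))=\langle \sqrt{\sigma_\beta},x\rangle$. By self-adjointness of $e^{tL}$,
\[
\langle\sqrt{\sigma_\beta},e^{tL}x\rangle=\langle e^{tL}\sqrt{\sigma_\beta},x\rangle=\langle\sqrt{\sigma_\beta},x\rangle ,
\]
so $\Tr\Phi_t(y)=\Tr(y)$.

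\emph{Positivity on the order interval.} For $0\le y\le\sigma_\beta$, the implication displayed just before the lemma gives $0\le\Phi_t(y)\le\sigma_\beta$.

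\emph{Contraction on the positive cone.} Let $y\in B_{[0,\lambda]}(\sigma_\beta)$ and $y'\in B_{[0,\lambda']}(\sigma_\beta)$. By linearity and positivity, $\Phi_t(y)-\Phi_t(y')$ is a difference of two positive operators with traces $\Tr y$ and $\Tr y'$. This alone does not yet give $\|\Phi_t(y-y')\|_1\le\|y-y'\|_1$, which is exactly what the extension needs.

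To get the contraction I would work with a general self-adjoint $z=\iota_2(x)$ where $-\lambda\sigma_\beta\le z\le\lambda\sigma_\beta$. Use trace duality, $\|\Phi_t z\|_1=\sup_{\|B\|\le1}|\Tr(B\,\Phi_t z)|$, and move the semigroup to the Heisenberg side:
\[
\Tr\big(B\,\iota_2(e^{tL}x)\big)=\langle \iota_2(B),e^{tL}x\rangle=\langle e^{tL}\iota_2(B),x\rangle .
\]
Now apply the Markov property to $B$. For self-adjoint $B$ with $-I\le B\le I$ we have $-\sqrt{\sigma_\beta}\le\iota_2(B)\le\sqrt{\sigma_\beta}$. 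Applying Markovianity to $(\iota_2(B)+\sqrt{\sigma_\beta})/2\in[0,\sqrt{\sigma_\beta}]$, we get $e^{tL}\iota_2(B)=\iota_2(B_t)$ with $\|B_t\|\le1$, using the Radon--Nikodym representation recalled above. Hence
\[
|\Tr(B\,\Phi_t z)|=|\Tr(B_t z)|\le\|z\|_1 .
\]
Complex $B$ is handled by splitting into real and imaginary parts on the self-adjoint space $\mathscr{T}_1(\mathcal{H})_{\mathbb{R}}$, which at worst costs a factor of $2$.

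\emph{Extension.} The set $\iota_2(\mathscr{T}_2)$ contains $\bigcup_\lambda B_{[-\lambda,\lambda]}(\sigma_\beta)$, and by Lemma \ref{densityT1T2} its real span is dense in $\mathscr{T}_1(\mathcal{H})_{\mathbb{R}}$. The uniform bound therefore extends $\Phi_t$ to a contraction on all of $\mathscr{T}_1$, and trace preservation and invariance pass to the closure.

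\emph{Complete positivity.} Apply the same argument to $\operatorname{id}_n\otimes e^{tL}$ with the state $\tau_n\otimes\sigma_\beta$, using $n$-Markovianity. This gives $n$-positivity on dense positive sets, hence on all positive elements by closedness of the cone.

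\emph{Semigroup law and strong continuity.} The semigroup law follows from that of $e^{tL}$ on the dense image of $\iota_2$. Strong continuity follows from
\[
\|\iota_2(e^{tL}x-x)\|_1\le\|\sigma_\beta^{1/4}\|_4^2\,\|e^{tL}x-x\|_2\to0 \qquad\text{(H\"older)},
\]
valid on $\iota_2(\mathscr{T}_2)$, together with the uniform bound and an $\varepsilon/3$ argument.

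The main obstacle is the contraction estimate. It requires dualising correctly via KMS-type symmetry, namely that the $\mathscr{T}_2$-adjoint of $\iota_2$ paired with $e^{tL}$ corresponds to the Heisenberg map. It also requires checking that Markovianity yields the two-sided bound $-\sqrt{\sigma_\beta}\le e^{tL}\iota_2(B)\le\sqrt{\sigma_\beta}$, which I would obtain from the $[0,\sqrt{\sigma_\beta}]$ version by the affine shift $x\mapsto(x+\sqrt{\sigma_\beta})/2$.
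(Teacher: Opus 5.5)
Your proof is correct, but it takes a genuinely different route from the paper's.

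The paper proceeds in four steps:
\begin{itemize}
\item it bounds $\|\Phi_t(y)\|_1\le\lambda$ for $y\in B_{[0,\lambda]}(\sigma_\beta)$ via order preservation;
\item it extends $\Phi_t$ to the positive cone by density;
\item it passes to $\mathscr{T}_1(\mathcal{H})_{\mathbb{R}}$ through the Jordan decomposition;
\item only at the end, it proves trace preservation by a faithfulness argument on the Heisenberg dual $\Phi_t^\dagger$ of the already-extended map.
\end{itemize}
You reverse this order. Trace preservation comes first and directly, from $\Tr(\iota_2(x))=\langle\sqrt{\sigma_\beta},x\rangle$ and self-adjointness of $e^{tL_{\widehat f,H}}$. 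The contraction estimate then comes from dualising against bounded $B$ and applying the Markov property to $\iota_2(B)$. In effect you construct the unital, contractive Heisenberg map $B\mapsto B_t$, which the paper only invokes afterwards.

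Your ordering gives a genuine trace-norm Lipschitz bound on a dense set before extending. The paper's density step needs such a bound but does not quite supply it. First, $\lambda$ is not controlled by $\|y\|_1$: for $y=|E_n\rangle\langle E_n|$ one needs $\lambda\ge\mathcal{Z}(\beta)e^{\beta E_n}$. Second, the paper's step $\|\Phi_t(y_\pm)\|_1\le\|y_\pm\|_1$ tacitly uses trace preservation, which it only derives later from contractivity of $\Phi_t^\dagger$. Your ordering removes that circularity. The paper's route, in turn, is shorter and closer to the usual Dirichlet-form presentation.

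One tightening: complex $B$ costs you nothing, so you do not lose a factor of $2$. For self-adjoint $w$, the supremum in $\|w\|_1=\sup_{\|B\|\le 1}|\Tr(Bw)|$ is attained at the self-adjoint operator $B=P_+-P_-$, where $P_\pm$ are the spectral projections of $w$. Alternatively, once any uniform bound yields the extension, positivity and trace preservation of the extended map give $\|\Phi_t(y)\|_1\le\Tr(y_+)+\Tr(y_-)=\|y\|_1$. Either way, the constant $1$ in the statement is recovered.
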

\begin{proof}
Consider $ y\in B_{[0,\lambda]}(\sigma_\beta)$, $\lambda\in\mathbb{R}_+$. Therefore $\|y\|_1\le \lambda$. By order preservation, we have $0\le \Phi_t(y)\le \lambda \sigma_\beta$. Hence $\|\Phi_t(y)\|_1\le \lambda$. By the density of these elements $y$ in the set $\mathscr{T}_1(\mathcal{H})_+$, as seen in Lemma \ref{densityT1T2}, the maps $\Phi_t$ can be extended to bounded maps on $\mathscr{T}_1(\mathcal{H})_+$. Finally, since any $y\in \mathscr{T}_1(\mathcal{H})_{\mathbb{R}}$ can be decomposed as $y=y_+-y_-$, where both $y_+,y_-\in\mathscr{T}_1(\mathcal{H})_+$ and $y_+y_-=0$, we get
\begin{align*}
\|\Phi_t(y)\|_1\le \|\Phi_t(y_+)\|_1+\|\Phi_t(y_-)\|_1
\le \|y_+\|_1+\|y_-\|_1
=\|y\|_1
\end{align*}
and uniform boundedness follows. The semigroup property and strong continuity can be easily shown via density and uniform boundedness of the maps. Complete positivity follows from the completeness of the Dirichlet form. Next, $\sigma_\beta$ is a fixed point of $\Phi_t$, since 
\begin{align*}
\Phi_t(\sigma_\beta)=\iota_2\Big(e^{tL_{\widehat{f},H}}(\sqrt{\sigma_\beta})\Big)=\sigma_\beta,
\end{align*}
where we used that $L_{\smash{\widehat{f},H}}(\sqrt{\sigma_\beta})=0$. Finally, $\Phi_t$ is also trace preserving for all $t\ge 0$. Indeed, we have that the Heisenberg dual map $\Phi^{\smash{\dagger}}_t$ is contractive, which implies that $\Phi^{\smash{\dagger}}_t(I)\le I$. Assuming that the previous inequality is strict, then 
\begin{align*}
1=\Tr(\sigma_\beta I)=\Tr(\Phi_t(\sigma_\beta)I)=\Tr(\sigma_\beta \Phi_t^{\dagger}(I))<1
\end{align*}
where the last strict inequality holds due to the faithfulness of $\sigma_\beta$. This gives a contradiction. Thus $\Phi_t^{\dagger}$ is unital, which implies that $\Phi_t$ is trace preserving. 
\end{proof}

 \smallskip
 
\noindent Back to the generator $L_{\smash{\widehat{f},H}}$ introduced in \Cref{def:L}, we aim to obtain a Lindblad form for the generator of the associated semigroup $\{\Phi_t\}_{t\ge 0}$. For this, we formally introduce the jumps and drift operators:
\begin{align}
\label{eq:LalphainGenerationTheory}
L^\alpha:&=\sum_{\nu\in B(H)}\widehat{f}(\nu)A^\alpha_\nu=\sum_{E,E'\in\spec(H) }\widehat{f}(E'-E)P_{E'}A^\alpha P_{E},\qquad \text{for }\alpha\in\mathcal{A},\\
\nn G:&=-\sum_{\substack{\alpha\in\mathcal{A}\\\nu_1,\nu_2\in B(H)}}\, \frac{\overline{\widehat{f}(\nu_1)}\widehat{f}(\nu_2)\,e^{\frac{\beta(\nu_1-\nu_2)}{4}}}{2\cosh((\nu_1-\nu_2)\beta/4)}\,(A^\alpha_{\nu_1})^\dagger A^\alpha_{\nu_2} \\\label{eq:GinGenerationTheore}&=-\sum_{\substack{\alpha\in \mathcal{A}\\E,E',E''\in\spec(H)}}  \frac{\overline{\widehat{f}(E'-E'')}\widehat{f}(E'-E)\,e^{\frac{\beta(E-E'')}{4}}}{2\cosh((E-E'')\beta/4)}P_{E''}\left(A^\alpha\right)^\dagger P_{E'} A^\alpha P_E .
\end{align}
It is clear that these operators are well-defined on the domain $\mathcal{F}.$ In practice, they can, in fact, however, directly be defined on the larger domain $D(\widetilde H^\kappa)$ for certain $\kappa \ge 0$, as we show in Lemma~\ref{lem.LalphaGwelldefinedF} below.
To see this, we introduce the functions\footnote{From \Cref{eq:condAalphas} we know that $\widehat f$ is uniformly bounded. In this case, the function $F_2$ can be bounded by the expression $F_2(E)\lesssim F_1(E)\sum_{E''\in\spec(H)}\frac{1}{1+e^{\beta(E''-E)/2}},$, which is often easier to analyze in practice.}
\begin{align}
\label{eq:defF_1F_2}
    F_1(E):= \sum_{E'\in\spec(H)} |\widehat f(E'-E)|,\quad\text{and}\quad F_2(E):=\sum_{E',E''\in\spec(H)}  \frac{|\widehat{f}(E'-E'')\widehat{f}(E'-E)|}{1+e^{\beta(E''-E)/2}}.
\end{align}
for $E\in\spec(H).$ Note that from \Cref{eq:condAalphas} we easily see that $F_1(E)<\infty$ for all $E\in\spec(H).$ In practice, these are well-behaved and growing polynomially in $E$.
\begin{framed}
\begin{conditionB}\label{eq:F_1Kill}
There exists $\kappa_1>\gamma$ such that the following sums converge:
\begin{align}
\label{eq:kappa_1}
\sum_{E\in\spec(H)} F_1(E)(1+h_0+E)^{-(\kappa_1-\gamma)}&<\infty\, \text{ and }\,  \sum_{E\in\spec(H)} F_2(E) \left(1+h_0+E\right)^{-(\kappa_1-\gamma)}<\infty.
\end{align}
\end{conditionB}
\end{framed}
\begin{lemma}
\label{lem.LalphaGwelldefinedF}
For $\mu\ge \gamma\ge 0$ being defined in \Cref{eq:condAalphas},
let $\kappa_1> \gamma$ be such that \Cref{eq:F_1Kill} holds. Then the operators $L^\alpha$ are well-defined on the dense domain $D(\widetilde H^{\kappa_1})$ for all $\alpha\in\mathcal{A}$ and relatively $\widetilde H^{\kappa_1}$-bounded. Furthermore,  $G$ is well-defined on the dense domain $D(\widetilde H^{\kappa_1+\mu-\gamma})$ and relatively $\widetilde H^{\kappa_1+\mu-\gamma}$-bounded.
\end{lemma}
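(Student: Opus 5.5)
The idea is to bound $L^\alpha$ and $G$ term by term in their spectral sums, using Condition~\ref{eq:condAalphas} to control each block $P_{E'}A^\alpha P_E$ and Condition~\ref{eq:F_1Kill} to make the remaining scalar sums converge. Absolute convergence of these sums will give both the extension from $\mathcal{F}$ to the larger domain and the relative bound.

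\textbf{The operator $L^\alpha$.} Fix $\psi\in D(\widetilde H^{\kappa_1})$ and use the representation \eqref{eq:LalphainGenerationTheory}. Each term is estimated as
\begin{align*}
\|P_{E'}A^\alpha P_E\psi\|\le \|A^\alpha\widetilde H^{-\gamma}\|\,\|\widetilde H^{\gamma}P_E\psi\| = \|A^\alpha\widetilde H^{-\gamma}\|\,(1+h_0+E)^{\gamma-\kappa_1}\,\|P_E\widetilde H^{\kappa_1}\psi\|.
\end{align*}
Then:
\begin{itemize}
\item Since $\|P_E\widetilde H^{\kappa_1}\psi\|\le\|\widetilde H^{\kappa_1}\psi\|$, summing over $E'$ and then $E$ gives
\begin{align*}
\sum_{E,E'}|\widehat f(E'-E)|\,\|P_{E'}A^\alpha P_E\psi\|\le C\,\|\widetilde H^{\kappa_1}\psi\|\sum_E F_1(E)(1+h_0+E)^{-(\kappa_1-\gamma)}.
\end{align*}
\item The right-hand side is finite by the first sum in \eqref{eq:kappa_1}.
\item The double series therefore converges absolutely in $\mathcal{H}$. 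This defines $L^\alpha\psi$, agrees with the formal definition on $\mathcal{F}$, and yields $\|L^\alpha\psi\|\lesssim\|\widetilde H^{\kappa_1}\psi\|$.
\item Density of $D(\widetilde H^{\kappa_1})$ is standard, since it contains $\mathcal{F}$ (Claim~\ref{densityFFs}).
\end{itemize}

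\textbf{The operator $G$.} Take $\psi\in D(\widetilde H^{\kappa_1+\mu-\gamma})$ and use the triple-sum form \eqref{eq:GinGenerationTheore}.
\begin{itemize}
\item The scalar prefactor is
\begin{align*}
\frac{e^{\beta(E-E'')/4}}{2\cosh((E-E'')\beta/4)}=\frac{1}{1+e^{\beta(E''-E)/2}},
\end{align*}
which matches the weight in $F_2$.
\item For the operator part, write $P_{E''}(A^\alpha)^\dagger P_{E'}A^\alpha P_E\psi$ and use the adjoint-invariance of $\{A^\alpha\}$. The factor $(A^\alpha)^\dagger$ is one of the bare jumps, so it obeys $\|(A^\alpha)^\dagger\widetilde H^{-\gamma}\|\le C$.
\item This gives
\begin{align*}
\|P_{E''}(A^\alpha)^\dagger P_{E'}A^\alpha P_E\psi\|\le C\,\|\widetilde H^{\gamma}P_{E'}A^\alpha P_E\psi\|\le C\,\|\widetilde H^{\gamma}A^\alpha\widetilde H^{-\mu}\|\,(1+h_0+E)^{\mu}\,\|P_E\psi\|.
\end{align*}
\item Since $\|P_E\psi\|\le (1+h_0+E)^{-(\kappa_1+\mu-\gamma)}\|\widetilde H^{\kappa_1+\mu-\gamma}\psi\|$, the factor $(1+h_0+E)^{\mu}$ cancels. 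We are left with $(1+h_0+E)^{-(\kappa_1-\gamma)}$.
\item Summing over $E',E''$ produces exactly $F_2(E)$. The remaining sum over $E$ converges by the second sum in \eqref{eq:kappa_1}.
\end{itemize}
This shows absolute convergence and the bound $\|G\psi\|\lesssim\|\widetilde H^{\kappa_1+\mu-\gamma}\psi\|$.

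The main point of care is this $G$ estimate: the powers of $\widetilde H$ have to be routed through the intermediate projection $P_{E'}$ so that the growth $(1+h_0+E)^\mu$ is exactly absorbed by the domain exponent. I also need to check that the $F_2$ weight matches the coth/sech prefactor identity above. Once the sums converge absolutely in norm, well-definedness (independence of summation order) and the relative-boundedness constants follow directly.
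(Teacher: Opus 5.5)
Your proposal is correct and matches the paper's proof: absolute convergence of the spectral sums, with $\|A^\alpha\widetilde H^{-\gamma}\|$ and $F_1$ for $L^\alpha$, and with $\|\widetilde H^{-\gamma}A^\alpha\|\,\|\widetilde H^{\gamma}A^\alpha\widetilde H^{-\mu}\|$, the identity $\tfrac{e^{\beta(E-E'')/4}}{2\cosh(\beta(E-E'')/4)}=\tfrac{1}{1+e^{\beta(E''-E)/2}}$ and $F_2$ for $G$. The only cosmetic difference is that you bound the outer factor by $\|(A^\alpha)^\dagger\widetilde H^{-\gamma}\|$ via adjoint invariance, which equals the paper's $\|\widetilde H^{-\gamma}A^\alpha\|$, and you leave implicit the harmless factor $|\mathcal{A}|$ coming from the sum over $\alpha$ in $G$.
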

\begin{proof}
As by \Cref{eq:condAalphas} we have that $P_E'A^\alpha P_E$ is bounded and, in particular, well-defined on $D(\widetilde H^{\kappa_1})$, it remains to show that the sum on the right hand side of the definition $L^\alpha$ in \eqref{eq:LalphainGenerationTheory} when applied on $\ket{\psi}\in D(\widetilde H^{\kappa_1})$ has finite Hilbert space norm\footnote{More precisely, we show absolute convergence of $\sum_{E,E'\in\spec{H}}|\widehat f(E'-E)|\|P_{E'}A^\alpha P_{E}\ket{\psi}\|.$ Hence, the sequence $\ket{\phi_{\tilde E}} := \sum_{\substack{E,E'\in\spec(H)\\E,E'\le \tilde E}} \widehat{f}(E'-E)P_{E'}A^\alpha P_{E}\ket{\psi}$ converges in $\cH$ and we define $L^\alpha\ket{\psi}:=\lim_{\tilde E}\ket{\phi_{\tilde E}}.$}.
For $\ket{\psi} \in D(\widetilde H^{\kappa_1})$ we see
\begin{align*}
    \left\|L^\alpha \ket{\psi}\right\|&\le \sum_{E,E'\in\spec{H}}|\widehat f(E'-E)|\|P_{E'}A^\alpha P_{E}\ket{\psi}\|  \\&\le \|A^{\alpha}\widetilde H^{-\gamma}\| \left\|\widetilde H^{\kappa_1}\ket{\psi}\right\|\sum_{E\in\spec(H)}F_1(E) \left(1+h_0+E\right)^{-(\kappa_1-\gamma)} \lesssim \left\|\widetilde H^{\kappa_1}\ket{\psi}\right\|
\end{align*}
where we used \Cref{eq:condAalphas} on the jumps $A^\alpha$ and \Cref{eq:F_1Kill}.
We can argue similarly for the $G$ operator as for $\ket{\psi}\in D(\widetilde H^{\kappa_1+\mu-\gamma})$ we have
\begin{align*}
\|G\ket{\psi}\|&\le \sum_{\substack{\alpha\in \mathcal{A}\\E,E',E''\in\spec(H)}}  \frac{|\widehat{f}(E'-E'')\widehat{f}(E'-E)|\,e^{\frac{\beta(E-E'')}{4}}}{2\cosh((E-E'')\beta/4)}\left\|P_{E''}\left(A^\alpha\right)^\dagger P_{E'} A^\alpha P_E\ket{\psi}\right\| \\&\le |\mathcal{A}|\left\|\widetilde H^{-\gamma}A^\alpha\right\|\left\|\widetilde H^{\gamma}A^\alpha \widetilde H^{-\mu}\right\| \left\|\widetilde H^{\kappa_1+\mu-\gamma}\ket{\psi}\right\|\sum_{E\in\spec(H)} F_2(E) \left(1+h_0+E\right)^{-(\kappa_1-\gamma)}\\&\lesssim \left\|\widetilde H^{\kappa_1+\mu-\gamma}\ket{\psi}\right\|.
\end{align*}
\end{proof}

\noindent Next, we provide an expression for the generator $\cL_{\widehat{f},H}$ of the semigroup $\{\Phi_t\}_{t\ge 0}$ on $\mathscr{F}$.
\begin{proposition}\label{propDirichlettoSchro}
The generator $\mathcal{L}_{\widehat{f},H}$ of the semigroup $\{\Phi_t\}_{t\ge 0}$ on $\mathscr{T}_1(\mathcal{H})$ satisfies  $\mathscr{F}\subset D(\mathcal{L}_{\widehat{f},H})$. Moreover, given any two energy eigenstates $\ket{E_i},\ket{E_j}$, the generator $\mathcal{L}_{\smash{\widehat{f},H}}$ associated with the semigroup $\{\Phi_t\}_{t\ge 0}$ evaluated at $\ket{E_i}\bra{E_j}$ takes the form
\begin{align}\label{eq:cLonF}
\mathcal{L}_{\widehat{f},H}(\ket{E_i}\bra{E_j})\!=\sum_{\alpha\in\mathcal{A}}L^\alpha \ket{E_i}(L^\alpha\ket{E_j})^\dagger+G\ket{E_i}\bra{E_j}+\ket{E_i}(G\ket{E_j})^\dagger. 
\end{align}
Following standard notations, we denote $\Phi_t:=e^{t\mathcal{L}_{\widehat{f},H}}$. 
\end{proposition}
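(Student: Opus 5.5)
The strategy is to transport the Hilbert--Schmidt generator back to $\mathscr{T}_1(\mathcal{H})$ through the embedding $\iota_2$ and then identify the result on $\mathscr{F}$ by a direct algebraic computation. Fix $y=\ket{E_i}\bra{E_j}\in\mathscr{F}$ and set $x:=\iota_2^{-1}(y)=\sigma_\beta^{-1/4}y\,\sigma_\beta^{-1/4}=\mathcal{Z}(\beta)^{1/2}e^{\beta(E_i+E_j)/4}\ket{E_i}\bra{E_j}\in\mathscr{F}$. By \Cref{lem:definL}, $x\in D(L_{\widehat{f},H})$, and the Friedrichs extension agrees with \eqref{def:L} on $\mathscr{F}$. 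Since $x\in D(L_{\widehat f,H})$, we have $t^{-1}(e^{tL_{\widehat f,H}}x-x)\to L_{\widehat f,H}x$ in $\mathscr{T}_2$. The map $\iota_2$ is bounded from $\mathscr{T}_2$ to $\mathscr{T}_1$ by H\"older, with $\|\iota_2(z)\|_1\le\|\sigma_\beta^{1/4}\|_4^2\|z\|_2=\|z\|_2$. Using \eqref{eqcLtoL}, this gives
\begin{align*}
\tfrac{1}{t}\big(\Phi_t(y)-y\big)=\iota_2\Big(\tfrac{1}{t}\big(e^{tL_{\widehat f,H}}x-x\big)\Big)\longrightarrow \iota_2\big(L_{\widehat f,H}x\big)\quad\text{in }\mathscr{T}_1 .
\end{align*}
Hence $y\in D(\mathcal{L}_{\widehat f,H})$ and $\mathcal{L}_{\widehat f,H}(y)=\sigma_\beta^{1/4}L_{\widehat f,H}(\sigma_\beta^{-1/4}y\sigma_\beta^{-1/4})\sigma_\beta^{1/4}$. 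One point to check is that \eqref{eqcLtoL} was stated on the order interval, but it extends to $\iota_2(\mathscr{T}_2)$ by linearity and density, using the bounds in \Cref{T1semigroupgeneration}.

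It then remains to prove \eqref{eq:cLonF} by expanding $(\delta^\alpha_{\nu_1})^\dagger\delta^\alpha_{\nu_2}(x)$ into its four terms. These are $e^{-\beta(\nu_1+\nu_2)/4}(A^\alpha_{\nu_1})^\dagger A^\alpha_{\nu_2}x$, $e^{\beta(\nu_1+\nu_2)/4}xA^\alpha_{\nu_2}(A^\alpha_{\nu_1})^\dagger$, and the two cross terms $-e^{\beta(\nu_2-\nu_1)/4}(A^\alpha_{\nu_1})^\dagger xA^\alpha_{\nu_2}$ and $-e^{\beta(\nu_1-\nu_2)/4}A^\alpha_{\nu_2}x(A^\alpha_{\nu_1})^\dagger$. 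Conjugation by $\sigma_\beta^{1/4}$ multiplies $P_{E}\,\cdot\,P_{E'}$ by $e^{-\beta(E+E')/4}$ and cancels the prefactor of $x$. Each term therefore picks up an explicit exponential in the Bohr frequencies.

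\begin{itemize}
\item For the left-multiplication term, the output lives in $P_{E_i+\nu_2-\nu_1}$, so the factor is $e^{-\beta(\nu_2-\nu_1)/4}$. Combined with the coefficient in \eqref{def:L}, this gives $-\overline{\widehat f(\nu_1)}\widehat f(\nu_2)e^{\beta(\nu_1-\nu_2)/4}/(2\cosh)$. That is precisely $G\ket{E_i}\bra{E_j}$ by \eqref{eq:GinGenerationTheore}.
\item The right-multiplication term yields $\ket{E_i}(G\ket{E_j})^\dagger$, after relabeling $\nu_1\leftrightarrow\nu_2$ and using the adjoint convention $(A^\alpha_\nu)^\dagger=((A^\alpha)^\dagger)_{-\nu}$.
\item The two cross terms should combine into $\sum_\alpha L^\alpha y(L^\alpha)^\dagger$. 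Doing so requires the KMS relation \eqref{eq:KMSFilterFunction}, $\overline{\widehat f(\nu)}=\widehat f(-\nu)e^{-\beta\nu/2}$, together with invariance of $\{A^\alpha\}$ under adjoints. These turn the $(A^\alpha_{\nu_1})^\dagger xA^\alpha_{\nu_2}$ term into one of the form $A^{\alpha'}_{\nu_2'}x(A^{\alpha'}_{\nu_1'})^\dagger$. The weights must add up as $\frac{e^{\beta(\nu_1-\nu_2)/4}+e^{\beta(\nu_2-\nu_1)/4}}{2\cosh((\nu_1-\nu_2)\beta/4)}=1$, so that the total coefficient collapses to $\widehat f(\nu_2)\overline{\widehat f(\nu_1)}$, which is exactly the coefficient in $L^\alpha y (L^\alpha)^\dagger$.
\end{itemize}

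Every rearrangement of the double sums over $\nu_1,\nu_2$ is justified by the absolute convergence already established in the proof of \Cref{lem:definL}. That same convergence shows $L^\alpha\ket{E_i}$ and $G\ket{E_i}$ are genuine vectors, as in \Cref{lem.LalphaGwelldefinedF}.

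I expect the main obstacle to be this last cross-term bookkeeping. The exponentials coming from conjugation by $\sigma_\beta^{\pm1/4}$, the $e^{\beta(\nu_1+\nu_2)/4}$ in \eqref{def:L}, and the KMS symmetry of $\widehat f$ must all be matched so that the $\cosh$ denominators cancel exactly. The relabeling $\alpha\mapsto\alpha'$ with $A^{\alpha'}=(A^\alpha)^\dagger$ must also be done consistently. I would first carry this out for a single pair $(\nu_1,\nu_2)$ with a matrix-element check on $\bra{E}\cdot\ket{E'}$, and only then sum.
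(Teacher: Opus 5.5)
Your proposal is correct and is essentially the paper's proof. Both arguments obtain $\mathcal{L}_{\widehat f,H}(y)=\iota_2\circ L_{\widehat f,H}(\iota_2^{-1}y)$ from boundedness of $\iota_2$, expand $(\delta^\alpha_{\nu_1})^\dagger\delta^\alpha_{\nu_2}$ into four terms, and rewrite the right-multiplication term and the cross term $(A^\alpha_{\nu_1})^\dagger yA^\alpha_{\nu_2}$ via $(\nu_1,\nu_2)\mapsto(-\nu_2,-\nu_1)$, adjoint-invariance of the jumps and \eqref{eq:KMSFilterFunction}, after which the cross terms combine with total weight $1$ into $\sum_\alpha L^\alpha y(L^\alpha)^\dagger$.

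Two small remarks. The KMS relation is also needed for the right-multiplication term, which your second bullet omits. Your prefactor $\mathcal{Z}(\beta)^{1/2}e^{\beta(E_i+E_j)/4}$ is the correct one; the paper's $e^{\beta(E_j-E_i)/4}$ is a typo.
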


\begin{proof}
In order to derive \eqref{eq:cLonF}, it suffices to consider $y=\ket{E_j}\bra{E_i}$, for any two energies $E_i,E_j\in\operatorname{Sp}(H)$. It is clear that there exists a unique $x=\lambda_{i,j}y\in\mathscr{T}_2(\mathcal{H})$, with $\lambda_{i,j}=\mathcal{Z}(\beta)^{1/2}e^{\smash{\beta(E_j-E_i)/4}}$, such that $y=\iota_2(x)$. Therefore, since $x\in D(L_{\widehat{f},H})$, we have that $y\in D(\mathcal{L}_{\widehat{f},H})$ and, by the continuity of the embedding $\iota_2$,
\begin{align*}
\mathcal{L}_{\widehat{f},H}(y)=\lim_{t\to 0}^{\mathscr{T}_1(\mathcal{H})}\frac{\Phi_t(y)-y}{t}=\iota_2\Big(\lim^{\mathscr{T}_2(\mathcal{H})}_{t\to 0}\frac{e^{tL_{\widehat{f},H}}(x)-x}{t}\Big)=\iota_2\circ L_{\widehat{f},H}(x).
\end{align*}
Equation \eqref{eq:cLonF} follows by the definition \eqref{def:L} of $L_{\widehat{f},H}$ and direct computation: first, we directly get that
\begin{align*}
\iota_2\circ L_{\widehat{f},H}(x)=-\sum_{\substack{\nu_1,\nu_2\in B(H)\\\alpha\in\mathcal{A}}}\frac{\overline{\widehat{f}(\nu_1)}\widehat{f}(\nu_2)e^{\frac{\beta(\nu_1+\nu_2)}{4}}}{2\cosh((\nu_1-\nu_2)\beta/4)}\,x^\alpha_{\nu_1,\nu_2}\equiv L_1+L_2+L_3+L_4,
\end{align*}
where each $L_j$ is a sum corresponding to one of the four elements in the decomposition of $x^\alpha_{\nu_1,\nu_2}$ below:
\begin{align}\label{eq:4termstomassage}
x^\alpha_{\nu_1,\nu_2}:=e^{-\frac{\beta\nu_2}{2}}(A^\alpha_{\nu_1})^\dagger A^\alpha_{\nu_2}y+e^{\frac{\beta\nu_2}{2}}yA^\alpha_{\nu_2}(A^\alpha_{\nu_1})^\dagger-e^{-\frac{\beta\nu_2}{2}}A^\alpha_{\nu_2}y(A^\alpha_{\nu_1})^\dagger-e^{\frac{\beta\nu_2}{2}}(A^\alpha_{\nu_1})^\dagger yA^\alpha_{\nu_2}.
\end{align}
Although this is obvious by construction, by an almost identical analysis to the one done in the proof \Cref{lem:definL}, we can also verify that each of the four sums above defines an element in $\mathscr{T}_1(\mathcal{H})$ by hand. Next, we consider the sums associated with the second and fourth terms in \eqref{eq:4termstomassage}: 
\begin{align*}
L_2=-\sum_{\substack{\nu_1,\nu_2\in B(H)\\\alpha\in\mathcal{A}}}\,\frac{\overline{\widehat{f}(\nu_1)}\widehat{f}(\nu_2)e^{\frac{\beta(\nu_1+\nu_2)}{4}}}{2\cosh((\nu_1-\nu_2)\beta/4)}\,e^{\frac{\beta\nu_2}{2}}yA^\alpha_{\nu_2}(A^\alpha_{\nu_1})^\dagger.
\end{align*}
\Cref{eq:condAalphas} implies that $\overline{\widehat{f}(-\nu_2)}\widehat{f}(-\nu_1)e^{\frac{-\beta(\nu_1+\nu_2)}{4}}=\widehat{f}(\nu_2)\overline{\widehat{f}(\nu_1)}e^{\frac{\beta(\nu_1+\nu_2)}{4}}$. Thus, by change of variables $\nu_1\leftarrow -\nu_2$, $\nu_2\leftarrow -\nu_1$, the invariance of $\{A^\alpha\}_{\alpha\in\mathcal{A}}$ under adjoints further implies that 
\begin{align*}
L_2=-\sum_{\substack{\nu_1,\nu_2\in B(H)\\\alpha\in\mathcal{A}}}\,\frac{\overline{\widehat{f}(\nu_1)}\widehat{f}(\nu_2)e^{-\frac{\beta(\nu_1+\nu_2)}{4}}}{2\cosh((\nu_1-\nu_2)\beta/4)}\,e^{-\frac{\beta\nu_1}{2}}y(A^\alpha_{\nu_1})^\dagger A^\alpha_{\nu_2}.
\end{align*}
Similarly, we get that 
\begin{align*}
L_4=\sum_{\substack{\nu_1,\nu_2\in B(H)\\\alpha\in\mathcal{A}}}\frac{\overline{\widehat{f}(\nu_1)}\widehat{f}(\nu_2)e^{\frac{\beta(\nu_1+\nu_2)}{4}}}{2\cosh((\nu_1-\nu_2)\beta/4)}\,e^{-\frac{\beta\nu_1}{2}}A^\alpha_{\nu_2} y(A^\alpha_{\nu_1})^\dagger.
\end{align*}
With these observations, we can rewrite the generator in the Schr\"{o}dinger picture as
\begin{align*}
\mathcal{L}_{\widehat{f},H}(y)=\sum_{\substack{\nu_1,\nu_2\\\alpha\in\mathcal{A}}}\overline{\widehat{f}(\nu_1)}\widehat{f}(\nu_2)\,\left(A^\alpha_{\nu_2} y(A^\alpha_{\nu_1})^\dagger-\frac{e^{\frac{\beta(\nu_1-\nu_2)}{4}}(A^\alpha_{\nu_1})^\dagger A^\alpha_{\nu_2}y+e^{\frac{\beta(\nu_2-\nu_1)}{4}}y(A^\alpha_{\nu_1})^\dagger A^\alpha_{\nu_2}}{2\cosh((\nu_1-\nu_2)\beta/4)}\right),
\end{align*}
as claimed in \eqref{eq:cLonF}.

\end{proof}

\noindent Our next goal is to show that the domain of $\mathcal{L}_{\widehat{f},H}$ contains a larger set of energy constrained quantum states. For this, as well as in order to streamline the analysis of the next sections, we make use of a simple tool defined in \cite{gondolf2024energy}: given $\delta_1,\delta_2\ge 0$, we introduce the quantum Sobolev spaces defined on 
\begin{align*}
D(\mathcal{W}_H^{\delta_1,\delta_2}):=\Big\{\widetilde{H}^{-\delta_1}a \widetilde{H}^{-\delta_2}\,\Big|\,a\in\mathscr{T}_1(\mathcal{H})\Big\}\text{ with norms }
\|x\|_{\mathcal{W}_H^{\delta_1,\delta_2}}:=\|\mathcal{W}_H^{\delta_1,\delta_2}(x)\|_1,
\end{align*}
where $\mathcal{W}^{\delta_1,\delta_2}_H(x):=\widetilde{H}^{\delta_1} x \widetilde{H}^{\delta_2}$. Since the inverse $(\mathcal{W}_H^{\delta_1,\delta_2})^{-1}$ is bounded, $\mathcal{W}^{\delta_1,\delta_2}_H$ is closed and $(D(\mathcal{W}_H^{\delta_1,\delta_2}),\|\cdot \|_{\smash{\mathcal{W}_H^{\delta_1,\delta_2}}})$ is a Banach space equivalent to the domain $D(\mathcal{W}^{\delta_1,\delta_2}_{\smash{H}})$ endowed with the graph norm of $\mathcal{W}^{\delta_1,\delta_2}_{\smash{H}}$
\cite[Theorem 2.2]{gondolf2024energy}. Moreover, for any $E\ge 0$
\[
\mathscr{S}_E(\widetilde{H}^{2\delta})
:=\bigl\{\, \rho \in \mathscr{S}(\mathcal{H}) \;\big|\; \Tr(\widetilde{H}^{2\delta}\rho) \le E \,\bigr\}\subset \mathscr{S}(\mathcal{H})\cap D(\mathcal{W}_H^{\delta,\delta}).
\]
\begin{lemma}\label{mathFcoreW}
For any $\delta_1,\delta_2>0$, $\mathscr{F}$ is a core for $\mathcal{W}_H^{\delta_1,\delta_2}$.
\end{lemma}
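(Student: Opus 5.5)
We need to show: for every $x\in D(\mathcal{W}_H^{\delta_1,\delta_2})$ there are $x_n\in\mathscr{F}$ with $x_n\to x$ in $\mathscr{T}_1$ and $\mathcal{W}_H^{\delta_1,\delta_2}(x_n)\to \mathcal{W}_H^{\delta_1,\delta_2}(x)$ in $\mathscr{T}_1$. Equivalently, since $(D(\mathcal{W}_H^{\delta_1,\delta_2}),\|\cdot\|_{\mathcal{W}_H^{\delta_1,\delta_2}})$ is a Banach space equivalent to the graph-norm domain, it suffices to approximate in the norm $\|\widetilde H^{\delta_1}(\cdot)\widetilde H^{\delta_2}\|_1$ alone. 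Indeed, $(\mathcal{W}_H^{\delta_1,\delta_2})^{-1}$ is bounded on $\mathscr{T}_1$ with norm at most $(1)^{-\delta_1-\delta_2}=1$, since $\widetilde H\ge I$.

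The construction will use the spectral truncations of $H$. Let $Q_K:=\sum_{i\le K}\ket{E_i}\bra{E_i}$ be the projection onto the first $K+1$ eigenvectors. The spectrum is discrete, so each $Q_K$ has finite rank and commutes with $\widetilde H$. Given $x=\widetilde H^{-\delta_1}a\widetilde H^{-\delta_2}$ with $a\in\mathscr{T}_1(\mathcal{H})$, set $x_K:=Q_KxQ_K$. Since $Q_K$ has range in $\mathcal{F}$ and finite rank, $x_K\in\mathscr{F}$; indeed $x_K=\sum_{i,j\le K}\bra{E_i}x\ket{E_j}\ket{E_i}\bra{E_j}$. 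Moreover $x_K\in D(\mathcal{W}_H^{\delta_1,\delta_2})$ with
\[
\mathcal{W}_H^{\delta_1,\delta_2}(x_K)=\widetilde H^{\delta_1}Q_K\widetilde H^{-\delta_1}a\widetilde H^{-\delta_2}Q_K\widetilde H^{\delta_2}=Q_KaQ_K,
\]
where we use that $Q_K$ commutes with the spectral functions of $H$ and that $\widetilde H^{\delta}Q_K$ is bounded.

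It then remains to show $\|Q_KaQ_K-a\|_1\to0$ for every trace-class $a$. This follows from the triangle inequality
\[
\|Q_KaQ_K-a\|_1\le \|(Q_K-I)a\|_1+\|Q_Ka(Q_K-I)\|_1 .
\]
The projections $Q_K\to I$ strongly by the density of $\mathcal{F}$ (Claim \ref{densityFFs}). A strongly convergent, uniformly bounded sequence composed with a fixed trace-class operator converges in trace norm: approximate $a$ in $\|\cdot\|_1$ by finite-rank operators, for which the convergence is immediate, and use $\|Q_K\|\le1$. The second term is handled the same way, using $a^\dagger$ or the fact that $\|BaC\|_1\le\|B\|\|a\|_1\|C\|$. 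This yields $\|x_K-x\|_{\mathcal{W}_H^{\delta_1,\delta_2}}\to0$, and therefore convergence in the graph norm.

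The only mildly delicate step is justifying that $\mathcal{W}_H^{\delta_1,\delta_2}(x_K)=Q_KaQ_K$, i.e.\ that $\widetilde H^{\delta_1}$ may be moved past $Q_K$. This holds because $Q_K$ is a spectral projection of $\widetilde H$ with finite-dimensional range contained in every $D(\widetilde H^{s})$. The rest is the standard strong-to-trace-norm approximation argument, and I expect no real obstacle beyond that.
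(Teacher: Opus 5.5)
Your proof is correct and matches the paper's argument: truncate with the spectral projections onto $\operatorname{span}\{\ket{E_0},\dots,\ket{E_n}\}$, note that this truncation commutes with $(\mathcal{W}_H^{\delta_1,\delta_2})^{-1}$, and use the trace-norm convergence $P_n y P_n\to y$ from the proof of the density claim. You additionally spell out details the paper leaves implicit, namely the bound $\|(\mathcal{W}_H^{\delta_1,\delta_2})^{-1}\|_{1\to1}\le 1$ coming from $\widetilde H\ge I$ and the strong-to-trace-norm approximation step.
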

\begin{proof}
For any $a\in D(\mathcal{W}_H^{\delta_1,\delta_2})$, there is $x\in\mathscr{T}_1(\mathcal{H})$ such that $a:=(\mathcal{W}_H^{\delta_1,\delta_2})^{-1}(x)$. We truncate both $x$ and $a$ by the projection $P_n$ onto the subspace spanned by the eigenstates $\ket{E_0},\cdots,\ket{E_{n}}$: $x_n=P_nxP_n$ and $a_n=P_naP_n$. We have by construction that $a_n=(\mathcal{W}_H^{\delta_1,\delta_2})^{-1}(x_n)$. By the proof of \Cref{densityFFs}, we have that $a_n\to a$ and $x_n\to x$. 
\end{proof}
Next, for any two operators $(L_1,D(L_1))$, $(L_2,D(L_2))$ that are relatively $\widetilde{H}^{\delta_1}$-bounded and relatively $\widetilde{H}^{\delta_2}$-bounded, the operators $L_1\widetilde{H}^{-\delta_1}$ and $L_2\widetilde{H}^{-\delta_2}$ can be extended into bounded operators, such that for any $x\in D(\mathcal{W}_H^{\delta_1,\delta_2})$, the trace-class operator 
\begin{align*}
L_1\cdot  x\cdot L_2^\dagger:=L_1\widetilde{H}^{-\delta_1}(L_2\widetilde{H}^{-\delta_2}(\mathcal{W}_H^{\delta_1,\delta_2}(x))^\dagger)^\dagger
\end{align*}
satisfies
\begin{align*}
\|L_1\cdot x\cdot L_2^\dagger\|_1\le \|L_1\widetilde{H}^{-\delta_1}\|\,\|L_2\widetilde{H}^{-\delta_2}\|\,\|x\|_{\mathcal{W}^{\delta_1,\delta_2}_H}. 
\end{align*}
In what follows, we also denote $I\cdot x\cdot L^\dagger_2\equiv x\cdot L^\dagger_2$ and $L_1\cdot x\cdot I\equiv  L_1\cdot x$.
\begin{proposition}\label{prop.generationtheorem} Under \Cref{eq:condAalphas} and \Cref{eq:F_1Kill}, we have $ D(\mathcal{W}_H^{\kappa_1,\kappa_1})\cap D(\mathcal{W}_H^{\kappa_1+\mu-\gamma,0})\cap D(\mathcal{W}_H^{0,\kappa_1+\mu-\gamma}) \subseteq D(\mathcal{L}_{\widehat{f},H})$, and for any $x\in D(\mathcal{W}_H^{\kappa_1,\kappa_1})\cap D(\mathcal{W}_H^{\kappa_1+\mu-\gamma,0})\cap D(\mathcal{W}_H^{0,\kappa_1+\mu-\gamma})$,
\begin{align}\label{eq.LindbladformL}
\mathcal{L}_{\widehat{f},H}(x)=\sum_{\alpha\in\mathcal{A}}L^\alpha\cdot x\cdot (L^\alpha)^\dagger+G\cdot x+x\cdot G^\dagger.
\end{align}
\end{proposition}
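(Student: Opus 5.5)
The plan is to extend the identity \eqref{eq:cLonF} from $\mathscr{F}$ to the intersection of Sobolev domains by a closedness argument. Write $\mathscr{D}:=D(\mathcal{W}_H^{\kappa_1,\kappa_1})\cap D(\mathcal{W}_H^{\kappa_1+\mu-\gamma,0})\cap D(\mathcal{W}_H^{0,\kappa_1+\mu-\gamma})$ and equip it with the norm
\[
\|x\|_{\mathscr{D}}:=\|x\|_{\mathcal{W}_H^{\kappa_1,\kappa_1}}+\|x\|_{\mathcal{W}_H^{\kappa_1+\mu-\gamma,0}}+\|x\|_{\mathcal{W}_H^{0,\kappa_1+\mu-\gamma}}.
\]
Define the candidate operator $\widetilde{\mathcal{L}}(x):=\sum_\alpha L^\alpha\cdot x\cdot (L^\alpha)^\dagger+G\cdot x+x\cdot G^\dagger$ on $\mathscr{D}$. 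By \Cref{lem.LalphaGwelldefinedF}, $L^\alpha$ is relatively $\widetilde H^{\kappa_1}$-bounded and $G$ is relatively $\widetilde H^{\kappa_1+\mu-\gamma}$-bounded. The estimate stated just before the proposition then gives
\[
\|\widetilde{\mathcal{L}}(x)\|_1\lesssim \|x\|_{\mathscr{D}}.
\]
So $\widetilde{\mathcal{L}}$ is a bounded map from $(\mathscr{D},\|\cdot\|_{\mathscr{D}})$ into $\mathscr{T}_1(\mathcal{H})$. For the $G\cdot x$ term one uses $\delta_2=0$, i.e.\ $L_2=I$.

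First I check that $\widetilde{\mathcal{L}}$ agrees with $\mathcal{L}_{\widehat f,H}$ on $\mathscr{F}$. This is exactly \Cref{propDirichlettoSchro}. On rank-one energy operators the dotted products reduce to the ordinary ones, because the bounded extensions $L^\alpha\widetilde H^{-\kappa_1}$ act on eigenvectors in the obvious way.

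Next comes approximation. Given $x\in\mathscr{D}$, I use the same truncation as in \Cref{mathFcoreW}: $x_n:=P_nxP_n$, with $P_n$ the spectral projection onto $\ket{E_0},\dots,\ket{E_n}$. Since $P_n$ commutes with every power of $\widetilde H$, we get $\mathcal{W}^{\delta_1,\delta_2}_H(x_n)=P_n\mathcal{W}^{\delta_1,\delta_2}_H(x)P_n$ for each of the three weight pairs. Hence $x_n\to x$ simultaneously in all three Sobolev norms, since $P_naP_n\to a$ in trace norm for any trace-class $a$. The remaining point is that $x_n$ lies in $\mathscr{F}$, or at least can be approximated by elements of $\mathscr{F}$ in $\|\cdot\|_{\mathscr{D}}$. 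Because $P_nxP_n$ is a finite matrix in the eigenbasis, $x_n\in\mathscr{F}$ outright. Consequently $\mathcal{L}_{\widehat f,H}(x_n)=\widetilde{\mathcal{L}}(x_n)\to\widetilde{\mathcal{L}}(x)$ in $\mathscr{T}_1$, while $x_n\to x$ in $\mathscr{T}_1$, because the Sobolev norms dominate the trace norm (as $\widetilde H\ge I$).

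Finally, $\mathcal{L}_{\widehat f,H}$ is the generator of a strongly continuous semigroup on $\mathscr{T}_1(\mathcal{H})$ by \Cref{T1semigroupgeneration}, and is therefore a closed operator. Closedness immediately yields $x\in D(\mathcal{L}_{\widehat f,H})$ and $\mathcal{L}_{\widehat f,H}(x)=\widetilde{\mathcal{L}}(x)$, which is \eqref{eq.LindbladformL}.

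The main obstacle I expect is the bookkeeping needed to verify that the trace-norm bound for $\widetilde{\mathcal{L}}$ really only requires the three stated Sobolev norms. In particular, I must check that the jump term $L^\alpha\cdot x\cdot(L^\alpha)^\dagger$ is controlled by $\mathcal{W}_H^{\kappa_1,\kappa_1}$, while the drift terms need the asymmetric weights $\kappa_1+\mu-\gamma$ on one side only. I must also check that the dotted products are consistent with the ordinary products on $\mathscr{F}$. Both points should follow directly from \Cref{lem.LalphaGwelldefinedF} and the definition of $L_1\cdot x\cdot L_2^\dagger$.
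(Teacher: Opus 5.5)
Your proposal is correct and follows the paper's proof essentially step for step. The shared ingredients are the closedness of $\mathcal{L}_{\widehat f,H}$ as a $C_0$-semigroup generator, approximation of $x$ by the spectral truncations $P_nxP_n\in\mathscr{F}$ in all three Sobolev norms (the core argument of \Cref{mathFcoreW}), agreement with \eqref{eq:cLonF} on $\mathscr{F}$, and trace-norm continuity of the dotted products via \Cref{lem.LalphaGwelldefinedF}. You are in fact more explicit than the paper about why one sequence works for all three norms simultaneously and why the dotted products reduce to ordinary products on $\mathscr{F}$.
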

\begin{proof}
We use that $\mathcal{L}_{\smash{\widehat{f},H}}$ is closed in $\mathscr{T}_{\smash{1}}(\mathcal{H})$ by the property of being the generator of the strongly continuous semigroup $\{\Phi_t\}_{t\ge 0}$. Therefore, it suffices to show that for any $x\in D(\mathcal{W}_H^{\kappa_1,\kappa_1})\cap D(\mathcal{W}_H^{\kappa_1+\mu-\gamma,0})\cap D(\mathcal{W}_H^{0,\kappa_1+\mu-\gamma})$ there is a sequence $x_n\to x$ in $\mathscr{T}_1(\mathcal{H})$ and $y\in \mathscr{T}_1(\mathcal{H})$ with $\mathcal{L}_{\widehat{f},H}(x_n)\to y$ in $\mathscr{T}_1(\mathcal{H})$. First, we observe that the expression \eqref{eq.LindbladformL} coincides with the definition \eqref{eq:cLonF}
on $\mathscr{F}$. Moreover, for any $x\in D(\mathcal{W}_H^{\kappa_1,\kappa_1})\cap D(\mathcal{W}_H^{\kappa_1+\mu-\gamma,0})\cap D(\mathcal{W}_H^{0,\kappa_1+\mu-\gamma})$, 
there is a sequence $\{x_n\}_{n}$ in $\mathscr{F}$ such that $x_n\to x$ and $\mathcal{W}_H^{\kappa_1,\kappa_1}(x_n)\to \mathcal{W}_H^{\kappa_1,\kappa_1}(x),\,\mathcal{W}_H^{\kappa_1+\mu-\gamma,0}(x_n)\to \mathcal{W}_H^{\kappa_1+\mu-\gamma,0}(x) $ and $\mathcal{W}_H^{0,\kappa_1+\mu-\gamma}(x_n)\to\mathcal{W}_H^{0,\kappa_1+\mu-\gamma}(x)$ by the core property of $\mathscr{F}$ (cf.~Lemma \ref{mathFcoreW}). Fixing $y\equiv \sum_{\alpha\in\mathcal{A}}L^\alpha\cdot x \cdot (L^\alpha)^\dagger+G\cdot x+x\cdot G$, it suffices to show that for each $\alpha$, $L^\alpha\cdot x_n \cdot (L^\alpha)^\dagger\to L^\alpha\cdot x\cdot (L^\alpha)^\dagger$, $G\cdot x_n\to G\cdot x$ and $x_n\cdot G\to x\cdot G$. This we show using Lemma~\ref{lem.LalphaGwelldefinedF} and explicitly only verify the first convergence, as the others follow similarly:
\begin{align*}
\|L^\alpha\cdot x_n\cdot (L^\alpha)^\dagger- L^\alpha\cdot x\cdot (L^\alpha)^\dagger\|_1&=\|L^\alpha\widetilde{H}^{-\kappa_1}(L^\alpha\widetilde{H}^{-\kappa_1}(\mathcal{W}_H^{\kappa_1,\kappa_1}(x_n-x))^\dagger)^\dagger\|_1\\
&\le \|L^\alpha \widetilde{H}^{-\kappa_1}\|^2\,\|x_n-x\|_{\mathcal{W}_H^{\kappa_1,\kappa_1}}\to 0.
\end{align*}
\end{proof}

\subsection{Uniqueness of ground state}

The uniqueness of the ground state can be established under some technical assumptions on the Hamiltonian which includes finite rank perturbations of integer powers of the total number operator and the kinetic operator on the torus. 
For this, we require the following simple Lemma on Bohr frequencies.
\begin{lemma}
\label{lemm:spectrum}
Let $\Sigma\subset\mathbb R$ be a discrete set of eigenvalues such that $\Sigma_0\subset \omega\mathbb Z$ for some $\omega>0$, and suppose that $\Sigma$ is obtained from $\Sigma_0$ by changing only finitely many eigenvalues. Then the set of Bohr frequencies
\[
\Lambda:=\Sigma-\Sigma=\{E-E':E,E'\in \Sigma\}
\]
is contained in a finite union of translates of $\omega\mathbb Z$, i.e., there exist $M\in\mathbb N$ and $a_1,\dots,a_M\in\mathbb R$ such that
\[
\Lambda\subset \bigcup_{r=1}^M (a_r+\omega\mathbb Z).
\]
\end{lemma}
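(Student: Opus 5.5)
The plan is to split $\Sigma$ into an unperturbed part and a finite exceptional part, and then to check each of the three kinds of Bohr frequencies this produces.

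First we fix the decomposition. By hypothesis there is a finite set $F\subset\Sigma$ of ``changed'' eigenvalues such that $\Sigma\setminus F\subset\Sigma_0\subset\omega\mathbb Z$. Concretely, $\Sigma$ agrees with $\Sigma_0$ except at finitely many entries, so the eigenvalues of $\Sigma$ not lying in $\Sigma_0$ form a finite set $F=\{b_1,\dots,b_K\}$. The remaining eigenvalues lie in $\Sigma_0\subset\omega\mathbb Z$.

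Next we classify a Bohr frequency $\lambda=E-E'$ with $E,E'\in\Sigma$ according to where $E$ and $E'$ lie. There are three cases.
\begin{itemize}
\item If both $E,E'\in\Sigma\setminus F$, then $\lambda\in\omega\mathbb Z-\omega\mathbb Z=\omega\mathbb Z$. This is the translate with $a=0$.
\item If $E=b_k\in F$ and $E'\in\omega\mathbb Z$, then $\lambda\in b_k+\omega\mathbb Z$. Symmetrically, if $E\in\omega\mathbb Z$ and $E'=b_k$, then $\lambda\in -b_k+\omega\mathbb Z$.
\item If both $E=b_k$ and $E'=b_l$ lie in $F$, then $\lambda=b_k-b_l$. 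There are finitely many such values, and each lies in its own translate $(b_k-b_l)+\omega\mathbb Z$.
\end{itemize}

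Collecting the cases, we take the translates $a_r$ to range over
\[
\{0\}\cup\{\pm b_k\}_{k\le K}\cup\{b_k-b_l\}_{k,l\le K}.
\]
This gives $M\le 1+2K+K^2$ translates, and $\Lambda$ is contained in their union.

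The argument is elementary; the only point needing care is making precise what ``obtained by changing finitely many eigenvalues'' means. We would read it as the condition that the symmetric difference of $\Sigma$ and $\Sigma_0$ (counted with multiplicity) is finite. Under this reading $F$ is finite and the case split above is exhaustive. No earlier result of the paper is needed.
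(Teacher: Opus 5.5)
Your proof is correct and follows essentially the same route as the paper. The paper also writes $\Sigma=A\cup F$ with $A\subset\omega\mathbb Z$ and $F$ finite, covers $A-A$, $A-F$ and $F-A$ by $\omega\mathbb Z$ and the translates $\mp f_j+\omega\mathbb Z$, and absorbs the finite set $F-F$ into finitely many further translates of $\omega\mathbb Z$.
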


\begin{proof}
Since only finitely many eigenvalues are changed, we may write $\Sigma=A\cup F$, where $A\subset \omega\mathbb Z$ and $F=\{f_1,\dots,f_N\}$ is finite. Then
\[
\Lambda=\Sigma-\Sigma\subset (A-A)\cup (A-F)\cup (F-A)\cup (F-F).
\]
Clearly $A-A\subset \omega\mathbb Z$. Moreover, for each $j$ we have $A-f_j\subset (-f_j)+\omega\mathbb Z$ and $f_j-A\subset f_j+\omega\mathbb Z$, hence
\[
A-F\subset \bigcup_{j=1}^N((-f_j)+\omega\mathbb Z),\qquad
F-A\subset \bigcup_{j=1}^N(f_j+\omega\mathbb Z).
\]
Finally, $F-F=\{f_i-f_j:1\le i,j\le N\}$ is finite. Collecting these inclusions,
\[
\Lambda\subset \omega\mathbb Z
\cup \bigcup_{j=1}^N((-f_j)+\omega\mathbb Z)
\cup \bigcup_{j=1}^N(f_j+\omega\mathbb Z)
\cup (F-F).
\]
Since any finite set $C\subset\mathbb R$ satisfies $C\subset \bigcup_{c\in C}(c+\omega\mathbb Z)$, the last term can be absorbed into finitely many translates of $\omega\mathbb Z$, yielding the claim.
\end{proof}

We then have that for Bohr frequencies satisfying the conclusion of the previous Lemma, we can perform a stable phase retrieval as the next Lemma shows.
\begin{lemma}[Stable phase retrieval]\label{lem:Riesz_finite_lattices}
Let
\[
w(t):=\frac{1}{\beta\cosh(2\pi t/\beta)}, 
\qquad
k(s):=\int_{\mathbb R} w(t)e^{its}\,dt
= \frac{1}{2\cosh(\beta s/4)}.
\]
Let $\omega>0$ and assume that the frequency set $\Lambda\subset\mathbb R$ satisfies
\[
\Lambda \subset \bigcup_{r=1}^M (a_r+\omega\mathbb Z),
\]
for some $M\in\mathbb N$ and pairwise distinct $a_1,\dots,a_M\in[0,\omega)$.

Then there exists a constant $A>0$ such that for every finitely supported family $(c_\nu)_{\nu\in\Lambda}$ in a Hilbert space $\mathcal H$,
\begin{equation}\label{eq:Riesz_lower}
A \sum_{\nu\in\Lambda} \|c_\nu\|_{\mathcal H}^2
\;\le\;
\int_{\mathbb R} w(t)\Big\|\sum_{\nu\in\Lambda} c_\nu e^{it\nu}\Big\|_{\mathcal H}^2\,dt.
\end{equation}
In addition, if the non-zero Bohr frequencies are at least $\delta>0$, then for $\beta>0$ large enough, then $\mathcal E(x)=0$ implies that $x_{\nu}=0.$
\end{lemma}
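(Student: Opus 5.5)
Group the frequencies by lattice coset: write $\Lambda\subset\bigcup_{r=1}^M(a_r+\omega\mathbb Z)$ and, for a finitely supported family $(c_\nu)$, set $c_{r,n}:=c_{a_r+n\omega}$. Then
\[
\sum_{\nu}c_\nu e^{it\nu}=\sum_{r=1}^M e^{ita_r}F_r(t),\qquad F_r(t):=\sum_{n\in\mathbb Z}c_{r,n}e^{itn\omega}.
\]
Each $F_r$ is a $\mathcal H$-valued trigonometric polynomial with period $T:=2\pi/\omega$, and $\frac1T\int_0^T\|F_r\|^2=\sum_n\|c_{r,n}\|^2$ by Parseval.

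The weight $w$ is not periodic, so first compare with a periodic quantity. Restrict the integral to the window $[0,T]$, where $w\ge w_0:=\min_{[0,T]}w>0$. For $t\in[0,T]$ and $m\in\mathbb Z$ we have
\[
\sum_r e^{i(t+mT)a_r}F_r(t+mT)=\sum_r e^{imTa_r}e^{ita_r}F_r(t).
\]
Hence
\[
\int_{\mathbb R} w\,\|\cdot\|^2 \;\ge\; \sum_{m=0}^{M-1} w_0^{(m)}\int_0^T\Big\|\sum_r z_r^m\, e^{ita_r}F_r(t)\Big\|^2dt,
\]
where $z_r:=e^{iTa_r}$ and $w_0^{(m)}:=\min_{[mT,(m+1)T]}w>0$.

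The $z_r$ are pairwise distinct because the $a_r\in[0,\omega)$ are distinct, so $Ta_r\in[0,2\pi)$ are distinct. The Vandermonde matrix $V=(z_r^m)_{0\le m<M,\,1\le r\le M}$ is therefore invertible. Applied pointwise in $t$ to the vector $(e^{ita_r}F_r(t))_r\in\mathcal H^M$, this gives
\[
\sum_m\Big\|\sum_r z_r^m e^{ita_r}F_r(t)\Big\|^2\ \ge\ \|V^{-1}\|^{-2}\sum_r\|F_r(t)\|^2.
\]
Integrating over $[0,T]$ and using Parseval yields \eqref{eq:Riesz_lower} with
\[
A:=T\,\|V^{-1}\|^{-2}\min_{0\le m<M}w_0^{(m)}.
\]
The step I expect to require the most care is this decoupling of cosets. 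Distinct $a_r$ mod $\omega$ does not give orthogonality on a single period, and the translation trick combined with the Vandermonde inversion is exactly what supplies it. The constant does depend on the separation of the $z_r$, but that is harmless since $M$ and the $a_r$ are fixed.

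For the second claim, take $x\in\mathscr F$, so that only finitely many $x_\nu\ne0$. The identity \eqref{eq:Dirichletformdef} gives
\[
\mathcal E(x)=\sum_\alpha\int w(t)\,\|x_t\|_2^2\,dt\ \ge\ A\sum_{\alpha,\nu}\|x_\nu\|_2^2
\]
by the first part, so $\mathcal E(x)=0$ forces $x_\nu=0$. Alternatively, without lattice structure, a separation $\delta$ lets one use the form $\sum_{\nu_1,\nu_2}k(\nu_1-\nu_2)\langle x_{\nu_1},x_{\nu_2}\rangle$. Off-diagonal entries are bounded by $k(\nu_1-\nu_2)\le e^{-\beta|\nu_1-\nu_2|/4}$, and the diagonal entries equal $k(0)=\tfrac12$. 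By a Schur test, the Gram matrix $(k(\nu_1-\nu_2))$ is diagonally dominant once $\sum_{\nu\ne0}e^{-\beta|\nu|/4}<\tfrac12$, which holds for $\beta$ large because the frequencies are $\delta$-separated; counting of frequencies in each unit interval must be controlled, and that holds under the lattice hypothesis. This again gives $\mathcal E(x)\ge c\sum\|x_\nu\|^2$, hence the conclusion, which extends to the form domain by closure and lower semicontinuity.
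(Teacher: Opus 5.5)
Your proof of \eqref{eq:Riesz_lower} is correct, and it takes a different, more elementary route than the paper.

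\textbf{The paper's route.} The paper expands the square into the kernel $k(\nu_1-\nu_2)$ and uses Plancherel on each coset to rewrite the form through an $M\times M$ symbol $M(\theta)=(m_{rs}(\theta))$. It proves $M(\theta)>0$ pointwise via Poisson summation and a Vandermonde argument with nodes $\lambda_r=e^{2\pi i a_r/\omega}$. It then needs continuity of $M(\theta)$ and compactness of $[-\pi,\pi]$ to extract a uniform $A$.

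\textbf{Your route.} Periodizing $\int_{\mathbb R}$ over the translates $[mT,(m+1)T]$ is exactly the time-domain counterpart of that Poisson-summation identity; your $z_r=e^{iTa_r}$ are the paper's $\lambda_r$. You keep only $M$ translates and bound $w$ below on them. This gives uniformity directly: no kernel, no symbol, no compactness argument, and an explicit constant $A=T\|V^{-1}\|^{-2}w(MT)$. The argument works for any nonnegative weight bounded below on an interval of length $MT$.

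\textbf{What each buys.} Your constant sees only $M$ periods of $w$, so it decays like $e^{-2\pi MT/\beta}$ when $\beta\ll MT$. The paper's $\min_\theta\lambda_{\min}(M(\theta))$ uses all of $w$ and is the sharp constant when $\Lambda$ is the full union of cosets.

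\textbf{Second claim.} Your Schur-test route is exactly Lemma \ref{lemm:coercive}, which is where the paper actually proves this part. Your lattice route is how the paper applies the present lemma afterwards. Three small corrections:
\begin{itemize}
\item For $x\in\mathscr F$, the family $(x_\nu)_\nu$ need not be finitely supported, since $P_{E_i+\nu}A^\alpha\ket{E_i}$ may be non-zero for infinitely many $\nu$. It is, however, absolutely summable by \eqref{eq.deltaEEbraket}. So the truncations of $x_t$ converge uniformly in $t$, and the inequality passes to the limit because $w\in L^1(\mathbb R)$.
\item The relevant Schur quantity is $\sup_\nu\sum_{\mu\ne\nu}k(\nu-\mu)$ over the frequency set, not $\sum_{\nu\ne0}e^{-\beta|\nu|/4}$. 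Pairwise $\delta$-separation alone bounds it by $2\sum_{m\ge1}e^{-m\delta\beta/4}$, since there is at most one point per half-open interval of length $\delta$. So no lattice hypothesis is needed for the counting. By contrast, mere $|\nu|\ge\delta$ for non-zero $\nu$ would not make this quantity small.
\item Passing to the form domain goes by continuously extending $x\mapsto(x_\nu)_\nu$, which coercivity makes bounded in form norm. Lower semicontinuity of $\mathcal E$ is not the right tool here.
\end{itemize}
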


\begin{proof}
For each $r=1,\dots,M$, set
\[
\Lambda_r := \Lambda \cap (a_r+\omega\mathbb Z),
\qquad
c_{r,n} := c_{a_r+n\omega},
\]
with the convention that $c_{r,n}=0$ if $a_r+n\omega\notin\Lambda$.
Then
\[
\sum_{\nu\in\Lambda} c_\nu e^{it\nu}
=
\sum_{r=1}^M e^{ita_r}\sum_{n\in\mathbb Z} c_{r,n} e^{in\omega t}.
\]
Define
\[
F_r(t):=\sum_{n\in\mathbb Z} c_{r,n} e^{in\omega t},
\qquad
F(t):=\sum_{r=1}^M e^{ita_r}F_r(t).
\]

Expanding the square and using the definition of $k$, we obtain
\begin{align*}
\int_{\mathbb R} w(t)\|F(t)\|_{\mathcal H}^2\,dt
&=
\sum_{r,s=1}^M \sum_{n,m\in\mathbb Z}
\langle c_{r,n},c_{s,m}\rangle_{\mathcal H}
\int_{\mathbb R} w(t)e^{it[(a_r-a_s)+(n-m)\omega]}\,dt \\
&=
\sum_{r,s=1}^M \sum_{n,m\in\mathbb Z}
k\big((a_r-a_s)+(n-m)\omega\big)\,
\langle c_{r,n},c_{s,m}\rangle_{\mathcal H}.
\end{align*}

For $\theta\in[-\pi,\pi]$, define
\[
\widehat c_r(\theta):=\sum_{n\in\mathbb Z} c_{r,n}e^{in\theta},
\]
and define the scalar symbols
\[
m_{rs}(\theta)
:=
\sum_{\ell\in\mathbb Z}
k\big((a_r-a_s)+\ell\omega\big)e^{-i\ell\theta}.
\]
Since $k$ decays exponentially, the series defining $m_{rs}$ converges absolutely and uniformly, hence $m_{rs}$ is continuous. Moreover,
\[
m_{sr}(\theta)=\overline{m_{rs}(\theta)},
\]
so $M(\theta):=(m_{rs}(\theta))_{r,s=1}^M$ is self-adjoint.

By setting $\ell=n-m$, we obtain
\[
\sum_{n,m\in\mathbb Z}
k\bigl((a_r-a_s)+(n-m)\omega\bigr)\,
\langle c_{r,n},c_{s,m}\rangle_{\mathcal H}
= \sum_{\ell\in\mathbb Z}
k\bigl((a_r-a_s)+\ell\omega\bigr)
\sum_{m\in\mathbb Z}
\langle c_{r,m+\ell},c_{s,m}\rangle_{\mathcal H}.
\]
By Plancherel on $\ell^2(\mathbb Z)$,
\[
\sum_{m\in\mathbb Z}
\langle c_{r,m+\ell},c_{s,m}\rangle_{\mathcal H}
=
\int_{-\pi}^{\pi}
e^{-i\ell\theta}\,
\langle \widehat c_r(\theta),\widehat c_s(\theta)\rangle_{\mathcal H}
\,\frac{d\theta}{2\pi}.
\]
Hence
\[
\sum_{\ell\in\mathbb Z}
k\bigl((a_r-a_s)+\ell\omega\bigr)
\sum_{m\in\mathbb Z}
\langle c_{r,m+\ell},c_{s,m}\rangle_{\mathcal H}= \int_{-\pi}^{\pi}
\Bigl(
\sum_{\ell\in\mathbb Z}
k\bigl((a_r-a_s)+\ell\omega\bigr)e^{-i\ell\theta}
\Bigr)
\langle \widehat c_r(\theta),\widehat c_s(\theta)\rangle_{\mathcal H}
\frac{d\theta}{2\pi}.
\]
This means that
\[
\int_{\mathbb R} w(t)\|F(t)\|_{\mathcal H}^2\,dt
=
\int_{-\pi}^{\pi}
\sum_{r,s=1}^M
m_{rs}(\theta)\,
\langle \widehat c_r(\theta),\widehat c_s(\theta)\rangle_{\mathcal H}
\,\frac{d\theta}{2\pi}.
\]

We now show that $M(\theta)$ is strictly positive definite for each $\theta$.
By Poisson summation,
\[
m_{rs}(\theta)
=
\frac{2\pi}{\omega}\sum_{j\in\mathbb Z}
w\!\left(\frac{\theta+2\pi j}{\omega}\right)
e^{\,i(a_r-a_s)(\theta+2\pi j)/\omega}.
\]
Hence, for any $z=(z_1,\dots,z_M)\in\mathbb C^M$,
\begin{align*}
\sum_{r,s=1}^M m_{rs}(\theta) z_r\overline{z_s}
&=
\frac{2\pi}{\omega}\sum_{j\in\mathbb Z}
w\!\left(\frac{\theta+2\pi j}{\omega}\right)
\sum_{r,s=1}^M
z_r\overline{z_s}
e^{\,i(a_r-a_s)(\theta+2\pi j)/\omega} \\
&=
\frac{2\pi}{\omega}\sum_{j\in\mathbb Z}
w\!\left(\frac{\theta+2\pi j}{\omega}\right)
\Big|\sum_{r=1}^M z_r e^{ia_r(\theta+2\pi j)/\omega}\Big|^2.
\end{align*}
This shows that $M(\theta)$ is positive semidefinite.

Assume now that
\[
\sum_{r,s=1}^M m_{rs}(\theta) z_r\overline{z_s}=0.
\]
Since $w>0$, every term in the above sum must vanish, hence
\[
\sum_{r=1}^M z_r e^{ia_r(\theta+2\pi j)/\omega}=0
\qquad\text{for every }j\in\mathbb Z.
\]
Set
\[
\lambda_r:=e^{2\pi i a_r/\omega},
\qquad
c_r:=z_r e^{ia_r\theta/\omega}.
\]
Then the preceding identities become
\[
\sum_{r=1}^M c_r \lambda_r^j = 0
\qquad\text{for every }j\in\mathbb Z.
\]
Since the $a_r$ are distinct modulo $\omega$, the numbers $\lambda_r$ are pairwise distinct. Taking $j=0,\dots,M-1$, we obtain a Vandermonde system, hence
\[
c_r=0 \quad\text{for all }r.
\]
Therefore $z_r=0$ for all $r$, and so $M(\theta)$ is strictly positive definite.

Let $\lambda_{\min}(\theta)$ denote the smallest eigenvalue of $M(\theta)$. Since $M(\theta)$ depends continuously on $\theta$ and is strictly positive definite for every $\theta$, compactness of $[-\pi,\pi]$ yields
\[
A:=\min_{\theta\in[-\pi,\pi]} \lambda_{\min}(\theta)>0.
\]
Therefore
\[
\sum_{r,s=1}^M
m_{rs}(\theta)\,
\langle u_r,u_s\rangle_{\mathcal H}
\ge
A\sum_{r=1}^M \|u_r\|_{\mathcal H}^2
\qquad
\text{for all }u_1,\dots,u_M\in\mathcal H.
\]
Applying this with $u_r=\widehat c_r(\theta)$ and integrating, we get
\[
\int_{\mathbb R} w(t)\|F(t)\|_{\mathcal H}^2\,dt
\ge
A\int_{-\pi}^{\pi}\sum_{r=1}^M \|\widehat c_r(\theta)\|_{\mathcal H}^2\,\frac{d\theta}{2\pi}.
\]
Finally, Parseval gives
\[
\int_{-\pi}^{\pi}\|\widehat c_r(\theta)\|_{\mathcal H}^2\,\frac{d\theta}{2\pi}
=
\sum_{n\in\mathbb Z}\|c_{r,n}\|_{\mathcal H}^2,
\]
and so
\[
\int_{\mathbb R} w(t)\|F(t)\|_{\mathcal H}^2\,dt
\ge
A\sum_{r=1}^M\sum_{n\in\mathbb Z}\|c_{r,n}\|_{\mathcal H}^2
=
A\sum_{\nu\in\Lambda}\|c_\nu\|_{\mathcal H}^2.
\]
This proves \eqref{eq:Riesz_lower}.
\end{proof}

If the different Bohr frequencies are separated by a distance of at least $\delta>0$, it is still possible to recover the coefficients for low enough temperatures, i.e. $\beta \gg 1$ as in this case the integral kernel is sufficiently diagonal.

\begin{lemma}[Coercive form]
\label{lemm:coercive}
Let \(B\subset \mathbb R\) be countable, and for \(\beta>0\) define
\[
K_{\nu\mu}:=\frac{1}{2\cosh((\nu-\mu)\beta/4)},
\qquad \nu,\mu\in B.
\]
Let \(x=(x_\nu)_{\nu\in B}\) be a countable family in a Hilbert space \(\mathcal H\), and assume that
\[
\mathcal E_\beta(x):=\sum_{\nu,\mu\in B} K_{\nu\mu}\,\langle x_\nu,x_\mu\rangle
\]
is absolutely convergent. Set $M_\beta:=\sup_{\nu\in B}\sum_{\mu\neq \nu} K_{\nu\mu}.$
Then, if $
M_\beta<\frac12,$ we find
\[
\mathcal E_\beta(x)\ge \Big(\frac12-M_\beta\Big)\sum_{\nu\in B}\|x_\nu\|^2.
\]
In particular, assume that \(B\) is uniformly separated, i.e. there exists \(\delta>0\) such that
\[
|\nu-\mu|\ge \delta
\qquad\text{for all }\nu\neq\mu\text{ in }B.
\]
Then
\[
\mathcal E_\beta(x)\ge c_\beta \sum_{\nu\in B}\|x_\nu\|^2,
\]
where for $\beta$ large enough
\[
c_\beta:=\frac12-2\sum_{m=1}^\infty \frac{1}{2\cosh(m\delta\beta/4)}>0
\]
\end{lemma}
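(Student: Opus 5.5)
This is a Schur-test / diagonal-dominance argument. First split the form into its diagonal and off-diagonal parts. Since $K_{\nu\nu}=\frac{1}{2\cosh 0}=\frac12$, we have
\[
\mathcal E_\beta(x)=\frac12\sum_{\nu\in B}\|x_\nu\|^2+\sum_{\nu\neq\mu}K_{\nu\mu}\langle x_\nu,x_\mu\rangle .
\]
Absolute convergence, which is assumed, justifies this rearrangement. If $\sum_\nu\|x_\nu\|^2=\infty$, the claimed bound should be read with the convention that it forces $\mathcal E_\beta=\infty$. So I would first restrict to finitely supported families and pass to the limit afterwards, or simply assume $\sum\|x_\nu\|^2<\infty$, in which case the off-diagonal sum is controlled by the bound below.

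The key step is to bound the off-diagonal part from below by $-M_\beta\sum\|x_\nu\|^2$. By Cauchy--Schwarz and $|ab|\le\frac12(a^2+b^2)$, using $K_{\nu\mu}\ge0$,
\[
\Big|\sum_{\nu\neq\mu}K_{\nu\mu}\langle x_\nu,x_\mu\rangle\Big|\le \sum_{\nu\neq\mu}K_{\nu\mu}\frac{\|x_\nu\|^2+\|x_\mu\|^2}{2}.
\]
By the symmetry $K_{\nu\mu}=K_{\mu\nu}$ the right-hand side equals $\sum_\nu\|x_\nu\|^2\sum_{\mu\neq\nu}K_{\nu\mu}\le M_\beta\sum_\nu\|x_\nu\|^2$. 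Combining this with the diagonal term gives $\mathcal E_\beta(x)\ge(\frac12-M_\beta)\sum\|x_\nu\|^2$, which is positive when $M_\beta<\frac12$.

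For the separated case, bound $M_\beta$ uniformly in $\nu$. Fix $\nu$ and order the other points $\mu$ by their distance to $\nu$. On each side of $\nu$, the $m$-th nearest point lies at distance at least $m\delta$, because consecutive points are $\delta$-separated. Since $s\mapsto \frac{1}{2\cosh(s\beta/4)}$ is decreasing in $|s|$, this yields
\[
\sum_{\mu\neq\nu}K_{\nu\mu}\le 2\sum_{m\ge1}\frac{1}{2\cosh(m\delta\beta/4)}.
\]
The series converges, being bounded by $2\sum_m e^{-m\delta\beta/4}=\frac{2e^{-\delta\beta/4}}{1-e^{-\delta\beta/4}}$, which tends to $0$ as $\beta\to\infty$. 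Hence $c_\beta=\frac12-2\sum_{m\ge1}\frac{1}{2\cosh(m\delta\beta/4)}\le \frac12-M_\beta$ is positive for $\beta$ large, and the first part applies.

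The only mildly delicate point is the counting argument for the separated set. One has to note that on each half-line the points are ordered, and that successive ones are at least $\delta$ apart, so the $m$-th of them on that side is at distance at least $m\delta$. Everything else is routine.
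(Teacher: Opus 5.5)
Your proof is correct and follows the paper's argument: you split off the diagonal using $K_{\nu\nu}=\tfrac12$, bound the off-diagonal part by $M_\beta\sum_\nu\|x_\nu\|^2$ (your AM--GM-plus-symmetry step is exactly the proof of the Schur test the paper invokes), and use $\delta$-separation on each side of $\nu$ (the paper bins points into the intervals $[\nu+m\delta,\nu+(m+1)\delta)$, which is equivalent to your ordering by distance). Your hedging about the case $\sum_\nu\|x_\nu\|^2=\infty$ is unnecessary: absolute convergence of $\mathcal E_\beta(x)$ already includes the diagonal terms $\tfrac12\|x_\nu\|^2$, so it forces $\sum_\nu\|x_\nu\|^2<\infty$.
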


\begin{proof}
Since $K_{\nu\nu}=\frac{1}{2\cosh(0)}=\frac12,$
we decompose
\[
\mathcal E_\beta(x)
=
\frac12\sum_{\nu\in B}\|x_\nu\|^2
+
\sum_{\nu\neq\mu}K_{\nu\mu}\langle x_\nu,x_\mu\rangle.
\]
By the Cauchy-Schwarz inequality, we have $\Re\langle x_\nu,x_\mu\rangle\ge -\|x_\nu\|\,\|x_\mu\|,$
and thus get
\[
\mathcal E_\beta(x)
\ge
\frac12\sum_{\nu\in B}\|x_\nu\|^2
-
\sum_{\nu\neq\mu}K_{\nu\mu}\|x_\nu\|\,\|x_\mu\|.
\]
Now put \(u_\nu:=\|x_\nu\|\). Since \(K_{\nu\mu}\ge 0\), Schur's test yields
\[
\sum_{\nu\neq\mu}K_{\nu\mu}u_\nu u_\mu
\le
\Big(\sup_{\nu\in B}\sum_{\mu\neq\nu}K_{\nu\mu}\Big)\sum_{\nu\in B}u_\nu^2
=
M_\beta\sum_{\nu\in B}\|x_\nu\|^2.
\]
Therefore
\[
\mathcal E_\beta(x)
\ge
\Big(\frac12-M_\beta\Big)\sum_{\nu\in B}\|x_\nu\|^2,
\]
which proves the claim.
For the second part we fix \(\nu\in B\). By the separation assumption, for each \(m\ge1\) there is at most one point of \(B\) in each interval
\[
[\nu+m\delta,\nu+(m+1)\delta)
\quad\text{and}\quad
(\nu-(m+1)\delta,\nu-m\delta].
\]
Therefore
\[
\sum_{\mu\neq\nu} K_{\nu\mu}
\le
2\sum_{m=1}^\infty \frac{1}{2\cosh(m\delta\beta/4)}.
\]
Setting then
$S_\beta:=2\sum_{m=1}^\infty \frac{1}{2\cosh(m\delta\beta/4)},$ then implies
\[
\sup_{\nu\in B}\sum_{\mu\neq\nu}K_{\nu\mu}\le S_\beta \xrightarrow[\beta \to \infty]{}0.
\]
\end{proof}

We can now use this to conclude the existence of a unique invariant state of the Lindbladian for certain Hamiltonians that exhibit a coercive form. The coercivity is satisfied for Hamiltonians with a spectrum as in Lemma \ref{lemm:spectrum} or Bohr frequencies as in Lemma \ref{lemm:coercive}.  First, we conclude from $x_n \to x$ and $\mathcal E(x_n) \to \mathcal E(x)=0$ that for all $\nu\in B(H)$ and $ \alpha\in\mathcal A$ by Lemma \ref{lem:Riesz_finite_lattices}
\[
\lim_{n \to \infty} (x_n)_\nu^\alpha=0.
\]
By the definition of $x_\nu^\alpha$ and the assumption $\widehat f(\nu)\neq 0$, it follows that for all $\nu\in B(H)$ and $ \alpha\in\mathcal A$
\[
\lim_{n \to \infty} \delta_\nu^\alpha(x_n)=0.
\]
We recall that 
\[
\langle \varphi, \delta_\nu^\alpha(x) \psi\rangle=\langle \varphi, (e^{-\beta \nu/4}A_\nu^\alpha x-e^{\beta \nu/4}xA_\nu^\alpha) \psi\rangle.
\]
We have, by continuity, that for suitable $\varphi,\psi \in D(e^{\beta H/4})$ with $\varphi \in D(e^{\beta H/4} (A_\nu^\alpha)^*)$ and $\psi \in D(e^{\beta H/4} A_\nu^\alpha )$
\begin{equation}
\label{eq:limit}
\begin{split}
0&=\lim_{n \rightarrow \infty} \langle e^{\beta H/4} \varphi, \delta_\nu^\alpha(x_n) e^{\beta H/4}\psi\rangle\\
&=\lim_{n \rightarrow \infty}\langle e^{\beta H/4} \varphi, (e^{-\beta \nu/4}A_\nu^\alpha x_n-e^{\beta \nu/4}x_nA_\nu^\alpha) e^{\beta H/4} \psi\rangle \\
&=\lim_{n \rightarrow \infty}\langle e^{\beta H/4} (A_\nu^\alpha)^* \varphi,  x_ne^{\beta H/4} \psi\rangle-\langle e^{\beta H/4} \varphi, x_n e^{\beta H/4} A_\nu^\alpha \psi\rangle\\
&=\langle e^{\beta H/4} (A_\nu^\alpha)^* \varphi,  xe^{\beta H/4} \psi\rangle-\langle e^{\beta H/4} \varphi, x e^{\beta H/4} A_\nu^\alpha \psi\rangle.
\end{split}
\end{equation}
We then have that by summing $A_\nu^\alpha$ over $\nu$
\[ \sum_{\nu \in B(H)} u_{\nu}= A^{\alpha}\psi. \] 
In addition, we have 
\begin{equation} 
\label{eq:closedness}
\sum_{\nu \in B(H)} e^{\beta H/4}u_{\nu} =  \sum_{\nu \in B(H)}e^{\beta H/4}A_\nu^\alpha  \psi = \sum_{\nu \in B(H)} e^{\beta \nu /4}A_\nu^\alpha  e^{\beta H/4}\psi. 
\end{equation}

If this sum converges, then since $e^{\beta H/4}$ is closed, it follows that 
\[ \sum_{\nu \in B(H)} e^{\beta H/4}u_{\nu} =  e^{\beta H/4} A^{\alpha} \psi. \] 
For states $\varphi,\psi$ for which \eqref{eq:limit} holds and for which \eqref{eq:closedness} is finite we have 
\begin{equation}
\label{eq:comm_relations}
    \langle e^{\beta H/4} (A^\alpha)^* \varphi,  xe^{\beta H/4} \psi\rangle=\langle e^{\beta H/4} \varphi, x e^{\beta H/4} A^\alpha \psi\rangle.
\end{equation}

To simplify the presentation, we focus on the case of one mode in the following result showing that the nullspace of the quadratic form is spanned by the Gibbs state 
\[ \mathcal E(x)=0 \Rightarrow x \propto e^{-\beta H/2}.\]
\begin{theorem}[Uniqueness of the invariant state]
Let either $\mathcal H=L^2(\mathbb R)$ with $N=a^\dagger a$ or $\mathcal H=L^2(\mathbb R/(2\pi\mathbb Z))$ with $N=-\Delta=D_x^2$. Consider Hamiltonians
\[
H=N^m+K,\qquad m\in\mathbb N,
\]
where $K\in\mathcal B(\mathcal H)$ is finite rank and satisfies $K=1_{[0,n]}(N)K1_{[0,n]}(N)$ for some $n\in\mathbb N$. In the first case choose jump operators $\{a,a^\dagger\}$, and in the second case $\{D_x,E,E^\dagger\}$ with $E=e^{ix}$. Then
\[
\mathcal E_{\hat f,H}(x)=0 \ \Longrightarrow\ x\propto e^{-\beta H/2}.
\]
\end{theorem}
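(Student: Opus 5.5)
The plan is to turn $\mathcal E_{\hat f,H}(x)=0$ into a family of weighted commutation relations for $x$ with the bare jumps, and then use the irreducibility of the jumps to conclude that $x\propto e^{-\beta H/2}$.

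\emph{Spectral structure.} In both settings $N$ has spectrum $\{n^2\}$ or $\mathbb N_0$. For $N=D_x^2$ the eigenvalues $k^2$ with $k\in\mathbb Z$ are integers, so $\operatorname{Sp}(N^m)\subset\mathbb Z$. Since $K$ is finite rank and $K=1_{[0,n]}(N)K1_{[0,n]}(N)$, the operator $H$ agrees with $N^m$ outside the finite-dimensional block $1_{[0,n]}(N)\mathcal H$. Hence $\operatorname{Sp}(H)$ is obtained from a subset of $\mathbb Z$ by changing finitely many eigenvalues. Lemma~\ref{lemm:spectrum} with $\omega=1$ then places $B(H)$ in finitely many translates of $\mathbb Z$.

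\emph{Decoupling of frequencies.} Take $x\in D(\mathcal E_{\hat f,H})$ with $\mathcal E_{\hat f,H}(x)=0$, and approximate it by $x_n\in\mathscr F_{\sigma_\beta}$ in form norm, using the form-core property from Lemma~\ref{lem:Friedrichs}. By the representation \eqref{eq:Dirichletformdef}, $\mathcal E(x_n)=\sum_\alpha\int w(t)\|(x_n)_t\|_2^2\,dt$. Lemma~\ref{lem:Riesz_finite_lattices} then gives $\sum_{\alpha,\nu}\|(x_n)^\alpha_\nu\|_2^2\le A^{-1}\mathcal E(x_n)\to 0$, with the constant applied to the $\mathscr T_2$-valued coefficients. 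Assuming $\hat f(\nu)\ne 0$, this gives $\delta^\alpha_\nu(x_n)\to0$ for each $\nu$. Passing to the limit as in \eqref{eq:limit} and summing over $\nu$ via \eqref{eq:closedness} yields the relation \eqref{eq:comm_relations} for each jump $A^\alpha$. The relation holds for vectors $\varphi,\psi$ in a suitable dense set; I will take finite linear combinations of eigenvectors of $H$. For these vectors only finitely many $\nu$ contribute, because $A^\alpha$ maps an eigenvector of $H$ into a finite span of eigenvectors: $a,a^\dagger$ shift $N$ by $\pm1$, and $E,E^\dagger,D_x$ act as shifts or diagonally in the Fourier basis. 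All the sums are therefore finite, and the closedness issue disappears.

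\emph{Conclusion from the relations.} Set $X:=e^{\beta H/4}xe^{\beta H/4}$, understood as a sesquilinear form on $\mathcal F$. Relation \eqref{eq:comm_relations} states $\langle\varphi,(A^\alpha X-XA^\alpha)\psi\rangle=0$ on $\mathcal F$ for all jumps. Since the jump set is closed under adjoints, $X$ commutes in form sense with $a,a^\dagger$, respectively with $E,E^\dagger$ (and also $D_x$). Write the matrix entries $X_{kl}=\langle e_k,Xe_l\rangle$ in the $N$-eigenbasis.
\begin{itemize}
\item On the torus, $E$ is the shift $e_k\mapsto e_{k+1}$, so commuting with $E$ and $E^\dagger$ means $X_{kl}$ depends only on $k-l$. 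Commuting with $D_x$ forces $(k-l)X_{kl}=0$. Hence $X=cI$.
\item On the line, the relations with $a$ and $a^\dagger$ give recursions $\sqrt{k+1}\,X_{k+1,l+1}=\sqrt{l+1}\,X_{kl}$ and $\sqrt{k}\,X_{k-1,l}=\sqrt{l+1}\,X_{k,l+1}$. In particular $X_{k0}=0$ for $k\ge1$, and inductively $X$ is diagonal and constant.
\end{itemize}
In both cases $X=cI$, so $x=c\,e^{-\beta H/2}$. The care needed here is that $x$ is only Hilbert--Schmidt, so $X$ is merely a form on $\mathcal F$. I need $e^{\beta H/4}\mathcal F\subset\mathcal F$, which holds, so the matrix entries are well defined and the recursion is legitimate.

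\emph{Main obstacle.} The hardest step is justifying the limit in \eqref{eq:limit}: one must have $(A^\alpha_\nu)^*\varphi$ and $A^\alpha_\nu\psi$ in $\mathcal F$ so that weak convergence $x_n\to x$ in $\mathscr T_2$ suffices. This holds because the jumps preserve finite spans of eigenvectors, which relies on $K$ being block-supported on $1_{[0,n]}(N)$. A second point is applying Lemma~\ref{lem:Riesz_finite_lattices} to the infinitely supported families arising from $x_n\in\mathscr F$. I would handle this by truncating the frequency sum, or simply by noting that for $x_n\in\mathscr F$ only finitely many $\nu$ contribute.
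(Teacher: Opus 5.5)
Your proposal is correct and follows essentially the same route as the paper. Lemma~\ref{lemm:spectrum} together with the Riesz-type bound of Lemma~\ref{lem:Riesz_finite_lattices} forces $\delta^\alpha_\nu(x_n)\to 0$ for every Bohr frequency (given $\widehat f(\nu)\neq 0$), which yields the weak commutation relations \eqref{eq:comm_relations} on the finite-energy space, and the same recursions in the Hermite/Fourier basis then give $e^{\beta H/4}xe^{\beta H/4}=c\,I$ as a form on $\mathcal F$, hence $x=c\,e^{-\beta H/2}$. You also make explicit two points the paper's proof leaves implicit: the jumps map finite spans of eigenvectors into finite spans, so every $\nu$-sum is finite and the closedness step is trivial, and $e^{\pm\beta H/4}$ preserves $\mathcal F$.
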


\begin{proof}
By Lemma~\ref{lem:Riesz_finite_lattices}, the finite-energy space
\[
D_{\mathrm{fin}}:=\bigoplus_{\lambda\in\operatorname{Spec}(N)}\ker(N-\lambda)
\]
is contained in $D(e^{\beta H/4})$, and the weak commutation relations \eqref{eq:comm_relations} hold on $D_{\mathrm{fin}}$. Hence we may define the sesquilinear form
\[
B(\varphi,\psi):=\langle e^{\beta H/4}\varphi,\, x\,e^{\beta H/4}\psi\rangle,\qquad \varphi,\psi\in D_{\mathrm{fin}}.
\]

\textit{Case 1: $\mathcal H=L^2(\mathbb R)$, $N=a^\dagger a$.}
The weak relations read
\[
B(a^\dagger\varphi,\psi)=B(\varphi,a\psi),\qquad B(a\varphi,\psi)=B(\varphi,a^\dagger\psi).
\]
Let $(e_n)_{n\ge0}$ be the Hermite basis and set $b_{mn}:=B(e_m,e_n)$. Using $ae_n=\sqrt n\,e_{n-1}$ and $a^\dagger e_n=\sqrt{n+1}\,e_{n+1}$, we obtain
\[
\sqrt{m+1}\,b_{m+1,n}=\sqrt n\,b_{m,n-1},\qquad \sqrt m\,b_{m-1,n}=\sqrt{n+1}\,b_{m,n+1}.
\]
Setting $n=0$ gives $b_{p0}=0$ for $p\ge1$, while $m=0$ gives $b_{0q}=0$ for $q\ge1$. Iterating the recursions yields $b_{mn}=0$ for $m\neq n$. For the diagonal, taking $n=m+1$ shows $b_{m+1,m+1}=b_{mm}$, hence $b_{nn}=\lambda$ for some $\lambda\in\mathbb C$. Thus $B(e_m,e_n)=\lambda\delta_{mn}=\lambda\langle e_m,e_n\rangle$, and by sesquilinearity $B(\varphi,\psi)=\lambda\langle\varphi,\psi\rangle$ on $D_{\mathrm{fin}}$.

\textit{Case 2: $\mathcal H=L^2(\mathbb R/(2\pi\mathbb Z))$, $N=D_x^2$.}
Let $e_k(x)=(2\pi)^{-1/2}e^{ikx}$, $k\in\mathbb Z$. Then $D_x e_k=ke_k$, $Ee_k=e_{k+1}$, $E^\dagger e_k=e_{k-1}$, and the weak relations give
\[
B(D_x\varphi,\psi)=B(\varphi,D_x\psi),\quad B(E\varphi,\psi)=B(\varphi,E^\dagger\psi),\quad B(E^\dagger\varphi,\psi)=B(\varphi,E\psi).
\]
Setting $b_{jk}:=B(e_j,e_k)$, the $D_x$-relation yields $j\,b_{jk}=k\,b_{jk}$, hence $b_{jk}=0$ for $j\neq k$. The $E$-relation gives $b_{j+1,k}=b_{j,k-1}$, and setting $k=j+1$ yields $b_{j+1,j+1}=b_{jj}$, so $b_{jj}=\lambda$. Thus again $B(e_j,e_k)=\lambda\delta_{jk}$ and hence $B(\varphi,\psi)=\lambda\langle\varphi,\psi\rangle$.

In both cases,
\[
B(\varphi,\psi)=\lambda\langle\varphi,\psi\rangle,\qquad \varphi,\psi\in D_{\mathrm{fin}}.
\]
On the other hand,
\[
\langle e^{\beta H/4}\varphi,\, \lambda e^{-\beta H/2}e^{\beta H/4}\psi\rangle=\lambda\langle\varphi,\psi\rangle,
\]
so for all $\varphi,\psi\in D_{\mathrm{fin}}$,
\[
\langle e^{\beta H/4}\varphi,\, x\,e^{\beta H/4}\psi\rangle=\langle e^{\beta H/4}\varphi,\, \lambda e^{-\beta H/2}e^{\beta H/4}\psi\rangle.
\]
Since $D_{\mathrm{fin}}$ is a form core for $e^{\beta H/4}$, it follows that $x=\lambda e^{-\beta H/2}$, proving the claim.
\end{proof}

\subsection{Davies generators}\label{sec:Davies}

\noindent We end this first section by briefly sketching a generation theory for Davies generators, which parallels the previous construction. We start by introducing the symmetric operator on $\operatorname{span}(\mathscr{F}\cup \{\sqrt{\sigma_{\beta}}\})$
\begin{align}
L_{0,\widehat{f},H}(\lambda x+\mu\sqrt{\sigma_\beta})\!:=-\lambda \sum_{\substack{\alpha\in\mathcal{A}\\ \nu\in B(H)}}\frac{|\widehat{f}(\nu)|^2\,e^{\beta\nu/2}}{2}\,(\delta^\alpha_{\nu})^{\dagger}\delta^\alpha_{\nu}(x)
\end{align}
Compared with the generator $L_{\widehat{f},H}$ constructed in \eqref{def:L}, $L_{0,\widehat{f},H}$ corresponds to the sum over Bohr frequencies $\nu_1=\nu_2$. By the same reasoning as done in \Cref{propDirichlettoSchro}, we get that the associated Lindbladian in the Schr\"{o}dinger picture takes the following form when evaluated on $x\in\mathscr{F}$
\begin{align}
\label{eq:DaviesGenerator}
    \mathcal{L}_{0,\widehat{f},H}(x)&=\sum_{\substack{\nu\in B(H)\\\alpha\in\mathcal{A}}}{|\widehat{f}(\nu)|^2}\,\left(A^\alpha_{\nu} \cdot x\cdot (A^\alpha_{\nu})^\dagger-\frac{(A^\alpha_{\nu})^\dagger A^\alpha_{\nu}\cdot x+x\cdot (A^\alpha_{\nu})^\dagger A^\alpha_{\nu}}{2}\right).
\end{align}
Next, we aim to extend the domain of definition of $\mathcal{L}_{0,\widehat{f},H}$ to a quantum Sobolev space, thereby including certain finite moment states in Proposition~\ref{prop:DaviesFiniteMoment} below. For that, we proceed similarly as for the proof of Lemma~\ref{lem.LalphaGwelldefinedF} and Proposition~\ref{prop.generationtheorem} and define the function
\begin{align}\label{def:F3E}
F(E):=\sum_{E'\in\spec(H)}|\widehat f(E'-E)|^2
\end{align}
for $E\in\spec(H).$
\begin{proposition}
\label{prop:DaviesFiniteMoment}
    For $\mu\ge\gamma\ge 0$ being defined in \Cref{eq:condAalphas}, let $\kappa_2,\kappa_3\ge\gamma$ be such that
    \begin{align}
    \sum_{E\in \spec(H)} F(E)(1+h_0+E)^{-(\kappa_2-\gamma)} &<\infty,\quad \sum_{E\in \spec(H)} (1+h_0+E)^{-(\kappa_3-\gamma)} <\infty.
\end{align}
Then $D(\mathcal{W}_H^{\kappa_2,\kappa_3})\cap D(\mathcal{W}_H^{\kappa_2+\mu-\gamma,0})\cap D(\mathcal{W}_H^{0,\kappa_2+\mu-\gamma})\subseteq D(\mathcal{L}_{0,\widehat{f},H})$ and for $x\in D(\mathcal{W}_H^{\kappa_2,\kappa_3})\cap D(\mathcal{W}_H^{\kappa_2+\mu-\gamma,0})\cap D(\mathcal{W}_H^{0,\kappa_2+\mu-\gamma})$ the action of $\mathcal{L}_{0,\widehat{f},H}$ on $x$ is explicitly given by \eqref{eq:DaviesGenerator}.

\end{proposition}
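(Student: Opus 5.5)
The plan mirrors Lemma~\ref{lem.LalphaGwelldefinedF} and Proposition~\ref{prop.generationtheorem}. We treat the three families of terms in \eqref{eq:DaviesGenerator} separately: the sandwich term $\sum_\nu |\widehat f(\nu)|^2 A^\alpha_\nu\cdot x\cdot (A^\alpha_\nu)^\dagger$, and the two drift terms $G_0\cdot x$, $x\cdot G_0^\dagger$ with $G_0:=-\frac12\sum_{\alpha,\nu}|\widehat f(\nu)|^2 (A^\alpha_\nu)^\dagger A^\alpha_\nu$.

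\textbf{Drift terms.} We first show that $G_0$ is well defined and relatively $\widetilde H^{\kappa_2+\mu-\gamma}$-bounded. For $\ket{\psi}\in D(\widetilde H^{\kappa_2+\mu-\gamma})$ we expand
\[
(A^\alpha_\nu)^\dagger A^\alpha_\nu P_E = P_E (A^\alpha)^\dagger P_{E+\nu}A^\alpha P_E .
\]
Inserting $\widetilde H^{\pm\gamma}$ and $\widetilde H^{\pm\mu}$ as in Lemma~\ref{lem.LalphaGwelldefinedF}, and using \Cref{eq:condAalphas}, gives
\[
\|G_0\ket\psi\|\lesssim \|\widetilde H^{\kappa_2+\mu-\gamma}\psi\|\sum_E F(E)(1+h_0+E)^{-(\kappa_2-\gamma)} .
\]
This is finite by the first summability hypothesis. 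Hence $G_0\cdot x$ and $x\cdot G_0^\dagger$ are defined and bounded in trace norm on $D(\mathcal W_H^{\kappa_2+\mu-\gamma,0})$ and $D(\mathcal W_H^{0,\kappa_2+\mu-\gamma})$ respectively.

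\textbf{Sandwich term.} This is where the asymmetric exponents $\kappa_2,\kappa_3$ enter, and it is the main obstacle. We cannot simply define a single jump $L^\alpha$, because $|\widehat f|^2$ couples $A_\nu$ with its own adjoint and the sum over $\nu$ sits outside the product. Write $x=\widetilde H^{-\kappa_2} a\,\widetilde H^{-\kappa_3}$ with $a\in\mathscr T_1$. We then bound
\[
\sum_{\nu}|\widehat f(\nu)|^2\,\|A^\alpha_\nu x (A^\alpha_\nu)^\dagger\|_1\le\sum_{E,E'}|\widehat f(E'-E)|^2\,\|P_{E'}A^\alpha P_E\widetilde H^{-\kappa_2}\|\,\|a\|_1\,\|\widetilde H^{-\kappa_3}\textstyle\sum_{E''}P_{E''}(A^\alpha)^\dagger P_{E''+E'-E}\|.
\]
The left factor contributes $(1+h_0+E)^{-(\kappa_2-\gamma)}\|A^\alpha\widetilde H^{-\gamma}\|$, and summing $|\widehat f|^2$ over $E'$ produces $F(E)$. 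The right factor needs care. I would first try to avoid decomposing the second copy of $A_\nu$ by $\nu$. Instead, decompose it by the source energy $E''$, bound $\|\widetilde H^{-\kappa_3}P_{E''}(A^\alpha)^\dagger\|\le\|A^\alpha\widetilde H^{-\gamma}\|(1+h_0+E'')^{-(\kappa_3-\gamma)}$, and use $|\widehat f|\le C'$ to decouple the two energy sums. The second summability hypothesis then controls the sum over $E''$, giving a bound of the form
\[
\lesssim \|x\|_{\mathcal W_H^{\kappa_2,\kappa_3}}\Big(\sum_E F(E)(1+h_0+E)^{-(\kappa_2-\gamma)}\Big)\Big(\sum_{E''}(1+h_0+E'')^{-(\kappa_3-\gamma)}\Big).
\]
If the decoupling loses the square of $\widehat f$ in an essential way, the fallback is Cauchy--Schwarz in the pair $(E,E')$, splitting $|\widehat f|^2=|\widehat f|\cdot|\widehat f|$ between the two sides.

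\textbf{Closing argument.} As in Proposition~\ref{prop.generationtheorem}, the generator $\mathcal L_{0,\widehat f,H}$ of the strongly continuous semigroup is closed on $\mathscr T_1$, and it agrees with \eqref{eq:DaviesGenerator} on $\mathscr F$. By Lemma~\ref{mathFcoreW}, $\mathscr F$ is a joint core for the three Sobolev norms. So we take $x_n\in\mathscr F$ converging to $x$ in all three $\mathcal W$-norms; this uses the common truncation $P_nxP_n$, which works simultaneously for all three weights. The bounds above show that $\mathcal L_{0,\widehat f,H}(x_n)$ converges in $\mathscr T_1$ to the right-hand side of \eqref{eq:DaviesGenerator}. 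Closedness then gives $x\in D(\mathcal L_{0,\widehat f,H})$ together with the stated formula.
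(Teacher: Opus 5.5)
Your proposal is correct and follows essentially the same route as the paper. The paper expands the sandwich term as $P_{E'}A^\alpha P_E\, x\, P_{E''}(A^\alpha)^\dagger P_{E'''}$, drops the outer projections, and gets the same bound $\|A^\alpha\widetilde H^{-\gamma}\|^2\|x\|_{\mathcal W_H^{\kappa_2,\kappa_3}}\sum_E F(E)(1+h_0+E)^{-(\kappa_2-\gamma)}\sum_{E''}(1+h_0+E'')^{-(\kappa_3-\gamma)}$ that you obtain. It handles the drift terms via $\|\widetilde H^{-\gamma}A^\alpha\|\,\|\widetilde H^{\gamma}A^\alpha\widetilde H^{-\mu}\|$ and $F(E)$, and concludes by closedness together with the core $\mathscr F$, exactly as you do. Your appeal to $|\widehat f|\le C'$ and the Cauchy--Schwarz fallback are not needed, since your bound on the right factor is already independent of $(E,E')$.
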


\begin{proof}
\bin{\begin{align*}
\mathcal{L}_{0,\widehat{f},H}(x)&=\sum_{\substack{E,E',E'',E'''\in\spec(H)\\ E'-E=E'''-E''}}|\widehat f(E'-E)|^2
P_{E'}A^{\alpha}P_ExP_{E''} (A^{\alpha})^{\dagger}P_{E'''}\\\quad&-\sum_{E,E'\in\spec(H)}|\widehat f(E'-E)|^2\frac{P_{E}(A^{\alpha})^{\dagger}P_{E'}A^{\alpha}P_Ex+xP_{E}(A^{\alpha})^{\dagger}P_{E'}A^{\alpha}P_E}{2}.
\end{align*}}
For the CP-term of the Davies generator in \eqref{eq:DaviesGenerator} and $x\in D(\mathcal{W}^{\kappa_2,\kappa_3}_H)$ we see
\begin{align*}
&\sum_{\nu\in B(H)}|\widehat f(\nu)|^2 \left\|A^{\alpha}_{\nu}\cdot x\cdot \left(A^{\alpha}_\nu\right)^{\dagger}\right\|_1 \le 
\sum_{\substack{E,E',E'',E'''\in\spec(H)\\ E'-E=E'''-E''}}|\widehat f(E'-E)|^2
\left\|P_{E'}A^{\alpha}P_ExP_{E''} (A^{\alpha})^{\dagger}P_{E'''} \right\|_1\\& \le  \sum_{\substack{E,E',E''\in\spec(H)}}|\widehat f(E'-E)|^2
\left\|A^{\alpha}P_ExP_{E''} (A^{\alpha})^{\dagger} \right\|_1\\&\le\|A^\alpha \widetilde H^{-\gamma}\|^2\|\widetilde H^{\kappa_3}x\widetilde H^{\kappa_4}\|_1 \sum_{E,E''\in \spec(H)} F(E) \left(1+h_0+E\right)^{-(\kappa_2-\gamma)} \left(1+h_0+E''\right)^{-(\kappa_3-\gamma)} \\
&\lesssim \|x\|_{\mathcal{W}^{\kappa_2,\kappa_3}_H}.
\end{align*}
Furthermore, for the first term in the anticommutator part and $x\in D(\mathcal{W}^{\kappa_2+\mu-\gamma,0}_H)$ we get
\begin{align*}
    &\sum_{\nu\in B(H)}|\widehat f(\nu)|^2\left\|(A_\nu^{\alpha})^{\dagger}A_\nu^{\alpha}\cdot x\right\|_1\le \sum_{E,E'\in \spec(H)}|\widehat f(E'-E)|^2\left\|P_{E}(A^{\alpha})^{\dagger}P_{E'}A^{\alpha}P_Ex\right\|_1 \\&\le\|\widetilde H^{-\gamma}A^{\alpha}\| \|\widetilde H^{\gamma}A^{\alpha}\widetilde H^{-\mu}\|\|\widetilde H^{\kappa_2+\mu-\gamma}x\|_1 \sum_{E\in\spec(H)} F(E) \left(1+h_0+E\right)^{-(\kappa_2-\gamma)} \lesssim\|x\|_{\mathcal{W}^{\kappa_2+\mu-\gamma,0}_H}.
\end{align*}
The other term in the anticommutator part can be treated similarly. 

Using that by construction $\mathcal{L}_{0,\widehat f,H}$ is closed and employing Lemma~\ref{mathFcoreW} again, we see from the above inequalities and the same argument as in  Proposition~\ref{prop.generationtheorem} that $D(\mathcal{W}_H^{\kappa_2,\kappa_3})\cap D(\mathcal{W}_H^{\kappa_2+\mu-\gamma,0})\cap D(\mathcal{W}_H^{0,\kappa_2+\mu-\gamma})\subseteq D(\mathcal{L}_{0,\widehat{f},H}).$  
\end{proof}

\section{Spectral gap}\label{sec:spectralgaps}

Now that we have rigorously constructed our semigroups of quantum channels that will serve as our Gibbs samplers, we need to develop a framework to study their convergence towards their fixed point $\sigma_\beta$. In this section, we utilize the spectral properties of the operator $L_{\widehat{f},H}$ in order to control the mixing time
\[
t_{\operatorname{mix}}(\epsilon,\mathscr{S}_{D_p(\sigma_\beta)})
\!:=\!\inf\Bigl\{ t \ge 0 \Big| \| \rho_t - \sigma_\beta \|_1 \le \epsilon \,\forall\rho \in \mathscr{S}_{D_p(\sigma_\beta)}\! \Bigr\}
\]
for a given subset $\mathscr{S}_{D_p(\sigma_{\beta})}$ of input quantum states $\rho$ with $\widehat{D}_p(\rho\|\sigma_\beta)\le D_p(\sigma_\beta)<\infty$, $p>1$. As explained in Section \ref{sec.fastconv}, given $\lambda_2\equiv \operatorname{gap}(L_{\widehat{f},H})$, 
\begin{align*}
\big\|e^{t\mathcal{L}_{\sigma_E,\widehat{f},H}}(\rho)-\sigma_{\beta}\big\|_1\le e^{-\lambda_2 t+2{D}_2(\sigma_\beta)}
\quad \Longrightarrow \quad t_{\operatorname{mix}}(\epsilon,\mathscr{S}_{D_p(\sigma_\beta)})\le  \frac{1}{\lambda_2}\Big(2D_2(\sigma_\beta)+\log\Big(\frac{1}{\epsilon}\Big)\Big).
\end{align*}
Thus, it suffices to control the spectral gap in order to control the mixing time. 

\subsection{The harmonic oscillator}\label{sec:gaussexamples}

In order to build some intuition, we first consider the simple case of a Gaussian thermal state over a single-mode quantum bosonic system. Here, $\mathcal{H} = L^2(\mathbb{R})$ and we denote by $a$ and $a^\dagger$ the annihilation and creation operators, defined on a common dense domain $\mathcal{S}(\mathbb{R})$ of Schwartz functions, where they satisfy the canonical commutation relation
\[
[a,a^\dagger] = I.
\]
The associated number operator is given by
\begin{align}\label{def:Numberop}
N := a^\dagger a = \sum_{n \in \mathbb{N}} n\, |n\rangle\langle n|,
\end{align}
where $\{ |n\rangle \}_{n \in \mathbb{N}}$ denotes the Fock basis. We recall that these operators act on the Fock basis as
\[
a|n\rangle = \sqrt{n}\,|n-1\rangle,
\qquad
a^\dagger |n\rangle = \sqrt{n+1}\,|n+1\rangle.
\]
We start by choosing our bare jumps as $\{a,a^\dagger\}$ and considering the Hamiltonian $H=\gamma N$. Then, given a function $\widehat{f}$ satisfying the conditions of Section \ref{sec:generalframe}, we conclude that
\begin{align*}
L^+:=\sum_\nu \widehat{f}(\gamma\nu)\, (a^\dagger)_\nu =\sum_{k\in\mathbb{Z}}\sum_{n\in\mathbb{N}}\widehat{f}(\gamma k)|n+k\rangle\langle n+k|a^\dagger |n\rangle\langle n|
=\sum_{n\in\mathbb{N}}\, \widehat{f}(\gamma)\,\sqrt{n+1}|n+1\rangle\langle n|=\widehat{f}(\gamma)a^\dagger.
\end{align*}
Similarly,
\begin{align*}
L^-:=\sum_\nu \widehat{f}(\gamma\nu)\, a_\nu
=\sum_{k\in\mathbb{Z}}\sum_{n\in\mathbb{N}}\widehat{f}(\gamma k)|n+k\rangle\langle n+k|a |n\rangle\langle n|
=\sum_{n\in\mathbb{N}}\, \widehat{f}(-\gamma)\,\sqrt{n}|n-1\rangle\langle n|=\widehat{f}(-\gamma)a.
\end{align*}
Computing the last two terms of \eqref{eq:cLonF} similarly, we easily derive
\begin{align}\label{eq:qOUgen}
\mathcal{L}_{\widehat{f},\gamma N}(\rho)=&|\widehat{f}(\gamma)|^2\,\Big(a^\dagger \rho a-\frac{1}{2}\{aa^\dagger,\rho\}\Big)+|\widehat{f}(-\gamma)|^2\,\Big(a\rho a^\dagger-\frac{1}{2}\{a^\dagger a,\rho\}\Big)\,.
\end{align}
This coincides with the generator of the quantum Ornstein-Uhlenbeck (qOU) semigroup \cite{OU}, with a birth rate $\nu_+:=|\widehat{f}(\gamma)|^2$ and a death rate $\nu_-:=|\widehat{f}(-\gamma)|^2$. Moreover, since $\overline{\widehat{f}(\gamma)}=\widehat{f}(-\gamma)e^{-\beta \gamma/2}$, it is clear that 
$\nu_+=e^{-\beta\gamma }\nu_-<\nu_-$. 

\smallskip
 
 Next, we choose $\gamma=1$ and now consider the generator $L_{\widehat{f},\gamma N}$ on Hilbert-Schmidt operators, as defined in Equation \eqref{def:L}. A direct computation yields \cite{OU}
\begin{align}
  \label{eq:HSgenerator}
L_{\widehat{f},N}(x) =& -\left( \frac{\nu_- + \nu_+}{2}(Nx + xN) + \nu_+ x \right)  + \sqrt{\nu_+ \nu_-}( a x a^{\dagger} +  a^{\dagger} x a) .
\end{align}
Cipriani, Fagnola, and Lindsay \cite[Thm.~7.2]{OU} showed that the spectrum of \( L_{\widehat{f},N} \) is
\[
\operatorname{Sp}(L_{\widehat{f},N}) = -\left\{ n\left( \frac{\nu_- - \nu_+}{2} \right) \;\middle|\; n \in \mathbb{N}_0 \right\}.
\]
In the limiting case \( \nu_- = \nu_+ \), the spectrum becomes continuous and fills the half-line \( [0, \infty) \), corresponding to the so-called quantum Brownian motion \cite[Thm. 8.1]{OU}.

\smallskip

\paragraph{Why Work in the Hilbert-Schmidt Space?}
One might ask why we study \( L_{\widehat{f},N} \) on the Hilbert-Schmidt space instead of working with \( \mathcal{L}_{\widehat{f},N} \) directly on the set \( \mathscr{T}_1(\mathcal{H}) \) of trace-class operators. The reason is that \( \mathcal{L}_{\widehat{f},N} \) lacks a spectral gap in the space of trace-class operators, as the next theorem demonstrates. We recall that, for a linear operator $(A,D(A))$ on a Banach space $\mathscr{B}$, its resolvent $\rho(A)$ is the set of complex numbers $\lambda$ such that $A-\lambda I$ has a bounded inverse $B:\mathscr{B}\to D(A)$, i.e., $(A-\lambda I)B=I$ and $B(A-\lambda I)=I_{D(A)}$. The spectrum $\operatorname{Sp}(A)$ is the complementary set of $\rho(A)$ in $\mathbb{C}$. The spectrum of a closed operator is closed, and that of the generator of a strongly continuous semigroup is included in $\mathbb{C}_-$. A necessary condition for the uniform exponential convergence of the strongly continuous contraction semigroup associated with $\mathcal{L}_{\widehat{f},N}$ is the existence of a spectral gap, namely a constant $\lambda_0<0$ such that for all $\lambda\in \operatorname{Sp}(\mathcal{L}_{\widehat{f},H})\backslash \{0\}$, $\operatorname{Re}(\lambda)\le \lambda_0$ \cite[Corollary 4.1.2]{salzmann2024robustness}.

\begin{proposition}\label{thm.qOUnogap}
The spectrum of the OU generator \( \mathcal{L}_{\widehat{f},N} \) on the space of trace-class operators is the entire closed left complex half-plane \( \mathbb{C}_- \).
\end{proposition}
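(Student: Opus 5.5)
Since the spectrum of a generator of a contraction semigroup on $\mathscr{T}_1(\mathcal{H})$ is closed and contained in $\mathbb{C}_-$, it suffices to show that every $\lambda$ with $\operatorname{Re}\lambda<0$ lies in $\operatorname{Sp}(\mathcal{L}_{\widehat{f},N})$; closedness then gives the whole closed half-plane. The plan is to exhibit eigenvectors, or at least approximate eigenvectors, of the Heisenberg dual $\mathcal{L}^\dagger_{\widehat{f},N}$ acting on $\mathscr{B}(\mathcal{H})$. We use that $\lambda\in\operatorname{Sp}(A)$ if and only if $\bar\lambda$ (or $\lambda$, by the real structure) lies in the spectrum of the Banach adjoint. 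The dual is the weak$^*$ generator
\begin{align*}
\mathcal{L}^\dagger(X)=\nu_+\Big(aXa^\dagger-\tfrac12\{aa^\dagger,X\}\Big)+\nu_-\Big(a^\dagger X a-\tfrac12\{a^\dagger a,X\}\Big).
\end{align*}

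\textbf{Key step (off-diagonal eigenvectors).} We look for $X$ of the form $X=\sum_{n\ge0}c_n\,|n+k\rangle\langle n|$ for fixed $k\in\mathbb{Z}$, or simpler still for bounded operators commuting with the diagonal structure. A convenient choice is functions of the phase, or displacement-type observables. The classical qOU picture suggests the bounded eigenfunctions $X_z=e^{z a^\dagger}e^{-\bar z a}$-type (normally ordered Weyl) operators. For $\mathcal{L}^\dagger$ these satisfy $\mathcal{L}^\dagger(W(z))=\big(-\tfrac{\nu_--\nu_+}{2}\,z\partial_z\text{-type drift}-c|z|^2\big)W(z)$, i.e.\ Weyl operators evolve as $W(z)\mapsto e^{-c(1-e^{-\kappa t})|z|^2}W(e^{-\kappa t/2}z)$ with $\kappa=\nu_--\nu_+$. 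We then build genuine bounded eigenvectors by Mellin-type superposition in the modulus, $X_{s,m}=\int_0^\infty r^{s-1}e^{im\phi}\,\widetilde W(re^{i\phi})\,dr\,d\phi$, where $\widetilde W(z)=e^{c|z|^2}W(z)$ is suitably renormalized. Under the scaling $r\mapsto e^{-\kappa t/2}r$ these transform as $e^{-\kappa s t/2}X_{s,m}$. This should yield eigenvalues $-\kappa s/2+i(\dots)$ for a two-real-parameter family, covering the open left half-plane, provided $X_{s,m}$ is bounded (operator-valued convergence for $\operatorname{Re}s$ in a suitable strip).

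\textbf{Fallback (approximate eigenvectors in $\mathscr{T}_1$).} If boundedness of those superpositions fails, we instead construct Weyl-sequence approximate eigenvectors $\rho_n\in D(\mathcal{L})$ with $\|\rho_n\|_1=1$ and $\|(\mathcal{L}-\lambda)\rho_n\|_1\to0$. For $\lambda=-a+ib$ we take $\rho_n$ localized at large photon number $n$ and with phase winding. The idea is that in trace norm the operator $|m\rangle\langle m+k|$ has norm $1$, independent of the weight, so high-energy off-diagonal blocks are not suppressed as they would be in $\mathscr{T}_2$ with the $\sigma_\beta^{1/4}$ weighting. Concretely, $\mathcal{L}(|m\rangle\langle m+k|)$ has diagonal coefficient $\approx-\tfrac{\nu_-+\nu_+}{2}(2m+k)+\dots$ plus hopping terms $\approx\sqrt{\nu_+\nu_-}\cdot m$. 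We choose wave-packet coefficients $c_m$ along $m$ such that the leading linear-in-$m$ terms cancel to first order and the residual is tunable, so that the effective one-dimensional generator is a first-order transport operator in $m$. The transport operator $-\kappa m\partial_m$ is exactly what the qOU drift becomes, and on $\ell^1$ it has spectrum equal to a full half-plane, because $c_m=m^{-s}$ is summable-compatible for all $\operatorname{Re}s>1$ with arbitrary imaginary part. Truncating $m^{-s}$ smoothly gives the approximate eigenvectors.

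\textbf{Main obstacle.} The main obstacle is the rigorous error control: showing that the second-order (diffusive) corrections coming from the $\sqrt{\nu_+\nu_-}(axa^\dagger+a^\dagger xa)$ hopping terms are $o(1)$ in trace norm relative to $\|\rho_n\|_1$. It must also be checked that the imaginary part $b$ can be chosen arbitrarily, likely via the phase of $c_m$ or via the off-diagonal index $k$ scaled with $n$. We would first try to compute $e^{t\mathcal{L}^\dagger}$ exactly on normally ordered Weyl operators, using the known Gaussian-channel action, since this makes the transport structure explicit and reduces the spectral question to a scaling semigroup on a weighted $L^1$ space, whose spectrum is known to be a half-plane.
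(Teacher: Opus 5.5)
Neither of your two routes reaches the interior of $\mathbb{C}_-$, and the underlying reason is the same. For $\operatorname{Re}\lambda<0$ these points are genuine \emph{eigenvalues of $\mathcal{L}_{\widehat f,N}$ on $\mathscr{T}_1$}, and their eigenvectors have power-law tails anchored at low photon number.

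\textbf{Main route.} Your renormalisation is forced to be $c=\bar n+\tfrac12$ with $\bar n=\nu_+/(\nu_--\nu_+)>0$, because $e^{t\mathcal{L}^\dagger}(W(z))\to\Tr(\sigma_\beta W(z))I$. Hence $\widetilde W(z)=e^{\bar n|z|^2}e^{za^\dagger}e^{-\bar z a}$ has matrix elements equal to $e^{\bar n|z|^2}$ times a polynomial, and the Mellin integral defining $X_{s,m}$ diverges even weakly. This is not a technicality. Project a bounded eigenvector of $\mathcal{L}^\dagger$ onto a phase-covariance sector; this gives a bounded $X=\sum_n x_n|n\rangle\langle n+k|$. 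Then $\mathcal{L}^\dagger X=\lambda X$ becomes a three-term recurrence whose nonzero solutions grow either like $(\nu_-/\nu_+)^n$ or like $n^{-\lambda/\kappa}$ (with your $\kappa=\nu_--\nu_+$). Both are unbounded when $\operatorname{Re}\lambda<0$. So $\mathcal{L}^\dagger$ has no bounded eigenvectors there (these points are residual spectrum of $\mathcal{L}^\dagger$), and your proviso ``provided $X_{s,m}$ is bounded'' always fails.

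\textbf{Fallback.} On $|m\rangle\langle m+k|$ the hopping coefficients of $\mathcal{L}$ are $\nu_+\sqrt{(m+1)(m+k+1)}$ and $\nu_-\sqrt{m(m+k)}$, not $\sqrt{\nu_+\nu_-}\,m$. The latter belong to the $\mathscr{T}_2$ generator \eqref{eq:HSgenerator}, where they leave a confining term $-(\sqrt{\nu_-}-\sqrt{\nu_+})^2m$ rather than transport. With the correct rates the linear terms cancel identically and the leading part is $\kappa\,\partial_m(mc)$. Writing $c_m=m^{-1}g(\log m)$, this becomes $\kappa\partial_u$ on $L^1(du)$, the generator of an isometric translation group. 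Consequently every packet localised at large photon number satisfies $\|(\mathcal{L}-\lambda)\rho\|_1\gtrsim|\operatorname{Re}\lambda|\,\|\rho\|_1$ up to lower-order terms, so such Weyl sequences only reach $i\mathbb{R}$. If instead you keep $m^{-s}$ down to $m=1$, the mass sits at small $m$. There the discreteness and diffusion corrections you flag as the main obstacle are $O(1)$ relative to $\|\rho\|_1$, not $o(1)$.

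\textbf{What is needed.} You need an exact eigenvector. On diagonal $\rho=\sum_n p_n|n\rangle\langle n|$, $\mathcal{L}$ acts as the birth--death generator $(\mathcal{L}p)_n=\nu_+(np_{n-1}-(n+1)p_n)+\nu_-((n+1)p_{n+1}-np_n)$. With $P(s)=\sum_n p_ns^n$, the equation $\mathcal{L}p=\lambda p$ reads $(s-1)\big[(\nu_+s-\nu_-)P'+\nu_+P\big]=\lambda P$. It is solved by $P(s)=(1-s)^{-\lambda/\kappa}(\nu_--\nu_+s)^{\lambda/\kappa-1}$, whose coefficients are $O(n^{\operatorname{Re}\lambda/\kappa-1})$ and hence lie in $\ell^1$ exactly when $\operatorname{Re}\lambda<0$. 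Finite truncations converge in graph norm, since the boundary error is $O(N|p_N|)\to0$. By closedness of $\mathcal{L}$, $\rho_\lambda\in D(\mathcal{L})$ with $\mathcal{L}\rho_\lambda=\lambda\rho_\lambda$, and closedness of the spectrum finishes the proof.

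This is the Fock-basis analogue of the paper's argument. The paper instead restricts to position multiplication operators, where the dynamics is the classical OU process, and uses the $L^1$ eigenfunctions ${}_1F_1$ of the Fokker--Planck operator, which have tails $|x|^{\lambda/b-1}$.
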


\begin{proof}
We recall that the set $\sigma_{\mathrm{p}}(A)$ of eigenvalues $\lambda$ of a linear map $(A,D(A))$ is included in $\operatorname{Sp}(A)$. The formal adjoint of the generator of the classical Ornstein-Uhlenbeck on $L^\infty(\mathbb{R})$ 
\[
G_{\mathrm{OU}} \varphi(x) = q \varphi''(x) - b x \varphi'(x),
\]
 is
\[
G_{\mathrm{OU}}' \varphi(x) = q \varphi''(x) + b x \varphi'(x) + b\varphi(x).
\]
To find its spectrum, we solve the eigenvalue problem
\[
G_{\mathrm{OU}}' \varphi = \lambda \varphi \quad \Rightarrow \quad q \varphi''(x) + b x \varphi'(x) - \mu \varphi(x) = 0,
\]
where \( \mu = \lambda - b \). 
Taking the Fourier transform of this equation yields
\[
- q \xi^2 \widehat{\varphi}(\xi) - b\left( \widehat{\varphi}(\xi) + \xi \widehat{\varphi}'(\xi) \right) - \mu \widehat{\varphi}(\xi) = 0.
\]
Solving this differential equation in Fourier space gives
\[
\widehat{\varphi}(\xi) = c e^{ -\frac{q \xi^2}{2b} } |\xi|^{ -\frac{\lambda}{b} },
\]
for some \( c \neq 0 \). Taking the inverse Fourier transform yields
\[
\varphi(x) = \tfrac{c}{2} \left( \tfrac{2b}{q} \right)^{\frac{b-\lambda}{2b}} \Gamma\left( \tfrac{b-\lambda}{2b}\right) \, {}_1F_1\left( \tfrac{b-\lambda}{2b}, \tfrac{1}{2}, -\tfrac{b x^2}{2q} \right),
\]
where $F_1$ denotes the confluent hypergeometric function. Asymptotically, we have
\[ \vert \varphi(x)\vert = \mathcal O(x^{-1+\frac{\lambda}{b}}).\]
 Thus, $\varphi \in L^1(\mathbb R)$ for $\Re(\lambda)<0$. This shows that $\mathbb C_- \subset \operatorname{Sp}(G'_{\operatorname{OU}})= \operatorname{Sp}(G_{\operatorname{OU}})$. Since $\mathcal L_{\widehat{f},N}$ generates a contraction semigroup, we have $\operatorname{Sp}(\mathcal L_{\widehat{f},N}) \subset \mathbb C_-.$ 
Finally, it was shown in \cite{OU} that, given the multiplication operator \( M_\varphi \) defined by \( M_\varphi(x) = \varphi(x) \),
 the generator $\mathcal{L}_{\widehat{f},N}$ satisfies
\[
G_{\mathrm{OU}} \varphi(x) := \mathcal{L}_{\widehat{f},N}(M_\varphi)(x) = q \varphi''(x) - b x \varphi'(x),
\]
with
\[
q := \frac{\nu_- + \nu_+}{4} > 0, \quad b := \frac{\nu_- - \nu_+}{2} > 0.
\]
  We conclude that
 \[ \mathbb C_- \subset \operatorname{Sp}(G_{\operatorname{OU}}) \subset \operatorname{Sp}(\mathcal L_{\widehat{f},N}) \subset \mathbb C_- \Longrightarrow \operatorname{Sp}(\mathcal L_{\widehat{f},N})=\mathbb C_-. \]
 
\end{proof}

\subsection{Independence of spectra}\label{Lpsemigroupsindependence}

\noindent Next, we show how to relax the condition that input states are in $\mathscr{S}_{D_2(\sigma_\beta)}$ to $\mathscr{S}_{D_p(\sigma_\beta)}$ for $1<p<2$, while keeping the same convergence rate given by the spectral gap of $L_{\widehat{f},H}$. As $p$ approaches $1$, this allows us to consider increasingly larger sets of initial states, although still under an exponential moment constraint. For our analysis, we shall use the following family of interpolating Banach spaces:
Let \(\omega\in\mathscr B(\mathcal H)\) be a faithful state (\(\omega>0\), \(\Tr\omega=1\)).
For \(1\le p<\infty\) define on \(\mathscr{B}(\mathcal{H})\)
\begin{equation}
\label{eq:norms}
\begin{split}
  \|x\|_{p,\omega}
  := \Big(\Tr\,\big|\,\omega^{1/(2p)}\,x\,\omega^{1/(2p)}\big|^{\,p}\Big)^{1/p} =\Big\|\omega^{1/(2p)}\,x\,\omega^{1/(2p)}\Big\|_{p}.
  \end{split}
\end{equation}
The non\mbox{-}commutative \(L^p\) space \(L^p(\omega)\) is the completion of \(\mathscr{B}(\mathcal H)\)
with respect to \(\|\cdot\|_{p,\omega}\). 
Set \(L^\infty(\omega):=\mathscr{B}(\mathcal H)\) with the operator norm \(\|\cdot\|_\infty\).
If \(1/p+1/q=1\), the duality pairing between \(L^p(\omega)\) and \(L^q(\omega)\) is
\[
  \langle x, y\rangle_\omega
  := \Tr\!\left(\omega^{1/(2p)}\,x\,\omega^{1/(2q)}\,y^{\dagger}\right),
 \ x\in L^p(\omega),\ y\in L^q(\omega),
\]
so that \(L^q(\omega)\cong L^p(\omega)^{*}\).
In particular, for \(p=2\),
\(\langle x,y\rangle_{2,\omega}=\Tr(\omega^{1/4}x^{*} \omega^{1/4}y)\) and \(L^2(\omega)\) is a Hilbert space.
 Next, we build a family of semigroups $\{T^{(p)}_t\}_{t\ge 0}$ on $L^p(\sigma_{\beta})$ for each $p\ge 1$ from the $\mathscr{T}_2(\cH)$ semigroup $\{e^{\smash{tL_{\widehat{f},H}}}\}_{\smash{t\ge 0}}$. For this, we first introduce the isometry $\eta_2: L^2(\sigma_\beta)\to \mathscr{T}_2(\cH)$, with $\eta_2(x)=\sigma_\beta^{\smash{\frac{1}{4}}}x\sigma_\beta^{\smash{\frac{1}{4}}}$. Note that, formally, $\eta_2$ coincides with $\iota_2$. Then, for each $t\ge0$, we define the map $T_t^{(2)}$ as 
\[T^{(2)}_t=\eta_2^\dagger\, e^{tL_{\widehat{f},H}}\,\eta_2.\]
Next, we start from the $L^2(\sigma_\beta)$ semigroup with spectral resolution
\[ T^{(2)}_t = \sum_{k=0}^{\infty} e^{-\lambda_k t} P_k \,, \]
where $\{\lambda_k\}_k$ form a discrete set of eigenvalues, tending to infinity, and $P_k$ the associated spectral projections. 
Then $\{T^{(2)}_t\}_{t>0}$ is a family of compact operators on $L^2(\sigma_\beta)$, since it is the norm limit of finite rank operators $T^{(2,n)}_t:=\sum_{k=0}^{n} e^{-\lambda_kt} P_k.$ By \cite[Theorem 1.4.1]{Davies_1989}, it follows that, since $T^{(2)}_t$ is a symmetric Markov semigroup on $L^2(\sigma_\beta)$, it can be extended from $L^1(\sigma_\beta) \cap L^\infty(\sigma_\beta)$ to a positive one-parameter contraction
semigroup $T^{(p)}_t$ on $L^p(\sigma_\beta)$ for all $1 \le p \le \infty$. These semigroups
are strongly continuous if $1 \le p < \infty$, and are consistent in the
sense that
\[
    T^{(p)}_t(x) = T^{(q)}_t(x)
\]
if $x \in L^p(\sigma_\beta) \cap L^q(\sigma_\beta) = L^{\text{max}\{p,q\}}(\sigma_\beta)$. They are self-adjoint in the sense that
\begin{equation}
\label{eq:duality}
    T^{(p)\dagger}_t = T^{(q)}_t\text{ if }1 \le p < \infty\text{ and }p^{-1} + q^{-1} = 1.
\end{equation}
\noindent By interpolation, since $T^{(2)}_t$ is an analytic semigroup which follows for instance from \cite[Section 4, Theorem 4.6]{engel1999one}, all semigroups $T^{(p)}_t$, for $p \in (1,\infty)$ are analytic, too. See \cite[Theorem 6.6]{Lunardi2018} for a thorough discussion using Stein's interpolation theorem. 
It also follows from interpolation, see \cite[Theorem 3.1]{COBOS1990351}, that since $T^{(2)}_t$ with $t>0$ is compact on $L^2(\sigma_\beta)$, $T^{(p)}_t$ is also compact on $L^p(\sigma_\beta)$ for $p \in (1,\infty)$. 

Denoting by $\mathcal{L}^{(p)}_{\widehat{f},H}$ the generators of the semigroups $\{T^{(p)}_t\}_{t\ge 0}$, by the Laplace transform, we have \[ (\lambda-\cL^{(p)}_{\widehat{f},H})^{-1}  =\int_0^{\infty} e^{-\lambda t} T^{(p)}_t  \ dt.\]
This implies that the resolvent is also compact on all $L^p(\sigma_\beta)$ spaces for $p\in (1,\infty)$, and thus the spectrum of $\mathcal{L}^{(p)}_{\widehat{f},H}$, the generator of $\{T^{(p)}_t\}_{t\ge 0}$, is discrete. Every $L^2(\sigma_\beta)$ eigenfunction is automatically also a $L^p(\sigma_\beta)$ eigenfunction for $1\le p<2$, since these are weaker norms. On the other hand, every $L^p(\sigma_\beta)$ eigenfunction for $p>2$ is automatically an $L^2(\sigma_\beta)$ eigenfunction. We have thus shown using \eqref{eq:duality} the following:
\begin{theorem}
    The spectrum of $\mathcal{L}^{(p)}_{\widehat{f},H}$ is independent of $p \in (1,\infty).$
\end{theorem}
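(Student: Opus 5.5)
The proof reduces to two facts: for $p\in(1,\infty)$ the spectrum of $\mathcal{L}^{(p)}_{\widehat{f},H}$ consists only of eigenvalues, and the eigenvectors for $p$ and for $2$ correspond to each other. Compactness gives the first fact, and the consistency of the semigroups $T^{(p)}_t$ together with the duality \eqref{eq:duality} gives the second.

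First, I will use that $T^{(p)}_t$ is compact on $L^p(\sigma_\beta)$ for every $t>0$ and $p\in(1,\infty)$; this was obtained above by interpolation from compactness on $L^2(\sigma_\beta)$. The Laplace transform representation then shows that the resolvent $(\lambda-\mathcal{L}^{(p)}_{\widehat{f},H})^{-1}$ is compact for $\operatorname{Re}\lambda>0$. Here I approximate the integral by truncations $\int_\epsilon^R$, which are norm limits of compact operators. Since the resolvent is compact, $\operatorname{Sp}(\mathcal{L}^{(p)}_{\widehat{f},H})$ is a discrete set of eigenvalues of finite multiplicity. It remains to show that the eigenvalues do not depend on $p$. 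Because $T^{(2)}_t$ is self-adjoint, the $p=2$ spectrum is real.

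Second, I compare eigenvectors through the semigroup rather than the generator, which avoids domain issues. If $\mathcal{L}^{(p)}x=\lambda x$, then $T^{(p)}_t x=e^{\lambda t}x$ for all $t$, and conversely. Take $p\ge 2$. Since $\sigma_\beta$ is a finite trace state, $L^p(\sigma_\beta)\subseteq L^2(\sigma_\beta)$ continuously. By consistency, an $L^p$ eigenvector $x$ satisfies $T^{(2)}_t x=T^{(p)}_t x=e^{\lambda t}x$, so $\lambda\in\operatorname{Sp}(\mathcal{L}^{(2)})$. The same argument gives $\operatorname{Sp}(\mathcal{L}^{(2)})\subseteq\operatorname{Sp}(\mathcal{L}^{(q)})$ for $q\le 2$, since an $L^2$ eigenvector lies in $L^q$.

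Third, I close the two remaining inclusions by duality. By \eqref{eq:duality}, $\mathcal{L}^{(q)}=(\mathcal{L}^{(p)})^\dagger$ for conjugate exponents, and for closed operators on reflexive spaces $\operatorname{Sp}(A^\dagger)=\overline{\operatorname{Sp}(A)}$. So for $q\le 2$ with conjugate $p\ge2$, we get $\operatorname{Sp}(\mathcal{L}^{(q)})=\overline{\operatorname{Sp}(\mathcal{L}^{(p)})}\subseteq\overline{\operatorname{Sp}(\mathcal{L}^{(2)})}=\operatorname{Sp}(\mathcal{L}^{(2)})$, using that the $p=2$ spectrum is real. Applying duality to $\operatorname{Sp}(\mathcal{L}^{(2)})\subseteq\operatorname{Sp}(\mathcal{L}^{(q)})$ gives $\operatorname{Sp}(\mathcal{L}^{(2)})\subseteq\operatorname{Sp}(\mathcal{L}^{(p)})$. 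Hence every $p\in(1,\infty)$ has the same spectrum as $p=2$.

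The main obstacle is the compactness input, namely transferring compactness of $T^{(2)}_t$ to $T^{(p)}_t$ by the interpolation result of \cite{COBOS1990351}. This requires $T^{(2)}_t$ itself to be compact, which rests on the assumed discreteness of the $L^2$ spectrum. It also requires care at the endpoints: compactness is needed on one endpoint of the interpolation couple, and I will use boundedness at $p=1$ or $p=\infty$ with compactness at $p=2$. I would check this step most carefully, and use the stated interpolation theorem for the rest.
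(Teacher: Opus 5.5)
Your proposal is correct and follows the same route as the paper. Compactness of $T^{(p)}_t$ is transferred from $L^2(\sigma_\beta)$ by interpolation, and the resolvent is compact via the Laplace transform. Eigenvectors are then transferred through the inclusions $L^p(\sigma_\beta)\subseteq L^2(\sigma_\beta)\subseteq L^q(\sigma_\beta)$ for $q\le 2\le p$ together with consistency, and the duality \eqref{eq:duality} closes the remaining inclusions. Where you differ is only in spelling out the duality step (the spectrum of the adjoint generator and the reality of the $L^2$ spectrum), which the paper leaves implicit.
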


\noindent An important property of analytic semigroups $\{T_t\}_{t\ge 0}$, with generator $A$, is that their growth bound
\[\omega(A):=\inf_{t>0} \frac{1}{t} \log \Vert T(t)\Vert\]
is equal to their spectral bound 
\[ s(A):=\sup\{\Re(\lambda); \lambda \in \operatorname{Sp}(A) \}; \]
and the spectral mapping theorem holds 
\[ \operatorname{Sp}(T(t)) \setminus \{0\} = e^{\overline{t \operatorname{Sp}(A)}} \setminus \{0\} \] 
see \cite[\SS 3, Corr. 3.12]{engel1999one}. For contraction semigroups as above on spaces $L^p(\sigma_\beta)$, we may study the semigroup on $(I-P)L^p(\sigma_\beta),$ where $P$ is the projection onto the $0$ eigenspace of the generator. By the existence of a spectral gap, we thus have that there exists $\lambda>0$, equal to the spectral gap, and $M_p>0$ such that
\[\Vert (T^{(p)}_t-P)x \Vert_{p,\sigma_\beta} \le M_p e^{-\lambda t} \Vert x \Vert_{p,\sigma_\beta}. \]
 We have thus shown the following in our setting:
\begin{proposition}\label{thm:fromL2toLp}
    Assume that $L_{\widehat{f},H}$ has a compact resolvent, and let $\operatorname{gap}(L_{\smash{\widehat{f},H}})>0$ be the spectral gap between eigenvalue 0 and its next largest eigenvalue. Then, there exists $M_p>0$ such that for all $p \in (1,\infty)$ and $P_p$, the spectral projection associated with the $0$ eigenvalue in $L^p(\sigma_\beta)$,
    \[\Vert (e^{t\cL_{\widehat{f},H}^{(p)}}-P_p)x \Vert_{p,\sigma_\beta} \le M_p\, e^{-\operatorname{gap}(L_{\widehat{f},H}) t} \Vert x \Vert_{p,\sigma_\beta} ~~\text{ for all }~~ x \in L^p(\sigma_\beta). \]
\end{proposition}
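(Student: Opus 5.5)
The plan is to transfer the $L^2$ spectral information to each $L^p(\sigma_\beta)$, $p\in(1,\infty)$, and then use analyticity to turn a spectral bound into a growth bound.

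\emph{Step 1: the $L^2$ semigroup.} Under the isometry $\eta_2$, the semigroup $T^{(2)}_t=\eta_2^\dagger e^{tL_{\widehat f,H}}\eta_2$ is self-adjoint. Since $L_{\widehat f,H}$ has compact resolvent, the spectral resolution $T^{(2)}_t=\sum_k e^{-\lambda_k t}P_k$ holds with $\lambda_0=0<\lambda_1=\operatorname{gap}(L_{\widehat f,H})$. For $t>0$, each $T^{(2)}_t$ is compact, being a norm limit of finite-rank truncations.

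\emph{Step 2: compactness and equality of spectra.} By the Davies extension result, we obtain consistent contraction semigroups $T^{(p)}_t$. Compactness of $T^{(p)}_t$ for $t>0$ follows from Cobos-type interpolation. I would interpolate between $p=2$ and an endpoint $p_0\in\{1,\infty\}$ where $T^{(p_0)}_t$ is merely bounded; compactness on one side of the couple suffices. The Laplace transform formula then makes the resolvent compact, so the spectrum of $\mathcal{L}^{(p)}_{\widehat f,H}$ is discrete and consists of eigenvalues. By the preceding theorem, this spectrum equals that of the $L^2$ generator. In particular, $0$ is an isolated eigenvalue, and every other eigenvalue has real part at most $-\operatorname{gap}(L_{\widehat f,H})$.

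\emph{Step 3: the spectral projection.} Let $P_p$ be the Riesz projection at $0$ for $\mathcal{L}^{(p)}_{\widehat f,H}$. It commutes with $T^{(p)}_t$, and its range is finite-dimensional because the resolvent is compact. The subspace $X_p:=(I-P_p)L^p(\sigma_\beta)$ is invariant. The restricted generator $\mathcal{L}^{(p)}|_{X_p}$ has spectrum $\operatorname{Sp}(\mathcal{L}^{(p)})\setminus\{0\}$, so its spectral bound satisfies $s\le-\operatorname{gap}(L_{\widehat f,H})$. This step needs a little care: one must show that $P_p$ has no nilpotent part, i.e.\ that $0$ is a semisimple eigenvalue, because otherwise $T^{(p)}_tP_p\neq P_p$ and the left side would not be $(T^{(p)}_t-P_p)x$. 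I would check this by consistency: eigenvectors and generalized eigenvectors at $0$ in $L^p$ coincide with those in $L^2$, using the inclusions $L^p\subset L^2$ or $L^2\subset L^p$, and in $L^2$ the operator is self-adjoint. Hence $T^{(p)}_tP_p=P_p$.

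\emph{Step 4: from spectral bound to growth bound.} Analyticity of $T^{(p)}_t$ for $1<p<\infty$ comes from Stein interpolation of the analytic $L^2$ semigroup with the contractive endpoints. The restriction to $X_p$ is again analytic. For analytic, or more generally eventually norm-continuous, semigroups the growth bound equals the spectral bound. This gives, for every $\varepsilon>0$, a constant $M_{p,\varepsilon}$ with $\|T^{(p)}_t(I-P_p)\|\le M_{p,\varepsilon}e^{(-\operatorname{gap}+\varepsilon)t}$.

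\emph{Main obstacle: removing the $\varepsilon$.} To get the exact rate $e^{-\operatorname{gap}\,t}$, I would use discreteness of the spectrum and the spectral mapping theorem. The set $\operatorname{Sp}(T^{(p)}_1|_{X_p})\setminus\{0\}$ consists of the points $e^{-\lambda_k}$. Let $Q$ be the finite-rank Riesz projection onto the eigenvalues with $\lambda_k=\lambda_1$. These eigenvalues are semisimple, again by consistency with $L^2$, so $T^{(p)}_tQ=e^{-\lambda_1 t}Q$. The remaining part has strictly smaller spectral radius $e^{-\lambda_2'}$ with $\lambda_2'>\lambda_1$, so it obeys the previous $\varepsilon$-bound with $\varepsilon<\lambda_2'-\lambda_1$. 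Combining the two pieces and noting $\|x\|_{p,\sigma_\beta}$-boundedness of the projections gives the claim, with $M_p$ absorbing $\|Q\|$, $\|I-P_p\|$ and the constant from the $\varepsilon$-bound.
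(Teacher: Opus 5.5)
Your proposal is correct and follows the paper's route: Davies' extension to consistent $L^p(\sigma_\beta)$ semigroups, interpolation for compactness and analyticity, $p$-independence of the discrete spectrum, and equality of growth bound and spectral bound for the analytic semigroup restricted to $(I-P_p)L^p(\sigma_\beta)$. Your extra steps make rigorous a point the paper passes over, since equality of the growth and spectral bounds alone only gives $M_{p,\varepsilon}e^{(-\operatorname{gap}+\varepsilon)t}$. These steps are semisimplicity of the eigenvalues $0$ and $-\operatorname{gap}(L_{\widehat f,H})$ via consistency with the self-adjoint $L^2$ generator, and splitting off the Riesz projection at $-\operatorname{gap}$; for $p<2$, derive semisimplicity from consistency of the Riesz projections, $P_p|_{L^2}=P_2$, or from duality $P_p=P_q^*$, since the inclusion $L^2\subset L^p$ alone does not suffice.
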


\subsection{Absence of gap for decaying filter functions}
\label{sec:AbsenceOfGap}

An important class of models consists of Hamiltonians $H$ that can be formally expressed as polynomials in creation and annihilation operators. Among these, the Bose--Hubbard model is particularly prominent and is analysed in our companion paper \cite{BeckerRouzeSalzmannBose}. As expected, the difficulty when analyzing such Hamiltonians increases significantly when the degree of the polynomial exceeds $2$. In this section, we demonstrate that the generator $L_{\smash{\widehat{f},N^2}}$ is gapless whenever $\widehat{f}$ is excessively regular, thereby revealing a first explicit tension with the implementability of the corresponding evolution $e^{t\mathcal{L}_{\widehat{f},N^2}}$.
More broadly, we consider the properties of Hamiltonians 
of the form 
\begin{align}
\label{eq:NumPresHamiltonian}
    H = h(N) = \sum_{n=0}^\infty h(n) \kb{n}.
\end{align}
where $N$ is the number operator defined in \eqref{def:Numberop}. Here, $\left\{\ket{n}\right\}_{n\in\N_0}$ denotes the Fock basis, and we consider $h:\N_0\to \R$ to be non-decreasing for $n\ge n_0$ for some $n_0\in\N_0$, and such that $H$ has a well-defined Gibbs state. In the following proposition we consider for $\sigma_E\in(0,\infty]$ the generator $\cL_{\sigma_E,\widehat f,H}$ defined and studied in Section~\ref{monotonicitygap} and \ref{integralrep} below and which generalises $\cL_{\widehat f,H}$  as $$\cL_{\infty,\widehat f,H}=\cL_{\widehat f,H}.$$ Note, as established in Proposition~\ref{lem.gapmonoton} below, we have that the spectral gap of the corresponding generator on the space of Hilbert-Schmidt operators, $L_{\sigma_E,\hat f,H},$ is non-increasing in the parameter $\sigma_E,$  i.e.
$\operatorname{gap}(L_{\sigma_E,\widehat{f},H})\ge \operatorname{gap}(L_{\sigma'_E,\widehat{f},H})$  for any  $0< \sigma_E\le \sigma'_E\le \infty.$

\begin{proposition}\label{prop.no-gogap}
    Let the filter function $\widehat{f}\in\mathcal{S}(\R)$ be such that
    \[\lim_{m \to \infty} \widehat{f}(\pm (h(m+1)-h(m))) \sqrt{m}=0,\]
    then the generator $\mathcal{L}_{\sigma_E,\widehat{f},H}$ as in \eqref{eq:LsigmaEintro}  with bare jumps $\{a,a^\dagger\}$ is compact for $\sigma_E \in (0,\infty]$. In particular, $0$ is part of the essential spectrum.

      If the filter function $\widehat{f}$ satisfies 
    \[ \lim_{m \to \infty} e^{\pm \beta (h(m)-h(m+1))/4}\widehat{f}(\pm ( h(m)-h(m+1))) \sqrt{|m|}=0,\]
    where we consider all possible combinations of $\pm$, then $L_{\sigma_E,\widehat{f},H}$ is compact for $\sigma_E \in (0,\infty]$ as well.
 This holds, for instance, if $\widehat{f}$ has Gaussian decay. 
\end{proposition}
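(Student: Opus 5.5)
The plan is to compute the generator explicitly in the Fock basis. Since $H=h(N)$ is diagonal in $\{\ket n\}$ and the bare jumps are $a,a^\dagger$, the only Bohr frequencies carried by a jump are $\pm(h(n+1)-h(n))$. Concretely,
\[
a_\nu=\sum_{n:\,h(n-1)-h(n)=\nu}\sqrt n\,\ket{n-1}\bra n ,
\]
and similarly for $a^\dagger$. Inserting this into the Lindblad form (the $\sigma_E$-weighted version of \eqref{eq:cLonF}, or its Hilbert--Schmidt counterpart \eqref{def:L}) produces a combination of terms. Each such term is a product of weighted shifts
\[
S_\pm=\sum_m w_\pm(m)\,\ket{m\pm1}\bra m,
\]
acting from the left and right on the matrix units $\ket i\bra j$. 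The weights have the form $w_+(m)=\widehat f(h(m+1)-h(m))\sqrt{m+1}$ and $w_-(m)=\widehat f(h(m-1)-h(m))\sqrt m$. In $L_{\sigma_E,\widehat f,H}$ they carry the additional factors $e^{\pm\beta(\cdot)/4}$ coming from $\delta^\alpha_\nu$ and $e^{\beta(\nu_1+\nu_2)/4}$. For $\nu_1\neq\nu_2$, the off-diagonal pieces carry the kernel $e^{-(\nu_1-\nu_2)^2/8\sigma_E^2}/\cosh$, which is bounded by $1$.

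The first step is to rewrite the generator as a finite sum of operators of the form $X\mapsto S\,X\,T^\dagger$, $X\mapsto S^\dagger S' X$ and $X\mapsto X\,T^\dagger T'$, where each of $S,S',T,T'$ is a single weighted shift. The second step is to observe that the hypotheses say exactly that the weights tend to $0$ as $m\to\infty$. A weighted shift with weights tending to zero is a norm limit of finite-rank truncations, hence a compact operator on $\ell^2(\mathbb N_0)$.

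The third step is to use that left and right multiplication by compact operators, $X\mapsto KXK'$, defines a compact map on $\mathscr T_2$, since $\mathscr T_2\cong\mathcal H\otimes\overline{\mathcal H}$ and $K\otimes\overline{K'}$ is compact. On $\mathscr T_1$ the same conclusion follows by approximating $K,K'$ with finite-rank operators in norm. Products such as $S^\dagger S'$ are compact, so the drift and anticommutator terms are compact maps as well. Summing the finitely many terms over $\alpha\in\{a,a^\dagger\}$ and over the frequency pairs then shows that the generator, for each $\sigma_E\in(0,\infty]$, is a bounded map that is also compact.

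The step I expect to be the main obstacle is the double sum over $\nu_1,\nu_2$ for $\sigma_E<\infty$, or for general $h$ with non-injective gaps. There the generator is not a finite sum of shift products but an infinite sum indexed by pairs of levels $(m,m')$. In the first attempt I would write the matrix elements in the basis $\ket i\bra j$ explicitly. I would then show that the operator decomposes as a block operator whose entries are uniformly bounded by products $|w(i)||w(j)|$ (uniform because the Gaussian and sech kernels are bounded by $1$). Truncating to $i,j\le K$ gives finite-rank approximants, and the hypotheses make the tail small in operator norm, for example via a Schur test on the remainder.

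Finally, since an operator on an infinite-dimensional space that is compact cannot have bounded inverse, $0$ lies in the spectrum. Because the spectrum of a compact operator accumulates only at $0$, the point $0$ is in the essential spectrum, so there is no spectral gap. For Gaussian $\widehat f$ with $h$ eventually increasing, $e^{c|\nu|}|\widehat f(\nu)|\sqrt m\to0$ whenever $h(m+1)-h(m)\gtrsim\sqrt{\log m}$. In the typical polynomial case, for instance $h(m)=m^2$, this holds easily, which gives the final claim.
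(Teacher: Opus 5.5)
Your first two steps match the paper: the jumps are weighted shifts, e.g.\ $L^a=\widehat f(h(N)-h(N+1))\,a$, compact because their weights tend to $0$, and they pick up modular factors $e^{\pm\beta(\cdot)/4}$ in $L_{\sigma_E,\widehat f,H}$. The gap is in your third step, at the drift and anticommutator terms.

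\textbf{Why the third step fails.} Your tensor-product argument makes $X\mapsto KXK'^{\dagger}$ compact only when \emph{both} $K$ and $K'$ are compact. The drift term $X\mapsto S^\dagger S'X$ is $S^\dagger S'\otimes I$ on $\mathscr{T}_2\cong\mathcal H\otimes\overline{\mathcal H}$, and $K\otimes I$ is never compact for $K\neq0$.

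\textbf{The generator is not compact.} With $H=h(N)$ and jumps $a,a^\dagger$, the operator $(A^\alpha_{\nu_1})^\dagger A^\alpha_{\nu_2}$ is diagonal in the Fock basis and vanishes unless $\nu_1=\nu_2$. Hence, for every $\sigma_E\in(0,\infty]$, one has $G_{\sigma_E}=-\tfrac12 D$ with $D=\operatorname{diag}(d_n)$ and
$d_n=n|\widehat f(h(n-1)-h(n))|^2+(n+1)|\widehat f(h(n+1)-h(n))|^2$.
So the drift acts as $|i\rangle\langle j|\mapsto-\tfrac12(d_i+d_j)|i\rangle\langle j|$. The CP part sends $|i\rangle\langle j|$ to multiples of $|i\pm1\rangle\langle j\pm1|$ with coefficients $O(|w(i)||w(j)|)$. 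Fix $i$ with $d_i\neq0$ and let $j\to\infty$. The unit vectors $|i\rangle\langle j|$ are mapped close to $-\tfrac12 d_i|i\rangle\langle j|$; these are mutually orthogonal with norms bounded away from $0$, so no subsequence converges. Thus neither $\mathcal L_{\sigma_E,\widehat f,H}$ nor $L_{\sigma_E,\widehat f,H}$ is compact unless all jumps vanish.

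The same point defeats your block/Schur-test plan. The drift entries are sums $d_i+d_j$, not products $|w(i)||w(j)|$, so the tail $\{i\le K_0,\ j>K\}$ is not small. The paper's proof makes the identical leap, from compactness of $G_{\sigma_E}$ on $\mathcal H$ and of $\Phi_{\sigma_E,\widehat f,H}$ to compactness of the generator. The compactness claim is therefore false as stated, and neither argument establishes it.

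\textbf{What can be proved instead.} What is true, and is what the proposition is used for, is $0\in\sigma_{\mathrm{ess}}$ and $\operatorname{gap}(L_{\sigma_E,\widehat f,H})=0$. This needs no compactness. Take $x_n=|n\rangle\langle n|$.
\begin{itemize}
\item Since $x_n$ commutes with $H$, the Gaussian average in \eqref{integralrepLse} is trivial.
\item The CP terms move $x_n$ to $|n\pm1\rangle\langle n\pm1|\perp x_n$, so $\mathcal E_{\sigma_E,\widehat f,H}(x_n)=d_n\to0$.
\item The variational formula then gives $\operatorname{gap}(L_{\sigma_E,\widehat f,H})\le d_n/(1-\langle n|\sigma_\beta|n\rangle)\to0$.
\item Set $g_\pm(n):=\widehat f(h(n\pm1)-h(n))\,e^{\beta(h(n\pm1)-h(n))/4}$. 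Under the two hypotheses, $\|L_{\sigma_E,\widehat f,H}x_n\|_2\lesssim d_n+(n+1)|g_+(n)|^2+n|g_-(n)|^2\to0$ along this orthonormal sequence, so Weyl's criterion gives $0\in\sigma_{\mathrm{ess}}$.
\item In $\mathscr T_1$ the same sequence gives $\|\mathcal L_{\sigma_E,\widehat f,H}(x_n)\|_1\le 2d_n\to0$.
\end{itemize}

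Your $K\otimes\overline{K'}$ argument is correct for the CP part on its own. There the paper's route is lighter than a Schur test: it uses the Gaussian integral representation of $\Phi_{\sigma_E,\widehat f,H}$ through the compact operators $X^\alpha_s$.
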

\begin{proof}
Denoting $L^a$ the operator $L^\alpha$ corresponding to the bare jump $a$ (cf.~\eqref{eq:LalphainGenerationTheory}), we have 
\[L^a  =\sum_{n,n'}\ket{n}\bra{n}a\ket{n'}\bra{n'}\widehat{f}(h(n)-h(n'))=\widehat{f}(h(N)-h(N+1))\, a.  \]
We can write the annihilation operator as $a=\sqrt{N+1} U$ with $U\vert n\rangle := \vert n-1 \rangle$, where $U$ is a (bounded) shift operator.
Thus, under the assumption that 
\[ \lim_{m \to \infty} \widehat{f}(h(m)-h(m+1)) \sqrt{m+1}=0, \]
it follows that the operator $L^a$ is a compact operator, as $L^a$ can be approximated by finite rank operators 
\[\lim_{m \to \infty} 1_{[0,m]}(N)\widehat{f}(h(N)-h(N+1)) a =\widehat{f}(h(N)-h(N+1)) a.\]
This implies by \cite[Theorem C.7]{engel1999one} that  \begin{align*}
 G_{\sigma_E}:=-\sum_{\alpha\in\mathcal{A}}\int_{-\infty}^\infty g(t)\,e^{-itH}((L^{\alpha})^\dagger L^\alpha)e^{itH}dt,
\end{align*}
with 
$g(t)=\frac{1}{2\pi}\int_{-\infty}^\infty \frac{e^{-\nu^2/8\sigma_E^2}}{1+e^{\beta\nu/2}} e^{-i\nu t}d\nu$, $X^{\alpha}_s:=e^{isH}L^\alpha e^{-isH}$ are compact, and a CP map 
\begin{align*}
\Phi_{\sigma_E,\widehat{f},h(n)}(\rho):=\sigma_E\sqrt{\frac{2}{\pi}}\sum_{\alpha}\int_{\mathbb{R}}e^{-2\sigma_E^2 s^2}\, \, X_s^{\alpha} \rho (X_s^\alpha)^\dagger \,ds
\end{align*}
are compact operators, too. In the case of $\Phi_{\sigma_E,\hat{f},h(n)},$ this argument applies up to $\sigma_E=\infty.$ For $G$ and $\sigma_E=\infty$, we use the representation
\begin{equation}
    \begin{split}
G:&=-\sum_{\substack{\alpha\in \mathcal{A}\\E,E',E''\in\spec(H)}}  \frac{\overline{\widehat{f}(E'-E'')}\widehat{f}(E'-E)\,e^{\frac{\beta(E-E'')}{4}}}{2\cosh((E-E'')\beta/4)}P_{E''}\left(A^\alpha\right)^\dagger P_{E'} A^\alpha P_E \\
&=\sum_{\substack{\alpha\in \mathcal{A}\\n \in \mathbb N_0}}  \frac{\overline{\widehat{f}(E'-E'')}\widehat{f}(E'-E)\,e^{\frac{\beta(E-E'')}{4}}}{2\cosh((E-E'')\beta/4)}P_{h(n)}\left(A^\alpha\right)^\dagger P_{h(n\pm 1)} A^\alpha P_{h(n)} \\
&=\sum_{\substack{\alpha\in \mathcal{A}\\n \in \mathbb N_0}}  \vert\widehat{f}(h(n)-h(n\pm 1))\vert^2P_{h(n)}\left(A^\alpha\right)^\dagger P_{h(n\pm 1)} A^\alpha P_{h(n)}
    \end{split}
\end{equation}
which shows that this operator $G$ is a uniform limit of the finite rank approximations for $m \in \mathbb N$
\begin{equation}
    \begin{split}
G_m:
&=\sum_{\substack{\alpha\in \mathcal{A}\\n \in [m]}}  \vert\widehat{f}(h(n)-h(n-1))\vert^2P_{h(n)}\left(A^\alpha\right)^\dagger P_{h(n-1)} A^\alpha P_{h(n)}
    \end{split}
\end{equation}
and thus compact.

Doing a similar computation for $a^{\dagger}$, we find that under the condition that \[ \lim_{m \to \infty} \widehat{f}(\pm (h(m+1)-h(m))) \sqrt{m}=0, \]
the generator $\cL_{\sigma_E,\widehat{f},h(N)}$ is a compact operator, too.

\smallskip

This can be extended to the generator $L_{\sigma_E,\widehat{f},h(N)}$ as well. In this case, one has to conjugate $L^a$ by appropriate Gibbs states to find 
\[ \Delta^{\pm 1/4}(L^a)=\widehat{f}(h(N)-h(N+1)) e^{\pm \beta (h(N)-h(N+1))/4} a \]
and 
\[ \Delta^{\pm 1/4}(L^a)=\widehat{f}(h(N-1)-h(N)) e^{\pm \beta (h(N-1)-h(N))/4} a^{\dagger} \]
with $ \Delta(\cdot )=\sigma_\beta[\cdot ]\sigma_\beta^{-1}.$
Thus, let $\widehat{f}$ be such that $e^{\pm \beta (h(m)-h(m+1))/4}\widehat{f}(\pm  h(m)-h(m+1)) \sqrt{|m|}=0$; then the $\mathscr T^2$ generator $L_{\sigma_E,\hat f,h(N)}$ is a compact operator as well.
\end{proof}
\noindent Since compact operators always have $0$ in their essential spectrum, the generator does not have a spectral gap. In other words, the compactness of $L_{\sigma_E,\widehat{f},h(N)}$ implies, by the spectral theorem and the dominated convergence theorem, that we still have pointwise convergence 
\[\begin{split} \lim_{t \to \infty} \left\lVert e^{L_{\sigma_E,\widehat{f},h(N)}t} (x)-\langle \sqrt{\sigma_\beta},x\rangle\,\sqrt{\sigma_\beta}\right\rVert_2^2&=\lim_{t\to\infty}\left\lVert  \sum_{\lambda \in \operatorname{Spec}(L_{\sigma_E,\widehat{f},h(N)})} e^{\lambda t} \langle x,y_{\lambda} \rangle y_{\lambda} - \langle \sqrt{\sigma_\beta},x\rangle\,\sqrt{\sigma_\beta} \right\rVert_2^2\\
&=\sum_{\lambda \in \operatorname{Spec}(L_{\sigma_E,\widehat{f},h(N)});\lambda<0} \lim_{t \to \infty} e^{2\lambda t} \vert \langle x,y_{\lambda} \rangle \vert^2=0,  
\end{split}\]
where $y_\lambda$ corresponds to an eigenvector associated to eigenvalue $\lambda$, but there is no uniform convergence unless $\widehat{f}$ is suitably chosen. In Section \ref{sec:GeneralNumberviaBirthDeath} below, we make a special choice of function $\widehat{f}$, illustrated in Figure \ref{fig:placeholder}, which violates the compactness condition above and allows for the existence of a spectral gap.

\subsection{Single-mode number preserving Hamiltonians}
\label{sec:GeneralNumberviaBirthDeath}

\noindent
In the previous section, we saw that choosing an excessively regular function $\widehat{f}$ may lead to gapless generators $L_{\widehat{f},H}$. Here instead, we consider a function which only decays in one direction, akin to the classical Metropolis-Hastings rate function. In the following, we denote $E_n= h(n)$ for the eigenvalues of $H$, consider the bare jump operators $\{a,a^\dagger\}$ and the function for $\beta>0$
\begin{align}
\label{eq:Function}
    \hatfM(\nu) = \exp\left(-\frac{\sqrt{1+(\beta\nu)^2}+ \beta\nu}{4}\right).
\end{align}

 \begin{figure}[h!]
     \centering
     \includegraphics[width=0.3\linewidth]{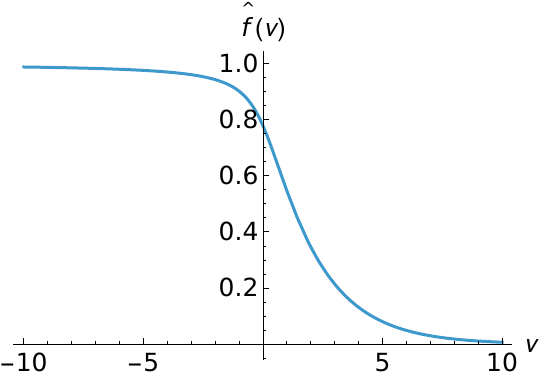}
     \caption{Metropolis-type filter $\hatfM$ in \eqref{eq:Function} for $\beta=1$ smoothly approximating a step function.}
     \label{fig:placeholder}
 \end{figure}

\noindent With that, the corresponding jump operators from Proposition~\ref{prop.generationtheorem} are given by
\begin{align*}
    &L^+ =\sum_{n,m=0}^{\infty} \hatfM(E_n - E_m) \kb{n}a^\dagger\kb{m} =a^\dagger\,\hat f(h(N+1)-h(N))\\
    &L^-   =  \sum_{n}^{\infty} \hatfM(E_{n-1} - E_n)\sqrt{n} \ket{n-1}\!\bra{n} = a\hatfM(h(N-1)-h(N)).
\end{align*}
From this, we can formally write the corresponding generator from Proposition~\ref{prop.generationtheorem} as

\begin{align}
  \label{eq:LindbladNumPresHamiltonian} \cL_{\hatfM,h(N)}(x) &= L^+ x(L^+)^\dagger + L^- x (L^-)^\dagger - \frac{1}{2}\Big\{\left((L^+)^\dagger L^+ + (L^-)^\dagger L^-\right),x\Big\} \\
  &=\nn a^\dagger\,\hatfM(h(N+1)-h(N))x \hatfM(h(N+1)-h(N))\,a  \\
&\quad+ a\,\hatfM(h(N-1)-h(N))x \hatfM(h(N-1)-h(N))\, a^\dagger\nn\\
  &\quad-\frac{1}{2} \Big\{(N+1)|\hatfM(h(N+1)-h(N))|^2,x\Big\}-\frac{1}{2}\Big\{ N|\hatfM(h(N-1)-h(N))|^2,x\Big\}.\nn
\end{align}

\noindent Next, we consider the spectral gap of the associated generator $L_{\hatfM,h(N)}$ on the space of Hilbert-Schmidt operators $\mathscr{T}_2(\mathcal{H})$ through the relation \eqref{eqcLtoL}.
A direct computation shows that, e.g.~on $\mathscr{F}$,
\begin{equation}
\label{eq:L22}
\begin{split}
  L_{\hatfM,h(N)}(y) &= a^\dagger\, g_+(N) y g_+(N)\,a  + a\,g_-(N) y g_-(N)\, a^\dagger -\frac{1}{2} \Big\{(N+1)|\widehat f(h(N+1)-h(N))|^2, y\Big\}\\&\quad-\frac{1}{2}\Big\{ N|\widehat f(h(N-1)-h(N))|^2,y\Big\},
\end{split}
\end{equation}
where we denote the functions
\begin{equation}
\begin{split}
    g_{+}(n) &= \hatfM(h(n+1)-h(n))e^{\frac{\beta(h(n+1)-h(n))}{4}}\text{ and }
    g_{-}(n) = \hatfM(h(n-1)-h(n))e^{-\frac{\beta(h(n)-h(n-1))}{4}}.
    \end{split}
\end{equation}

\begin{theorem}[Spectral gap of $L_{\hatfM,h(N)}$]
\label{thm:SpectralGapGeneralh(N)}
    Let $n_0\in\N_0$ and $H$ be of the form \eqref{eq:NumPresHamiltonian} with eigenvalues $E_n$ being non-decreasing for all $n\ge n_0.$ Assume that there exists $\delta>0$ and $s\in\N$ such that
\begin{align}
\label{eq:LowerboundEn+s-En}
    E_{m+s} -E_m \ge \delta
    \end{align}
    for all $m\ge n_0.$
    Then for all $\beta>0$ we have $\operatorname{gap}(L_{\hatfM,h(N)})>0.$ Furthermore, we can lower bound \begin{align}
    \label{eq:UniformGapBound}
    \operatorname{gap}(L_{\hatfM,h(N)}) \ge \kappa(\beta,n_0,\delta,s,\Delta_E) >0
\end{align}
for a constant $\kappa(\beta,n_0,\delta,s,\Delta_E)>0$ only depending on $\beta,n_0,\delta,s$ and $\Delta_E :=\max_{0\le j,m\le 2n_0} |E_j-E_m|.$
    \bin{Furthermore, for $n_0=0,$ i.e. $\left(E_n\right)_{n\in\N_0}$ being a non-decreasing sequence we have that $\operatorname{gap}(L)=\Omega(1)$ in the limit of large $\beta,$ i.e. for all $\beta'>0$ we have $\operatorname{gap}(L_{\widehat{f},h(N)})\ge C_{\beta'}$ for some $C_{\beta'}>0$ and all $\beta\in[\beta',\infty).$}
\end{theorem}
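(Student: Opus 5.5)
Since $H=h(N)$ is diagonal in the Fock basis and the jumps $L^\pm$ shift the photon number by $\pm1$, the operator $L_{\hatfM,h(N)}$ in \eqref{eq:L22} preserves the grading of $\mathscr{T}_2(\mathcal{H})$ by off-diagonal index. Write $y=\sum_{k\in\mathbb{Z}} y^{(k)}$ with $y^{(k)}=\sum_n y_{n,n+k}\ket{n}\!\bra{n+k}$. Each term of \eqref{eq:L22} maps the $k$-th diagonal to itself: $a^\dagger(\cdot)a$ sends $\ket{n}\!\bra{n+k}$ to a multiple of $\ket{n+1}\!\bra{n+k+1}$. Hence $L_{\hatfM,h(N)}=\bigoplus_k L^{(k)}$, where each $L^{(k)}$ is a self-adjoint Jacobi (tridiagonal) operator on $\ell^2(\mathbb{N}_0)$. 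The plan is to prove two things:
\begin{itemize}
\item the diagonal block $L^{(0)}$ is a classical birth--death generator with a gap;
\item every off-diagonal block $L^{(k)}$, $k\neq0$, is bounded above by $-\kappa$ uniformly in $k$.
\end{itemize}
The gap of the direct sum is then the minimum of the two bounds.

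For the diagonal block, after conjugating with $\sigma_\beta^{1/4}$, $L^{(0)}$ is unitarily equivalent to the symmetric birth--death chain on $\mathbb{N}_0$ with rates
\begin{align*}
b_n=(n+1)|\hatfM(E_{n+1}-E_n)|^2,\qquad d_n=n|\hatfM(E_{n-1}-E_n)|^2,
\end{align*}
reversible with respect to the classical Gibbs weights $\pi_n\propto e^{-\beta E_n}$. The key feature of the Metropolis filter \eqref{eq:Function} is that for $\nu\le0$ we have $\hatfM(\nu)\ge e^{-1/4}$. So downward moves in energy always occur at rate at least $c\,n$, while upward moves are penalised by roughly $e^{-\beta\nu/2}$. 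I would bound the gap of this chain with a Hardy-type (Muckenhoupt/Miclo) criterion for birth--death chains:
\begin{align*}
\mathrm{gap}\gtrsim\Big(\sup_m \Big(\sum_{n<m}\frac{1}{\pi_n b_n}\Big)\Big(\sum_{n\ge m}\pi_n\Big)\Big)^{-1}.
\end{align*}
Assumption \eqref{eq:LowerboundEn+s-En} gives tails $\sum_{n\ge m}\pi_n\lesssim \pi_m\,(1-e^{-\beta\delta})^{-1}s$. The factor $b_n\ge (n+1)e^{-1/4}e^{-\beta(E_{n+1}-E_n)/2}$ combines with $\pi_n$ to give $\pi_n b_n\gtrsim (n+1)\sqrt{\pi_n\pi_{n+1}}$. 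The region $n\le 2n_0$, where $E_n$ may decrease, contributes only constants depending on $\Delta_E$, $n_0$ and $\beta$.

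For the off-diagonal blocks I would use the Dirichlet form rather than the Jacobi matrix. Write $\mathcal{E}=\mathcal{E}_{\rm diag}+\mathcal{E}_{\rm off}$, where $\mathcal{E}_{\rm diag}$ collects the anticommutator (multiplication) part $\frac12\langle y,\{V(N),y\}\rangle$ with
\begin{align*}
V(n)=(n+1)|\hatfM(E_{n+1}-E_n)|^2+n|\hatfM(E_{n-1}-E_n)|^2 .
\end{align*}
On the $k$-th diagonal this gives the multiplier $\tfrac12(V(n)+V(n+k))$. The hopping part is bounded by Cauchy--Schwarz by $\sum_n g_+(n)g_+(n+k)\sqrt{(n+1)(n+k+1)}\,|y_{n,n+k}||y_{n+1,n+k+1}|$, and similarly for $g_-$. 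I would show the following weighted AM--GM estimate: for $k\neq0$ the hopping strength is at most $(1-\epsilon)$ times the diagonal part, plus a uniformly positive remainder. Heuristically, $g_\pm$ equals $e^{-\beta|\nu|/4}$-type factors times $\sqrt{\text{rate}}$, so the quantum chain has no exact cancellation off the diagonal. The cleanest route is to compare with $k=0$: one has $\langle y^{(k)},-L^{(k)}y^{(k)}\rangle\ge \langle |y^{(k)}|,-\widetilde L\,|y^{(k)}|\rangle$ for a modified Jacobi operator $\widetilde L$ whose ground state is not $\sqrt{\pi}$. Its bottom is then bounded below using the $n\to\infty$ asymptotics $V(n)\sim c\,n$ together with a finite-dimensional compactness argument on $n\le 2n_0+|k|$.

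The main obstacle is this uniformity over $k\neq0$. It requires a quantitative mismatch between the two neighbouring coefficients $\sqrt{(n+1)(n+k+1)}\,g_+(n)g_+(n+k)$ and $\frac12(b_n+b_{n+k})$, uniformly in both $n$ and $k$, using only the monotonicity and the $s$-step separation $\delta$. I would first try the plain inequality $2\sqrt{xy}\le x+y$ with strictness quantified through $E_{n+k}\neq E_n$ (at least $\delta$ every $s$ steps). The risky regime is large $n$ with nearly flat spectrum, and there I would exploit the linear growth $V(n)\gtrsim n$ coming from the death term.
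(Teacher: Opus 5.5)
Your decomposition $L_{\hatfM,h(N)}=\bigoplus_k L^{(k)}$ over diagonals is correct, and the diagonal block is fine. It is the classical chain with edge weights $\pi_nb_n$ ($\pi_n=\bra{n}\sigma_\beta\ket{n}$), and the Hardy/Miclo criterion closes.

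There is one slip there. The bound $e^{-1/4}e^{-\beta\nu_n/2}$ applies to $\hatfM(\nu_n)=\mu^+_n/\sqrt{n+1}$, not to its square. For $\nu_n:=E_{n+1}-E_n\ge0$ you only get $b_n\ge(n+1)e^{-1/2}e^{-\beta\nu_n}$, i.e.\ $\pi_nb_n\gtrsim(n+1)\pi_{n+1}$. This still suffices, because $\sum_{n\le m}\pi_m/\pi_n$ is uniformly bounded under \eqref{eq:LowerboundEn+s-En}.

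\textbf{The gap: the off-diagonal blocks.} The mechanism you sketch for $k\neq0$ fails. Write $b_n=(n+1)\hatfM(\nu_n)^2$, $d_n=n\hatfM(-\nu_{n-1})^2$ and $c_n=(b_nd_{n+1})^{1/4}$. The block $k\ge1$ then has diagonal entries $\tfrac12(V(n)+V(n+k))$ and hoppings $c_nc_{n+k}$. For $s\ge2$ the hypothesis allows $E_n$ to be constant except for one jump every $s$ steps. On such a flat stretch
\begin{align*}
\tfrac12\big(V(n)+V(n+k)\big)-c_{n-1}c_{n+k-1}-c_nc_{n+k}=\tfrac{e^{-1/2}}{2}\Big[\big(\sqrt{n+k}-\sqrt{n}\big)^2+\big(\sqrt{n+k+1}-\sqrt{n+1}\big)^2\Big]=O\big(k^2/n\big).
\end{align*}
At a row where $E$ jumps by $\delta$ (with $k$ a multiple of $s$), this margin is even negative and of order $n$. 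Consequently:
\begin{itemize}
\item the hoppings grow linearly just like $V$, so $V(n)\gtrsim n$ buys nothing;
\item no AM--GM or row-wise comparison yields a uniform margin;
\item a compactness argument on $n\le 2n_0+|k|$ is not uniform in $k$.
\end{itemize}
Uniform positivity for $k\neq0$ is a global effect driven by the boundary $n=0$, not by large $n$.

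\textbf{The missing step.} For $y=\sum_n y_n\ket{n}\!\bra{n+k}$ there is an exact identity
\begin{align*}
-\langle y,L^{(k)}y\rangle=\sum_{n\ge0}\Big|(b_nb_{n+k})^{1/4}y_n-(d_{n+1}d_{n+k+1})^{1/4}y_{n+1}\Big|^2+\sum_{n\ge0}U_k(n)|y_n|^2,\qquad U_k(n)=\tfrac12\Big[\big(\sqrt{b_n}-\sqrt{b_{n+k}}\big)^2+\big(\sqrt{d_n}-\sqrt{d_{n+k}}\big)^2\Big]\ge0.
\end{align*}
\begin{itemize}
\item The first sum is a birth--death form with geometric-mean rates, reversible for $\tilde\pi^{(k)}_n\propto e^{-\beta(E_n+E_{n+k})/2}$. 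Using $d_j\gtrsim j$, its Hardy constant is bounded uniformly in $k$ by the shifted sums controlled in \eqref{eq:ExplicitBoundGamma} and their backward analogues.
\item Its zero mode $\sqrt{\tilde\pi^{(k)}}$ is lifted by the potential at $n=0$ alone. Since $d_0=0$, you have $U_k(0)\ge\tfrac12 d_k\gtrsim k$.
\item $\tilde\pi^{(k)}_0$ is bounded below uniformly in $k$, by \eqref{eq:ExplicitBoundGamma} with $m=0$, $\gamma=1$.
\item Splitting $y=\alpha\sqrt{\tilde\pi^{(k)}}+z$ with $z\perp\sqrt{\tilde\pi^{(k)}}$ then gives a $k$-uniform lower bound. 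The case $k<0$ follows via $y\mapsto y^\dagger$.
\end{itemize}

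This is exactly what the paper imports from Carbone--Fagnola \cite{carboneFagnola_exponentiall2-convergence_2000} instead of treating the blocks by hand. With $\mu^+_n=\sqrt{b_n}$ and $\mu^-_n=\sqrt{d_n}$, one has $2U_k(0)=(\mu^-_k)^2+(\mu^+_k-\mu^+_0)^2$, which is the quantity in \eqref{eq:FagnolaGapCondition0}. Conditions \eqref{eq:FagnolaGapCondition1}--\eqref{eq:FagnolaGapCondition2} are Hardy-type conditions for the measures $\sqrt{\pi_j\pi_{j+k}}$, which the paper verifies uniformly in $k$ through Lemma~\ref{lem:EnEquivalences}.
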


\noindent In order to prove  Theorem~\ref{thm:SpectralGapGeneralh(N)}, we first notice 
that the generator $\mathcal{L}_{\hatfM,h(N)}$ in \eqref{eq:LindbladNumPresHamiltonian} can be seen as a quantum birth and death generator, as studied in \cite{carboneFagnola_exponentiall2-convergence_2000}, with square roots of the birth and death rates given as\footnote{To see the scaling relations captured by $\sim$ in \eqref{eq:ScaleBirthing} and \eqref{eq:ScaleDying}, we first note that since the $E_n$ are non-decreasing for $n\ge n_0$, $\inf_{n\ge0}(E_{n+1}-E_n) = \min_{0\le n\le n_0}\{E_{n+1}-E_n,0\} \ge -\max_{0\le j,m\le n_0} |E_j-E_m|$ and $\sup_{n\ge0}(E_{n-1}-E_n) = \max_{0\le n\le n_0}\{E_{n-1}-E_n,0\} \le \max_{0\le j,m\le n_0} |E_j-E_m|.$ \eqref{eq:ScaleBirthing} and \eqref{eq:ScaleDying} then follow by noting
that for $C\in\R$ and $\nu\ge C$
, we have $
e^{-\frac{\sqrt{1+(\beta C)^2}-\beta C}{4}}\,
e^{-\beta \nu/2}
\;\le\;
\hatfM(\nu)
\;\le\;
e^{-\beta \nu/2}$ 
and $e^{-\frac{\sqrt{1+(\beta C)^2}-\beta C}{4}}>0$; further, for $c\in\R$ and $\nu\le c$, we have
$
0<\hatfM(c)\le \hatfM(\nu)\le 1.$
}
\begin{align}
   & \mu^+_n\! =\! \sqrt{n+1}\hatfM(E_{n+1}-E_n) 
   \!=\!\sqrt{n+1} \,e^{-\frac{\sqrt{1+(\beta(E_{n+1}-E_n))^2}+\beta(E_{n+1}-E_n)}{4}}\!\!\sim\!\label{eq:ScaleBirthing} \sqrt{n+1}\,e^{-\beta\frac{E_{n+1}-E_n}{2}}\!\!,\\
    & \mu^-_n = \sqrt{n}\,\hatfM(E_{n-1}-E_n) =\sqrt{n} \label{eq:ScaleDying} e^{-\frac{\sqrt{1+(\beta(E_{n-1}-E_n))^2}+\beta(E_{n-1}-E_n)}{4}} \sim \sqrt{n}.
\end{align}
In  \cite[Theorem 4.2]{carboneFagnola_exponentiall2-convergence_2000}, Carbone and Fagnola found a general set of conditions to ensure the positivity of the spectral gap of such quantum birth and death generators. In particular, they proved positivity of spectral gap under the condition
\begin{align}
\label{eq:FagnolaGapCondition0}
    \inf_{n\ge 1}\left((\mu^-_n)^2 + (\mu^+_n-\mu^+_0)^2\right)>0
\end{align}
and assuming that there exists $\gamma\in(0,1)$ and $c,d\ge 0$ such that the following inequalities hold for all $m,k\in\N$:
\begin{align}
    &\sum_{j>m} \sqrt{\bra{j}\sigma_\beta\ket{j}\bra{j+k}\sigma_\beta\ket{j+k}} \le \label{eq:FagnolaGapCondition1} c\, \mu^+_{m}\mu^+_{m+k} \sqrt{\bra{m}\sigma_\beta\ket{m}\bra{m+k}\sigma_\beta\ket{m+k}}
\end{align}
and
\begin{align}
   \sum_{j>m} \!\gamma^{-j}\mu^+_j\mu^+_{j+k}\sqrt{\bra{j}\!\sigma_\beta\!\ket{j}\bra{j+k}\!\sigma_\beta\!\ket{j+k}}\!\le\! \label{eq:FagnolaGapCondition2} d\, \gamma^{-m}\mu^+_{m}\mu^+_{m+k} \sqrt{\bra{m}\!\sigma_\beta\!\ket{m}\bra{m+k}\!\sigma_\beta\!\ket{m+k}}.
\end{align}
 Under these assumptions the spectral gap of $L_{\hatfM,h(N)}$ satisfies the following lower bound
\begin{equation}
\label{eq:LowerBoundGapBirthDeath}
\mathrm{gap}(L_{\hatfM,h(N)}) \ge 
\min \left\{\frac{1}{c} 
\left( d + 1 + \frac{\gamma}{1 - \gamma} \right)
^{-1},
\;
\inf_{n > 0} 
\frac{(\mu^-_n)^2 + (\mu^+_n - \mu^+_0)^2}
{1 + c \mu^+_0 \mu^+_n \left( 1 + \frac{d + 1}{\gamma} \right)}
\right\}>0.
\end{equation}


\begin{proof}[Proof of Theorem~\ref{thm:SpectralGapGeneralh(N)}]
We prove Theorem~\ref{thm:SpectralGapGeneralh(N)} by verifying \eqref{eq:FagnolaGapCondition0}, \eqref{eq:FagnolaGapCondition1} and \eqref{eq:FagnolaGapCondition2} and conclude positivity of the spectral gap by \cite[Theorem 4.2]{carboneFagnola_exponentiall2-convergence_2000}. The condition \eqref{eq:FagnolaGapCondition0} follows directly using \eqref{eq:ScaleBirthing} and \eqref{eq:ScaleDying}.
Hence, we focus in the following on verifying \eqref{eq:FagnolaGapCondition1} and \eqref{eq:FagnolaGapCondition2}: Using \eqref{eq:LowerboundEn+s-En} and Lemma~\ref{lem:EnEquivalences}, in particular Equation \eqref{eq:ExplicitBoundGamma} for $\gamma=1,$ we see 
\bin{\begin{align*}
   \sup_{\substack{m,k\in\N_0\\m+k\ge n_0}}\sum_{j>m} e^{-\beta(E_{j+k}-E_{m+k+1}+E_j -E_{m+1})/2}   \le \sup_{m\in\N_0}\sum_{j\ge m+1} e^{-\beta(E_j -E_{m+1})/2} <\infty
\end{align*}
and therefore, using that also $$\max_{\substack{m,k\in\N_0\\m+k<n_0}} \sum_{j>m} e^{-\beta(E_{j+k}-E_{m+k+1}+E_j -E_{m+1})/2}<\infty $$  we have}
\begin{align}
\label{eq:tildec}
 & \sup_{m,k\in\N_0}\sum_{j>m} e^{-\beta(E_{j+k}-E_{m+k+1}+E_j -E_{m+1})/2}  \le (n_0+1) e^{\beta\Delta_E} \frac{e^{\beta\delta}}{1-e^{-\beta\delta/s}}  =:\tilde c <\infty,
\end{align} 
where we denoted $\Delta_E :=\max_{0\le j,m\le 2n_0} |E_j-E_m|.$
Combining this with \eqref{eq:ScaleBirthing}, we see that for all $k,m\in\N_0$
\begin{align*}
    \sum_{j>m} \!e^{-\frac{\beta(E_{j+k}+E_j)}{2}} &=\! e^{-\frac{\beta(E_{m+k}+E_m)}{2}-\frac{\beta(E_{m+1}-E_m)}{2}-\frac{\beta(E_{m+k+1}-E_{m+k})}{2}}  \sum_{j>m} e^{-\frac{\beta(E_{j+k}-E_{m+k+1}+E_j -E_{m+1})}{2}} \\&\lesssim \tilde c  \,e^{-\beta(E_{m+k}+E_m)/2}\mu^+_m\mu^+_{m+k}, 
\end{align*}
for all $m,k\in\N,$ 
which shows that condition \eqref{eq:FagnolaGapCondition1} is satisfied. Similarly, using again \eqref{eq:LowerboundEn+s-En}
and Lemma~\ref{lem:EnEquivalences}, in particular \eqref{eq:ExplicitBoundGamma}, for $\gamma,\tilde\gamma,\in(e^{-\beta\delta/s},1)\neq \emptyset$ with $\tilde \gamma>\gamma$ and $q= \tilde\gamma^{-1}e^{-\beta\delta/s}<1$ we have

\begin{align} 
\label{eq:tilded}\sup_{\substack{m,k\in\N_0}}\sum_{j>m}  \gamma^{-(j-m)}  &e^{-\beta(E_{j+k+1}-E_{m+k+1}+E_{j+1} -E_{m+1})/2} \sqrt{\tfrac{(j+1)(j+1+k)}{(m+1)(m+k+1)}}\\
&\le C_{\gamma,\tilde \gamma}\sup_{\substack{m,k}}\!
\sum_{j>m}\! \! \tilde \gamma^{-(j-m)}  
e^{-\beta(E_{j+k+1}-E_{m+k+1}+E_{j+1} -E_{m+1})/2} 
\nn\\
&\nn\le C_{\gamma,\tilde \gamma}\,(n_0+1)\tilde\gamma^{-n_0}e^{\beta\Delta_E}\frac{e^{\beta\delta}}{1-q}=:\tilde{d} < \infty,
\end{align}
where we denoted $$C_{\gamma,\tilde\gamma}:=\sup _{\substack{k,m\ge 0\\j>m}}\left(\frac{\gamma}{\tilde\gamma}\right)^{j-m}\sqrt{\tfrac{(j+1)(j+k+1)}{(m+1)(m+k+1)}}$$ which is finite since $\gamma/\tilde\gamma < 1.$
\bin{Since further also \begin{align}
    &\max_{m+k<n_0}\sum_{j>m}  \gamma^{-(j-m)}  e^{-\beta(E_{j+k+1}-E_{m+k+1}+E_{j+1} -E_{m+1})/2} \times\\&\quad\quad\nn\sqrt{\tfrac{(j+1)(j+1+k)}{(m+1(m+k+1)}} \\ &\le\sup_{m,k}\sum_{j\ge m+1}  \gamma^{-(j-m)}  e^{-\beta(E_{j} -E_{m+1})/2} \sqrt{\tfrac{(j+1)(j+k+1)}{(m+1)(m+k+1)}} <\infty
\end{align}
we see using \eqref{eq:tilded} that
\begin{align*} 
\nn\tilde d&:=\sup_{m,k}\sum_{j>m}  \gamma^{-(j-m)}  e^{-\beta(E_{j+k+1}-E_{m+k+1}+E_{j+1} -E_{m+1})/2} \times\\&\quad\quad\nn\times \sqrt{\tfrac{(j+1)(j+1+k)}{(m+1(m+k+1)}}<\infty.
\end{align*}}
Combining \eqref{eq:tilded} with \eqref{eq:ScaleBirthing}, we get
\begin{align*}
    \sum_{j>m} \gamma^{-j}e^{-\beta(E_{j+k}+E_j)/2} \mu^+_j\mu^+_{j+k}&\sim \sum_{j>m} \gamma^{-j}e^{-\beta(E_{j+k+1}+E_{j+1})/2} \sqrt{(j+1)(j+k+1)}\\& = \gamma^{-m} e^{-\tfrac{\beta(E_{m+k}+E_m)}{2}} e^{-\tfrac{\beta(E_{m+1}-E_m)}{2}} e^{-\tfrac{\beta(E_{m+k+1}-E_{m+k})}{2}}\\
    &\sum_{j>m}  e^{-\tfrac{\beta(E_{j+k+1}-E_{m+k+1}+E_{j+1} -E_{m+1})}{2}} \! \frac{\sqrt{(j+1)(j+k+1)}}{\gamma^{j-m}}  \\ 
    &\sim \gamma^{-m}e^{-\beta(E_{m+k}+E_m)/2}\mu^+_{m}\mu^+_{m+k}  \sum_{j>m}  \gamma^{-(j-m)} \times \\
    &\qquad e^{-\beta(E_{j+k+1}-E_{m+k+1}+E_{j+1} -E_{m+1})/2} \sqrt{\tfrac{(j+1)(j+k+1)}{(m+1)(m+k+1)}} \\&\le \tilde d\, \gamma^{-m}e^{-\beta(E_{m+k}+E_m)/2}\mu^+_{m}\mu^+_{m+k}
\end{align*}
for all $m,k\in\mathbb{N},$ which shows that condition \eqref{eq:FagnolaGapCondition2} is satisfied and therefore $\operatorname{gap}(L_{\smash{\hatfM,h(N)}})>0.$
Furthermore, from \eqref{eq:tildec} and \eqref{eq:tilded} we see that we can pick constants $\gamma,c$ and $d$ for which \eqref{eq:FagnolaGapCondition1} and \eqref{eq:FagnolaGapCondition2} are satisfied
and 
which only depend on $\beta,n_0,\delta, s$
and $\Delta_E$ but are, apart from that, independent of the specific sequence $\left(E_n\right)_{n\in\N_0}.$ Furthermore, for $\gamma\in(0,1)$ and $c,d\ge 0$ fixed
we use  \eqref{eq:ScaleBirthing} and \eqref{eq:ScaleDying}  to see 
\begin{align*}
\inf_{n > 0} 
\frac{(\mu^-_n)^2 + (\mu^+_n - \mu^+_0)^2}
{1 + c \mu^+_0 \mu^+_n \left( 1 + \frac{d + 1}{\gamma} \right)} \ge \inf_{n > 0} 
\frac{(\mu^-_n)^2}
{1 + c \mu^+_0 \mu^+_n \left( 1 + \frac{d + 1}{\gamma} \right)} &
\gtrsim \inf_{n > 0} 
\frac{n}
{1 + c \sqrt{n+1} \left( 1 + \frac{d + 1}{\gamma} \right)} \\ 
&=  \frac{1}
{1 + \sqrt{2}\, c   \left( 1 + \frac{d + 1}{\gamma} \right)}
\end{align*}
where we used
 $\mu^+_0 = \hatfM(E_1 -E_0)\le 1. $
Using \eqref{eq:LowerBoundGapBirthDeath} we therefore see that we can lower bound the spectral gap as \begin{align}
    \operatorname{gap}(L_{\hatfM,h(N)}) \ge \kappa(\beta,n_0,\delta,s,\Delta_E) >0
\end{align}
where $\kappa(\beta,n_0,\delta,s,\Delta_E)>0$ is some constant that only depends on $\beta,n_0,\delta,s$ and $\Delta_E.$ 

\bin{Lastly, note that for $n_0=0,$ we have that for $\gamma\in(0,1)$ fixed the constants $\tilde c$ and $\tilde d$ in \eqref{eq:tildec} and \eqref{eq:tilded} are decreasing for increasing $\beta.$ Hence, from \eqref{eq:LowerBoundGapBirthDeath} we see that for all $\beta'>0$ the spectral gap satisfies $\operatorname{gap}(L)\ge C_{\beta'}$ for some $C_{\beta'}>0$ and all $\beta\in[\beta',\infty).$}
\end{proof}

\section{Efficient implementation}\label{sec:implementations}

\noindent In Proposition \ref{propDirichlettoSchro}, we constructed the generator $\cL_{\smash{\widehat{f},H}}$ of a quantum dynamical semigroup associated with a function $\widehat{f}$ that satisfies minimal summability and symmetry assumptions in \Cref{eq:condAalphas}. In Section \ref{sec:spectralgaps}, we argued that the spectral gap of the corresponding evolution—which governs its mixing time—crucially depends on the choice of $\widehat{f}$. In particular, choosing $\widehat{f}\equiv 
    \hatfM(\nu)$ as defined in \eqref{eq:Function}, we found in Theorem \ref{thm:SpectralGapGeneralh(N)} that the Gibbs sampler corresponding to $H = N^2$ is gapped; however, this is no longer true for rapidly decaying functions, cf.~\Cref{prop.no-gogap}. However, the particular choice of \eqref{eq:Function} appears to hinder algorithmic implementations of the dynamics, as the lack of decay of $\hatfM$ as $\nu\to-\infty$ implies singularity of its Fourier transform $\fM$ at $0$. Another serious issue compared to the finite-dimensional setting concerns the possible unboundedness of the bare Hamiltonian $H$. 
We first address this second problem and leave the issue of the regularity of $\widehat{f}$ to Section \ref{sec:singular}.

 We consider an alternative family of generators for which the efficiency results established in Section \ref{sec:gaussexamples} remain valid, and we show that these generators can be implemented using infinite-dimensional extensions of the LCU technique. Our starting point is the following generator, which is well defined on $\mathscr{F}$ for any $\sigma_E\ge 0$, under \Cref{eq:condAalphas}:
\begin{align}
\label{eq:DefLsigma_E}
\mathcal{L}_{\sigma_E,\widehat{f},H}(\rho)\!:=\!\sum_\alpha&\sum_{\nu_1,\nu_2}e^{-\frac{(\nu_1-\nu_2)^2}{8\sigma_E^2}}\,\overline{\widehat{f}(\nu_1)}\widehat{f}(\nu_2)A^\alpha_{\nu_2}\rho (A^\alpha_{\nu_1})^\dagger+G_{\sigma_E}\rho+\rho G_{\sigma_E}^\dagger\,,
\end{align}
where 
\begin{align*}
 G_{\sigma_E}
&=-\sum_{\substack{\alpha\in\mathcal{A}\\ \nu_1,\nu_2}}\!e^{-\frac{(\nu_1-\nu_2)^2}{8\sigma_E^2}}\!\!\! \frac{1}{1+e^{\frac{\beta(\nu_2-\nu_1)}{2}}}\overline{\widehat{f}(\nu_1)}\widehat{f}(\nu_2)(A^\alpha_{\nu_1})^\dagger A^\alpha_{\nu_2}.
\end{align*}
Formally, we retrieve the generator $\mathcal{L}_{\widehat{f},H}$ of Proposition \ref{propDirichlettoSchro} by setting $\sigma_E=\infty$. It is not hard to see, by adapting the generation results of Section \ref{sec:generalframe}, that the above expression gives rise to a KMS-symmetric quantum Markov semigroup for any $\sigma_E\ge0$, whose generator in the Hilbert-Schmidt setting we denote by $L_{\sigma_E,\widehat{f},H}$. On the other hand, taking the limit $\sigma_E\to 0$, since $B^\alpha_{\nu\nu}=0$,  we obtain a Davies-type generator $\mathcal{L}_{0,\widehat{f},H}\equiv \mathcal{L}_{\operatorname{D}}$, with
\begin{align}
\label{eq:DefDavies}
\mathcal{L}_{\operatorname{D}}(\rho)=\sum_{\substack{\alpha \\ \nu\in B(H)}}|\widehat{f}(\nu)|^2\,\big(A^\alpha_\nu\rho (A^\alpha_\nu)^\dagger-\frac{1}{2}\{(A^\alpha_\nu)^\dagger A^\alpha_\nu,\rho\}\big),
\end{align}
that is, the rates $\Upsilon$ introduced in \eqref{eq:Davies} all coincide with $|\widehat{f}|^2$. In other words, the family $\{\mathcal{L}_{\sigma_E,\widehat{f},H}\}_{\sigma_E\ge 0}$ interpolates between $\mathcal{L}_{\widehat{f},H}$ and $\mathcal{L}_{\operatorname{D}}$. The next sections are organized as follows: first, in Section \ref{monotonicitygap}, we briefly justify the well-posedness of the generators $\cL_{\sigma_E,\widehat{f},H}$ and $L_{\sigma_E,\widehat{f},H}$, and argue that the gap can only increase when introducing the Gaussian envelope, which implies that the gap lower bounds of Section \ref{sec:spectralgaps} derived for $\mathcal{L}_{\smash{\widehat{f},H}}$ directly hold for $\mathcal{L}_{\smash{\sigma_E,\widehat{f},H}}$ for any $\sigma_E\ge 0$. In Section \ref{integralrepfschwartz}, we derive an integral formulation of the generator $\mathcal{L}_{\smash{\sigma_E,\widehat{f},H}}$ whenever $\widehat{f}$ can be assumed to be Schwartz. The rest of this Section is devoted to the derivation of an implementation scheme for the evolution generated by $\cL_{\smash{\sigma_E,\widehat{f},H}}$.

\subsection{Gaussian-convoluted generators}\label{monotonicitygap}

We start by considering the following densely defined, symmetric operator defined on $\mathscr{F}_{\sigma_\beta}$:

\begin{align}
L_{\sigma_E,\widehat{f},H}(\lambda x+\mu\,\sqrt{\sigma_\beta}):=-\lambda \sum_{\substack{\alpha\in\mathcal{A}\\ \nu_1,\nu_2\in B(H)}}&e^{-\frac{(\nu_1-\nu_2)^2}{8\sigma_E^2}}\,\frac{\overline{\widehat{f}(\nu_1)}\widehat{f}(\nu_2)\,e^{\beta(\nu_1+\nu_2)/4}}{2\cosh((\nu_1-\nu_2)\beta/4)}\, (\delta^{\alpha}_{\nu_1})^\dagger\delta^\alpha_{\nu_2}(x)
\end{align}
with associated form
\begin{align}\label{calEGibbs1}
\mathcal{E}_{\sigma_E,\widehat{f},H}(\lambda\,x+\mu\sqrt{\sigma_\beta}):=|\lambda|^2\sum_{\substack{\alpha\in\mathcal{A}\\ \nu_1,\nu_2\in B(H)}}&e^{-\frac{(\nu_1-\nu_2)^2}{8\sigma_E^2}}\,\frac{\overline{\widehat{f}(\nu_1)}\widehat{f}(\nu_2)\,e^{\beta(\nu_1+\nu_2)/4}}{2\cosh((\nu_1-\nu_2)\beta/4)}\, \langle \delta^\alpha_{\nu_1}(x),\delta^\alpha_{\nu_2}(x)\rangle.
\end{align}
By Fourier integration, we notice that for all $x\in\mathscr{F}$
\begin{align}
\mathcal{E}_{\sigma_E,\widehat{f},H}(x) &=\frac{\sigma_E}{\sqrt{2\pi}}\sum_{\substack{\alpha\\\nu_1,\nu_2}}\int_{-\infty}^\infty \!\!\!e^{-2\sigma_E^2s^2}  e^{-is (\nu_1-\nu_2)} \, \frac{\overline{\widehat{f}(\nu_1)}\widehat{f}(\nu_2)\,e^{\beta(\nu_1+\nu_2)/4}}{\cosh((\nu_1-\nu_2)\beta/4)}\,\langle \delta^\alpha_{\nu_1}(x),\delta^\alpha_{\nu_2}(x)\rangle\,ds\nonumber \\
&=\frac{\sigma_E}{\sqrt{2\pi}}\sum_{\substack{\alpha\\\nu_1,\nu_2}}\int_{-\infty}^\infty \!\!\!e^{-2\sigma_E^2s^2}  \frac{\overline{\widehat{f}(\nu_1)}\widehat{f}(\nu_2)\,e^{\beta(\nu_1+\nu_2)/4}}{\cosh((\nu_1-\nu_2)\beta/4)}\,\langle \delta^\alpha_{\nu_1}(x(s)),\delta^\alpha_{\nu_2}(x(s))\rangle\,ds\nonumber \\
&=\sigma_E{\sqrt{\frac{2}{\pi}}}\,\int_{-\infty}^\infty e^{-2\sigma_E^2s^2}\,\mathcal{E}_{\widehat{f},H}(x(-s))\,ds\label{integralrepLse}
\end{align}
for $x(t):=e^{iHt}x e^{-iHt}$, where we also used that $e^{it H}A^\alpha_\nu e^{-itH}=e^{it\nu}A^\alpha_\nu$. This directly shows that $L_{\smash{\sigma_E,\widehat{f},H}}$ is negative. By Friedrichs extension, $\mathcal{E}_{\smash{\sigma_E,\widehat{f},H}}$ is closable and defines a self-adjoint extension of $L_{\smash{\sigma_E,\widehat{f},H}}$. Both the close form and operator are denoted as $\mathcal{E}_{\smash{\sigma_E,\widehat{f},H}}$ and $L_{\smash{\sigma_E,\widehat{f},H}}$, by slight abuse of notations. 
\begin{proposition}\label{prop.QMSforLinsigmaEappendix}
The closed quadratic form $\mathcal{E}_{\smash{\sigma_E,\widehat{f},H}}$ is completely Dirichlet. Therefore, the strongly continuous semigroup $\{e^{tL_{\sigma_E,\widehat{f},H}}\}_{t\ge 0}$ is completely Markov with respect to the state $\sigma_\beta$.
\end{proposition}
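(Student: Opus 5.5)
We follow the proof of \Cref{GibbsDirichlet}, with the integral representation \eqref{integralrepLse} doing most of the work. By \cite[Theorem 5.7]{Goldstein1995} it is enough to show that $\mathcal{E}_{\sigma_E,\widehat{f},H}$ is completely Dirichlet; complete Markovianity of $\{e^{tL_{\sigma_E,\widehat{f},H}}\}_{t\ge0}$ with respect to $\sigma_\beta$ then follows. Fix $n$ and a self-adjoint $\mathbf{x}\in\mathscr{M}_n\otimes\mathscr{F}_{\sigma_\beta}$. The argument works on this form core first and then extends to the whole domain.

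\emph{Route 1 (conjugation).} For $s\in\mathbb{R}$ set $\mathbf{x}(s):=(I_n\otimes e^{iHs})\,\mathbf{x}\,(I_n\otimes e^{-iHs})$. Conjugation by a unitary commutes with the continuous functional calculus. Hence $(\mathbf{x}(s))_+=(\mathbf{x}_+)(s)$. Since $e^{iHs}$ commutes with $\sigma_\beta$, also $(\mathbf{x}(s))_\wedge=(\mathbf{x}_\wedge)(s)$. The map $x\mapsto x(s)$ preserves $\mathscr{F}$, because it only multiplies $\ket{E_i}\bra{E_j}$ by a phase. Applying \eqref{integralrepLse} entrywise gives
\[
\mathcal{E}^{(n)}_{\sigma_E,\widehat{f},H}(\mathbf{y})=\sigma_E\sqrt{\tfrac{2}{\pi}}\int_{\mathbb{R}}e^{-2\sigma_E^2s^2}\,\mathcal{E}^{(n)}_{\widehat{f},H}(\mathbf{y}(-s))\,ds .
\]
By \Cref{GibbsDirichlet}, $\mathcal{E}^{(n)}_{\widehat{f},H}(\mathbf{x}(-s)_+)\le\mathcal{E}^{(n)}_{\widehat{f},H}(\mathbf{x}(-s))$ for every $s$, and the same holds for $\wedge$. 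Integrating against the positive Gaussian weight then gives the Dirichlet inequalities for $\mathcal{E}^{(n)}_{\sigma_E,\widehat{f},H}$.

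There is a subtlety: $\mathbf{x}_+$ need not lie in $\mathscr{F}$, so the identity above must hold on the closed form domains. \Cref{GibbsDirichlet} gives $\mathbf{x}(-s)_+\in D(\mathcal{E}_{\widehat{f},H})$. The Gaussian identity would then have to be extended by lower semicontinuity \eqref{eq.lsc} and Fatou's lemma, and this is the step I expect to be the main obstacle.

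\emph{Route 2 (direct, preferred).} To avoid the obstacle, I would instead repeat the argument of \Cref{GibbsDirichlet} verbatim. Split $L_{\sigma_E,\widehat{f},H}=K_{\sigma_E}+T_{\sigma_E}$ into left/right multiplication terms and sandwich terms. As before, the cross terms from $K_{\sigma_E}$ vanish because $\mathbf{x}_+\mathbf{x}_-=0$. This gives
\[
\mathcal{E}^{(n)}(\mathbf{x})-\mathcal{E}^{(n)}(\mathbf{x}_+)\ge\langle\mathbf{x}_+,(\mathrm{id}_n\otimes T_{\sigma_E})\mathbf{x}_-\rangle+\langle\mathbf{x}_-,(\mathrm{id}_n\otimes T_{\sigma_E})\mathbf{x}_+\rangle .
\]
It remains to show that $T_{\sigma_E}$ is completely positive. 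Its kernel is $e^{-(\nu_1-\nu_2)^2/8\sigma_E^2}\,\mathrm{sech}((\nu_1-\nu_2)\beta/4)$, a product of two positive-definite functions. By Bochner's theorem it is the Fourier transform of the convolution of the Gaussian density with $\frac{1}{\beta\cosh(2\pi t/\beta)}$, which is a positive density $w_{\sigma_E}$. Therefore
\[
T_{\sigma_E}(x)=\tfrac12\sum_\alpha\int w_{\sigma_E}(t)\big((R^\alpha_t)^\dagger xR^\alpha_t+R^\alpha_tx(R^\alpha_t)^\dagger\big)\,dt ,
\]
with $R^\alpha_t$ as in \eqref{eq.FourierintsechTmap}. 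This is manifestly completely positive, so the right-hand side above is nonnegative. The $\wedge$ case uses $\mathbf{x}_\wedge(\mathbf{x}-\mathbf{x}_\wedge)=0$, exactly as before.

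Finally, the inequalities pass from the form core $\mathscr{M}_n\otimes\mathscr{F}_{\sigma_\beta}$ to the full domain. For this I approximate in the form norm and use the continuity of $x\mapsto x_+,\,x_\wedge$ in $\mathscr{T}_2$ together with lower semicontinuity \eqref{eq.lsc}, as at the end of the proof of \Cref{GibbsDirichlet}. Invariance under adjoints follows from the adjoint-closedness of $\{A^\alpha\}$.
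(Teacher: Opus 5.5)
Your Route~1 is exactly the paper's proof. It combines the complete Dirichlet property of $\mathcal{E}_{\widehat f,H}$ from \Cref{GibbsDirichlet}, the covariance $\mathbf{x}_+(s)=(\mathbf{x}(s))_+$ and $\mathbf{x}_\wedge(s)=(\mathbf{x}(s))_\wedge$, the Gaussian representation \eqref{integralrepLse}, and the form-core property of $\mathscr{F}_{\sigma_\beta}$.

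The obstacle you anticipate there does not exist, because the core $\mathscr{M}_n\otimes\mathscr{F}_{\sigma_\beta}$ is stable under $\mathbf{x}\mapsto\mathbf{x}_+$ and $\mathbf{x}\mapsto\mathbf{x}_\wedge$. To see this, write a self-adjoint core element as $\mathbf{x}=\mathbf{y}+\mathbf{c}\otimes\sqrt{\sigma_\beta}$, with $\mathbf{y}\in\mathscr{M}_n\otimes\mathscr{F}$ and $\mathbf{c}\in\mathscr{M}_n$ self-adjoint. Let $P$ be a finite-rank projection, diagonal in the eigenbasis of $H$, with $\mathbf{y}=(I_n\otimes P)\mathbf{y}(I_n\otimes P)$.
\begin{itemize}
\item Since $P$ commutes with $\sigma_\beta$, $\mathbf{x}$ is block diagonal with respect to $I_n\otimes P$.
\item Its block on the range of $I_n\otimes P^\perp$ is $\mathbf{c}\otimes\sqrt{\sigma_\beta}P^\perp$, whose positive part is $\mathbf{c}_+\otimes\sqrt{\sigma_\beta}P^\perp$.
\item Hence $\mathbf{x}_+-\mathbf{c}_+\otimes\sqrt{\sigma_\beta}$ is supported in the finite-dimensional range of $I_n\otimes P$, so it lies in $\mathscr{M}_n\otimes\mathscr{F}$.
\item The same holds for $\mathbf{x}_\wedge$, because $\sqrt{\tau_n\otimes\sigma_\beta}=n^{-1/2}I_n\otimes\sqrt{\sigma_\beta}$ belongs to the core.
\end{itemize}
Moreover, \eqref{integralrepLse} extends trivially from $\mathscr{F}$ to $\mathscr{F}_{\sigma_\beta}$: both forms ignore the $\sqrt{\sigma_\beta}$-component, which is invariant under $e^{isH}$. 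So the identity applies directly to $\mathbf{x}_+$ and $\mathbf{x}_\wedge$, and no Fatou-type argument is needed. The only passage to the full form domain is the final lower-semicontinuity step via \eqref{eq.lsc}, which is the same in both of your routes.

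Route~2 is a valid alternative, but it does not actually sidestep anything. Writing $\mathcal{E}^{(n)}(\mathbf{x})-\mathcal{E}^{(n)}(\mathbf{x}_+)$ through $\langle\mathbf{x}_\pm,(\mathrm{id}_n\otimes L_{\sigma_E,\widehat f,H})\mathbf{x}_\mp\rangle$ already requires $\mathbf{x}_\pm$ to lie in the core, which is the very stability fact above. What Route~1 buys is economy: it imports \Cref{GibbsDirichlet} as a black box and needs only unitary covariance and positivity of the Gaussian weight. Route~2 must redo the $K+T$ splitting and re-verify complete positivity.

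If you keep Route~2, two details need fixing; both come from the displayed formulas in the proof of \Cref{GibbsDirichlet}.
\begin{itemize}
\item The sandwich kernel is not Gaussian times $\operatorname{sech}$. Expanding $(\delta^\alpha_{\nu_1})^\dagger\delta^\alpha_{\nu_2}$, the two sandwich terms carry the factors $e^{\beta\nu_2/2}$ and $e^{\beta\nu_1/2}$, not $e^{\beta(\nu_1+\nu_2)/4}$. Relabel the second term by $(\alpha,\nu_1,\nu_2)\to(\alpha',-\nu_2,-\nu_1)$, using \eqref{eq:KMSFilterFunction} and adjoint-closedness. Together with $e^{\beta\nu_1/2}+e^{\beta\nu_2/2}=2e^{\beta(\nu_1+\nu_2)/4}\cosh((\nu_1-\nu_2)\beta/4)$, the $\cosh$ cancels:
\[
T_{\sigma_E}(x)=\sum_{\alpha}\sum_{\nu_1,\nu_2}e^{-\frac{(\nu_1-\nu_2)^2}{8\sigma_E^2}}\,\overline{\widehat f(\nu_1)}\widehat f(\nu_2)\,e^{\frac{\beta(\nu_1+\nu_2)}{4}}\,(A^\alpha_{\nu_1})^\dagger x A^\alpha_{\nu_2}=\sigma_E\sqrt{\tfrac{2}{\pi}}\sum_\alpha\int_{\mathbb{R}}e^{-2\sigma_E^2s^2}(R^\alpha_s)^\dagger x R^\alpha_s\,ds,
\]
with $R^\alpha_s$ as in \eqref{eq.FourierintsechTmap}. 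This is completely positive even more directly.
\item The identity $\mathbf{x}_\wedge(\mathbf{x}-\mathbf{x}_\wedge)=0$ is false. With $\omega^{(n)}=\tau_n\otimes\sigma_\beta$, one has $\mathbf{x}_\wedge(\mathbf{x}-\mathbf{x}_\wedge)=\sqrt{\omega^{(n)}}\,(\mathbf{x}-\sqrt{\omega^{(n)}})_+$, which is nonzero whenever $\mathbf{x}\not\le\sqrt{\omega^{(n)}}$. Instead, use $\mathbf{x}_\wedge=\sqrt{\omega^{(n)}}-(\sqrt{\omega^{(n)}}-\mathbf{x})_+$ and the fact that $\sqrt{\omega^{(n)}}$ lies in the null space of the form. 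Then $\mathcal{E}^{(n)}(\mathbf{x}_\wedge)=\mathcal{E}^{(n)}\big((\sqrt{\omega^{(n)}}-\mathbf{x})_+\big)\le\mathcal{E}^{(n)}(\sqrt{\omega^{(n)}}-\mathbf{x})=\mathcal{E}^{(n)}(\mathbf{x})$ by the positive-part case.
\end{itemize}
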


\begin{proof}

By \Cref{GibbsDirichlet}, we know that $\mathcal{E}_{\widehat{f},H}$ is completely Dirichlet. Therefore, for any $\mathbf{x}\in\mathscr{M}_n\otimes  \mathscr{F}_{\sigma_\beta}$, we have that $\mathcal{E}^{(n)}_{\widehat{f},H}(\mathbf{x})\ge \mathcal{E}_{\widehat{f},H}^{(n)}(\mathbf{x}_+),\mathcal{E}_{\widehat{f},H}^{(n)}(\mathbf{x}_\wedge)$. The result follows since $\mathbf{x}_+(s)=(\mathbf{x}(s))_+$, $\mathbf{x}_\wedge(s)=(\mathbf{x}(s))_\wedge$, the integral representation \eqref{integralrepLse} and the form core property of $\mathscr{F}_{\sigma_\beta}$.

\end{proof}

\noindent Next, we aim at deriving an expression for the generator $\cL_{\sigma_E,\widehat{f},H}$ of the strongly continuous semigroup induced on $\mathscr{T}_1(\cH)$ by extension of the reasoning of \Cref{T1semigroupgeneration}. In place of the operator $G$ defined in \eqref{eq:GinGenerationTheore}, we consider the densely defined negative semidefinite symmetric operator $G_{\sigma_E}$ on $\mathcal{F}$ as
\begin{align}\label{GsigmaEdeffirst}
 G_{\sigma_E}
&=-\sum_{\substack{\alpha\in\mathcal{A}\\ \nu_1,\nu_2}}\!e^{-\frac{(\nu_1-\nu_2)^2}{8\sigma_E^2}}\!\!\! \frac{1}{1+e^{\frac{\beta(\nu_2-\nu_1)}{2}}}\overline{\widehat{f}(\nu_1)}\widehat{f}(\nu_2)(A^\alpha_{\nu_1})^\dagger A^\alpha_{\nu_2}.
\end{align}
and the map $\Phi_{\sigma_E,\widehat{f},H}$ defined on $\mathscr{F}$ as
\begin{align*}
\Phi_{\sigma_E,\widehat{f},H}(\ket{E_i}\bra{E_j})&:=\sum_{\substack{\alpha\in\mathcal{A}\\ \nu_1,\nu_2}}e^{-\frac{(\nu_1-\nu_2)^2}{8\sigma_E^2}}\, \overline{\widehat{f}(\nu_1)}\widehat{f}(\nu_2)\, A^\alpha_{\nu_2} \ket{E_i} (A^\alpha_{\nu_1}\ket{E_j})^\dagger.
\end{align*}
\noindent Indeed, that $G_{\sigma_E}$, resp.~$\Phi_{\sigma_E,\widehat{f},H}$, is well-defined on $\mathcal{F}$, resp.~on $\mathscr{F}$, follows directly from the estimates in the proof of \Cref{lem:definL} in the limit $\sigma_E\to \infty$ and the fact that $|e^{\smash{-\frac{(\nu_1-\nu_2)^2}{8\sigma_E^2}}}|\le 1$ uniformly over $\nu_1,\nu_2$. Moreover, by direct computation we have that, as argued in the proof of \Cref{propDirichlettoSchro}, 
\begin{align*}
\cL_{\sigma_E,\widehat{f},H}(\ket{E_i}\bra{E_j})=\sum_{\alpha\in\mathcal{A}}\, G_{\sigma_E}\ket{E_i}\bra{E_j}+\ket{E_i}(G_{\sigma_E}\ket{E_j})^\dagger +\Phi_{\sigma_E,\widehat{f},H}(\ket{E_i}\bra{E_j}).
\end{align*}

\noindent Next, we make the important observation that the spectral gap of $L_{\sigma_E,\widehat{f},H}$ increases as $\sigma_E$ decreases (see also \cite{slezak2026polynomial} for a finite-dimensional analogue of the result):

\begin{proposition}\label{lem.gapmonoton} In the notations of the previous paragraph, for any $0< \sigma_E\le \sigma'_E\le \infty$,
\begin{align*}
\operatorname{gap}(L_{\sigma_E,\widehat{f},H})\ge \operatorname{gap}(L_{\sigma'_E,\widehat{f},H}).
\end{align*}
\end{proposition}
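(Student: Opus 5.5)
We will compare the Dirichlet forms directly and then apply the variational (min–max) characterization of the spectral gap. For a non-positive self-adjoint operator $L$ with $\sqrt{\sigma_\beta}\in\operatorname{Ker}(L)$, the gap is
\[
\operatorname{gap}(L)=\inf\Big\{\mathcal{E}(x)\,:\,x\in D(\mathcal{E}),\ \|x\|_2=1,\ \langle\sqrt{\sigma_\beta},x\rangle=0\Big\},
\]
provided the kernel is spanned by $\sqrt{\sigma_\beta}$. If the kernel is larger, we will take the infimum over the orthocomplement of the kernel instead. Since $\mathscr{F}_{\sigma_\beta}$ is a form core for every $\mathcal{E}_{\sigma_E,\widehat f,H}$, it suffices to take the infimum over $x\in\mathscr{F}$. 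Indeed, every $x\in\mathscr{F}$ is already orthogonal to $\sqrt{\sigma_\beta}$ up to its diagonal part, and subtracting $\mu\sqrt{\sigma_\beta}$ changes neither the form nor the orthogonality computations. The claim therefore reduces to the form inequality
\[
\mathcal{E}_{\sigma_E,\widehat f,H}(x)\ \ge\ \mathcal{E}_{\sigma_E',\widehat f,H}(x),\qquad x\in\mathscr{F},\ \ 0<\sigma_E\le\sigma_E'\le\infty,
\]
with $\sigma_E'=\infty$ meaning $\mathcal{E}_{\widehat f,H}$.

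To prove this inequality, we use the integral representation \eqref{integralrepLse}, which expresses $\mathcal{E}_{\sigma_E,\widehat f,H}(x)$ as a Gaussian average of $\mathcal{E}_{\widehat f,H}(x(-s))$. This alone shows averaging but not monotonicity, so we work at the level of frequencies. Set $y_\nu:=\widehat f(\nu)e^{\beta\nu/4}\delta^\alpha_\nu(x)$, so that
\[
\mathcal{E}_{\sigma_E,\widehat f,H}(x)=\sum_{\alpha}\sum_{\nu_1,\nu_2}k_{\sigma_E}(\nu_1-\nu_2)\,\langle y_{\nu_1},y_{\nu_2}\rangle,\qquad k_{\sigma_E}(u):=\frac{e^{-u^2/8\sigma_E^2}}{2\cosh(\beta u/4)}.
\]
These sums are absolutely convergent, as in the proof of \Cref{lem:definL}. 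The key observation is
\[
k_{\sigma_E}-k_{\sigma_E'}=\frac{1}{2\cosh(\beta u/4)}\Big(e^{-u^2/8\sigma_E^2}-e^{-u^2/8\sigma_E'^2}\Big).
\]
It suffices to show that this difference is a positive-definite function of $u$, i.e.\ the Fourier transform of a non-negative finite measure. Then Bochner's theorem gives $\sum k(\nu_1-\nu_2)\langle y_{\nu_1},y_{\nu_2}\rangle=\int \|\sum_\nu e^{it\nu}y_\nu\|^2\,d\mu(t)\ge 0$, exactly as in \eqref{eq:Dirichletformdef}.

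\textbf{Main obstacle.} The factor $e^{-u^2/8\sigma_E^2}-e^{-u^2/8\sigma_E'^2}$ is \emph{not} positive definite: it vanishes at $u=0$ and is negative nearby. A direct Bochner argument on the difference therefore fails. I would instead write
\[
k_{\sigma_E}=\operatorname{sech}\text{-part}\times \text{Gaussian},\qquad e^{-u^2/8\sigma_E^2}=e^{-u^2/8\sigma_E'^2}\,e^{-u^2 c}\quad\text{with } c=\tfrac18\big(\sigma_E^{-2}-\sigma_E'^{-2}\big)\ge 0.
\]
This exhibits $\mathcal{E}_{\sigma_E}$ as a further Gaussian time-average, in the sense of \eqref{integralrepLse}, of $\mathcal{E}_{\sigma_E'}$ evaluated along $x(s)=e^{iHs}xe^{-iHs}$:
\[
\mathcal{E}_{\sigma_E,\widehat f,H}(x)=\int \rho_c(s)\,\mathcal{E}_{\sigma_E',\widehat f,H}(x(-s))\,ds,
\]
where $\rho_c$ is a probability density.

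Monotonicity of the gap then follows without a pointwise form inequality, by conjugation. The unitary group $\mathcal{U}_s:x\mapsto x(s)$ on $\mathscr{T}_2$ commutes with both generators, since $e^{itH}A^\alpha_\nu e^{-itH}=e^{it\nu}A^\alpha_\nu$. It also preserves $\|x\|_2$, fixes $\sqrt{\sigma_\beta}$, and preserves orthogonality to it. Hence $\mathcal{E}_{\sigma_E'}(x(-s))\ge \operatorname{gap}(L_{\sigma_E'})\,\|x\|_2^2$ for every $s$ and every $x\perp\operatorname{Ker}$. Integrating against the probability density $\rho_c$ gives $\mathcal{E}_{\sigma_E}(x)\ge\operatorname{gap}(L_{\sigma_E'})\|x\|_2^2$, and taking the infimum yields the proposition.

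The remaining care concerns the kernel. Averaging can only shrink the orthocomplement's minimum issue if $\operatorname{Ker}(L_{\sigma_E})\supseteq\operatorname{Ker}(L_{\sigma_E'})$. This holds by the same averaging identity together with the $\mathcal{U}_s$-invariance of $\operatorname{Ker}(L_{\sigma_E'})$, which in turn follows from the commutation just noted. The final step is the density and closure argument extending the inequality from $\mathscr{F}$ to the form domain, using lower semicontinuity \eqref{eq.lsc}.
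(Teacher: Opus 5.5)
Your final argument is the paper's. Factoring $e^{-u^2/8\sigma_E^2}=e^{-u^2/8\sigma_E'^2}e^{-cu^2}$ is the frequency-side form of the paper's identity $g_{\sigma_E}=g_{\sigma_E'}\ast g_{\sigma_E^*}$, with $c=1/(8(\sigma_E^*)^2)$. Both proofs then apply the variational formula for $\operatorname{gap}(L_{\sigma_E',\widehat f,H})$ to $x(-s)$ and average against a probability density.

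\textbf{A false auxiliary claim.} The map $x\mapsto x(s)$ does \emph{not} in general commute with $L_{\sigma_E,\widehat f,H}$ for $\sigma_E>0$. The cross terms with $\nu_1\neq\nu_2$ pick up the phase $e^{is(\nu_1-\nu_2)}$, and this non-covariance is exactly why the Gaussian average in \eqref{integralrepLse} changes the form. Only the Davies generator ($\sigma_E=0$) is covariant.

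Your main step does not need this claim. It uses only that $x\mapsto x(s)$ is unitary on $\mathscr{T}_2(\mathcal{H})$, fixes $\sqrt{\sigma_\beta}$, and maps $\mathscr{F}$ into itself.

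\textbf{The kernel paragraph.} It is unnecessary for the variational gap used in the paper, which projects out only $\sqrt{\sigma_\beta}$; a larger kernel then simply gives $\operatorname{gap}(L_{\sigma_E',\widehat f,H})=0$. If you instead project out the whole kernel, you need $\operatorname{Ker}(L_{\sigma_E',\widehat f,H})$ to be invariant under $x\mapsto x(s)$. Your justification for that rests on the false commutation, so it would need a different argument.

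\textbf{Passing to the form domain.} Extending the bound from $\mathscr{F}_{\sigma_\beta}$ to the whole form domain should use convergence in the form norm, i.e.\ the core property you invoked at the start. Lower semicontinuity only gives upper bounds on $\mathcal{E}(x)$, so it points the wrong way for the lower bound you need.
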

\begin{proof}
We make use of the variational formulation of the gap:
\begin{align*}
\operatorname{gap}({L}_{\sigma_E,\widehat{f},H})=\inf_{x\in D(\mathcal{E}_{\sigma_E,\widehat{f},H})\backslash \mathbb{C}\sigma_\beta^{\frac{1}{2}}}\frac{\mathcal{E}_{\sigma_E,\widehat{f},H}(x)}{\Big\|x-\langle \sigma_\beta^{\frac{1}{2}},x\rangle\,\sigma_\beta^{\frac{1}{2}}\Big\|_{2}^2}
\end{align*}
where $\mathcal{E}_{\smash{\sigma_E,\widehat{f},H}}$ is the Dirichlet form associated with $\mathcal{L}_{\smash{\sigma_E,\widehat{f},H}}$. Therefore, denoting the probability density $g_{\sigma_E}(s):=\sigma_E\sqrt{2/\pi}e^{-2\sigma_E^2s^2}$ and using \eqref{integralrepLse} as well as the form core property of $\mathscr{F}_{\sigma_\beta}$, for all $x\in D(\mathcal{E}_{\sigma_E,\widehat{f},H})\backslash \mathbb{C}\sigma_\beta^{\smash{\frac{1}{2}}}$,  there exists a sequence of elements $x_n\in\mathscr{F}_{\sigma_\beta}\backslash \mathbb{C}\sigma^{\smash{\frac{1}{2}}}$ such that $\|x-x_n\|_{D(\mathcal{E}_{\sigma_E,\widehat{f},H})}\to 0$ as $n\to\infty$ and
\begin{align}
 \tfrac{\mathcal{E}_{\sigma_E,\widehat{f},H}(x)}{\Big\|x-\langle \sigma_\beta^{\frac{1}{2}},x\rangle\,\sigma_\beta^{\frac{1}{2}}\Big\|_{2}^2}=\lim_{n\to\infty}\tfrac{\mathcal{E}_{\sigma_E,\widehat{f},H}(x_n)}{\Big\|x_n-\langle \sigma_\beta^{\frac{1}{2}},x_n\rangle\,\sigma_\beta^{\frac{1}{2}}\Big\|_{2}^2}=\lim_{n\to\infty} \int_{-\infty}^{\infty}  \tfrac{g_{\sigma_E}(s)\mathcal{E}_{\widehat{f},H}(x_n(-s))}{\Big\|x_n(-s)-\langle \sigma_\beta^{\frac{1}{2}},x_n(-s)\rangle\,\sigma_\beta^{\frac{1}{2}}\Big\|_{2}^2}\,ds \nonumber   
\end{align}
where we also used the invariance of the denominator under the unitaries $e^{isH}$. From this, we directly get that $\operatorname{gap}({L}_{\sigma_E,\widehat{f},H})\ge \operatorname{gap}({L}_{\widehat{f},H}) $. Now, for any such $x$, denoting $\mathcal{E}_n(s):=\mathcal{E}_{\smash{\widehat{f},H}}(x_n(s))$, the above equations show that $\mathcal{E}_{\sigma_E,\widehat{f},H}(x_n)= \mathcal{E}_n\ast g_{\sigma_E}(0)$. More generally, since $g_{\sigma_E}=g_{\sigma_E'}\ast  g_{\sigma_E^*}$ with $(\sigma_E^*)^2:=\sigma_E^2(\sigma_E')^2/((\sigma_E')^2-\sigma_E^2)$, and by associativity of the convolution, we get
\begin{align*}
\mathcal{E}_{\sigma_E,\widehat{f},H}(x_n)&=\mathcal{E}_n\ast g_{\sigma_E}(0)=\mathcal{E}_n\ast g_{\sigma_E'}\ast g_{\sigma_E^*}(0)=(s\mapsto \mathcal{E}_{\sigma_E',\widehat{f},H}(x_n(s)))\ast g_{\sigma_E^*}(0).
\end{align*}
Thus,
\begin{align*}
 \tfrac{\mathcal{E}_{\sigma_E,\widehat{f},H}(x_n)}{\Big\|x_n-\langle \sigma_\beta^{\frac{1}{2}},x_n\rangle\,\sigma_\beta^{\frac{1}{2}}\Big\|_{2}^2}=\int_{-\infty}^\infty \,\tfrac{g_{\sigma_E^*}(s)\mathcal{E}_{\sigma_E',\widehat{f},H}(x_n(-s))}{\Big\|x_n(-s)-\langle \sigma_\beta^{\frac{1}{2}},x_n(-s)\rangle\,\sigma_\beta^{\frac{1}{2}}\Big\|_{2}^2}\,ds
\end{align*}
from which we directly read off that $\operatorname{gap}({L}_{\sigma_E,\widehat{f},H})\ge \operatorname{gap}({L}_{\sigma_E',\widehat{f},H})$.
\end{proof}

\subsection{Integral representation}\label{integralrepfschwartz}

Since we proved in the last section that the spectral gap can only improve with smaller $\sigma_E$, combining this fact with the results of Section \ref{sec:spectralgaps} for $\sigma_E=\infty$, we directly obtain a set of examples for which the Gibbs samplers at any $\sigma_E\in \mathbb{R}_+$ are gapped. Next, we provide an integral formulation of the generator $\mathcal{L}_{\sigma_E,\widehat{f},H}$ for any $\sigma_E\in(0,\infty)$ and sufficiently nice functions $\widehat{f}$. This form will play a crucial role when considering an implementation of the associated evolution on a discrete variable quantum platform. In what follows, we denote by $\mathcal{S}(\mathbb{R})$ the space of Schwartz functions 

\[
\mathcal{S}(\mathbb{R})
\!=\!\left\{
f\in C^{\infty}(\mathbb{R})\!:\!
\forall m\in\mathbb{N}_0^2,
\sup_{x\in\mathbb{R}}
\big|x^{m_1}f^{(m_2)}(x)\,\big|\!<\!\infty\!
\right\}.
\]

\noindent
Equivalently, \(f\in\mathcal{S}(\mathbb{R})\) if and only if for every pair of non-negative integers \(m,n\),
\[
\lim_{|x|\to\infty} |x|^m |f^{(n)}(x)| = 0.
\]


We start by providing integral representations for each constituent of the generator $\mathcal{L}_{\sigma_E,\widehat{f},H}$.

\begin{proposition}\label{integralrep}
Assume Condition \ref{eq:condAalphas} is satisfied for some $f\in\mathcal{S}(\mathbb{R})$ with Fourier transform $\widehat{f}(\nu):=\int_{-\infty}^\infty f(s)e^{is\nu}ds$. Then for any $\alpha\in\mathcal{A}$, $(L^\alpha,\mathcal{F})$ extends to an operator in $D(\widetilde{H}^{\gamma}),$ which is relatively $\widetilde H^\gamma$-bounded, and for any $\ket{\psi}\in D(\widetilde{H}^{\gamma})$, 
\begin{align*}
L^\alpha\ket{\psi}=\int f(s)\,e^{isH}A^\alpha e^{-isH}\ket{\psi}\,ds=\sum_{E,E'\in\operatorname{Sp}(H)}\widehat{f}(E'-E)P_{E'}A^\alpha P_{E}\ket{\psi}.
\end{align*}
where the latter expression is to be understood as a limit of the finite sums $\sum_{\smash{E\le M,E'\le M'}}$ in $\mathcal{H}$. Moreover, $\widetilde{H}^{\gamma }L^\alpha \widetilde{H}^{-\mu}$ extends to a bounded operator with integral representation
\begin{align*}
\widetilde{H}^{\gamma }L^\alpha \widetilde{H}^{-\mu}:=\int f(s) e^{isH}\widetilde{H}^{\gamma } A^\alpha \widetilde{H}^{-\mu}e^{-isH}\,ds.
\end{align*}
Next, $(G_{\sigma_E},\mathcal{F})$ extends to an operator on $D(\widetilde{H}^{\mu}),$ which is relatively $\widetilde H^\mu$-bounded, and for any {$\sigma_E\in(0,\infty)$,} $\ket{\psi}\in D(\widetilde{H}^{\mu})$, $\ket{\psi}\in D((L^\alpha)^\dagger L^\alpha )\cap D(G_{\sigma_E})$ and
\begin{align*}
G_{\sigma_E}\ket{\psi}&=-\sum_{\alpha\in\mathcal{A}}\int_{-\infty}^\infty g(t)\,e^{itH}(L^\alpha)^\dagger L^\alpha e^{-itH}\ket{\psi}\,dt\\
&=-\!\!\!\!\!\!\!\!\!\!\!\sum_{\substack{\alpha\in\mathcal{A}\\E,E',E''\in\operatorname{Sp}(H)}}\!\!\!\!\!\!\!\!\!\!\widehat g(E'-E'')\overline{\widehat{f}(E-E')}\widehat{f}(E-E'')P_{E'}(A^\alpha)^\dagger P_{E}A^\alpha P_{E''}\ket{\psi},
\end{align*}
with $g(t)=\frac{1}{2\pi}\int_{-\infty}^\infty \frac{e^{-\nu^2/8\sigma_E^2}}{1+e^{\beta\nu/2}} e^{-i\nu t}d\nu$. Similarly, $\Phi_{\sigma_E,\widehat{f},H}$ extends to an operator on $D(\mathcal{W}_H^{\gamma,\gamma})$, which is relatively $\mathcal{W}^{\gamma,\gamma}_H$ bounded, and for any $x\in D(\mathcal{W}_H^{\gamma,\gamma})$ 
\begin{align}
\Phi_{\sigma_E,\widehat{f},H}(x)=\sigma_E\sqrt{\frac{2}{\pi}}\sum_{\alpha}\int_{\mathbb{R}}e^{-2\sigma_E^2 s^2}\, \, X^{\alpha}_s \cdot x\cdot  (X_s^\alpha)^\dagger \,ds
\end{align}
with $X^{\alpha}_s:=e^{isH}L^\alpha e^{-isH}$ and where we recall that $X_s^\alpha\cdot x\cdot (X_s^\alpha)^\dagger \equiv X_s^\alpha\widetilde{H}^{-\gamma}(X^\alpha_s\widetilde{H}^{-\gamma}\mathcal{W}_H^{\gamma,\gamma}(x))^\dagger)^\dagger$.
\end{proposition}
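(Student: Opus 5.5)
The strategy is to obtain each integral representation from the spectral (Bohr-frequency) expansion via the identity $e^{isH}P_{E'}A^\alpha P_E e^{-isH}=e^{is(E'-E)}P_{E'}A^\alpha P_E$. The absolute integrability needed for Fubini comes from the Schwartz decay of $f$ together with the bounds of \Cref{eq:condAalphas}.

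\textbf{Step 1: the jumps $L^\alpha$.} The key observation is that $e^{-isH}$ commutes with $\widetilde H^{-\gamma}$. Hence $s\mapsto e^{isH}A^\alpha e^{-isH}\widetilde H^{-\gamma}=e^{isH}(A^\alpha\widetilde H^{-\gamma})e^{-isH}$ is a strongly continuous family of operators of norm at most $C$. Therefore the Bochner integral
\[
\int f(s)\,e^{isH}A^\alpha\widetilde H^{-\gamma}e^{-isH}\,ds
\]
defines a bounded operator of norm at most $C\|f\|_{L^1}$, and $L^\alpha$ is relatively $\widetilde H^\gamma$-bounded on $D(\widetilde H^\gamma)$. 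To identify it with the spectral sum, I will test on $\ket{E_i}$. There $e^{-isH}\ket{E_i}=e^{-isE_i}\ket{E_i}$, and inserting $A^\alpha\ket{E_i}=\sum_{E'}P_{E'}A^\alpha\ket{E_i}$ (convergent in $\mathcal H$) and integrating termwise gives $\sum_{E'}\widehat f(E'-E_i)P_{E'}A^\alpha\ket{E_i}$. Termwise integration is justified by dominated convergence, since the orthogonal partial sums are bounded in norm by $\|A^\alpha\ket{E_i}\|$. For general $\ket\psi$ I will pass to limits of the finite sums $\sum_{E\le M,E'\le M'}$ using density of $\mathcal F$ in the graph norm of $\widetilde H^\gamma$ and continuity of both sides. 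The bound for $\widetilde H^\gamma L^\alpha\widetilde H^{-\mu}$ follows in the same way, using that $\widetilde H^\gamma$ commutes with $e^{isH}$ and that $\widetilde H^\gamma$ is closed, which lets it pass inside the Bochner integral.

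\textbf{Step 2: the operator $G_{\sigma_E}$.} I will factor $(L^\alpha)^\dagger L^\alpha\widetilde H^{-\mu}=[(L^\alpha)^\dagger\widetilde H^{-\gamma}][\widetilde H^\gamma L^\alpha\widetilde H^{-\mu}]$. The second factor is bounded by Step 1. The first factor is bounded because $(L^\alpha)^\dagger=\int\overline{f(s)}e^{isH}(A^\alpha)^\dagger e^{-isH}ds$ and $(A^\alpha)^\dagger$ is itself among the jumps, so \Cref{eq:condAalphas} applies to it. This gives $D(\widetilde H^\mu)\subseteq D((L^\alpha)^\dagger L^\alpha)$ with relative boundedness.

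The main obstacle is that $g$ must be integrable, so that the $t$-integral converges in operator norm after conjugation. The function $\widehat g(\nu)=e^{-\nu^2/8\sigma_E^2}(1+e^{\beta\nu/2})^{-1}$ is smooth, and all its derivatives are bounded by polynomials times the Gaussian. So $\widehat g\in\mathcal S(\mathbb R)$, and consequently $g\in\mathcal S(\mathbb R)\subset L^1$. This is exactly the point where $\sigma_E<\infty$ is essential. The integral $\int g(t)e^{itH}(L^\alpha)^\dagger L^\alpha e^{-itH}\widetilde H^{-\mu}dt$ then converges as a Bochner integral. Evaluating on $\ket{E_{i}}$ and expanding both $L^\alpha$ factors spectrally produces the stated triple sum. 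I will then check that the resulting coefficient $\widehat g(E'-E'')\overline{\widehat f(E-E')}\widehat f(E-E'')$ agrees with \eqref{GsigmaEdeffirst} under $\nu_1=E-E'$, $\nu_2=E-E''$, since $\nu_1-\nu_2=E''-E'$. This bookkeeping of signs in $\widehat g$ is the one place requiring care.

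\textbf{Step 3: the map $\Phi_{\sigma_E,\widehat f,H}$.} On $\ket{E_i}\bra{E_j}$ I will write
\[
X^\alpha_s\ket{E_i}\bra{E_j}(X^\alpha_s)^\dagger=\sum_{\nu_1,\nu_2}e^{is(\nu_2-\nu_1)}\widehat f(\nu_2)\overline{\widehat f(\nu_1)}A^\alpha_{\nu_2}\ket{E_i}(A^\alpha_{\nu_1}\ket{E_j})^\dagger
\]
in trace norm, with absolute convergence guaranteed by the estimates in \Cref{lem:definL}. Integrating against $\sigma_E\sqrt{2/\pi}\,e^{-2\sigma_E^2s^2}$ yields the factor $e^{-(\nu_1-\nu_2)^2/8\sigma_E^2}$, which reproduces the definition of $\Phi_{\sigma_E,\widehat f,H}$. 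The relative $\mathcal W^{\gamma,\gamma}_H$-boundedness follows from $\|X^\alpha_s\widetilde H^{-\gamma}\|\le C\|f\|_1$ uniformly in $s$. The extension to all of $D(\mathcal W_H^{\gamma,\gamma})$ follows from the core property in \Cref{mathFcoreW}, applying dominated convergence in $s$ inside the trace norm.
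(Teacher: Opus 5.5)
Your proposal is correct and follows essentially the paper's route. Both write $\widehat f(E'-E)=\int f(s)e^{is(E'-E)}\,ds$, interchange sums and integrals by dominated convergence using the relative bounds $\|A^\alpha\widetilde H^{-\gamma}\|$ and $\|\widetilde H^{\gamma}A^\alpha\widetilde H^{-\mu}\|$ (and their adjoint versions, via the adjoint-invariance of the jump set), and extend from $\mathcal F$, resp.\ $\mathscr F$, to $D(\widetilde H^{\gamma})$, $D(\widetilde H^{\mu})$ and $D(\mathcal W_H^{\gamma,\gamma})$. The differences are organizational: you first define the Bochner integrals of the bounded families $e^{isH}(A^\alpha\widetilde H^{-\gamma})e^{-isH}$ and identify them on eigenvectors, whereas the paper inserts truncations $P_{H\le M}$ into the spectral sums. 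Your explicit check that $\widehat g\in\mathcal S(\mathbb R)$, hence $g\in L^1$, supplies a detail the paper only asserts.
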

\begin{proof}
For $\ket{\psi}\in \mathcal{F}$ the operator $L^\alpha$ defined in \eqref{eq:LalphainGenerationTheory} satisfies 
\begin{align*}
L^\alpha\ket{\psi}&=\sum_{E,E'\in\spec(H) }\widehat{f}(E'-E)P_{E'}A^\alpha P_{E}\ket{\psi}=\sum_{E,E'\in \operatorname{Sp}(H)}\int_{-\infty}^\infty f(s)e^{is(E'-E)}ds P_{E'}A^\alpha P_E\ket{\psi}\\
&= \lim_{M\to \infty} \lim_{M'\to \infty}\int_{-\infty}^\infty f(s) \sum_{\substack{E,E'\in \operatorname{Sp}(H)\\E\le M,\,E'\le M'}}e^{is(E'-E)}ds P_{E'}A^\alpha P_E\ket{\psi} \\
&= \lim_{M\to \infty} \lim_{M'\to \infty} \int_{-\infty}^\infty f(s)\, P_{H\le M'}e^{isH}A^\alpha e^{-isH} P_{H\le M}\ket{\psi}\,ds  
\end{align*}
where we denoted for $M\in\N_0$ the projection $ P_{H\le M} =  \sum_{\substack{E\in \operatorname{Sp}(H)\\E\le M}}P_E.$ Note that
\begin{align*}
   \| P_{H\le M'}e^{isH}A^\alpha e^{-isH} P_{H\le M}\ket{\psi}\| \le \|A^\alpha\widetilde H^{-\gamma}\| \|\widetilde H^\gamma\ket{\psi}\|
\end{align*}
for all $M,M'\in\N$ and therefore, using $f\in L^1(\R),$ we can use dominated convergence to exchange the limit $M,M'\to\infty$ with the integral and obtain from the above\begin{align*}
L^\alpha\ket{\psi}=\int_{-\infty}^\infty\,f(s)\sum_{E,E'\in\operatorname{Sp}(H)}e^{is(E'-E)}P_{E'}A^\alpha P_E\ket{\psi}\,ds=\int_{-\infty}^\infty f(s)e^{isH}A^\alpha e^{-isH}\ket{\psi}\,ds\,.
\end{align*}
Clearly, the above swapping of sums and integral extend to $\ket{\psi}\in D(\widetilde{H}^{\gamma})$, . We argue similarly for $G_{\sigma_E}$ as defined in \eqref{GsigmaEdeffirst}: for any $\ket{\psi}\in \mathcal{F}$,
\begin{align*}
&G_{\sigma_E}\ket{\psi}\\
&\, =-\sum_{\alpha\in\mathcal{A}}\lim_{M,M'\to\infty}\int_{-\infty}^\infty\sum_{\substack{E,E',E''\in\operatorname{Sp}(H)\\E\le M,E'\le M'}}\!\!\!g(t)\overline{\widehat{f}(E-E')}\widehat{f}(E-E_i)e^{itH}P_{E'}(A^\alpha)^\dagger P_{E}A^\alpha e^{-itH}P_{E''}\ket{\psi}dt\\
&\, =\!-\!\sum_{\alpha\in\mathcal{A}}\lim_{M,M'\to\infty}\int_{-\infty}^\infty \!\!P_{H\le M'}\!\!\!\!\!\!\!\!\!\!\!\!\sum_{E,E',E''\in\operatorname{Sp}(H)}\!\!\!\!\!\!\!\!\!\!g(t)\overline{\widehat{f}(E-E')}\widehat{f}(E-E'')e^{itH}P_{E'}(A^\alpha)^\dagger P_{E}P_{H\le M}A^\alpha e^{-itH}P_{E''}\ket{\psi}dt\\
&\,=\!-\!\sum_{\alpha\in\mathcal{A}}\lim_{M,M'\to\infty}\int_{-\infty}^\infty \!\,g(t)e^{itH}P_{H\le M'}(L^\alpha)^\dagger P_{H\le M}L^\alpha e^{-itH}\ket{\psi}dt.
\end{align*}
Moreover, the term above is uniformly integrable, since $g\in L^1(\mathbb{R})$ and 
\begin{align*}
\|P_{H\le M'}(L^\alpha)^\dagger P_{H\le M} L^\alpha e^{-itH}\ket{\psi}\|\le  \|\widetilde{H}^{-\gamma}L^\alpha\|\,\|\widetilde{H}^\gamma L^\alpha \widetilde{H}^{-\mu} \,\|\widetilde{H}^\mu\ket{\psi}\|<\infty
\end{align*}
Therefore, using the dominated convergence theorem to exchange the limits $M,M'\to \infty$ with the integral gives
 \begin{align*}
G_{\sigma_E}\ket{\psi}&=-\sum_{\alpha}\int_{-\infty}^{\infty}g(t)\,\sum_{E,E',E''\in\operatorname{Sp}(H)}\overline{\widehat{f}(E-E')}\widehat{f}(E-E_i)e^{itH}P_{E'}(A^\alpha)^\dagger P_E A^\alpha e^{-itH}P_{E_i}\ket{\psi}\\
&=-\sum_{\alpha}\int_{-\infty}^{\infty}g(t)\,e^{itH}(L^\alpha)^\dagger  L^\alpha e^{-itH}\ket{\psi}
\end{align*}
where in the last line we used that $L^\alpha\ket{\psi}\in D(\widetilde{H}^{\gamma})$. 
Once again, the above swapping of sums and integrals extend to $\ket{\psi}\in D(\widetilde{H}^{\mu})$.

For the CP-term $\Phi_{\sigma_E,\widehat f, H}$ we apply a similar dominated convergence argument as for the $G_{\sigma_E}$ operator to exchange the Bohr frequency series with the integral: given $x\in \mathscr{F}$
\begin{align*}
&\Phi_{\sigma_E,\widehat{f},H}(x)\\
&\quad =\sum_{\substack{\alpha\in\mathcal{A}\\ E_1,E_1',E_2,E_2'\in\operatorname{Sp}(H)}}e^{-\frac{(E_1-E_1'-E_2+E_2')^2}{8\sigma_E^2}}\, \overline{\widehat{f}(E_1-E_1')}\widehat{f}(E_2-E_2')\, P_{E_2} A^\alpha P_{E_2'} x P_{E_1'}(A^\alpha)^\dagger P_{E_1}\\
&\quad =\sigma_E\sqrt{\frac{2}{\pi}}\!\!\!\!\!\!\!\!\!\!\!\!\!\!\!\sum_{\substack{\alpha\in\mathcal{A}\\ E_1,E_1',E_2,E_2'\in\operatorname{Sp}(H)}}\!\!\!\!\!\!\!\!\!\!\!\!\!\!\!\int_{-\infty}^\infty e^{-2\sigma_E^2 s^2}  e^{is (-E_1+E_1'+E_2-E_2')} \overline{\widehat{f}(E_1-E_1')}\widehat{f}(E_2-E_2')\, P_{E_2} A^\alpha P_{E_2'} x P_{E_1'}(A^\alpha)^\dagger P_{E_1}ds\\
&\quad =\sigma_E\sqrt{\frac{2}{\pi}}\!\!\!\!\!\!\!\!\!\!\!\!\!\!\!\sum_{\substack{\alpha\in\mathcal{A}\\ E_1,E_1',E_2,E_2'\in\operatorname{Sp}(H)}}\!\!\!\!\!\!\!\!\!\!\!\!\!\!\!\!\int_{-\infty}^\infty \!\!\!\!\!e^{-2\sigma_E^2 s^2}   \overline{\widehat{f}(E_1-E_1')}\widehat{f}(E_2-E_2')\, P_{E_2} e^{isH} A^\alpha e^{-isH}P_{E_2'} x e^{isH}P_{E_1'}(A^\alpha)^\dagger e^{-isH}P_{E_1}ds.
\end{align*}
Arguing by truncation as above, it suffices to show that 
\begin{align*}
&P_{H\le M}\sum_{E_1,E_2,E_1',E_2'\in\operatorname{Sp}(H)} \overline{\widehat{f}(E_1-E_1')}\widehat{f}(E_2-E_2')\, P_{E_2} e^{isH} A^\alpha e^{-isH}P_{E_2'} x e^{isH}P_{E_1'}(A^\alpha)^\dagger e^{-isH}P_{E_1} P_{H\le M'}\\
&\qquad =P_{H\le M}\,\sum_{E_2',E_1'\in\operatorname{Sp}(H)}\,e^{isH}L^\alpha  e^{-isH}P_{E_2'}x P_{E_1'}e^{isH}(L^\alpha)^\dagger e^{-isH}P_{H\le M'}\\
&\qquad =P_{H\le M}\,\,e^{isH}L^\alpha  e^{-isH}x e^{isH}(L^\alpha)^\dagger e^{-isH}P_{H\le M'}
\end{align*}
with norm
\begin{align*}
\|P_{H\le M}\,\,e^{isH}L^\alpha  e^{-isH}x e^{isH}(L^\alpha) e^{-isH}P_{H\le M'}\|_1\le \|x\|_{\mathcal{W}_H^{\gamma,\gamma}}\|L^\alpha \widetilde{H}^{-\gamma}\|^2<\infty.
\end{align*}
Thus, once again, by the dominated convergence theorem, we conclude that 
\begin{align*}
\Phi_{\sigma_E,\widehat{f},H}(x)&=\sigma_E\sqrt{\frac{2}{\pi}}\sum_{\alpha\in\mathcal{A}}\int_{-\infty}^\infty e^{-2\sigma_E^2 s^2}   \,e^{isH}L^\alpha  e^{-isH}\cdot x\cdot  e^{isH}(L^\alpha)^\dagger e^{-isH}ds\\
&=\sigma_E\sqrt{\frac{2}{\pi}}\sum_{\alpha}\int_{\mathbb{R}}e^{-2\sigma_E^2 s^2}\, \, X_s^{\alpha} \cdot x\cdot  (X_s^\alpha)^\dagger \,ds.
\end{align*}
Again, the same swap holds and extends to $x\in \mathcal{W}_H^{\gamma,\gamma}$.

\end{proof}

\begin{corollary}
 Assume that \Cref{eq:condAalphas} is satisfied for some $f\in\mathcal{S}(\mathbb{R})$ with Fourier transform $\widehat{f}(\nu):=\int_{-\infty}^\infty f(s)e^{is\nu}ds$. Then, $D(\mathcal{W}_H^{\gamma,\gamma})\cap D(\mathcal{W}_H^{0,\mu})\cap D(\mathcal{W}_H^{\mu,0})\subseteq D(\mathcal{L}_{\sigma_E,\widehat{f},H})$, and for any $x\in D(\mathcal{W}_H^{\gamma,\gamma})\cap D(\mathcal{W}_H^{0,\mu})\cap D(\mathcal{W}_H^{\mu,0})$,
 \begin{align*}
\mathcal{L}_{\sigma_E,\widehat{f},H}(x):=\Phi_{\sigma_E,\widehat{f},H}(x)+G_{\sigma_E}\cdot x+x\cdot G_{\sigma_E}^\dagger.
 \end{align*}
\end{corollary}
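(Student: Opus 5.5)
The plan is to run the closedness argument of Proposition~\ref{prop.generationtheorem}, using the bounds from Proposition~\ref{integralrep} in place of those from Lemma~\ref{lem.LalphaGwelldefinedF}. Write $\mathscr{D}:=D(\mathcal{W}_H^{\gamma,\gamma})\cap D(\mathcal{W}_H^{0,\mu})\cap D(\mathcal{W}_H^{\mu,0})$. Extending the reasoning of Lemma~\ref{T1semigroupgeneration} via Proposition~\ref{prop.QMSforLinsigmaEappendix}, $\mathcal{L}_{\sigma_E,\widehat{f},H}$ is the generator of a strongly continuous semigroup on $\mathscr{T}_1(\mathcal{H})$, hence closed. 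On $\mathscr{F}$ it acts as $\Phi_{\sigma_E,\widehat{f},H}+G_{\sigma_E}\cdot(\,\cdot\,)+(\,\cdot\,)\cdot G_{\sigma_E}^\dagger$. This follows from the identity displayed after \eqref{GsigmaEdeffirst}, obtained exactly as in Proposition~\ref{propDirichlettoSchro} by pushing $\ket{E_i}\bra{E_j}$ through $\iota_2$ and rewriting the four terms with the KMS relation \eqref{eq:KMSFilterFunction}.

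Given $x\in\mathscr{D}$, I would pick $x_n\in\mathscr{F}$ converging to $x$ simultaneously in the three norms $\|\cdot\|_{\mathcal{W}_H^{\gamma,\gamma}}$, $\|\cdot\|_{\mathcal{W}_H^{0,\mu}}$ and $\|\cdot\|_{\mathcal{W}_H^{\mu,0}}$. Lemma~\ref{mathFcoreW} gives a core for each norm separately. For simultaneous convergence, the same truncation $x_n=P_nxP_n$ from its proof works, because $P_n$ commutes with $\widetilde H$: then $\mathcal{W}^{\delta_1,\delta_2}_H(x_n)=P_n\mathcal{W}^{\delta_1,\delta_2}_H(x)P_n\to\mathcal{W}^{\delta_1,\delta_2}_H(x)$ in $\mathscr{T}_1$ for each pair $(\delta_1,\delta_2)$. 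Clearly also $x_n\to x$ in $\mathscr{T}_1$.

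Next I would show that each piece converges in trace norm. By Proposition~\ref{integralrep}, $\Phi_{\sigma_E,\widehat{f},H}$ is relatively $\mathcal{W}^{\gamma,\gamma}_H$-bounded, with
\[
\|\Phi_{\sigma_E,\widehat{f},H}(x_n-x)\|_1\le |\mathcal{A}|\,\sup_s\|X^\alpha_s\widetilde H^{-\gamma}\|^2\,\|x_n-x\|_{\mathcal{W}^{\gamma,\gamma}_H}.
\]
Here $\|X^\alpha_s\widetilde H^{-\gamma}\|=\|L^\alpha\widetilde H^{-\gamma}\|\le\|f\|_{L^1}C$, since $e^{isH}$ commutes with $\widetilde H$ and the integral representation of $L^\alpha$ holds. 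Likewise $G_{\sigma_E}$ is relatively $\widetilde H^{\mu}$-bounded, so
\[
\|G_{\sigma_E}\cdot(x_n-x)\|_1\le\|G_{\sigma_E}\widetilde H^{-\mu}\|\,\|x_n-x\|_{\mathcal{W}^{\mu,0}_H}\to0,
\]
and symmetrically $x_n\cdot G^\dagger_{\sigma_E}\to x\cdot G_{\sigma_E}^\dagger$ using $\mathcal{W}^{0,\mu}_H$. Thus $\mathcal{L}_{\sigma_E,\widehat{f},H}(x_n)\to y:=\Phi_{\sigma_E,\widehat{f},H}(x)+G_{\sigma_E}\cdot x+x\cdot G^\dagger_{\sigma_E}$, and closedness yields $x\in D(\mathcal{L}_{\sigma_E,\widehat{f},H})$ with $\mathcal{L}_{\sigma_E,\widehat{f},H}(x)=y$.

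The main obstacle is the bound on $\|G_{\sigma_E}\widetilde H^{-\mu}\|$. I would obtain it from the integral representation in Proposition~\ref{integralrep}: $\|G_{\sigma_E}\widetilde H^{-\mu}\|\le|\mathcal{A}|\,\|g\|_{L^1}\|\widetilde H^{-\gamma}L^\alpha\|\,\|\widetilde H^{\gamma}L^\alpha\widetilde H^{-\mu}\|$. This requires $g\in L^1(\mathbb{R})$, which holds for $\sigma_E\in(0,\infty)$ because $\widehat g(\nu)=e^{-\nu^2/8\sigma_E^2}(1+e^{\beta\nu/2})^{-1}$ is smooth with all derivatives Gaussian-decaying, hence Schwartz. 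It also requires $\|\widetilde H^{-\gamma}L^\alpha\|<\infty$, which follows from the bound $\|A^\beta\widetilde H^{-\gamma}\|$ applied to the adjoint jump, since $(L^\alpha)^\dagger$ is again an integral of conjugated bare jumps by invariance of $\{A^\alpha\}$ under adjoints.
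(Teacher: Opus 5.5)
Your proposal is correct and follows essentially the same route as the paper. Both arguments use closedness of the $\mathscr{T}_1$-generator and approximation by the spectral truncations $P_n x P_n\in\mathscr{F}$, which converge simultaneously in all three $\mathcal{W}_H$-norms because $P_n$ commutes with $\widetilde H$. Both then show trace-norm convergence of $\Phi_{\sigma_E,\widehat f,H}(x_n)$, $G_{\sigma_E}\cdot x_n$ and $x_n\cdot G_{\sigma_E}^\dagger$ from the integral representations of Proposition~\ref{integralrep}. The only difference is minor: you derive these convergences from explicit operator-norm relative bounds, using $g\in L^1(\mathbb{R})$ and the adjoint-invariance of the bare jumps, where the paper appeals to dominated convergence.
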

\begin{proof}
This follows the same lines as in \Cref{prop.generationtheorem}, by invoking the closedness of $\mathcal{L}_{\sigma_E,\widehat{f},H}$ and density of $\mathscr{F}$ in $D(\mathcal{W}_H^{\gamma,\gamma})\cap D(\mathcal{W}_H^{0,\mu})\cap D(\mathcal{W}_H^{\mu,0})$:
given  $x\in D(\mathcal{W}_H^{\gamma,\gamma})\cap D(\mathcal{W}_H^{0,\mu})\cap D(\mathcal{W}_H^{\mu,0})$, $x_n:=P_{H\le n}x P_{H\le n}\in \mathcal{W}_{H}^{\mu,0}$ with $x_n\to x$ in $\mathscr{T}_1(\cH)$ as $n\to\infty$. 
Then, by \Cref{integralrep},
\begin{align*}
G_{\sigma_E}\cdot x_n=-\sum_{\alpha\in\mathcal{A}} \int_{-\infty}^\infty g(t) e^{itH}(L^\alpha)^\dagger L^\alpha e^{-itH}x_n dt\to -\sum_{\alpha\in \mathcal{A}}\int_{-\infty}^\infty g(t) e^{itH}(L^\alpha)^\dagger L^\alpha e^{-itH}x dt\equiv G_{\sigma_E} \cdot x
\end{align*}
by dominated convergence theorem. Similarly, we show  that $x_n\cdot  G_{\sigma_E}^\dagger \to  x\cdot G_{\sigma_E}^\dagger$ and 
\begin{align*}
\Phi_{\sigma_E,\widehat{f},H}(x_n)=\sigma_E\sqrt{\frac{2}{\pi}}\sum_{\alpha\in\mathcal{A}}\int_{\mathbb{R}}e^{-2\sigma_E^2s^2} X_s^\alpha \cdot x_n\cdot  (X_s^\alpha)^\dagger \,ds\to \sigma_E\sqrt{\frac{2}{\pi}}\sum_{\alpha\in\mathcal{A}}\int_{\mathbb{R}}e^{-2\sigma_E^2s^2} X_s^\alpha \cdot x\cdot  (X_s^\alpha)^\dagger \,ds.
\end{align*}
The result follows.
\end{proof}

\subsection{Finite-dimensional truncations}
\label{sec:FiniteTruncBareJumps}
In this section we consider a finite-dimensional\footnote{Technically, the generator  is not finite-dimensional but only finite rank as it still acts on infinite-dimensional space. But we drop this distinction here for simplicity.} truncation of the unbounded generator $\mathcal{L}_{\sigma_E,\widehat{f},H}$ which for truncation parameter $M\in\N$ is denoted by $\mathcal{L}^{\le M}_{\sigma_E,\widehat{f},H_{\le M}}.$
We show in the following that $\mathcal{L}_{\sigma_E,\widehat{f},H}$ is well-approximated by the generator $\mathcal{L}^{\le M}_{\sigma_E,\widehat{f},H_{\le M}}$
 on certain energy constraint input states and for truncation parameter $M$ large enough. 

 In Section~\ref{sec:TruncateJumps}, we first introduce the bounded generator $\mathcal{L}^{\le M}_{\sigma_E,\widehat{f},H}$ by truncating the bare jumps $A^\alpha$ within the unbounded generator and continue to show closeness of this bounded generator to $\mathcal{L}_{\sigma_E,\widehat{f},H}.$ Then in Section~\ref{sec:FinitDimGenerator} we further truncate the Hamiltonian to define the finite-dimensional generator $\mathcal{L}^{\le M}_{\sigma_E,\widehat{f},H_{\le M}}$ and show closeness to $\mathcal{L}^{\le M}_{\sigma_E,\widehat{f},H}.$ Lastly, we show in Section~\ref{sec:FiniteDimPreperationSchwartz} that the dynamics $e^{t\mathcal{L}_{\sigma_E,\widehat{f},H}}$ is well approximated by the finite-dimensional one $e^{t\mathcal{L}^{\le M}_{\sigma_E,\widehat{f},H_{\le M}}}.$

\subsubsection{Truncating the bare jumps}
\label{sec:TruncateJumps}
In the following section we focus on approximating bare jumps $\{A^\alpha\}$ by finite-dimensional\footnote{Technically, the truncated bare jumps considered here are finite rank operators defined on an infinite-dimensional space. But we drop this distinction here for simplicity.} bare jumps which will be useful for implementing the Lindlabian dynamics considered in the previous sections on finite-dimensional hardware.

To illustrate this, let us focus first on bare jumps being the  annihilation and creation operators. In particular for truncation parameter $M\in\N,$ we consider $\pi_M:= \sum_{n=0}^M\kb{n}$ and
\begin{align}
\label{eq:DefTruncAnnCrea}
    a^{\le M} :=  \pi_Ma\pi_M=  \sum_{n=1}^M \sqrt{n}\ket{n-1}\!\bra{n},\quad\text{and}\quad\left(a^{\le M}\right)^\dagger = \pi_Ma^\dagger \pi_M=\sum_{n=0}^{M-1}\sqrt{n+1} \ket{n+1}\!\bra{n}.
\end{align}
More generally, for $k\in\N,$ we can also consider truncations of higher order bare jumps like $a^k$ and $(a^k)^\dagger$ which are defined by
\begin{align*}
    \left(a^k\right)^{\le M} &:= \pi_Ma^k\pi_M=\sum_{n=k}^M \sqrt{\frac{n!}{(n-k)!}}\ket{n-k}\!\bra{n},\quad\text{and}
\\\left((a^k)^{\le M}\right)^\dagger &=  \pi_M (a^\dagger)^k\pi_M= \sum_{n=0}^{M-k}\sqrt{\frac{(n+k)!}{n!}}\ket{n+k}\!\bra{n}.
\end{align*}

In the following lemma we see that these finite-dimensional truncations approximate the untruncated bare jumps well when applied on energy constraint input states.

\bin{\begin{lemma}
\label{lem:FiniteTruncBareJumps;}
Let $\kappa\in(0,1/2],$ $k\in\N$ and $M\in\N$ satisfying $M\ge \left(\frac{k}{2\kappa}\right)^{1/\kappa}.$ \bin{$r\ge M^{-1/2}.$} Then for $\ket{\psi}\in D(e^{N^\kappa})$ we have
\begin{align*}
    \|(a - a^{\le M}) \ket{\psi}\| &\le \sqrt{M}\,e^{-M^{\kappa}}\| e^{N^\kappa} \ket{\psi}\|,\\
     \|(a^\dagger - \left(a^{\le M}\right)^\dagger) \ket{\psi}\| &\le \sqrt{M+1}
    \,e^{-M^{\kappa}}\| e^{N^{\kappa}} \ket{\psi}\|.
\end{align*}
Furthermore, for $\psi\in D(e^{2N^{\kappa}})$ we have
\begin{align*}    \left\|e^{N^{\kappa}}(a - a^{\le M}) \ket{\psi}\right\| &\le \sqrt{M}\,e^{-M^{\kappa}}\| e^{2N^{\kappa}} \ket{\psi}\|,\\
     \left\|e^{N^{\kappa}}(a^\dagger - \left(a^{\le M}\right)^\dagger) \ket{\psi}\right\| &\lesssim \sqrt{M}
    \,e^{-M^{\kappa}}\| e^{2N^{\kappa}} \ket{\psi}\|.
\end{align*}
\end{lemma}
\begin{proof}
The first inequality follows from noting \begin{align*}
    \left\|(a- a^{\le M}) e^{-N^{\kappa}}\right\| = \sup_{n\ge M+1} \sqrt{n} e^{-n^\kappa}\le \sqrt{M}e^{-M^{\kappa}},
\end{align*}
where for the second inequality we have used that $M\ge \left(\frac{1}{2\kappa}\right)^{1/\kappa}$ and the fact that the function $\sqrt{x}e^{-x^\kappa}$ is non-increasing for $x\ge \left(\frac{1}{2\kappa}\right)^{1/\kappa}.$
\bin{where for the second inequality we have used $r\ge M^{-1/2}$ together with the fact that the function $xe^{-rx}$ is non-increasing for $x\ge 1/r.$} Similarly, we see
\begin{align*}
\left\|\left(a^\dagger - \left(a^{\le M}\right)^\dagger\right)e^{-\sqrt{N}}\right\|&\le \sup_{n\ge M}\sqrt{n+1}e^{-n^{\kappa}} \le  \sqrt{M+1}e^{-M^{\kappa}},
\end{align*}
where for the second inequality we have used that the function $\sqrt{x+1}e^{-\sqrt{x}}$ is non-increasing for $x\ge \left(\frac{1}{2\kappa}\right)^{1/\kappa}.$

\end{proof}}

\begin{lemma}
\label{lem:FiniteTruncBareJumps}
Let $\kappa\in(0,1/2],$ $k\in\N$ and $M\in\N$ be such that  $M\ge \left(\frac{k}{2\kappa}\right)^{1/\kappa}+k.$ \bin{$r\ge M^{-1/2}.$} Then for $\ket{\psi}\in D(e^{N^\kappa})$ we have
\begin{align*}
    \left\|\left(a^k - (a^k)^{\le M}\right) \ket{\psi}\right\| &\le M^{k/2}\,e^{-M^{\kappa}}\| e^{N^\kappa} \ket{\psi}\|, \\\left\|\left((a^k)^\dagger - \left((a^k)^{\le M}\right)^\dagger\right) \ket{\psi}\right\| &\lesssim M^{k/2}
    \,e^{-M^{\kappa}}\| e^{N^{\kappa}} \ket{\psi}\|,
\end{align*}
Furthermore, for $\psi\in D(e^{2N^{\kappa}})$ we have
\begin{align*}    \left\|e^{N^{\kappa}}(a^k - (a^k)^{\le M}) \ket{\psi}\right\| &\le M^{k/2}\,e^{-M^{\kappa}}\| e^{2N^{\kappa}} \ket{\psi}\|,\\
     \left\|e^{N^{\kappa}}\left((a^k)^\dagger - \left((a^k)^{\le M}\right)^\dagger\right) \ket{\psi}\right\| &\lesssim M^{k/2}
    \,e^{-M^{\kappa}}\| e^{2N^{\kappa}} \ket{\psi}\|.
\end{align*}
 The constants hidden in the $\lesssim$-notation in the above inequalities depend on $k$ and $\kappa$ but on no other variables.
\end{lemma}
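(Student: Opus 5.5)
Everything is diagonal in the Fock basis, so the plan is to reduce each inequality to an operator-norm bound of the form $\|(\text{difference})\,e^{-N^\kappa}\|\le M^{k/2}e^{-M^\kappa}$ and then apply it to $e^{N^\kappa}\ket{\psi}$.

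\textbf{Identifying the differences.} First compute the differences explicitly. We have $a^k-(a^k)^{\le M}=\sum_{n\ge M+1}\sqrt{n!/(n-k)!}\,\ket{n-k}\!\bra{n}$, since for $n\le M$ the output index $n-k\le M$ is retained by $\pi_M$. Likewise
\[
(a^\dagger)^k-\big((a^k)^{\le M}\big)^\dagger=\sum_{n\ge M-k+1}\sqrt{(n+k)!/n!}\,\ket{n+k}\!\bra{n}.
\]
Both are weighted shifts, with each input basis vector sent to a distinct output basis vector. Hence the operator norm of the difference composed with $e^{-N^\kappa}$ is the supremum of the weights times $e^{-n^\kappa}$. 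Writing $\ket{\psi}=e^{-N^\kappa}(e^{N^\kappa}\ket{\psi})$ then gives the first two inequalities.

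\textbf{Annihilation-type bound.} Use $\sqrt{n!/(n-k)!}\le n^{k/2}$. The function $x\mapsto x^{k/2}e^{-x^\kappa}$ has derivative sign given by $k/2-\kappa x^\kappa$, so it is non-increasing for $x\ge (k/(2\kappa))^{1/\kappa}$. Since $M$ lies above this threshold, $\sup_{n\ge M+1} n^{k/2}e^{-n^\kappa}\le M^{k/2}e^{-M^\kappa}$, which proves the first inequality with constant $1$.

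\textbf{Creation-type bound.} Here $\sqrt{(n+k)!/n!}\le (n+k)^{k/2}$ with $n\ge M-k+1$. The assumption $M\ge (k/(2\kappa))^{1/\kappa}+k$ guarantees that $M-k+1$ is still in the monotone region. This yields the bound $(M+1)^{k/2}e^{-(M-k+1)^\kappa}$. Control $(M+1)^{k/2}\le 2^{k/2}M^{k/2}$. For the exponent, use $M^\kappa-(M-k+1)^\kappa\le (k-1)^\kappa\le k$, which follows from subadditivity of $x\mapsto x^\kappa$ for $\kappa\le 1$. The leftover factor $2^{k/2}e^{k}$ depends only on $k$, which is the $\lesssim$.

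\textbf{Weighted versions.} For the inequalities with $e^{N^\kappa}$ in front, note that $e^{N^\kappa}$ acts on the output $\ket{n-k}$ (respectively $\ket{n+k}$) by $e^{(n\mp k)^\kappa}$. We therefore need $\sup$ of the weight times $e^{(n\mp k)^\kappa}e^{-2n^\kappa}$. In the annihilation case, $e^{(n-k)^\kappa}\le e^{n^\kappa}$, so we recover the previous supremum exactly. In the creation case, $(n+k)^\kappa\le n^\kappa+k^\kappa$, so we again reduce to the previous bound up to the factor $e^{k^\kappa}$.

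The only mildly delicate step is the index shift in the creation case. The sup then starts at $M-k+1$ rather than $M+1$, and this is exactly why the hypothesis on $M$ carries the additional $+k$; the constant must be tracked to depend only on $k,\kappa$.
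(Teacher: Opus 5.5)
Your proposal is correct and follows the paper's proof essentially step by step: both write the differences as weighted shifts and reduce each inequality to a supremum of the form $\sup_n w_n e^{-n^\kappa}$ (resp.\ $\sup_n w_n e^{(n\mp k)^\kappa-2n^\kappa}$). Both then use that $x^{k/2}e^{-x^\kappa}$ and $(x+k)^{k/2}e^{-x^\kappa}$ are non-increasing for $x\ge (k/(2\kappa))^{1/\kappa}$, with the hypothesis $M\ge (k/(2\kappa))^{1/\kappa}+k$ placing $M-k+1$ in that range. The only difference is cosmetic: you bound $M^\kappa-(M-k+1)^\kappa$ and $(n+k)^\kappa-n^\kappa$ by subadditivity of $x\mapsto x^\kappa$ instead of the mean value theorem, which gives the harmless constants $e^{(k-1)^\kappa}$, $e^{k^\kappa}$ in place of the paper's $e^{\kappa(k-1)}$, $e^{\kappa k}$.
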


\begin{proof}
The first inequality follows from noting
    \begin{align*}
    \left\|(a^k- (a^k)^{\le M}) e^{-N^{\kappa}}\right\| = \sup_{n\ge \max\{M+1,k\}} \sqrt{\frac{n!}{(n-k)!}} e^{-n^\kappa}\le \sup_{n\ge \max\{M+1,k\}} n^{k/2} e^{-n^\kappa}\le M^{k/2}e^{-M^{\kappa}},
\end{align*}
where for the last inequality we have used that $M\ge \left(\frac{k}{2\kappa}\right)^{1/\kappa}\ge k$ and  that the function $ x^{k/2}e^{-x^{\kappa}}$ is non-increasing for $x\ge \left(\frac{k}{2\kappa}\right)^{1/\kappa}.$

Similarly, we see
\begin{align*}
\left\|\left((a^k)^\dagger - \left((a^k)^{\le M}\right)^\dagger\right)e^{-N^\kappa}\right\|&\le \sup_{n\ge M-k+1}\sqrt{\frac{(n+k)!}{n!}}e^{-n^{\kappa}} \le \sup_{n\ge M-k+1}(n+k)^{k/2}e^{-n^{\kappa}}   \\&\le e^{\kappa(k-1)}(M+1)^{k/2}e^{-M^{\kappa}},
\end{align*}
where for the last inequality we have used that the function $(x+k)^{k/2}e^{-x^\kappa}$ is non-increasing for $x\ge \left(\frac{k}{2\kappa}\right)^{1/\kappa}$ and furthermore that $e^{M^{\kappa}-(M-k+1)^\kappa}\le e^{\kappa(k-1)}.$

The third and fourth inequality follow similarly while noting for the fourth inequality that $e^{(n+k)^{\kappa}-n^\kappa}\le e^{k\kappa}.$
\end{proof}

Going beyond, we want to consider a finite-dimensional truncation scheme for a set of general bare jumps $\{A^\alpha\}_{\alpha\in\mathcal{A}}$ on some separable Hilbert space $\cH:$ \bin{For that let $\{N_\alpha\}_{\alpha \in\cA}$  be a set of self-adjoint operators with $N_\alpha\ge 0$ and having discrete spectrum and eigenvectors $\left(\ket{\nu^\alpha_n}\right)_{n\in \N_0}.$ For $M\in \N$ consider the spectral projection   $\pi^{\alpha}_M =\sum_{n=0}^M \kb{\nu^{\alpha}_n}$ and, assuming that $\pi^{\alpha}_M\cH\subseteq D(A^{\alpha}),$  the truncated bare jump $\left(A^{\alpha}\right)^{\le M} = \pi^{\alpha}_M A^{\alpha}\pi^{\alpha}_M.$ We also assume that we have good control on the operator norm of the truncated jump, i.e. precisely}
 For $M\in \N$ and $\alpha\in\cA$ we consider finite rank projections $\pi^{\alpha}_M$ with $\operatorname{rank}(\pi^{\alpha}_M) =M+1. $ Assuming that $\pi^{\alpha}_M\cH\subseteq D(A^{\alpha})$ we then define the truncated bare jump $\left(A^{\alpha}\right)^{\le M} := \pi^{\alpha}_M A^{\alpha}\pi^{\alpha}_M.$ We assume that we have good control on the operator norm of the truncated jump, i.e. precisely 
\begin{align}
\label{eq:NormTruncatedJump}
    \left\|(A^\alpha)^{\le M}\right\| \le q'(M),
\end{align}
for some polynomially bounded function $q'(M).$ 

    For the remaining section we assume that $A^{\alpha}$ is well-approximated by the truncated bare jump on energy constrained inputs: Precisely, to measure energy we consider a self-adjoint and positive semidefinite operator, $\NA,$ and assume that for $l=1,2$ and some $\kappa\in(0,1/2)$ we have
\begin{align}
\label{eq:AbstractFiniteTruncBareJumps}  
    \left\|e^{(l-1)\NA^{\kappa}}(A^{\alpha}-(A^{\alpha})^{\le M})e^{-l\NA^{\kappa}}\right\| \le q(M)e^{-M^{\kappa}},\bin{ \quad  \left\|e^{(k-1)\NA}((A^{\alpha})^\dagger-\left((A^{\alpha})^{\dagger}\right)^{\le M})e^{-k\NA^{\kappa}}\right\| \le q(M)e^{-M^{\kappa}}, }
\end{align}
where $q(M)$ is some polynomially bounded function. \bin{ $N_{\mathcal{A}} =\frac{1}{2} \sum_{\alpha\in\mathcal{A}} N_\alpha.$} For certain parts of the proofs of Section~\ref{sec:FinitDimGenerator} we also assume that
\begin{align}
\label{eq:NormTruncatedJump2}
\left\|e^{\NA^{\kappa}}(A^{\alpha})^{\le M} e^{-\NA^{\kappa}}\right\| \le q'(M)\quad\text{and}\quad\left\|e^{-\NA^{\kappa}}(A^{\alpha})^{\le M} e^{\NA^{\kappa}}\right\| \le q'(M)
\end{align}
for the polynomially bounded function $q'(M)$ which also appeared in \eqref{eq:NormTruncatedJump}.

To illustrate the above truncation scheme, we consider the example of an $m$-mode bosonic system on the Hilbert space $\left(L^2(\R)\right)^{\otimes m}$: In this case, we can consider multi-indices $\alpha$ taken from the set $\cA=\{(i,+),(i,-)\}_{i=1}^m$ and bare jumps $\{A^\alpha\}_{\alpha\in \cA}=\left\{a^k_i,(a_i^k)^\dagger\right\}_{i=1}^m $ for some $k\in \N.$ Furthermore, for $\alpha= (i,+)$ or $\alpha=(i,-)$, we consider the specific choice of rank-$(M+1)$ projections given by local truncations in the Fock basis, i.e., $\pi^{\alpha}_M \equiv \pi^i_M = \1_{1,\cdots,i-1}\otimes \pi_M\otimes \1_{i+1,\cdots,m} $, where $\pi_M = \sum_{n=0}^{M} \kb{n}.$ 
 In this case, we clearly have that \eqref{eq:NormTruncatedJump} is satisfied with $q'(M) = M^{k/2}.$ 
Lemma~\ref{lem:FiniteTruncBareJumps} ensures for this particular choice of bare jumps and truncations that the relation~\eqref{eq:AbstractFiniteTruncBareJumps} holds, where in this case we take $\NA = \Ntot = \sum_{i=1}^m N_i.$ Alternatively, in \cite{BeckerRouzeSalzmannSchroedinger} we consider $\NA$ to be a finite rank perturbation of $\Ntot$ and verify relation~\eqref{eq:AbstractFiniteTruncBareJumps} for this choice as well.

Next, we consider the generator $\mathcal{L}_{\sigma_E,\widehat{f},H}$ with bare jumps $\{A^{\alpha}\}_{\alpha\in\cA}$, for which Proposition~\ref{integralrep} provided an explicit integral representation. We see in the following that $\mathcal{L}_{\sigma_E,\widehat{f},H}$ is close to the generator $\mathcal{L}^{\le M}_{\sigma_E,\widehat{f},H}$ which is defined by replacing $A^\alpha$ by $\left(A^{\alpha}\right)^{\le M}.$  For this we assume that energy measured with respect $e^{2\NA^\kappa}$ can only increase subexponentially with respect to time when evolved with the unitary dynamics generated by $H$: More precisely we assume for $k=2,4$ and $\kappa\in(0,1/2)$ as above that there exists $r\ge 0$ such that for all $t\in\R$ we have\footnote{The reason for the choice of range of $\kappa,$ in particular the constraint $\kappa<1/2$, is that this enables us to prove approximation of $\mathcal{L}_{\sigma_E,\widehat{f},H}$ by $\mathcal{L}^{\le M}_{\sigma_E,\widehat{f},H}$ for all $\beta>0.$ The reason for this is that the function $e^{r|t|^{2\kappa}/2}$ needs to be dominated by the function $g(t),$ which is featured in the integral representations of the generators stated in Proposition~\ref{integralrep}, and it can be shown that $|g(t)|
\sim
e^{-\tfrac{\pi}{\beta}|t|}$ for large $t.$ }
\begin{align}\label{eq:ExpBound}
e^{-itH}e^{k\NA^\kappa} e^{itH}\le e^{r|t|^{2\kappa}}\,e^{k\NA^\kappa}.
\end{align}
Such a condition is natural and holds, e.g., for the choice $\NA=\Ntot = \sum_{i=1}^m N_i$ and the Bose-Hubbard model and for all $\kappa\in(0,1/2)$ and $r=0$ since the corresponding Hamiltonian commutes with $\Ntot$, and, as we see in Lemma~\ref{lem:mfBHEnergyLimited} below, for the mean field Bose-Hubbard Hamiltonian, which is studied in the companion paper \cite{BeckerRouzeSalzmannBose}, for all $\kappa\in[0,1/2]$ and some $r>0$ depending only on $\kappa$ and the interaction strength $\psi.$ 

Another natural choice of $\NA$ is given by $\NA = \widetilde H$ where $\widetilde H = H +h_0+1$ and $H \ge -h_0.$ For this choice \eqref{eq:ExpBound} is trivially satisfied for $r=0.$\footnote{In this case, as $r=0,$ we could even consider $\kappa\in(0,1]$.} On the other hand, conditions~\eqref{eq:AbstractFiniteTruncBareJumps} and \eqref{eq:NormTruncatedJump2} are non-trivial and need to be verified explicitly in this case.

We now proceed to show that under conditions \eqref{eq:AbstractFiniteTruncBareJumps} and \eqref{eq:ExpBound}, the generators $\mathcal{L}^{\le M}_{\sigma_E,\widehat{f},H}$ and $\mathcal{L}_{\sigma_E,\widehat{f},H}$ are close. For this, we first prove the following technical Lemma. In what follows, we denote for some function $f\in\mathcal{S}(\mathbb{R})$
\begin{align*}L^\alpha:=\int f(s)e^{isH}A^\alpha e^{-isH}ds,\qquad 
\bin{L^{\alpha}^d:=\int f(s)e^{isH}a_i^\dagger e^{-isH}ds}
L^{\alpha,\le M}&:=\int f(s)e^{isH}(A^{\alpha})^{\le M} e^{-isH}ds.\bin{\qquad L_{i}^{\le M,d}:=\int f(s)e^{isH}(a^{\le M})^\dagger_i e^{-isH}ds.}
\end{align*}

\begin{lemma}\label{lem:integralrepapproxTrunc}
Let $f\in\mathcal{S}(\mathbb{R})$ and set of bare jumps $\{A^\alpha\}_{\alpha\in\cA}$ and assume  \Cref{eq:condAalphas} and, for some self-adjoint and positive semidefinite $\NA$ and $\kappa\in(0,1/2),$  \eqref{eq:AbstractFiniteTruncBareJumps}  are satisfied and further that $e^{\NA^\kappa}A^{\alpha}e^{-2\NA^{\kappa}},$ $e^{-\NA^{\kappa}}A^{\alpha}$ and $A^{\alpha}e^{-\NA^{\kappa}}$ are bounded.
Furthermore, assume that \eqref{eq:ExpBound} holds for $r\ge 0$ and $\kappa\in(0,1/2)$ as above and that
\begin{align}
\label{eq:C_f}
    C_f := \int_{-\infty}^{\infty} e^{r|t|^{2\kappa}} |f(t)|dt < \infty.
\end{align}
Then for $M\in \N$ we have
\begin{align*}
&\|(L^{\alpha}\!-\!L^{\alpha,\le M})e^{-\NA^\kappa}\|\!\le\! C_f\,q(M) e^{-M^{\kappa}},\\
\bin{&\|(L^{d}_{i}\!-\!L^{\le M,d}_{i})e^{-\NA^{\kappa}}\|\!\lesssim  C_f\,\sqrt{M}e^{-\sqrt{M}},\, \,\\}
&\|((L^\alpha)^\dagger L^{\alpha}-\left(L^{\alpha,\le M}\right)^\dagger L^{\alpha,\le M})e^{-2\NA^{\kappa}}\|\le C^2_f C_{A}\,q(M)e^{-M^\kappa},\, \\
\end{align*}
where we denoted $C_A := \max_{\alpha\in\cA}\left(\|e^{\NA^{\kappa}}A^{\alpha}e^{-2\NA^{\kappa}}\|+\|A^{\alpha}e^{-\NA^{\kappa}}\|+\|e^{-\NA^{\kappa}}A^{\alpha}\|\right)<\infty.$
\end{lemma}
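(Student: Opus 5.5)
The plan is to push the truncation error through the Heisenberg conjugation $e^{isH}(\cdot)e^{-isH}$, using \eqref{eq:ExpBound} to move the weight $e^{-\NA^\kappa}$ past the unitary at the cost of $e^{r|s|^{2\kappa}}$, which $C_f$ then absorbs.

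\emph{First bound.} By definition,
\[
(L^\alpha-L^{\alpha,\le M})e^{-\NA^\kappa}=\int f(s)\,e^{isH}\big(A^\alpha-(A^\alpha)^{\le M}\big)e^{-isH}e^{-\NA^\kappa}\,ds ,
\]
which is legitimate on a suitable dense set by the integral representation of \Cref{integralrep}, and then extends by boundedness. Write $e^{-isH}e^{-\NA^\kappa}=\big(e^{-isH}e^{-\NA^\kappa}e^{isH}\big)e^{-isH}$. The operator inequality \eqref{eq:ExpBound} with $k=2$ gives $\|e^{\NA^\kappa}e^{isH}\psi\|^2\le e^{r|s|^{2\kappa}}\|e^{\NA^\kappa}\psi\|^2$ (after relabelling $t\to -s$). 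Hence
\[
\big\|e^{\NA^\kappa}e^{-isH}e^{-\NA^\kappa}\big\|\le e^{r|s|^{2\kappa}/2}\le e^{r|s|^{2\kappa}}.
\]
Inserting $e^{-\NA^\kappa}e^{\NA^\kappa}$ and applying \eqref{eq:AbstractFiniteTruncBareJumps} with $l=1$, the integrand has norm at most
\[
|f(s)|\,q(M)e^{-M^\kappa}e^{r|s|^{2\kappa}}.
\]
Integrating over $s$ gives $C_f\,q(M)e^{-M^\kappa}$.

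\emph{Second bound.} Telescope:
\[
(L^\alpha)^\dagger L^\alpha-(L^{\alpha,\le M})^\dagger L^{\alpha,\le M}=(L^\alpha)^\dagger\big(L^\alpha-L^{\alpha,\le M}\big)+\big(L^\alpha-L^{\alpha,\le M}\big)^\dagger L^{\alpha,\le M},
\]
and multiply both pieces on the right by $e^{-2\NA^\kappa}$.

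\emph{First piece.} Write it as $\big[(L^\alpha)^\dagger e^{-\NA^\kappa}\big]\cdot\big[e^{\NA^\kappa}(L^\alpha-L^{\alpha,\le M})e^{-2\NA^\kappa}\big]$.
\begin{itemize}
\item The second factor is handled exactly as the first bound, now with $l=2$ in \eqref{eq:AbstractFiniteTruncBareJumps}. We conjugate $e^{\NA^\kappa}$ and $e^{-2\NA^\kappa}$ through the unitaries using \eqref{eq:ExpBound}: $k=2$ for the left factor and $k=4$ for the right one. This gives the bound $C_f\,q(M)e^{-M^\kappa}$, possibly with a larger power of $e^{r|s|^{2\kappa}}$ that must still be dominated by $C_f$.
\item The first factor is bounded by $C_f\,\|e^{-\NA^\kappa}A^\alpha\|$, by the same conjugation argument and using that the jump set is closed under adjoints.
\end{itemize}

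\emph{Second piece.} Write it as $\big[(L^\alpha-L^{\alpha,\le M})^\dagger e^{\NA^\kappa}\big]\cdot\big[e^{-\NA^\kappa}L^{\alpha,\le M}e^{-2\NA^\kappa}\big]$. Here there are two routes.
\begin{itemize}
\item One route uses $L^{\alpha,\le M}=L^\alpha-(L^\alpha-L^{\alpha,\le M})$ together with the bounded quantities $\|e^{\NA^\kappa}A^\alpha e^{-2\NA^\kappa}\|$ and $\|A^\alpha e^{-\NA^\kappa}\|$ collected in $C_A$.
\item Alternatively, factor the piece as $\big[(L^\alpha-L^{\alpha,\le M})e^{-\NA^\kappa}\big]^\dagger\cdot\big[e^{\NA^\kappa}L^{\alpha,\le M}e^{-2\NA^\kappa}\big]$. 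The first factor is exactly the first bound. The second factor is controlled by $C_fC_A$ plus the $l=2$ truncation error.
\end{itemize}
Summing the two pieces yields the bound $C_f^2C_A\,q(M)e^{-M^\kappa}$, up to the absolute constants the statement absorbs.

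\emph{Main obstacle.} The delicate point is the bookkeeping of exponential weights on each side of every factor. Each conjugated weight must be one that \eqref{eq:ExpBound} actually controls ($e^{2\NA^\kappa}$ or $e^{4\NA^\kappa}$ as a quadratic form), and the accumulated growth factors $e^{cr|s|^{2\kappa}}$ must stay integrable against $|f|$ within $C_f$. I would first carefully derive the operator-norm consequences $\|e^{l\NA^\kappa}e^{isH}e^{-l\NA^\kappa}\|\le e^{r|s|^{2\kappa}/2}$ for $l=1,2$, and then apply them uniformly to both terms.
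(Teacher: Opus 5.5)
Your first bound and the first piece of the second bound follow the paper's argument. You use the same telescoping identity, the same factorisation $\big[(L^\alpha)^\dagger e^{-\NA^\kappa}\big]\big[e^{\NA^\kappa}(L^\alpha-L^{\alpha,\le M})e^{-2\NA^\kappa}\big]$, and \eqref{eq:ExpBound} with $k=2$ and $k=4$. These give two factors $e^{r|s|^{2\kappa}/2}$ whose product is exactly the weight in $C_f$, so no ``larger power'' arises. The gap is in the second piece.

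Your ``alternative'' factorisation is not an identity, because
\[
\big[(L^\alpha-L^{\alpha,\le M})e^{-\NA^\kappa}\big]^\dagger=e^{-\NA^\kappa}(L^\alpha-L^{\alpha,\le M})^\dagger .
\]
So what you wrote is $e^{-\NA^\kappa}(L^\alpha-L^{\alpha,\le M})^\dagger e^{\NA^\kappa}L^{\alpha,\le M}e^{-2\NA^\kappa}$. The correct split is
\[
(L^\alpha-L^{\alpha,\le M})^\dagger L^{\alpha,\le M}e^{-2\NA^\kappa}=\big[(L^\alpha-L^{\alpha,\le M})^\dagger e^{-\NA^\kappa}\big]\big[e^{\NA^\kappa}L^{\alpha,\le M}e^{-2\NA^\kappa}\big].
\]
The first factor here has norm $\|e^{-\NA^\kappa}(L^\alpha-L^{\alpha,\le M})\|$. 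The weight now sits on the left of $L^\alpha-L^{\alpha,\le M}$, so your first bound, and \eqref{eq:AbstractFiniteTruncBareJumps} as stated, say nothing about it. Your other route fails for a similar reason: $(L^\alpha-L^{\alpha,\le M})^\dagger e^{\NA^\kappa}$ is unbounded in general.

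What you actually need is to rerun the first-bound argument with integrand $(A^\alpha)^\dagger-((A^\alpha)^{\le M})^\dagger$. That is, you need \eqref{eq:AbstractFiniteTruncBareJumps} with $l=1$ for the adjoint jump. This is where closure of $\{A^\alpha\}$ under adjoints really enters. It is not needed in your first piece, since $\|(A^\alpha)^\dagger e^{-\NA^\kappa}\|=\|e^{-\NA^\kappa}A^\alpha\|$ directly. You also need the truncation to commute with adjoints, $((A^\alpha)^{\le M})^\dagger=((A^\alpha)^\dagger)^{\le M}$. This holds, for example, if $\pi^{\alpha'}_M=\pi^\alpha_M$ whenever $A^{\alpha'}=(A^\alpha)^\dagger$, as in the Fock-basis example. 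The paper uses this implicitly when it invokes \eqref{eq:NLLeqTrunc} for the second term.

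Finally, you control $e^{\NA^\kappa}L^{\alpha,\le M}e^{-2\NA^\kappa}$ by writing $L^{\alpha,\le M}=L^\alpha-(L^\alpha-L^{\alpha,\le M})$. This is more careful than the paper, which bounds $\|e^{\NA^\kappa}e^{isH}(A^\alpha)^{\le M}e^{-2\NA^\kappa}\|$ directly by $\|e^{\NA^\kappa}A^\alpha e^{-2\NA^\kappa}\|$; that step is valid, for example, when $\pi^\alpha_M$ commutes with $\NA$. However, your route yields $C_f^2q(M)e^{-M^\kappa}\big(C_A+q(M)e^{-M^\kappa}\big)$ rather than the stated $C_f^2C_Aq(M)e^{-M^\kappa}$. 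The extra term is bounded uniformly in $M$, but it is not an absolute constant.
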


\begin{proof}

\bin{First of all note that for $i=1,\cdots, m$ we have that $N_i$ and $\NA$ commute from which we can easily see that
\begin{align}
\label{eq:NtotdomNi}
\left\|e^{\sqrt{N_i}}e^{-\NA^{\kappa}}\right\| = 1.
\end{align}}
 We use \eqref{eq:AbstractFiniteTruncBareJumps} together with \eqref{eq:ExpBound} to see for $\ket{\psi}\in D\left(e^{\NA^{\kappa}}\right)$ that we have
\begin{align}
\label{eq:NLLeqTrunc}
\nn\|(L^{\alpha}-L^{\alpha,\le M}_{i})\ket{\psi}\|
 \nn&\le \int |f(s)|\|(A^{\alpha}-(A^{\alpha})^{\le M})e^{-isH}\ket{\psi}\|\,ds\\&\nn\le  \int |f(s)| \left\|(A^{\alpha}-(A^{\alpha})^{\le M})e^{-\NA^{\kappa}}\right\|\left\|e^{\NA^{\kappa}}e^{-isH}\ket{\psi}\right\| ds\\&\nn\le  \int |f(s)| e^{\tfrac{r|s|^{2\kappa}}{2}}\left\|(A^{\alpha}-(A^{\alpha})^{\le M})e^{-\NA^{\kappa}}\right\|\left\|e^{\NA^{\kappa}}\ket{\psi}\right\| ds\\&\le C_f\,q(M)e^{-M^{\kappa}}\left\|e^{\NA^{\kappa}}\ket{\psi}\right\| ,
\end{align}
which shows the first inequality in Lemma~\ref{lem:integralrepapproxTrunc}.

\bin{First, we use the estimates derived in Lemma \ref{lem.displaceblocks} to see for  $\ket{\psi}\in D(\NA+I)$, denoting $N_i:=a_i^\dagger a_i$, 
\begin{align*}
\|(\NA+I)\ket{\psi}\|^2&=\bra{\psi}\left(\sum_{j=1}^m N_j+I\right)^2\ket{\psi}\\
&\le m\sum_{j\in[m]}\bra{\psi}(N_j+I)^2\ket{\psi}\\
&=m\sum_{j\in[m]}\|(N_j+I)\ket{\psi}\|^2.
\end{align*}
Applying this bound to $\ket{\psi}=(a_i-(a^D_\alpha)_i)\ket{\varphi}$, we get
\begin{align*}
&\|(\NA+I)(a_i-(a^D_\alpha)_i)\ket{\varphi}\|^2\\
&\quad \le m\,\sum_{j\ne i}4|\alpha|^2\|(N_j+I)(N_i+I)\ket{\varphi}\|^2\\
&\quad + 25m |\alpha|^2\|(N_i+I)^2\ket{\varphi}\|^2\\
&\quad \le 25m^2|\alpha|^2\sum_{j\in[m]}\,\|(N_j+I)^2\ket{\varphi}\|^2\\
&\quad \le 25m^3|\alpha|^2\|(\NA+I)^2\ket{\varphi}\|^2.
\end{align*}

Thus, we have derived that
\begin{align}
\|(\NA\!+\!1)(L_{i}\!-\!L_{\alpha,i})\!\ket{\varphi}\!\|\!\le\! 5C_{2,f}m^{\frac{3}{2}}|\alpha|\|(\NA+I)^2\!\ket{\varphi}\!\|.
\end{align}}

\bin{Similarly, denoting $L^{\alpha}^d:=\int f(s)e^{isH}a_i^\dagger e^{-isH}ds$ and $L_{\alpha,i}^d:=\int f(s)e^{isH}(a_{\alpha}^{Dd})_i e^{-isH}ds$, we get
\begin{align}\label{eq:NLLeqd}
\|(\NA\!+\!1)(L^d_{i}\!-\!L^d_{\alpha,i})\!\ket{\varphi}\!\|\!\le\! 5C_{2,f}m^{\frac{3}{2}}|\alpha|\|(\NA+I)^2\!\ket{\varphi}\!\|.
\end{align}}
Next, we control squares of the $L$ operators for $\ket{\varphi}\in D(e^{2\NA^{\kappa}})$:
\begin{align}\label{eq:twotermstoboundTrunc}
\|((L^\alpha)^\dagger L^{\alpha}-\left(L^{\alpha,\le M}\right)^\dagger L^{\alpha,\le M})\ket{\varphi}\|&\le \|(L^\alpha)^\dagger (L^{\alpha}-L^{\alpha,\le M})\ket{\varphi}\|+\|((L^\alpha)^\dagger-\left(L^{\alpha,\le M}\right)^\dagger)L^{\alpha,\le M}\ket{\varphi}\|.
\end{align}
The first term above can be controlled by realizing that by assumption we have for $\ket{\psi}\in D(e^{\NA^{\kappa}})$ that $\ket{\psi}\in D((L^\alpha)^\dagger)$ with
\begin{align}\label{eq:LiboundTrunc}
\|(L^\alpha)^\dagger\ket{\psi}\|&\le \int |f(s)|\,\|(A^{\alpha})^\dagger e^{-isH}\ket{\psi}\|\,ds\le \int |f(s)|\,\|(A^{\alpha})^\dagger e^{-\NA^{\kappa}}\|\| e^{\NA^{\kappa}}e^{-isH}\ket{\psi}\|\,ds\nn\\&
\le C_f \|e^{-\NA^{\kappa}}A^{\alpha}\| \|e^{\NA^{\kappa}}\ket{\psi}\|. 
\end{align}
Hence, combining with an analogous argument to \eqref{eq:NLLeqTrunc}, we get
for $\ket{\phi}\in D(e^{2\NA^{\kappa}})$
\begin{align}\label{eq:firstboundLLTrunc}
\|(L^\alpha)^\dagger\, (L^{\alpha}-L^{\alpha,\le M})\ket{\varphi}\|\!\le \! C^2_f\,q(M)e^{-M^\kappa}\left\|e^{-\NA^{\kappa}}A^{\alpha}\right\|\|e^{2\NA^{\kappa}}\ket{\varphi}\!\|.
\end{align}
To control the second term in \eqref{eq:twotermstoboundTrunc}, we note that for $\ket{\phi}\in D(e^{2\NA^{\kappa}})$ we have
\begin{align*}
    \left\|e^{\NA^{\kappa}}L^{\alpha,\le M} \ket{\phi}\right\| &\le  \int |f(s)|e^{\frac{r|s|^{2\kappa}}{2}} \left\|e^{\NA^{\kappa}}e^{isH}(A^{\alpha})^{\le M}e^{-2\NA^{\kappa}}\right\|ds\,\left\|e^{2\NA^{\kappa}}\ket{\phi}\right\| \\ &\le \int |f(s)| e^{r|s|^{2\kappa}} ds \,\left\|e^{\NA^{\kappa}}A^{\alpha}e^{-2\NA^{\kappa}}\right\|\left\|e^{2\NA^{\kappa}}\ket{\phi}\right\|= C_f  \left\|e^{\NA^{\kappa}}A^{\alpha}e^{-2\NA^{\kappa}}\right\|\left\|e^{2\NA^{\kappa}}\ket{\phi}\right\|,
\end{align*}
\bin{where in the second inequality we have used that $\|e^{\NA^{\kappa}}a^{\le M}_ie^{-2\NA^{\kappa}}\|\le 1$ and that by \eqref{eq:ExpBound} we have $\|e^{\NA^{\kappa}}e^{isH}e^{-\NA^{\kappa}}\| \le e^{r|s|/2}.$ } From this 
and \eqref{eq:NLLeqTrunc} we bound the second term in \eqref{eq:twotermstoboundTrunc} as 
\begin{align*}
    \|((L^\alpha)^\dagger-\left(L^{\alpha,\le M}\right)^\dagger)L^{\alpha,\le M}\ket{\varphi}\| \le C^2_f\, q(M)e^{-M^{\kappa}} \left\|e^{\NA^{\kappa}}A^{\alpha}e^{-2\NA^{\kappa}}\right\|\left\|e^{2\NA^{\kappa}}\ket{\varphi}\right\|,
\end{align*}
which finishes the proof.

\end{proof}

Next, we show that the generators $\mathcal{L}^{\le M}_{\sigma_E,\widehat{f},H}$ and $\mathcal{L}_{\sigma_E,\widehat{f},H}$ are close when evaluated on states $\rho$ satisfying the superpolynomial energy constraint 
\begin{align*}
    \Tr\left(e^{4\NA^{\kappa}}\rho\right) <\infty.
\end{align*}

\begin{proposition}\label{prop:generatorboundTrunc}
Let $f\in\mathcal{S}(\mathbb{R})$ and set of bare jumps $\{A^\alpha\}_{\alpha\in\cA}$ and assume  \Cref{eq:condAalphas} and, for some self-adjoint and positive semidefinite $\NA$ and $\kappa\in(0,1/2),$  \eqref{eq:AbstractFiniteTruncBareJumps}  are satisfied and further that $e^{\NA^\kappa}A^{\alpha}e^{-2\NA^{\kappa}},$ $e^{-\NA^{\kappa}}A^{\alpha}$ and $A^{\alpha}e^{-\NA^{\kappa}}$ are bounded. Furthermore, assume that \eqref{eq:ExpBound} holds for $r\ge 0$ and $\kappa$ as above and that
\begin{align*}
    C_f := \int_{-\infty}^{\infty} e^{r|t|^{2\kappa}}  |f(t)|dt < \infty.
\end{align*}
Then for all $\beta,\,\sigma_E>0,$ $M\in\N$ and states $\rho$ satisfying $E_k:=\Tr(e^{k\NA^{\kappa}}\rho)<\infty$, $k\in\{2,4\}$, we have
\begin{align}
\left\|(\mathcal{L}_{\sigma_E,\widehat{f},H}-\mathcal{L}^{\le M}_{\sigma_E,\widehat{f},H})(\rho)\right\|_1\le C_{\beta,r,\kappa}C^2_f\, C_A \,|\cA| \,q(M)e^{-M^{\kappa}}\left(\sigma_E
\exp\!\left(\tfrac{\pi^2}{8\beta^2\sigma_E^2}\right)\sqrt{E_4}+\exp\left(r+\tfrac{r^2}{8\sigma^2_E}\right)\,E_2\right),
\end{align}
for some constant $C_{\beta,r,\kappa}\ge 0$ and with $C_A$ being defined in Lemma~\ref{lem:integralrepapproxTrunc}.
\end{proposition}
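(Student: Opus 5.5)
We use the decomposition of the Corollary after Proposition~\ref{integralrep}: $\mathcal{L}_{\sigma_E,\widehat f,H}(\rho)=\Phi_{\sigma_E,\widehat f,H}(\rho)+G_{\sigma_E}\cdot\rho+\rho\cdot G_{\sigma_E}^\dagger$. The truncated generator $\mathcal{L}^{\le M}_{\sigma_E,\widehat f,H}$ has the same form with $L^\alpha$ replaced by $L^{\alpha,\le M}$. We then estimate the difference of the drift parts and of the CP parts separately in trace norm, using Lemma~\ref{lem:integralrepapproxTrunc} as the basic input. The two terms in the final bound should correspond to these two pieces. The $\sigma_E e^{\pi^2/(8\beta^2\sigma_E^2)}\sqrt{E_4}$ term comes from the drift, since $\|g\|_{L^1}$-type weights against $e^{r|t|^{2\kappa}}$ produce this Gaussian-smoothed Fermi factor. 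The $e^{r+r^2/(8\sigma_E^2)}E_2$ term comes from the CP part with its Gaussian weight $e^{-2\sigma_E^2s^2}$.

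\emph{Drift part.} We write
\[
(G_{\sigma_E}-G^{\le M}_{\sigma_E})\rho=-\sum_\alpha\int g(t)\,e^{itH}D^\alpha e^{-itH}\rho\,dt,
\]
where $D^\alpha:=(L^\alpha)^\dagger L^\alpha-(L^{\alpha,\le M})^\dagger L^{\alpha,\le M}$. We factor $\rho=\rho^{1/2}\rho^{1/2}$ and use $\|X\rho^{1/2}\|_2\|\rho^{1/2}\|_2$. This gives
\[
\|e^{itH}D^\alpha e^{-itH}\rho\|_1\le\|D^\alpha e^{-2\NA^\kappa}\|\,\|e^{2\NA^\kappa}e^{-itH}\rho^{1/2}\|_2 .
\]
By \eqref{eq:ExpBound} with $k=4$, $\|e^{2\NA^\kappa}e^{-itH}\rho^{1/2}\|_2^2=\Tr(e^{itH}e^{4\NA^\kappa}e^{-itH}\rho)\le e^{r|t|^{2\kappa}}E_4$. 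Lemma~\ref{lem:integralrepapproxTrunc} then yields the bound $C_f^2C_Aq(M)e^{-M^\kappa}\sqrt{E_4}\int|g(t)|e^{r|t|^{2\kappa}/2}dt$.

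It remains to show that this last integral is finite and bounded by $C_{\beta,r,\kappa}\,\sigma_E e^{\pi^2/(8\beta^2\sigma_E^2)}$. I expect this to be the main technical obstacle. The plan is to write $g$ as the convolution of the Gaussian $\frac{\sigma_E\sqrt2}{\sqrt\pi}e^{-2\sigma_E^2t^2}$ with the Fourier transform of the Fermi function $(1+e^{\beta\nu/2})^{-1}$. The latter is a principal-value distribution, $\tfrac12\delta$ plus a $\operatorname{csch}(2\pi t/\beta)$-type kernel. Alternatively, one can shift the $\nu$-contour to $\operatorname{Im}\nu=\pm 2\pi/\beta\cdot\theta$ to get exponential decay $e^{-c|t|/\beta}$ at the cost of $e^{\theta^2\pi^2\cdots/(\beta^2\sigma_E^2)}$ from the Gaussian. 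Since $2\kappa<1$, $e^{r|t|^{2\kappa}/2}$ is dominated by the exponential decay, and the constant depends only on $\beta,r,\kappa$. The term $\rho\cdot G^\dagger$ is handled identically, taking adjoints.

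\emph{CP part.} We use the integral formula for $\Phi_{\sigma_E,\widehat f,H}$ and telescope:
\[
X_s\rho X_s^\dagger-X_s^{\le M}\rho (X_s^{\le M})^\dagger=(X_s-X_s^{\le M})\rho X_s^\dagger+X_s^{\le M}\rho(X_s-X_s^{\le M})^\dagger .
\]
Each term is bounded via Hölder by $\|(L^\alpha-L^{\alpha,\le M})e^{-\NA^\kappa}\|\,\|L^{\alpha(,\le M)}e^{-\NA^\kappa}\|\,\|e^{\NA^\kappa}e^{-isH}\rho^{1/2}\|_2^2$, using the unitaries $e^{\pm isH}$. The factor $\|L^\alpha e^{-\NA^\kappa}\|\lesssim C_fC_A$ follows as in \eqref{eq:LiboundTrunc}, with the analogous bound for the truncated version. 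Applying \eqref{eq:ExpBound} with $k=2$ bounds the last factor by $e^{r|s|^{2\kappa}}E_2$. The first factor is controlled by Lemma~\ref{lem:integralrepapproxTrunc}. Finally, since $|s|^{2\kappa}\le 1+|s|$, we have
\[
\sigma_E\sqrt{2/\pi}\int e^{-2\sigma_E^2s^2}e^{r(1+|s|)}ds\lesssim e^{r+r^2/(8\sigma_E^2)},
\]
which gives the second term. Summing over $\alpha$ contributes the factor $|\cA|$.
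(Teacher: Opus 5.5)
Your proposal is correct and follows essentially the same route as the paper. The drift difference is bounded via Lemma~\ref{lem:integralrepapproxTrunc}, \eqref{eq:ExpBound} with $k=4$, and the contour-shift estimate $|g(t)|\lesssim\sigma_E e^{\pi^2/(8\beta^2\sigma_E^2)}e^{-\pi|t|/\beta}$, which is your alternative with $\theta=1/2$, i.e.\ shifting to $\operatorname{Im}\nu=\mp\pi/\beta$. The CP difference is bounded by the same telescoping/H\"older argument with \eqref{eq:ExpBound} for $k=2$ and the Gaussian integral bound $e^{r+r^2/(8\sigma_E^2)}$.

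The one step you leave implicit, as does the paper, is the bound on $\|L^{\alpha,\le M}e^{-\NA^\kappa}\|$. It follows from the triangle inequality and \eqref{eq:AbstractFiniteTruncBareJumps} with $l=1$, namely $\|(A^\alpha)^{\le M}e^{-\NA^\kappa}\|\le C_A+q(M)e^{-M^\kappa}$, and so only affects constants.
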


\begin{proof}

We make use of the integral representation of Proposition \ref{integralrep}. First, we denote the operator 
\begin{align*}
G^{\le M}_{\sigma_E}\!:=\!-\sum_{\alpha\in\cA}\!\int_{-\infty}^\infty\! \!\!\!\!g(t)e^{itH} (L^{\alpha,\le M})^\dagger L^{\alpha,\le M}e^{-itH}\!dt.
\end{align*}
Denoting $\Delta (L^{\alpha})^2:=(L^{\alpha,\le M})^\dagger L^{\alpha,\le M}-(L^{\alpha})^\dagger L^{\alpha},$ we directly obtain from Lemma \ref{lem:integralrepapproxTrunc} that for any state $\rho$ for which  $\Tr(e^{4\NA^{\kappa}}\rho)\le E_4<\infty$ we have 
\begin{align*}
\|(G^{\le M}_{\sigma_E}-G_{\sigma_E})\rho\|_1
  &\le \sum_{\alpha\in\cA} \int |g(t)|\, \|\Delta (L^{\alpha})^2 e^{itH}\rho\|_1 \le \sum_{\alpha\in\cA}\|\Delta (L^{\alpha})^2 e^{-2\NA^{\kappa}}\|\int |g(t)|\left\|e^{2\NA^{\kappa}}e^{-itH}\rho\right\|_1dt\\
&\le C^2_fC_A\,|\cA| q(M)e^{-M^{\kappa}}\sqrt{E_4}\int |g(t)|e^{r|t|^{2\kappa}/2}dt\\& \le 
C_{\beta,r,\kappa}C^2_fC_A\,|\cA|\,\sigma_E
\exp\!\left(\tfrac{\pi^2}{8\beta^2\sigma_E^2}\right)q(M)e^{-M^{\kappa}}\sqrt{E_4},
\end{align*}
where in the last inequality we have used\footnote{Recall the definition $g(t)=\frac{1}{2\pi}\int_{-\infty}^\infty \frac{e^{-\nu^2/8\sigma_E^2}}{1+e^{\beta\nu/2}} e^{-i\nu t}d\nu$, and note $\tfrac{e^{-\nu^2/8\sigma_E^2}}{1+e^{\beta\nu/2}}$ is analytic for $|\Im{\nu}| <\frac{2\pi}{\beta}$. Hence, by shifting the Contour of integration in the definition of $g(t)$ to $\nu\mapsto \nu -i\tfrac{\pi}{\beta}$ for $t>0$ and to $\nu\mapsto \nu +i\tfrac{\pi}{\beta}$ for $t>0$ a straigtforward calculation yields $|g(t)|
\le
\frac{\sigma_E}{\sqrt{2\pi}}
\exp\!\left(\frac{\pi^2}{8\beta^2\sigma_E^2}\right)
e^{-\tfrac{\pi}{\beta}|t|}.$ From this and $\kappa\in(0,1/2),$ another straightforward calculation yields \eqref{eq:gkappaIntegralBound}. Note that the constant $C_{\beta,r,\kappa}$ is finite for all $\beta,r\ge 0$ and scales exponentially for large $\beta.$}
\begin{align}
\label{eq:gkappaIntegralBound}
\int_{\mathbb{R}} |g(t)|\, e^{r |t|^{2\kappa}/2}\, dt
\;\le C_{\beta,r,\kappa} \,
\sigma_E\,\,
\exp\!\left(\frac{\pi^2}{8\beta^2\sigma_E^2}\right),\,
\bin{\exp\!\left[
\frac r2(1-2\kappa)\left(\frac{2\kappa\, r\,\beta}{\pi}\right)^{\frac{2\kappa}{1-2\kappa}}
\right]}
\end{align}
for some finite and $\beta,r$ and $\kappa$ dependent constant $C_{\beta,r,\kappa}\ge 0.$
Similarly, we obtain that 
\begin{align*}
    &\|\rho(G^{\dagger}_{\sigma_E}-\left(G^{\le M}_{\sigma_E}\right)^\dagger)\|_1\le C_{\beta,r,\kappa}C^2_fC_A\,|\cA|\,\sigma_E
\exp\!\left(\tfrac{\pi^2}{8\beta^2\sigma_E^2}\right)q(M)e^{-M^{\kappa}}\sqrt{E_4}.\end{align*}
Next, for 
\begin{align}
 X^\alpha_s:=e^{isH}L^{\alpha} e^{-isH}
,\quad  X^{\alpha,\le M}_s:=e^{isH}L^{\alpha,\le M} e^{-isH}\label{eq:XsaiTrunc}
\end{align}
we get for any $s\in\mathbb{R}$, using an analogous argument to \eqref{eq:LiboundTrunc} and the fact that $|t+s|^{2\kappa}\le |t|^{2\kappa}+|s|^{2\kappa}$ that 
\begin{align*}
\|X^{\alpha}_s\sqrt{\rho}\|_2 &\le C_A\sqrt{E_2} \int |f(t)| e^{r|t+s|^{2\kappa}/2}dt \le C_A C_f\sqrt{E_2} \,e^{r|s|^{2\kappa}/2}\\
    \|X^{\alpha,\le M}_s\sqrt{\rho}\|_2 &\le C_A\sqrt{E_2} \int |f(t)| e^{r|t+s|^{2\kappa}/2}dt\le C_AC_f\sqrt{E_2}\,e^{r|s|^{2\kappa}/2}
\end{align*}
and furthermore, by an analogous argument to \eqref{eq:NLLeqTrunc} that 
\begin{align*}
 &\left\|(X^{\alpha}_s-X^{\alpha,\le M}_s)\sqrt{\rho}\right\|_2\le C_f\,q(M)e^{-M^{\kappa}} \sqrt{E_2} \,e^{r|s|/2}
\end{align*}
which in total gives
\begin{align*}
&\left\|X^{\alpha}_s\rho (X^{\alpha})_s^\dagger-X^{\alpha,\le M}_s\rho(X^{\alpha,\le M}_s)^\dagger\right\|_1
\\&\qquad\le \|((X^{\alpha}_s-X^{\alpha,\le M}_s)\sqrt{\rho}\|_2\|\sqrt{\rho}(X^{\alpha}_s)^\dagger\|_2+ \|X^{\alpha,\le M}_s\sqrt{\rho}\|_2\|\sqrt{\rho}((X^{\alpha}_s)^\dagger-(X^{\alpha,\le M}_{s})^\dagger)\|_2
\\&\qquad\le 2C_AC^2_f\,E_2 e^{r|s|}q(M)e^{-M^{\kappa}}\,
\end{align*}
\bin{Now, by definition, the $X$ operators above correspond to operators $L$ with a shifted function $f_s(u)=f(u-s)$. Hence, the bounds derived in Lemma \ref{integralrepapprox} still hold, up to replacing all integrals of the form $\int |f(u)| C(|u|)du$ by
\begin{align*}
\sup_{s\in\mathbb{R}}\int |f(u-s)|C(|u|)du.
\end{align*}
Thus, we get a constant $C_f'$ such that, given $\Tr(\rho (\NA+I)^2)\le E_2$
\begin{align*}
&\|(X_{i})_s\rho (X_{i})_s^\dagger-(X_{\alpha,i})_s\rho(X_{\alpha,i})^\dagger_s\|_1\\
& \qquad\le C_f'|\alpha|\,\|(\NA+I)^2\sqrt{\rho}\|_2\|\sqrt{\rho}(\NA+I)\|_2\\
&\qquad \le C_f'|\alpha| \,\sqrt{E_2\cdot E_4},
\end{align*}
where we also used \eqref{Liboundeq} in the bound above. A similar bound can also be derived for the operators $X_{i}^d$ and $X_{\alpha,i}^d$.}
Therefore, denoting the map
\begin{align}\label{phialphasigmaEmapTrunc}
\Phi^{\le M}_{\sigma_E,\widehat f,H}(\rho):=\sigma_E\sqrt{\frac{2}{\pi}}\sum_{\alpha\in\cA}\int_{\mathbb{R}}e^{-2\sigma_E^2 s^2}X^{\alpha,\le M}_s \rho (X^{\alpha,\le M}_s)^\dagger ds,
\end{align}
 we get that
\begin{align*}
\|(\Phi^{\le M}_{\sigma_E,\widehat f,H}-\Phi_{\sigma_E,\widehat{f},H})(\rho)\|_1\le 4 |\cA| C_AC^2_f\,E_2 q(M)e^{-M^{\kappa}}\exp\left(\tfrac{r^2}{8\sigma^2_E}\right), 
\end{align*}
where we used that  
\begin{align*}
  \sqrt{\frac{2}{\pi}}\,\sigma_E\int e^{-2\sigma^2_Es^2} e^{r|s|^{2\kappa}}ds\le e^{r} \sqrt{\frac{2}{\pi}}\,\sigma_E\int e^{-2\sigma^2_Es^2} e^{r|s|}ds\le 2\exp\left(r+\tfrac{r^2}{8\sigma^2_E}\right).
\end{align*}

\end{proof}
To end this section, we show that the mean field Bose Hubbard Hamiltonian, $$H_{\text{mfBH}}= -\mu N +\tfrac{U}{2} \tfrac{N(N-1)}{2} -\overline{\psi} a - \psi a^\dagger  + \vert \psi \vert^2,$$ satisfies the assumption \eqref{eq:ExpBound}. More generally, we consider one-mode Hamiltonians of the form
\begin{align*}
    H= h(N) + \overline{\psi} a + \psi a^\dagger
\end{align*}
for some function $h: \N_0 \to \R$ and $\psi\in\C.$  In the following Lemma we find that for $\kappa \in[0,1/2]$ and $k\ge 0$ these Hamiltonians are energy limited with respect to  $e^{kN^{\kappa}}.$  
\begin{lemma}
\label{lem:mfBHEnergyLimited}
Let $\psi \in\C$ and $H = h(N)+ \overline{\psi} a + \psi a^\dagger.$
Then for $0\le \kappa\le 1/2$ and $k\ge 0$ and $t\in\R$ we have
\begin{align}
    e^{itH} e^{kN^{\kappa}}e^{-itH} \le e^{r|t|^{2\kappa}}  e^{kN^\kappa}
\end{align}
for some $r\ge 0$ depending only on $k,\kappa$ and $\psi.$
\end{lemma}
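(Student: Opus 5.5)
Since $h(N)$ commutes with $N$, only the displacement part $V:=\overline{\psi}a+\psi a^\dagger$ can change the weight $W:=e^{kN^\kappa}$. The plan is a Grönwall argument for the Heisenberg-evolved weight. Set
\[
F(t):=\langle \phi_t, W\phi_t\rangle,\qquad \phi_t:=e^{-itH}\phi,
\]
for $\phi$ in a suitable core, for instance finite Fock vectors. One first checks that $e^{-itH}$ preserves a dense set on which all manipulations are legitimate. A convenient choice is $D(e^{cN^{1/2}})$, using a standard Hermite/analytic-vector argument for Hamiltonians of degree-one perturbation type; alternatively one works with a regularised weight $W_\epsilon=e^{kN^\kappa}(1+\epsilon e^{kN^\kappa})^{-1}$ and lets $\epsilon\to0$ at the end. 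The quantity to control is
\[
F'(t)=\langle\phi_t, i[H,W]\phi_t\rangle=\langle\phi_t, i[V,W]\phi_t\rangle .
\]

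\textbf{Step 1 (commutator bound).} Write $w(n)=e^{kn^\kappa}$. Since $a$ lowers $N$ by one,
\[
[a,W]=\bigl(w(N+1)-w(N)\bigr)a .
\]
The mean value theorem gives
\[
|w(n+1)-w(n)|\le k\kappa\, n^{\kappa-1}e^{k(n+1)^\kappa}\lesssim n^{\kappa-1}w(n),
\]
because $(n+1)^\kappa-n^\kappa\le 1$. Combined with $\|a|n\rangle\|=\sqrt n$, this should yield the quadratic-form inequality
\[
\pm i[V,W]\le C|\psi|\,(N+1)^{\kappa-1/2}\,W ,
\]
which one symmetrises as $W^{1/2}(N+1)^{\kappa-1/2}W^{1/2}$ plus lower-order terms.

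\textbf{Step 2 (the case $\kappa=1/2$).} Here $(N+1)^{\kappa-1/2}=1$, so $F'\le C|\psi|F$ and Grönwall gives
\[
F(t)\le e^{C|\psi||t|}F(0),
\]
which is exactly the claim with $|t|^{2\kappa}=|t|$.

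\textbf{Step 3 (the case $\kappa<1/2$).} This is the main obstacle. The factor $(N+1)^{\kappa-1/2}$ decays, so a naive bound gives only linear growth $e^{c|t|}$, while we need $e^{r|t|^{2\kappa}}$. I would use a two-scale argument. The bound $N^{\kappa-1/2}\le \lambda^{\kappa-1/2}$ on $\{N\ge\lambda\}$ controls high energies, while on $\{N<\lambda\}$ one has $W\le e^{k\lambda^\kappa}$. This gives
\[
F'\le C\lambda^{\kappa-1/2}F + C e^{k\lambda^\kappa}.
\]
Alternatively one compares with the $\kappa=1/2$ growth of $\langle N\rangle$ under a displacement, namely $\|\sqrt N\phi_t\|\le\|\sqrt N\phi\|+|\psi||t|$, obtained from $N^{1/2}$-weight propagation. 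Then $N^\kappa$ evolves subadditively,
\[
(N_t)^\kappa\lesssim N^\kappa+(|\psi| t)^{2\kappa},
\]
which heuristically is exactly $r|t|^{2\kappa}$. To make this an operator inequality I would try to show
\[
e^{itH}\bigl((N+1)^{1/2}+c|t|\bigr)^{2\kappa}e^{-itH}\ \text{dominates}\ (N+1)^\kappa
\]
in the sense needed, using that $(N+1)^{1/2}$ moves at bounded speed ($\|[V,(N+1)^{1/2}]\|\le C|\psi|$, the key commutator estimate). One then concludes via the time-dependent weight
\[
\Psi_t:=e^{k((N+1)^{1/2}+C|\psi|t)^{2\kappa}},
\]
checking that $\frac{d}{dt}\langle\phi_t,\Psi_{-t}\phi_t\rangle\le0$, i.e. that the explicit time derivative of the weight beats the commutator term. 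Since $x\mapsto x^{2\kappa}$ is concave and subadditive, one obtains
\[
\bigl((N+1)^{1/2}+C|t|\bigr)^{2\kappa}\le (N+1)^\kappa+(C|t|)^{2\kappa},
\]
which gives $r=k(C|\psi|)^{2\kappa}$, depending only on $k,\kappa,\psi$. Verifying this monotonicity inequality rigorously on the chosen core is where I expect the real work to be.
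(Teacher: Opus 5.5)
\textbf{Gap: Step 3 is not proved.} Steps 1 and 2 are fine. Since $(N+1)^{\kappa-1/2}$ commutes with $W$, the bound $\pm i[V,W]\le C(N+1)^{\kappa-1/2}W$ holds directly as a form inequality on finite Fock vectors, and no symmetrisation is needed. The problem is Step 3, which is the whole content of the lemma for $0<\kappa<1/2$: you list strategies but complete none of them.

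\begin{itemize}
\item The displacement heuristic is only motivation. The estimate $\|\sqrt N\phi_t\|\le\|\sqrt N\phi\|+|\psi||t|$ controls a single moment and cannot by itself bound $e^{itH}e^{kN^\kappa}e^{-itH}$. The statement ``$(N_t)^\kappa\lesssim N^\kappa+(|\psi|t)^{2\kappa}$'' is not an operator inequality you can exponentiate.
\item The inequality $i[H,\Psi_\tau]\le\partial_\tau\Psi_\tau$ is exactly what has to be shown, and it is only announced. Also, as written, $\Psi_{-t}$ involves $((N+1)^{1/2}-C|\psi|t)^{2\kappa}$, which is ill-defined on low Fock levels. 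The monotone quantity should be $s\mapsto\langle\phi_s,\Psi_{T-s}\phi_s\rangle$ on $[0,T]$; this is also what your final subadditivity step actually uses.
\item The two-scale inequality $F'\le C\lambda^{\kappa-1/2}F+Ce^{k\lambda^\kappa}\|\phi\|^2$ is left without a conclusion. For fixed $\lambda$ it only gives linear growth $e^{ct}$.
\end{itemize}

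\textbf{How to close it.} Each of your routes can be finished in a few lines.

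For the two-scale route, fix $T\ge1$ and choose $\lambda=T^2$. Since $W\ge1$, you have $\|\phi\|^2\le F(0)$. Gr\"onwall on $[0,T]$ then gives $F(T)\le e^{CT^{2\kappa}}\bigl(1+CTe^{kT^{2\kappa}}\bigr)F(0)$. For $T\ge1$ the prefactor is absorbed because $\log(1+CT)\lesssim T^{2\kappa}$. For $|T|\le1$, use $\pm i[V,W]\le CW$ (your Step 1 combined with $(N+1)^{\kappa-1/2}\le1$) together with $e^{C|T|}\le e^{C|T|^{2\kappa}}$. This yields the lemma exactly as stated.

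For the time-dependent weight, apply your Step 1 mean-value computation to $F_\tau(n)=e^{k((n+1)^{1/2}+c\tau)^{2\kappa}}$. Uniformly in $\tau\ge0$ this gives $\pm i[V,\Psi_\tau]\le C_{k,\kappa}|\psi|\,\bigl((N+1)^{1/2}+c\tau\bigr)^{2\kappa-1}\Psi_\tau$; the key point is $\sqrt n\,\bigl((n+1)^{1/2}-n^{1/2}\bigr)\le\tfrac12$. On the other hand, $\partial_\tau\Psi_\tau=2k\kappa c\,\bigl((N+1)^{1/2}+c\tau\bigr)^{2\kappa-1}\Psi_\tau$, so taking $c$ a sufficiently large multiple of $|\psi|$ suffices. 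This route ends with $e^{k(N+1)^\kappa}\le e^{k}e^{kN^\kappa}$, so the factor $e^k$ must again be removed using the small-time linear bound.

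\textbf{Comparison with the paper.} The paper uses the shifted weight $G(t)=e^{itH}e^{k(N+1+t^2)^\kappa}e^{-itH}$ directly in a Gr\"onwall argument, with no monotonicity requirement. Because $\sqrt n\,(n+t^2)^{\kappa-1}\le|t|^{2\kappa-1}$, both the explicit $t$-derivative and the commutator are bounded by $C|t|^{2\kappa-1}G(t)$, which integrates to order $|t|^{2\kappa}$; it then concludes using $N\le N+1+t^2$. Note that in that argument $G(0)=e^{k(N+1)^\kappa}$, not $e^{kN^\kappa}$ as written there, so the same small-time correction is needed to remove the factor $e^k$.
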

\begin{proof}
We consider $G(t):=e^{itH} e^{k(N+1+t^2)^\kappa}e^{-itH}$ and differentiate 
\begin{align}
\label{eq:GoperatorDeriv}
    \frac{d}{dt}G(t) &=2k\kappa te^{itH} (N+1+t^2)^{\kappa-1} e^{k(N+1+t^2)^\kappa}e^{-itH} + ie^{itH} \left[H,e^{k(N+1+t^2)^\kappa}\right]e^{-itH}.
\end{align}
In the following we estimate both operators appearing on the right hand side respectively. For the first term we use $\kappa -1\le 0$ which gives
\begin{align}
\label{eq:FirstTermBound}
   \nn 2k\kappa|t|\langle \psi,e^{itH} (N+1+t^2)^{\kappa-1} e^{k(N+1+t^2)^\kappa}e^{-itH}\psi\rangle &\le 2k |t|^{2\kappa -1}\langle \psi,e^{itH}  e^{k(N+1+t^2)^\kappa}e^{-itH}\psi\rangle\nn\\&\le 2k |t|^{2\kappa -1}\langle \psi,G(t)\psi\rangle
\end{align}
For the second term we use
\begin{align}
\label{eq:CommBound}\left[H,e^{k(N+1+t^2)^\kappa}\right] &= \overline{\psi} \left(e^{k(N+2+t^2)^\kappa} -e^{k(N+1+t^2)^\kappa} \right)a + \psi \left(e^{k(N+t^2)^\kappa} -e^{k(N+1+t^2)^\kappa} \right)a^\dagger
\end{align}
and bound each term individually. For the first term we expand $\ket{\varphi} = \sum_{n=0}^{\infty} \varphi_n\ket{n}$ and note
\begin{align*}
    &|\bra{\varphi} \left(e^{k(N+2+t^2)^{\kappa}} -e^{k(N+1+t^2)^{\kappa}} \right)a\ket{\varphi}| \le  \sum_{n=1}^{\infty} |\varphi_{n}||\varphi_{n-1}| \sqrt{n}\left|e^{k(n+1+t^2)^{\kappa}} -e^{k(n+t^2)^{\kappa}} \right| \\ &\le k\kappa\sum_{n=1}^{\infty} |\varphi_{n}||\varphi_{n-1}| \sqrt{n} (n+t^2)^{\kappa-1}e^{k(n+1+t^2)^{\kappa}} \le k\kappa |t|^{2\kappa -1} \sum_{n=1}^{\infty} |\varphi_{n}||\varphi_{n-1}| e^{k(n+1+t^2)^{\kappa}}
    \\&\le C |t|^{2\kappa-1}\bra{\varphi} e^{k(N+1+t^2)^\kappa}\ket{\varphi}, 
\end{align*}
 for some $C\ge 0$ which depends on $k$ and $\kappa$ but is independent of all other variables and where we have used in the second inequality we have used the mean value theorem, in the third the fact that $\kappa \le 1/2$ and in the fourth the Cauchy-Schwarz inequality. The second term in \eqref{eq:CommBound} can be treated similarly giving the bound
 \begin{align*}
     |\bra{\varphi} \left(e^{k(N+1+t^2)^{\kappa}} -e^{k(N+t^2)^{\kappa}} \right)a^\dagger\ket{\varphi}| \le C|t|^{2\kappa-1} \bra{\varphi} e^{k(N+1+t^2)^\kappa}\ket{\varphi}.
 \end{align*}
 Combining these estimates with \eqref{eq:GoperatorDeriv} and \eqref{eq:FirstTermBound} we get
 \begin{align*}
     \frac{d}{dt} G(t) \le C'|t|^{2\kappa-1} G(t) 
 \end{align*}
 for some constant $C'\ge 0$ depending only on $k,\kappa$ and $\psi.$ Using Gr\"onwall's inequality in the case $t\ge0$ on the interval $[0,t]$ or for the case $t\le 0$ on the interval $[0,-t]$ proves 
 \begin{align*}
     G(t) \le e^{r |t|^{2\kappa}} G(0) = e^{r |t|^{2\kappa}} e^{kN^\kappa}
 \end{align*}
 for some $r\ge 0$ depending again only on $k,\kappa$ and $\psi.$  Using $e^{kN^{\kappa}} \le e^{k(N+1+t^2)^{\kappa}}$ 
 finishes the proof.
\end{proof}

\subsubsection{Truncating the Hamiltonian}
\label{sec:FinitDimGenerator}

In the following we provide a finite rank Lindbladian, denoted by $\mathcal{L}^{\le M}_{\sigma_E,\widehat{f},H_{\le M}},$ which is good approximation of the generator $\mathcal{L}_{\sigma_E,\widehat{f},H}$ studied in Section~\ref{monotonicitygap} and \ref{integralrepfschwartz}. For that we consider the finite-dimensional approximations, $\{(A^{\alpha})^{\le M}\}_{\alpha\in\cA},$ of the bare jumps $\{A^\alpha\}_{\alpha\in\cA}$ analysed in Section~\ref{sec:FiniteTruncBareJumps} and replace the Hamiltonian $H$ within the corresponding generator $\mathcal{L}^{\le M}_{\sigma_E,\widehat{f},H}$ by its finite-dimensional truncation, $H_{\le M},$ to obtain $\mathcal{L}^{\le M}_{\sigma_E,\widehat{f},H_{\le M}}.$ As in Section~\ref{sec:FiniteTruncBareJumps} we have already seen that $\mathcal{L}^{\le M}_{\sigma_E,\widehat{f},H}$ is a good approximation of $\mathcal{L}_{\sigma_E,\widehat{f},H}$ on certain energy constraint input states, we focus in this section on the approximation of $\mathcal{L}^{\le M}_{\sigma_E,\widehat{f},H}$ by the fully finite-dimensional\footnote{Technically, $\mathcal{L}^{\le M}_{\sigma_E,\widehat{f},H_{\le M}}$ is a finite rank generator on an infinite-dimensional space. But we drop this distinction here for simplicity.} generator $\mathcal{L}^{\le M}_{\sigma_E,\widehat{f},H_{\le M}}.$

For that, we start with defining the truncated, finite rank Hamiltonian $H_{\le M}$ for some self-adjoint, possibly unbounded, Hamiltonian $H$ on some separable Hilbert space $\cH:$ More precisely we consider a parametrised family of self-adjoint and finite rank projections  $\left(P_M\right)_{M\in\N}$ on $\cH$ such that $P_M\cH \subseteq D(H)$ and with that the finite rank truncation 
\begin{align}\label{HleMpolybound}
    H_{\le M} := P_MHP_M\qquad \text{ with }\qquad  \|H_{\le M}\|\le p_5(|\mathcal{A}|,M)
\end{align}
for some polynomial $p_5$ of the number, $|\mathcal{A}|,$ of jumps and truncation $M$. As described in the Introduction, we refer to the finite dimensional space $\operatorname{im}(P_M)$ as the \emph{system register}. In many-body or multi-mode systems, one usually considers system registers whose dimensions satisfies
$
\log\left( \dim\bigl(\operatorname{im}(P_M)\bigr)\right)
= \mathcal{O}\bigl(|\mathcal{A}|\log(M)\bigr),
$ but we are in principle free to leave the dimension of the system register unspecified at this stage. 

We assume in the following that $H$ is well-approximated by $H_{\le M}$ on energy constraint input states; specifically, we have for some $\kappa\in(0,1/2)$
\begin{align}
 \label{eq:HamiltonianLeakTrunc}  
 \left\|\left(H-H_{\le M}\right)e^{-\NA^\kappa}\right\| \le p(|\cA|,M) e^{-M^\kappa},
\end{align}
where $p(|\cA|,M)$ is some polynomially bounded function and $\NA$ is some self-adjoint and positive semidefinite operator. \bin{Section~\ref{sec:FiniteTruncBareJumps} around equation \eqref{eq:AbstractFiniteTruncBareJumps}.}

Let us illustrate the above with the example of a $m$-mode bosonic system on the Hilbert space $\left(L^2(\R)\right)^{\otimes m},$ with multi-index $\alpha$ taken from the set $\cA=\{(i,+),(i,-)\}_{i=1}^m$ and bare jumps $\{A^\alpha\}_{\alpha\in \cA}=\left\{a^k_i,(a_i^k)^\dagger\right\}_{i=1}^m$ for some $k\in\N:$
In this case, we take $P_M = \pi^{\otimes m}_M$ with $\pi_M = \sum_{n=0}^M\kb{n}$ being the local truncations onto the first $M+1$ Fock states of a fixed mode. Furthermore, for this multimode case, we consider $\NA\equiv\Ntot:=\sum_{i=1}^m a^\dagger_ia_i$ such that \eqref{eq:HamiltonianLeakTrunc} is naturally satisfied for Hamiltonians $H$ being polynomials in $a_i$ and $a^\dagger_i$ with degree $\mathcal{O}(1)$ in the number of modes $m$, e.g., the Bose-Hubbard Hamiltonian. 

Additionally to \eqref{eq:HamiltonianLeakTrunc}, we assume as in Section~\ref{sec:FiniteTruncBareJumps} that $H$ satisfies \eqref{eq:ExpBound}, i.e.~that for $k=2,4$ and $\kappa\in(0,1/2)$ as above there exists $r\ge 0$ such that for all $t\in\R$ we have\footnote{The reason for the choice of range of $\kappa,$ in particular the constraint $\kappa<1/2$, is that this enables us to prove approximation of $\mathcal{L}^{\le M}_{\sigma_E,\widehat{f},H}$ by $\mathcal{L}^{\le M}_{\sigma_E,\widehat{f},H_{\le M}}$ for all $\beta>0.$ The reason for this is that the function $e^{r|t|^{2\kappa}/2}$ needs to be dominated by the function $g(t),$ which is featured in the integral representations of the generators stated in Proposition~\ref{integralrep}, and it can be shown that $|g(t)|
\sim
e^{-\tfrac{\pi}{\beta}|t|}$ for large $t.$ }
\begin{align}\label{eq:ExpBound2}
e^{-itH}e^{k\NA^\kappa} e^{itH}\le e^{r|t|^{2\kappa}}\,e^{k\NA^\kappa}.
\end{align}

\bin{\begin{align*}
    \left\|Q_Me^{-\NA^{\kappa}}\right\| \le \sup_{\substack{n_1,\cdots,n_m\in\N_0\\\exists n_i\ge M}} e^{-\sqrt{\sum_i n_i}} \le e^{-M^{\kappa}}.
\end{align*}}

Under both of these assumptions, we see in the following lemma that the unitary evolutions of the unbounded and truncated Hamiltonians are close on energy constraint inputs:
\begin{lemma}
\label{lem:HamEvolTrunc}
Let $H$ such that \eqref{eq:HamiltonianLeakTrunc} and \eqref{eq:ExpBound2} are satisfied for some $\kappa\in(0,1/2)$ and $r\ge 0$. Then we have
\begin{align*}
    \left\|\left(e^{-itH} - e^{-itH_{\le M}}\right)e^{-\NA^{\kappa}}\,\right\| \le |t|e^{\tfrac{r|t|^{2\kappa}}{2}}p(|\cA|,M)e^{-M^{\kappa}}.
\end{align*}

\end{lemma}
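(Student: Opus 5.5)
We would prove this with the standard Duhamel (variation of constants) argument. For a vector $\ket{\psi}\in D(e^{\NA^\kappa})\cap D(H)$ and $t\ge0$ (the case $t<0$ is symmetric), we write
\begin{align*}
\left(e^{-itH}-e^{-itH_{\le M}}\right)\ket{\psi} = -i\int_0^t e^{-i(t-s)H_{\le M}}\left(H-H_{\le M}\right)e^{-isH}\ket{\psi}\,ds .
\end{align*}
This follows by differentiating $s\mapsto e^{-i(t-s)H_{\le M}}e^{-isH}\ket{\psi}$. The derivative is legitimate because $H_{\le M}$ is bounded and $e^{-isH}\ket{\psi}\in D(H)$, which gives a strongly differentiable path. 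We would first establish the identity on a convenient dense set, e.g.\ finite linear combinations of eigenvectors of $\NA$ that lie in $D(H)$, or simply $D(H)\cap D(e^{\NA^\kappa})$. We then extend it by density once the final operator bound is in hand, since both sides are bounded on $e^{-\NA^\kappa}\cH$.

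Next we take norms. Since $e^{-i(t-s)H_{\le M}}$ is unitary, it drops out. We insert $e^{-\NA^\kappa}e^{\NA^\kappa}$ after $(H-H_{\le M})$:
\begin{align*}
\left\|\left(e^{-itH}-e^{-itH_{\le M}}\right)\ket{\psi}\right\|\le \int_0^t \left\|(H-H_{\le M})e^{-\NA^\kappa}\right\|\,\left\|e^{\NA^\kappa}e^{-isH}\ket{\psi}\right\|\,ds .
\end{align*}
The first factor is bounded by $p(|\cA|,M)e^{-M^\kappa}$ using \eqref{eq:HamiltonianLeakTrunc}. For the second factor we note that $\|e^{\NA^\kappa}e^{-isH}\ket{\psi}\|^2=\bra{\psi}e^{isH}e^{2\NA^\kappa}e^{-isH}\ket{\psi}$. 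We then apply \eqref{eq:ExpBound2} with $k=2$ (and with $t$ replaced by $-s$, which is allowed since the bound holds for all real times). This gives the bound $e^{r|s|^{2\kappa}}\|e^{\NA^\kappa}\ket{\psi}\|^2$, i.e.\ $\|e^{\NA^\kappa}e^{-isH}\ket{\psi}\|\le e^{r|s|^{2\kappa}/2}\|e^{\NA^\kappa}\ket{\psi}\|$.

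Since $s\mapsto e^{r|s|^{2\kappa}/2}$ is nondecreasing on $[0,|t|]$, the integral is at most $|t|e^{r|t|^{2\kappa}/2}p(|\cA|,M)e^{-M^\kappa}\|e^{\NA^\kappa}\ket{\psi}\|$. Setting $\ket{\psi}=e^{-\NA^\kappa}\ket{\phi}$ and taking the supremum over unit $\ket{\phi}$ then yields the claimed operator norm bound.

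The only delicate step is the domain bookkeeping. We need $e^{-isH}\ket{\psi}$ to stay in $D(e^{\NA^\kappa})$, which \eqref{eq:ExpBound2} guarantees in the form sense, and the Duhamel identity needs a dense set of $\ket{\phi}$ with $e^{-\NA^\kappa}\ket{\phi}\in D(H)$. If the intersection is awkward, we would instead work in weak form, pairing with an arbitrary $\bra{\chi}$. We would then close the argument by a density argument, using that the operator on the left is bounded and the operator inequality \eqref{eq:ExpBound2} holds on form domains.
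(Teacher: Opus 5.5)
Your proof is correct and follows the paper's argument. Both use Duhamel's formula, drop the unitary truncated evolution, insert $e^{-\NA^{\kappa}}e^{\NA^{\kappa}}$, bound the two factors by \eqref{eq:HamiltonianLeakTrunc} and by \eqref{eq:ExpBound2} with $k=2$, and finish with monotonicity of $s\mapsto e^{r|s|^{2\kappa}/2}$. Your Duhamel identity, with the factor $-i$ and the propagator $e^{-i(t-s)H_{\le M}}$, is the correct form of the one the paper displays, and your domain remarks make explicit what the paper leaves implicit; neither point changes the final bound.
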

\begin{proof}
By Duhamel's principle we have
\begin{align*}
   &e^{-itH} - e^{-itH_{\le M}}= \int^t_0 e^{-isH_{\le M}}\left(H-H_{\le M}\right)e^{-isH} ds.
\end{align*}
Using \eqref{eq:ExpBound2}, we hence see
\begin{align*}
     &\left\|\left(e^{-itH} - e^{-itH_{\le M}}\right)e^{-\NA^{\kappa}}\,\right\| \le  \int_0^{|t|} e^{\tfrac{r|s|^{2\kappa}}{2}}ds\left\|(H-H_{\le M})e^{-\NA^{\kappa}}\right\|\le |t|e^{\tfrac{r|t|^{2\kappa}}{2}} p(|\cA|,M)e^{-M^{\kappa}},
\end{align*}
where we used \eqref{eq:HamiltonianLeakTrunc} in the last inequality.\end{proof}

\bin{In Section~\ref{sec:FiniteTruncBareJumps} we introduced the finite-dimensional truncations $a^{\le M}_i$ and $\left(a_i^{\le M}\right)^\dagger$ of the bare jumps $a$ and $a^\dagger$ and the generator $\mathcal{L}^{\le M}_{\sigma_E,\widehat{f},H}$ where we replaced the bare jumps in terms of the usual annihilation and creation operators in $\cL_{\sigma_E}$ with their finite-dimensional truncations.} 

Following the notation of Section~\ref{sec:FiniteTruncBareJumps}, we show in the following that under the assumptions \eqref{eq:ExpBound2} and \eqref{eq:HamiltonianLeakTrunc} the finite rank generator, $\mathcal{L}^{\le M}_{\sigma_E,\widehat{f},H_{\le M}},$ is close to $\mathcal{L}^{\le M}_{\sigma_E,\widehat{f},H}.$
For this, we first prove the following technical Lemma. In what follows, we denote for some function $f\in\mathcal{S}(\mathbb{R})$ the integrated jump operators 
\begin{align}
\label{eq:TruncatedL}
L^{\alpha,\le M}:=\int f(s)e^{isH}(A^{\alpha})^{\le M}e^{-isH}ds\quad\text{and}\quad
L^{\alpha,\le M}_{\le M}:=\int f(s)e^{isH_{\le M}}(A^{\alpha})^{\le M} e^{-isH_{\le M}}ds.
\end{align}
By construction, we have that
\begin{align}\label{eqLalphaMMbound}
\|L^{\alpha,\le M}_{\le M}\|\le \|f\|_{L^1(\mathbb{R})}\|(A^\alpha)^{\le M}\|\le \|f\|_{L^1(\mathbb{R})}\, q'(M),
\end{align}
where the polynomial $q'(M)$ is defined in \eqref{eq:NormTruncatedJump}.

\begin{lemma}\label{lem:integralrepapproxFullyFinite}
Let $f\in\mathcal{S}(\mathbb{R})$ and assume that \Cref{eq:condAalphas}, \eqref{eq:NormTruncatedJump} and \eqref{eq:NormTruncatedJump2} are satisfied for the set of bare jumps $\{A^\alpha\}_{\alpha\in\cA}$. Furthermore, for $\NA$ self-adjoint and positive semidefinite assume that \eqref{eq:HamiltonianLeakTrunc} and  \eqref{eq:ExpBound2} holds for some $\kappa\in(0,1/2)$ and $r\ge0$ and that
\begin{align*}
    C'_f := \int_{-\infty}^{\infty} |t|e^{r|t|^{2\kappa}} |f(t)|dt < \infty.
\end{align*}
Then for $M\in \N$ we have
\begin{align*}
\|(L^{\alpha,\le M}\!-\!L^{\alpha,\le M}_{\le M})e^{-\NA^{\kappa}}\|\!&\le\! C'_fp_1(|\cA|,M)e^{-M^{\kappa}},\\
\|((L^{\alpha,\le M})^\dagger L^{\alpha,\le M}-\left(L^{\alpha,\le M}_{\le M}\right)^\dagger L^{\alpha,\le M}_{\le M})e^{-\NA^{\kappa}}\|&\le C_fC'_fp_2(|\cA|,M)e^{-M^{\kappa}},\, 
\end{align*}
where $p_1(|\cA|,M)=2q'(M)p(|\cA|,M)$ and $p_2(|\cA|,M) =4(q'(M))^2p(|\cA|,M)$ with $q'(M)$ being the polynomially bounded function appearing in \eqref{eq:NormTruncatedJump} and \eqref{eq:NormTruncatedJump2} and $C_f$ being defined in \eqref{eq:C_f}.

\end{lemma}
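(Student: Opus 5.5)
The plan is to write the difference of the two integrated jumps as a single time integral and then bound the difference of the two Heisenberg-evolved truncated jumps. For every $s\in\mathbb{R}$ we decompose
\begin{align*}
e^{isH}(A^\alpha)^{\le M}e^{-isH}-e^{isH_{\le M}}(A^\alpha)^{\le M}e^{-isH_{\le M}}
=\big(e^{isH}-e^{isH_{\le M}}\big)(A^\alpha)^{\le M}e^{-isH}+e^{isH_{\le M}}(A^\alpha)^{\le M}\big(e^{-isH}-e^{-isH_{\le M}}\big).
\end{align*}
We then multiply by $e^{-\NA^\kappa}$ on the right and integrate against $|f(s)|$. The second term is the easy one. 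By \eqref{eq:NormTruncatedJump}, its norm is at most $q'(M)\,\|(e^{-isH}-e^{-isH_{\le M}})e^{-\NA^\kappa}\|$, and Lemma~\ref{lem:HamEvolTrunc} bounds this by $q'(M)|s|e^{r|s|^{2\kappa}/2}p(|\cA|,M)e^{-M^\kappa}$.

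The first term is where the difficulty lies, because the difference of the unitaries now acts on $(A^\alpha)^{\le M}e^{-isH}e^{-\NA^\kappa}$ rather than directly on an energy-damped vector. We insert $e^{-\NA^\kappa}e^{\NA^\kappa}$ and write
\begin{align*}
\big(e^{isH}-e^{isH_{\le M}}\big)e^{-\NA^\kappa}\cdot e^{\NA^\kappa}(A^\alpha)^{\le M}e^{-\NA^\kappa}\cdot e^{\NA^\kappa}e^{-isH}e^{-\NA^\kappa}.
\end{align*}
The factors are controlled as follows.
\begin{itemize}
\item Lemma~\ref{lem:HamEvolTrunc}, applied with $-s$, controls the first factor.
\item The second factor is bounded by $q'(M)$ by \eqref{eq:NormTruncatedJump2}.
\item The third factor is at most $e^{r|s|^{2\kappa}/2}$ by \eqref{eq:ExpBound2}, taking square roots via operator monotonicity of $x\mapsto x^{1/2}$ from the bound with exponent $2\NA^\kappa$. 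Strictly, \eqref{eq:ExpBound2} is stated for $k=2,4$, and $k=2$ gives exactly the square-root bound needed.
\end{itemize}
Combining the two terms gives an integrand bounded by $2q'(M)p(|\cA|,M)|s|e^{r|s|^{2\kappa}}e^{-M^\kappa}|f(s)|$. Integrating yields the first claim with $C_f'$ and $p_1=2q'p$. Since $f\in\mathcal{S}$ and $2\kappa<1$, $C_f'$ is finite, so exchanging the norm with the Bochner integral is harmless.

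For the quadratic claim we telescope exactly as in \eqref{eq:twotermstoboundTrunc}:
\begin{align*}
(L^{\alpha,\le M})^\dagger\big(L^{\alpha,\le M}-L^{\alpha,\le M}_{\le M}\big)e^{-\NA^\kappa}+\big((L^{\alpha,\le M})^\dagger-(L^{\alpha,\le M}_{\le M})^\dagger\big)L^{\alpha,\le M}_{\le M}e^{-\NA^\kappa}.
\end{align*}

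In the first piece, $(L^{\alpha,\le M})^\dagger$ is bounded by $\|f\|_1 q'(M)\le C_f q'(M)$ by \eqref{eq:NormTruncatedJump}, and the first claim controls the rest, giving a contribution $C_fC_f'\,2(q')^2p\,e^{-M^\kappa}$.

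For the second piece we write $L^{\alpha,\le M}_{\le M}e^{-\NA^\kappa}=e^{-\NA^\kappa}\cdot e^{\NA^\kappa}L^{\alpha,\le M}_{\le M}e^{-\NA^\kappa}$. Then we apply the adjoint-difference bound, which is the same argument as in the first claim with the roles of left and right interchanged. This again needs only \eqref{eq:NormTruncatedJump2} and Lemma~\ref{lem:HamEvolTrunc}.

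The main obstacle is that $e^{\NA^\kappa}L^{\alpha,\le M}_{\le M}e^{-\NA^\kappa}$ involves the evolution under $H_{\le M}$, for which \eqref{eq:ExpBound2} is not assumed. I would first try to circumvent this by moving the energy damping onto the truncated evolution through the ordering of the telescoping. Concretely, I would instead write the second piece as $(L^{\alpha,\le M}-L^{\alpha,\le M}_{\le M})^\dagger$ composed with $L^{\alpha,\le M}_{\le M}$, and bound the adjoint difference using the bound-from-the-left version of Lemma~\ref{lem:HamEvolTrunc}. Failing that, I would use that $P_M$ commutes with $\NA$ in the examples, so that $e^{\pm\NA^\kappa}$ is bounded on $\operatorname{im}(P_M)$ and the surplus is absorbed into the factor $4$ in $p_2=4(q')^2p$ and into $C_f$.
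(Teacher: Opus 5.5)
Your first estimate is correct and follows the paper's argument. You use the same splitting of the Heisenberg-evolved truncated jumps, Lemma~\ref{lem:HamEvolTrunc} for both unitary differences, \eqref{eq:NormTruncatedJump} and \eqref{eq:NormTruncatedJump2} for the jump, and \eqref{eq:ExpBound2} with $k=2$ for $\|e^{\NA^\kappa}e^{-isH}e^{-\NA^\kappa}\|\le e^{r|s|^{2\kappa}/2}$. That last bound follows directly from the quadratic-form inequality, so no operator monotonicity is needed.

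The gap is in the quadratic estimate. Write $L:=L^{\alpha,\le M}$ and $L_{\le}:=L^{\alpha,\le M}_{\le M}$. You telescope as $L^\dagger(L-L_{\le})+(L-L_{\le})^\dagger L_{\le}$. As you notice, the second piece then needs a bound on $\|e^{\NA^\kappa}L_{\le}e^{-\NA^\kappa}\|$, i.e.\ energy control of the dynamics generated by $H_{\le M}$, which is not assumed. Neither of your remedies closes this.
\begin{itemize}
\item Writing the second piece as $(L-L_{\le})^\dagger$ composed with $L_{\le}$ gives the very same operator. The damping $e^{-\NA^\kappa}$ still sits to the right of $L_{\le}$ and cannot reach $(L-L_{\le})^\dagger$ without being commuted through $L_{\le}$. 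A left-sided version of Lemma~\ref{lem:HamEvolTrunc} does not change that.
\item The fallback via $[P_M,\NA]=0$ is not a hypothesis of the lemma. Even where it holds, the only general bound on $e^{\NA^\kappa}$ on $\operatorname{im}(P_M)$ is $e^{\|\NA P_M\|^\kappa}$, e.g.\ $e^{(mM)^\kappa}$ for $\NA=\Ntot$ on $m$ modes. This is not a constant that can be absorbed into the factor $4$ or into $C_f$; it cancels the decay $e^{-M^\kappa}$ entirely.
\end{itemize}

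The missing idea is to telescope in the opposite order, as the paper does:
\[
L^\dagger L-L_{\le}^\dagger L_{\le}=L_{\le}^\dagger\,(L-L_{\le})+(L-L_{\le})^\dagger L .
\]
In the first piece, $L_{\le}$ enters only through its plain operator norm, $\|L_{\le}\|\le\|f\|_{L^1(\mathbb{R})}q'(M)\le C_fq'(M)$. Combined with your first estimate, this piece contributes $2C_fC'_f(q'(M))^2p(|\cA|,M)e^{-M^\kappa}$.

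In the second piece, the operator sandwiched between the energy weights is $L$, which is evolved by $H$ itself. So \eqref{eq:ExpBound2} on both sides together with \eqref{eq:NormTruncatedJump2} gives $\|e^{\NA^\kappa}Le^{-\NA^\kappa}\|\le C_fq'(M)$. You also need the adjoint version of your first estimate, $\|(L-L_{\le})^\dagger e^{-\NA^\kappa}\|\le 2C'_fq'(M)p(|\cA|,M)e^{-M^\kappa}$, which uses the second half of \eqref{eq:NormTruncatedJump2}. Together these contribute the same amount again, for a total of $C_fC'_fp_2(|\cA|,M)e^{-M^\kappa}$.

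Alternatively, you could keep your ordering and substitute $L_{\le}=L-(L-L_{\le})$ in the second piece. This reduces everything to the same two bounds plus $\|L-L_{\le}\|\le 2C_fq'(M)$, but yields $8(q'(M))^2p$ in place of $p_2$.
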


\begin{proof}

For the first inequality in Lemma~\ref{lem:integralrepapproxFullyFinite} we use \eqref{eq:ExpBound2} and Lemma~\ref{lem:HamEvolTrunc} to see for $\ket{\psi}\in D\left(e^{\NA^{\kappa}}\right)$ that we have
\begin{align}
\label{eq:NLLeqFullyFinite}
\nn&\|(L^{\alpha,\le M}-L^{\alpha,\le M}_{\le M})\ket{\psi}\|
\\&\quad\le \int |f(s)|\Big(\left\|(A^{\alpha})^{\le M}\left(e^{-isH} -e^{-isH_{\le M}}\right)\ket{\psi}\right\|\nn+ \left\|\left(e^{isH} -e^{isH_{\le M}}\right)(A^{\alpha})^{\le M} e^{-isH}\ket{\psi}\right\|\Big)ds
\\&\quad\le 2q'(M)p(|\cA|,M)e^{-M^{\kappa}}
\int|f(s)| |s|e^{r|s|^{2\kappa}}ds\|e^{\NA^{\kappa}}\ket{\psi}\| = 2C'_fq'(M)p(|\cA|,M)e^{-M^{\kappa}}
\|e^{\NA^{\kappa}}\ket{\psi}\|,
\end{align}
where we used \eqref{eq:NormTruncatedJump} and \eqref{eq:NormTruncatedJump2} in the second inequality.

\bin{First, we use the estimates derived in Lemma \ref{lem.displaceblocks} to see for  $\ket{\psi}\in D(\Ntot+I)$, denoting $N_i:=a_i^\dagger a_i$, 
\begin{align*}
\|(\Ntot+I)\ket{\psi}\|^2&=\bra{\psi}\left(\sum_{j=1}^m N_j+I\right)^2\ket{\psi}\\
&\le m\sum_{j\in[m]}\bra{\psi}(N_j+I)^2\ket{\psi}\\
&=m\sum_{j\in[m]}\|(N_j+I)\ket{\psi}\|^2.
\end{align*}
Applying this bound to $\ket{\psi}=(a_i-(a^D_\alpha)_i)\ket{\varphi}$, we get
\begin{align*}
&\|(\Ntot+I)(a_i-(a^D_\alpha)_i)\ket{\varphi}\|^2\\
&\quad \le m\,\sum_{j\ne i}4|\alpha|^2\|(N_j+I)(N_i+I)\ket{\varphi}\|^2\\
&\quad + 25m |\alpha|^2\|(N_i+I)^2\ket{\varphi}\|^2\\
&\quad \le 25m^2|\alpha|^2\sum_{j\in[m]}\,\|(N_j+I)^2\ket{\varphi}\|^2\\
&\quad \le 25m^3|\alpha|^2\|(\Ntot+I)^2\ket{\varphi}\|^2.
\end{align*}

Thus, we have derived that
\begin{align}
\|(\Ntot\!+\!1)(L_{i}\!-\!L_{\alpha,i})\!\ket{\varphi}\!\|\!\le\! 5C_{2,f}m^{\frac{3}{2}}|\alpha|\|(\Ntot+I)^2\!\ket{\varphi}\!\|.
\end{align}}

\bin{Similarly, denoting $L_i^d:=\int f(s)e^{isH}a_i^\dagger e^{-isH}ds$ and $L_{\alpha,i}^d:=\int f(s)e^{isH}(a_{\alpha}^{Dd})_i e^{-isH}ds$, we get
\begin{align}\label{eq:NLLeqd}
\|(\Ntot\!+\!1)(L^d_{i}\!-\!L^d_{\alpha,i})\!\ket{\varphi}\!\|\!\le\! 5C_{2,f}m^{\frac{3}{2}}|\alpha|\|(\Ntot+I)^2\!\ket{\varphi}\!\|.
\end{align}}
Next, we control squares of the $L$ operators for $\ket{\varphi}\in D(e^{\NA^{\kappa}})$:
\begin{align}\label{eq:twotermstoboundTrunc2}
&\nn\left\|\left(\left(L^{\alpha,\le M}\right)^\dagger L^{\alpha,\le M}-\left(L^{\alpha,\le M}_{\le M}\right)^\dagger L^{\alpha,\le M}_{\le M}\right)\ket{\varphi}\right\|\\&\le \left\|\left(L^{\alpha,\le M}_{\le M}\right)^\dagger (L^{\alpha,\le M}-L^{\alpha,\le M}_{\le M})\ket{\varphi}\right\|+\left\|\left(\left(L^{\alpha,\le M}\right)^\dagger-\left(L^{\alpha,\le M}_{\le M}\right)^\dagger\right)L^{\alpha,\le M}\ket{\varphi}\right\|.
\end{align}
The first term above can be controlled by using that by \eqref{eq:NormTruncatedJump} we have $\left\|\left(L_{\le M}^{\alpha,\le M}\right)^\dagger\right\| \le q'(M)\|f\|_1\le q'(M)C_f$ and therefore using \eqref{eq:NLLeqFullyFinite} we have
\begin{align*}&\left\|\left(L^{\alpha,\le M}_{\le M}\right)^\dagger (L^{\alpha,\le M}-L^{\alpha,\le M}_{\le M})\ket{\varphi}\right\| \le 2C_fC'_f (q'(M))^2p(|\cA|,M)e^{-M^{\kappa}}
\|e^{\NA^{\kappa}}\ket{\psi}\|.
\end{align*}
To control the second term in \eqref{eq:twotermstoboundTrunc2}, we note that by \eqref{eq:ExpBound2} and \eqref{eq:NormTruncatedJump2} we have 
\begin{align}
\label{eq:ThereAreLoadsOfInequalitiesHere}
\nn&\left\|e^{\NA^{\kappa}}L^{\alpha,\le M} e^{-\NA^{\kappa}}\right\| \le  \int |f(s)|e^{r|s|^{2\kappa}/2} \left\|e^{\NA^{\kappa}}e^{isH}(A^\alpha)^{\le M}e^{-\NA^{\kappa}}\right\|ds\,\\ &\le q'(M)\int |f(s)| e^{r|s|^{2\kappa}} ds = q'(M)C_f,
\end{align}
where $C_f$ is defined in \eqref{eq:C_f}. From this 
and \eqref{eq:NLLeqTrunc} we bound the second term in \eqref{eq:twotermstoboundTrunc2} as 
\begin{align*}
    &\left\|\left(\left(L^{\alpha,\le M}\right)^\dagger-\left(L^{\alpha,\le M}_{\le M}\right)^\dagger\right)L^{\alpha,\le M}\ket{\varphi}\right\| \le 2 C_fC'_f (q'(M))^2p(|\cA|,M)e^{-M^{\kappa}} \left\|e^{\NA^{\kappa}}\ket{\varphi}\right\|,
\end{align*}
which finishes the proof.
\end{proof}
\noindent  We are now ready to show in the following proposition that $\mathcal{L}^{\le M}_{\sigma_E,\widehat{f},H}$ can be well approximated by the finite rank generator $\mathcal{L}^{\le M}_{\sigma_E,\widehat{f},H_{\le M}}.$ 
\begin{proposition}\label{prop:generatorboungFullyFinite}
Let $f\in\mathcal{S}(\mathbb{R})$ and assume that \Cref{eq:condAalphas}, \eqref{eq:NormTruncatedJump} and \eqref{eq:NormTruncatedJump2} are satisfied for the set of bare jumps $\{A^\alpha\}_{\alpha\in\cA}$. Furthermore, for $\NA$ self-adjoint and positive semidefinite assume that \eqref{eq:HamiltonianLeakTrunc} and  \eqref{eq:ExpBound2} holds for some $\kappa\in(0,1/2)$ and $r\ge0$ and that
\begin{align*}
    C'_f := \int_{-\infty}^{\infty} |t|e^{r|t|^{2\kappa}} |f(t)|dt < \infty.
\end{align*}
Then, for all $\beta,\,\sigma_E>0,$ $M\in\N$ and state $\rho$ satisfying $E_2:=\Tr(e^{2\NA^{\kappa}}\rho)<\infty$, we have
\begin{align}
&\nn\left\|(\mathcal{L}^{\le M}_{\sigma_E,\widehat{f},H}-\mathcal{L}^{\le M}_{\sigma_E,\widehat{f},H_{\le M}})(\rho)\right\|_1\le  C_{\beta,r,\kappa} \widetilde C_f \sqrt{E_2}\,p_3(|\cA|,M)e^{-M^{\kappa}}\nn\left(\sigma_E
\exp\!\left(\tfrac{\pi^2}{8\beta^2\sigma_E^2}\right)+\exp\left(r+\tfrac{(r+1)^2}{8\sigma^2_E}\right)\right),
\end{align}
for some constant $C_{\beta,r,\kappa}\ge 0$ and 
where $p_3(|\cA|,M):=(q'(M))^2|\cA|\,p(|\cA|,M)$ is  polynomially bounded, $\widetilde C_f := C_f\max\{C_f,C'_f\}$ and $C_f$ is defined in \eqref{eq:C_f}.
\end{proposition}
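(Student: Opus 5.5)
We follow the proof of Proposition~\ref{prop:generatorboundTrunc} almost verbatim, replacing Lemma~\ref{lem:integralrepapproxTrunc} by Lemma~\ref{lem:integralrepapproxFullyFinite} and Lemma~\ref{lem:HamEvolTrunc}. Both generators have the integral form of Proposition~\ref{integralrep}: $\mathcal{L}^{\le M}_{\sigma_E,\widehat f,H}(\rho)=G^{\le M}_{\sigma_E}\rho+\rho(G^{\le M}_{\sigma_E})^\dagger+\Phi^{\le M}_{\sigma_E,\widehat f,H}(\rho)$. The fully truncated generator has the analogous objects $G^{\le M}_{\sigma_E,\le M}$ and $\Phi^{\le M}_{\sigma_E,\widehat f,H_{\le M}}$. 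These are built from $L^{\alpha,\le M}_{\le M}$, and $H_{\le M}$ replaces $H$ both inside the jumps and in the outer conjugations $e^{\pm itH}$ and $e^{\pm isH}$. We bound the drift difference and the CP difference separately.

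\textbf{Drift terms.} We write
\begin{align*}
G^{\le M}_{\sigma_E}-G^{\le M}_{\sigma_E,\le M}
=-\sum_\alpha\int g(t)\Big[&e^{itH}\big((L^{\alpha,\le M})^\dagger L^{\alpha,\le M}-(L^{\alpha,\le M}_{\le M})^\dagger L^{\alpha,\le M}_{\le M}\big)e^{-itH}\\
&+\big(e^{itH}Y e^{-itH}-e^{itH_{\le M}}Ye^{-itH_{\le M}}\big)\Big]dt,
\end{align*}
where $Y=(L^{\alpha,\le M}_{\le M})^\dagger L^{\alpha,\le M}_{\le M}$ satisfies $\|Y\|\le C_f^2 q'(M)^2$ by \eqref{eqLalphaMMbound}.

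For the first bracket acting on $\rho$, insert $e^{-\NA^\kappa}e^{\NA^\kappa}$. Then use the second bound of Lemma~\ref{lem:integralrepapproxFullyFinite} together with \eqref{eq:ExpBound2}, which gives $\|e^{\NA^\kappa}e^{-itH}\rho\|_1\le e^{r|t|^{2\kappa}/2}\sqrt{E_2}$ via $\|e^{\NA^\kappa}e^{-itH}\sqrt\rho\|_2\|\sqrt\rho\|_2$.

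For the second bracket, split it as $(e^{itH}-e^{itH_{\le M}})Ye^{-itH}\rho$ plus $e^{itH_{\le M}}Y(e^{-itH}-e^{-itH_{\le M}})\rho$. The second piece is handled directly by Lemma~\ref{lem:HamEvolTrunc}, which produces a factor $|t|e^{r|t|^{2\kappa}/2}p\,e^{-M^\kappa}\sqrt{E_2}$. The first piece has the Duhamel difference on the left, where no energy weight is available. We therefore take adjoints and apply the lemma on the other side, because $\rho$ is Hermitian and the trace norm is adjoint invariant.

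This leaves $\int|g(t)|(1+|t|)e^{r|t|^{2\kappa}/2}dt$, which is bounded as in \eqref{eq:gkappaIntegralBound} by $C_{\beta,r,\kappa}\sigma_E\exp(\pi^2/8\beta^2\sigma_E^2)$. The extra factor $|t|$ is absorbed into $C_{\beta,r,\kappa}$ because of the exponential decay of $g$. The term $\rho G^\dagger$ is treated the same way.

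\textbf{CP term.} With $X^{\alpha,\le M}_s=e^{isH}L^{\alpha,\le M}e^{-isH}$ and $X^{\alpha,\le M}_{s,\le M}$ its fully truncated analogue, we use the same telescoping as before:
\[
\|X\rho X^\dagger-X'\rho X'^\dagger\|_1\le\|(X-X')\sqrt\rho\|_2\|X^\dagger\|+\|X'\|\,\|\sqrt\rho(X-X')^\dagger\|_2 .
\]
Then $\|(X_s-X'_s)\sqrt\rho\|_2$ is bounded by combining the first bound of Lemma~\ref{lem:integralrepapproxFullyFinite}, applied to $e^{-isH}\sqrt\rho$ via \eqref{eq:ExpBound2}, with Lemma~\ref{lem:HamEvolTrunc} for the outer conjugations. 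This yields $(1+|s|)e^{r|s|^{2\kappa}}\max\{C_f,C'_f\}q'(M)p\,e^{-M^\kappa}\sqrt{E_2}$.

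The operator norms are $\|X_s\|\le C_fq'(M)$ and $\|X'_s\|\le C_fq'(M)$, by \eqref{eq:NormTruncatedJump} and \eqref{eqLalphaMMbound}; here $\|f\|_1\le C_f$. Using these avoids needing $E_2$ twice and explains why only $\sqrt{E_2}$ appears in the statement.

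Finally integrate against $\sigma_E\sqrt{2/\pi}e^{-2\sigma_E^2s^2}$. Use $(1+|s|)e^{r|s|^{2\kappa}}\le e^{r}e^{(r+1)|s|}$ and the Gaussian moment generating function, which gives $\lesssim\exp(r+(r+1)^2/8\sigma_E^2)$. This is exactly the shifted $r+1$ in the statement.

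\textbf{Main obstacle.} The main obstacle is the outer-conjugation term in which the difference $e^{itH}-e^{itH_{\le M}}$ sits to the left of a bounded operator without an energy weight. Lemma~\ref{lem:HamEvolTrunc} controls only $(\cdot)e^{-\NA^\kappa}$, so the weight must be moved across the bounded middle factor. The plan is to use the adjoint trick described above, or alternatively to write $Ye^{-itH}\rho=(Ye^{-itH}e^{\NA^\kappa}\ldots)$ using \eqref{eq:NormTruncatedJump2}. Collecting everything gives the polynomial $p_3=(q')^2|\cA|p$ and the prefactor $\widetilde C_f$.
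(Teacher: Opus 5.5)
Your overall architecture (integral representation, separate drift and CP bounds, $\sqrt{E_2}$ via \eqref{eq:ExpBound2}, the Gaussian moment producing $\exp(r+(r+1)^2/8\sigma_E^2)$) matches the paper. However, the step you flag as the main obstacle does not work as proposed.

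In your drift decomposition, the left outer piece is $T:=(e^{itH}-e^{itH_{\le M}})\,Y\,e^{-itH}\rho$ with $Y=(L^{\alpha,\le M}_{\le M})^\dagger L^{\alpha,\le M}_{\le M}$. Lemma~\ref{lem:HamEvolTrunc} (for either sign of $t$, or in adjoint form) only controls the Duhamel difference when it is multiplied by $e^{-\NA^\kappa}$. So you must insert $e^{-\NA^\kappa}e^{\NA^\kappa}$ next to it and then bound $\|e^{\NA^\kappa}Ye^{-itH}\rho\|_1$. That requires $e^{\NA^\kappa}Ye^{-\NA^\kappa}$ to be bounded.

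The adjoint trick is circular. We have $T^\dagger=\rho\, e^{itH}Y(e^{-itH}-e^{-itH_{\le M}})$. Using the adjoint form $\|e^{-\NA^\kappa}(e^{-itH}-e^{-itH_{\le M}})\|$ of the lemma leaves $\|\rho e^{itH}Ye^{\NA^\kappa}\|_1=\|e^{\NA^\kappa}Ye^{-itH}\rho\|_1$, the same quantity. Taking adjoints does not change which operator sits next to the difference.

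Your alternative via \eqref{eq:NormTruncatedJump2} also fails. That bound controls the bare $(A^\alpha)^{\le M}$, but $Y$ contains the evolutions $e^{\pm isH_{\le M}}$. No energy-growth bound of type \eqref{eq:ExpBound2} is assumed for $H_{\le M}$, so $e^{\NA^\kappa}L^{\alpha,\le M}_{\le M}e^{-\NA^\kappa}$ is not controlled by the hypotheses.

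The missing idea is to telescope in the other order, so that the left difference is attached to the operator built from the untruncated $H$:
\begin{align*}
e^{itH}Xe^{-itH}-e^{itH_{\le M}}Ye^{-itH_{\le M}}&=(e^{itH}-e^{itH_{\le M}})Xe^{-itH}+e^{itH_{\le M}}(X-Y)e^{-itH}\\
&\quad+e^{itH_{\le M}}Y(e^{-itH}-e^{-itH_{\le M}}),\qquad X:=(L^{\alpha,\le M})^\dagger L^{\alpha,\le M}.
\end{align*}
Now $\|e^{\NA^\kappa}Xe^{-\NA^\kappa}\|\le C_f^2q'(M)^2$ follows from \eqref{eq:ExpBound2} and both halves of \eqref{eq:NormTruncatedJump2}, as in \eqref{eq:ThereAreLoadsOfInequalitiesHere}. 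The operator $Y$ enters only through its operator norm, in the last term. This is the paper's decomposition.

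Your CP term needs the same care, since otherwise you face $(e^{isH}-e^{isH_{\le M}})L^{\alpha,\le M}_{\le M}e^{-isH}\sqrt\rho$. There are two ways to handle it:
\begin{itemize}
\item use the same ordering, pairing the left difference with $L^{\alpha,\le M}$, for which $\|e^{\NA^\kappa}L^{\alpha,\le M}e^{-\NA^\kappa}\|\le C_fq'(M)$; or
\item as the paper does, write $X^{\alpha,\le M}_s=\int f(t)e^{i(t+s)H}(A^\alpha)^{\le M}e^{-i(t+s)H}dt$ and likewise for $H_{\le M}$. Then only the bare truncated jump sits between the two evolutions, and \eqref{eq:NormTruncatedJump2} applies directly.
\end{itemize}
With this reordering, the rest of your bookkeeping goes through and gives the stated bound.
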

\begin{proof}
We make use of the integral representation of Proposition \ref{integralrep}. First, we denote the operators
\begin{align*}
G^{\le M}_{\sigma_E}\!&:=\!-\sum_{\alpha\in\cA}\!\int_{-\infty}^\infty\! \!\!\!\!g(t)e^{itH} (L^{\alpha,\le M})^\dagger L^{\alpha,\le M}e^{-itH}\!dt\\G^{\le M}_{\sigma_E,\le M}\!&:=\!-\sum_{\alpha\in\cA}\!\int_{-\infty}^\infty\! \!\!\!\!g(t)e^{itH_{\le M}} (L_{\le M}^{\alpha,\le M})^\dagger L_{\le M}^{\alpha,\le M}e^{-itH_{\le M}}\!dt,
\end{align*}
    where $G^{\le M}_{\sigma_E}$ appears in the definition of $\mathcal{L}^{\le M}_{\sigma_E,\widehat{f},H}$ and $G^{\le M}_{\sigma_E,\le M}$ appears in the definition of $\mathcal{L}^{\le M}_{\sigma_E,\widehat{f},H_{\le M}}.$
\bin{Following exactly the steps as in the proof of Proposition~\ref{prop:generatorboundTrunc} while utilising Lemma~\ref{lem:integralrepapproxFullyFinite} instead of Lemma~\ref{lem:integralrepapproxTrunc}, we see 
\begin{align}
    \nn&\left\|\left(G^{\le M}_{\sigma_E}+\sum_{\alpha\in\cA}\!\int_{-\infty}^\infty\! \!\!\!\!g(t)e^{-itH} (L_{\le M}^{\alpha,\le M})^\dagger L_{\le M}^{\alpha,\le M}e^{itH}\!dt\right)\rho\right\|_1 \\&\le C_{\beta,r,\kappa}C_fC'_f\,\sigma_E
\exp\!\left(\tfrac{\pi^2}{8\beta^2\sigma_E^2}\right)|\cA|p_2(|\cA|,M)e^{-M^{\kappa}}\sqrt{E_4},
\end{align}
for some constant $C_{\beta,r,\kappa}\ge 0.$
On the other hand, we see
\begin{align*}
  \nn&\left\|\left(G^{\le M}_{\sigma_E,\le M}+\sum_{\alpha\in\cA}\!\int_{-\infty}^\infty\! \!\!\!\!g(t)e^{-itH} (L_{\le M}^{\alpha,\le M})^\dagger L_{\le M}^{\alpha,\le M}e^{itH}\!dt\right)\rho\right\|_1  \\&\nn\le \sum_{\alpha\in\cA}\int |g(t)| \left\|\left(e^{-itH}-e^{-itH_{\le M}}\right) (L_{\le M}^{\alpha,\le M})^\dagger L_{\le M}^{\alpha,\le M} e^{itH}dt\rho \right\|_1\\&\qquad+\sum_{\alpha\in\cA}\int |g(t)| \left\|(L_{\le M}^{\alpha,\le M})^\dagger L_{\le M}^{\alpha,\le M} \left(e^{itH}-e^{itH_{\le M}}\right) dt\rho\right\|_1
\end{align*}}
We see
\begin{align}
\label{eq:MoreGApproxiii}
\nn\left\|\left(G^{\le M}_{\sigma_E}-G^{\le M}_{\sigma_E,\le M}\right)\rho\right\|_1 &\le \sum_{\alpha\in\cA }\int |g(t)| \left\|\left(e^{itH}-e^{itH_{\le M}}\right) (L^{\alpha,\le M})^\dagger L^{\alpha,\le M} e^{-itH}dt\,\rho \right\|_1\\&\quad\nn+\sum_{\alpha\in\cA }\int |g(t)| \left\|\left((L^{\alpha,\le M})^\dagger L^{\alpha,\le M} -(L^{\alpha,\le M}_{\le M})^\dagger L^{\alpha,\le M}_{\le M}\right)e^{-itH}dt\,\rho \right\|_1\\&\quad+\sum_{\alpha\in\cA }\int |g(t)| \left\|(L^{\alpha,\le M}_{\le M})^\dagger L^{\alpha,\le M}_{\le M}\left(e^{-itH}-e^{-itH_{\le M}}\right)dt\,\rho \right\|_1.
\end{align}
For the first term we note using \eqref{eq:ExpBound2} and \eqref{eq:NormTruncatedJump2} 
 together with an analogous argument as for \eqref{eq:ThereAreLoadsOfInequalitiesHere} that 
\begin{align*}
    \left\|e^{\NA^{\kappa}}(L^{\alpha,\le M})^\dagger L^{\alpha,\le M}e^{-\NA^{\kappa}}\right\| \le  \left\|e^{\NA^{\kappa}}(L^{\alpha,\le M})^\dagger e^{-\NA^{\kappa}}\right\|\left\|e^{\NA^{\kappa}} L^{\alpha,\le M}e^{-\NA^{\kappa}}\right\| \le C^2_f (q'(M))^2 
\end{align*}
and therefore using Lemma~\ref{lem:HamEvolTrunc}
\begin{align*}
&\sum_{\alpha\in\cA }\int |g(t)| \left\|\left(e^{itH}-e^{itH_{\le M}}\right) (L^{\alpha,\le M})^\dagger L^{\alpha,\le M} e^{-itH}dt\,\rho \right\|_1\\&\le C^2_f|\cA|(q'(M))^2\sqrt{E_2} \int |g(t)| e^{r|t|^{2\kappa}/2} \left\|\left(e^{itH}-e^{itH_{\le M}}\right)e^{-\NA^{\kappa}}\right\| dt\\&\le C^2_f|\cA|(q'(M))^2\sqrt{E_2} \int |g(t)| |t|e^{r|t|^{2\kappa}}dt\\& \le C_{\beta,r,\kappa}C^2_f\sqrt{E_2} \,
\sigma_E\,\,
\exp\!\left(\frac{\pi^2}{8\beta^2\sigma_E^2}\right) |\cA|(q'(M))^2p(|\cA|,M)e^{-M^{\kappa}}
\end{align*}
for some constant $C_{\beta,r,\kappa}\ge0$ and
where in the last inequality we used \eqref{eq:gkappaIntegralBound} with a larger value of $r$ in there. The third term in \eqref{eq:MoreGApproxiii} can be estimated analogously and the second as well utilising additionally Lemma~\ref{lem:integralrepapproxFullyFinite} which in total gives
\begin{align*}
    \left\|\left(G^{\le M}_{\sigma_E}-G^{\le M}_{\sigma_E,\le M}\right)\rho\right\|_1\le C_{\beta,r,\kappa}\widetilde C_f\sqrt{E_2} \,
\sigma_E\,\,
\exp\!\left(\frac{\pi^2}{8\beta^2\sigma_E^2}\right) |\cA|p_2(|\cA|,M)e^{-M^{\kappa}},
\end{align*}
where we increased the constant $C_{\beta,r,\kappa}$ by a factor 3 compared to the above and, furthermore, denoted $\widetilde C_f := C_f\max\{C_f,C'_f\}.$

Next, for 
\begin{align*}
 X^{\alpha,\le M}_s:=e^{isH}L^{\alpha,\le M} e^{-isH}
,\quad  X^{\alpha,\le M}_{s,\le M}:=e^{isH_{\le M}}L^{\alpha,\le M}_{\le M} e^{-is{H_{\le M}}}
\end{align*}
using that by \eqref{eq:NormTruncatedJump} we have $\|X^{\alpha,\le M}_s\|,\, \|X^{\alpha,\le M}_s\|\le \|f\|_1 q'(M)\le C_f q'(M),$
we see 
\begin{align*}
    &\left\| X^{\alpha,\le M}_s\rho X^{\alpha,\le M}_s- X^{\alpha,\le M}_{s,\le M}\rho X^{\alpha,\le M}_{s,\le M}\right\|_1 \\&\le C_f\,q'(M)\left( \left\|\left(X^{\alpha,\le M}_s-  X^{\alpha,\le M}_{s,\le M}\right)\rho\right\|_1+\left\|\rho\left(X^{\alpha,\le M}_s-  X^{\alpha,\le M}_{s,\le M}\right)\right\|_1\right).
\end{align*}
We focus on the term as the second term can be estimated the same way:
Using \eqref{eq:NormTruncatedJump}
and Lemma~\ref{lem:HamEvolTrunc} we see
\begin{align*}
    \left\|\left(X^{\alpha,\le M}_s-  X^{\alpha,\le M}_{s,\le M}\right)\rho\right\|_1 &\le \int |f(t)|\Big(\left\|\left(e^{i(t+s)H}-e^{i(t+s)H_{\le M}}\right)(A^{\alpha})^{\le M} e^{-i(t+s)H}\rho\right\|_1 \\&\qquad+\left\|e^{i(t+s)H_{\le M}}(A^{\alpha})^{\le M} \left(e^{-i(t+s)H}-e^{-i(t+s)H_{\le M}}\right)\rho\right\|_1 \Big)dt\\&\le 2 q'(M)\sqrt{E_2}\,p(|\cA|,M)e^{-M^{\kappa}}\int |f(t)| |t+s| e^{r|t+s|^{2\kappa}}dt \\&\le2\max\{C_f,C'_f\} \sqrt{E_2}\, q'(M) p(|\cA|,M)e^{-M^\kappa} (1+|s|)e^{r|s|^{2\kappa}}.
\end{align*}
Using this and denoting the map
\begin{align}\label{phialphasigmaEmapTruncDouble}
\Phi^{\le M}_{\sigma_E,\widehat f,H_{\le M}}(\rho):=\sigma_E\sqrt{\frac{2}{\pi}}\sum_{\alpha\in\cA}\int_{\mathbb{R}}e^{-2\sigma_E^2 s^2}X^{\alpha,\le M}_{s,\le M} \rho (X^{\alpha,\le M}_{s,\le M})^\dagger ds,
\end{align}
we see that
\begin{align*}
\left\|\left(\Phi^{\le M}_{\sigma_E,\widehat f,H}-\Phi^{\le M}_{\sigma_E,\widehat f,H_{\le M}}\right)(\rho)\right\|_1\lesssim  \,\widetilde C_f\,\sqrt{E_2} |\cA|(q'(M))^2p(|\cA|,M)e^{-M^{\kappa}} \exp\left(r+\tfrac{(r+1)^2}{8\sigma^2_E}\right)  , 
\end{align*}
where we used that  
\begin{align*}
\,\sigma_E\int e^{-2\sigma^2_Es^2}(1+|s|) e^{r|s|^{2\kappa}}ds&\le  e^{r} \,\sigma_E\int e^{-2\sigma^2_Es^2}(1+|s|) e^{r|s|}ds\\&\lesssim e^{r} \,\sigma_E\int e^{-2\sigma^2_Es^2} e^{(1+r)|s|}ds \lesssim \exp\left(r+\tfrac{(r+1)^2}{8\sigma^2_E}\right)   
\end{align*}
and where we only hide constants independent of all parameters with the $\lesssim$-notation.

\end{proof}

\subsubsection{Finite-dimensional Lindblad dynamics for Schwartz filter function}
\label{sec:FiniteDimPreperationSchwartz}

In this section we combine the results of the previous two sections, in particular Propositions~\ref{prop:generatorboundTrunc} and~\ref{prop:generatorboungFullyFinite}, to see in Theorem~\ref{thm:SchwartzGibbsDynamics} that the dynamics $e^{t\cL_{\sigma_E,\widehat f,H}}$ for Schwartz filter function is well approximated by $e^{t\cL^{\le M}_{\sigma_E,\widehat f,H_{\le M}}}.$ 
Moreover,  under the assumption of spectral gap of the corresponding unbounded generator on the Hilbert-Schmidt space,  we then apply Theorem~\ref{thm:SchwartzGibbsDynamics} to show efficient finite-dimensional preparation of the Gibbs state of $H$ in Corollary~\ref{cor:SchwartzGibbs}. 

We first state both results and then give their proofs at the end of the section.

\begin{theorem}[finite-dimensional approximation of $e^{t\cL_{\sigma_E,\widehat f,H}}$]
\label{thm:SchwartzGibbsDynamics}
Let $f\in\mathcal{S}(\mathbb{R})$ and set of bare jumps $\{A^\alpha\}_{\alpha\in\cA}$ and assume  \Cref{eq:condAalphas} is satisfied. 
Let further $\NA$ self-adjoint and positive semidefinite and $\kappa\in(0,1/2)$ and assume that \eqref{eq:NormTruncatedJump}, \eqref{eq:AbstractFiniteTruncBareJumps}, \eqref{eq:NormTruncatedJump2} are  satisfied and that $C_A := \max_{\alpha\in\cA}\left(\|e^{\NA^{\kappa}}A^{\alpha}e^{-2\NA^{\kappa}}\|+\|A^{\alpha}e^{-\NA^{\kappa}}\|+\|e^{-\NA^{\kappa}}A^{\alpha}\|\right)<\infty.$ 
Furthermore, assume that for $\kappa\in(0,1/2)$ as above \eqref{eq:HamiltonianLeakTrunc} holds and, for some $r\ge 0$ the condition \eqref{eq:ExpBound} is satisfied and that $\int_{-\infty}^{\infty} |t|e^{r|t|^{2\kappa}} |f(t)|dt <\infty.$ 
For $\beta>0$ we assume that the Gibbs state of $H,$ $\sigma_\beta,$ satisfies
\begin{align}E_{\operatorname{Gibbs}}:=\Tr\left(e^{4\NA^\kappa}\, \sigma_\beta\right)<\infty
\end{align}
and consider state $\rho$ such that 
\begin{align}
\label{eq:InputDmaxConstraint}
    \rho \le \mathfrak{c}\,\sigma_\beta
\end{align}
for some $\mathfrak{c}\ge 1.$ Then for all $\sigma_E\in(0,\infty),$ $t\ge 0$ and $M\in\N$ we have
    \begin{align}
\left\|\left(e^{t\cL_{\sigma_E,\widehat f,H}}-e^{t\cL^{\le M}_{\sigma_E,\widehat f,H_{\le M}}}\right)(\rho)\right\|_1\le t\,C_1(\beta,r,\kappa,\sigma_E,f,C_A)\,\mathfrak{c} \,p_4(|\mathcal{A}|,M)E_{\operatorname{Gibbs}}\, e^{-M^\kappa} , 
    \end{align}
    where $C_1(\beta,r,\kappa,\sigma_E,f,C_A)\ge 0$ is some constant depending only on the displayed parameters and $p_4(|\mathcal{A}|,M)$ is some polynomial bounded function. Therefore, for $\eps>0,$ we can achieve
    \begin{align}\label{eq:Metal}
       &\left\|\left(e^{t\cL_{\sigma_E,\widehat f,H}}-e^{t\cL^{\le M}_{\sigma_E,\widehat f,H_{\le M}}}\right)(\rho)\right\|_1\le \eps,\qquad\text{with}\qquad M = \widetilde{\mathcal{O}} \left(\left(\log\left(\frac{t\,\mathfrak{c}\,E_{\operatorname{Gibbs}}|\cA|}{ \eps}\right)\right)^{1/\kappa}\right),
    \end{align}
    where the $\widetilde{\mathcal{O}}$ notation hides constants independent of the displayed parameters and additionally suppresses subdominant $\operatorname{poly}\log\log$ factors.
\end{theorem}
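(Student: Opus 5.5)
The argument is a Duhamel comparison of the two semigroups, followed by the generator bounds of Propositions~\ref{prop:generatorboundTrunc} and~\ref{prop:generatorboungFullyFinite} applied along the exact trajectory. Write $\mathcal{L}:=\cL_{\sigma_E,\widehat f,H}$ and $\mathcal{L}_M:=\cL^{\le M}_{\sigma_E,\widehat f,H_{\le M}}$. Since $\mathcal{L}_M$ is bounded (finite rank), its semigroup is a norm-continuous family of quantum channels. Duhamel's formula, tested on $\rho$ in a domain of $\mathcal{L}$ containing the energy-constrained states, gives
\begin{align*}
\big(e^{t\mathcal{L}}-e^{t\mathcal{L}_M}\big)(\rho)=\int_0^t e^{(t-s)\mathcal{L}_M}\big(\mathcal{L}-\mathcal{L}_M\big)\big(e^{s\mathcal{L}}(\rho)\big)\,ds .
\end{align*}
Because $e^{(t-s)\mathcal{L}_M}$ is trace-norm contractive, the left-hand side is bounded in trace norm by $t\sup_{s\le t}\|(\mathcal{L}-\mathcal{L}_M)(\rho_s)\|_1$, where $\rho_s:=e^{s\mathcal{L}}(\rho)$. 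To justify the formula rigorously, I would first take $\rho\in\mathscr{F}$-type approximants or use that $\rho_s\in D(\mathcal{L})$. By the corollary after Proposition~\ref{integralrep}, the Sobolev-type domain contains states with finite $e^{4\NA^\kappa}$-moments once the relevant moment bounds hold, and I would then pass to the limit.

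The key observation is that the hypothesis $\rho\le\mathfrak{c}\,\sigma_\beta$ is preserved by the dynamics. The map $e^{s\mathcal{L}}$ is positive and fixes $\sigma_\beta$ (Lemma~\ref{T1semigroupgeneration}, together with its $\sigma_E$-version via Proposition~\ref{prop.QMSforLinsigmaEappendix}). Hence $\rho_s\le\mathfrak{c}\,\sigma_\beta$ for all $s\ge0$, and therefore
\[
\Tr\big(e^{k\NA^\kappa}\rho_s\big)\le\mathfrak{c}\,\Tr\big(e^{k\NA^\kappa}\sigma_\beta\big)\le\mathfrak{c}\,E_{\operatorname{Gibbs}}\qquad\text{for }k=2,4,
\]
using $e^{2\NA^\kappa}\le e^{4\NA^\kappa}$ since $\NA\ge0$. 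This gives $E_2,E_4\le\mathfrak{c}E_{\operatorname{Gibbs}}$ uniformly in $s$, with no need to control the growth of energy under the unbounded dynamics.

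Next, split $\mathcal{L}-\mathcal{L}_M=(\mathcal{L}-\cL^{\le M}_{\sigma_E,\widehat f,H})+(\cL^{\le M}_{\sigma_E,\widehat f,H}-\mathcal{L}_M)$. Bound the first piece by Proposition~\ref{prop:generatorboundTrunc} and the second by Proposition~\ref{prop:generatorboungFullyFinite}, both evaluated at $\rho_s$. Using $\sqrt{E_4}\le E_4$ and $\sqrt{E_2}\le E_2$ (valid since $E_k\ge1$) and $\mathfrak{c}\ge1$, the result is
\[
\|(\mathcal{L}-\mathcal{L}_M)(\rho_s)\|_1\le C_1\,\mathfrak{c}\,E_{\operatorname{Gibbs}}\,p_4(|\cA|,M)\,e^{-M^\kappa}.
\]
Here $p_4:=|\cA|q(M)+p_3(|\cA|,M)$. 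The constant $C_1$ collects $C_{\beta,r,\kappa}$, $C_f^2$, $\widetilde C_f$, $C_A$, and the $\sigma_E$-dependent factors $\sigma_E e^{\pi^2/8\beta^2\sigma_E^2}$ and $e^{r+(r+1)^2/8\sigma_E^2}$. The assumption $\int|t|e^{r|t|^{2\kappa}}|f|<\infty$ makes both $C_f$ and $C'_f$ finite. Integrating over $s\in[0,t]$ gives the first claim.

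For \eqref{eq:Metal}, choose $M$ so that $t\,C_1\,\mathfrak{c}\,p_4(|\cA|,M)E_{\operatorname{Gibbs}}e^{-M^\kappa}\le\eps$. Since $p_4$ is polynomial, $\log p_4=O(\log|\cA|+\log M)$, so $M^\kappa\ge\log(t\mathfrak{c}E_{\operatorname{Gibbs}}|\cA|/\eps)+O(\log\log(\cdot))$ suffices. This yields the stated $M=\widetilde{\mathcal{O}}\big((\log(\cdot))^{1/\kappa}\big)$.

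I expect the main obstacle to be the rigorous justification of the Duhamel identity for the unbounded $\mathcal{L}$. One has to check that $\rho_s$ lies in the domain where the integral representation of the corollary holds, and that $s\mapsto(\mathcal{L}-\mathcal{L}_M)(\rho_s)$ is integrable. I would handle this by noting that the explicit formula for $\mathcal{L}$ extends boundedly with respect to the weighted norm $\rho\mapsto\Tr(e^{4\NA^\kappa}\rho)$, which is uniformly bounded along the orbit. I would then close the argument using closedness of $\mathcal{L}$ and the density of $\mathscr{F}$, approximating $\rho$ by $\mathscr{F}$-elements dominated by a multiple of $\sigma_\beta$ as in Lemma~\ref{densityT1T2}.
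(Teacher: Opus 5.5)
Your proposal is correct and follows the paper's proof essentially step for step. It uses Duhamel's formula with the finite-rank semigroup on the outside, trace-norm contractivity, and the triangle split into Propositions~\ref{prop:generatorboundTrunc} and~\ref{prop:generatorboungFullyFinite}. It also uses the invariance $\rho_s\le\mathfrak{c}\,\sigma_\beta$ (from positivity and the fixed point $\sigma_\beta$) to bound all moments uniformly by $\mathfrak{c}\,E_{\operatorname{Gibbs}}$.

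The paper does not address the domain question you raise, and your route through the corollary after Proposition~\ref{integralrep} is slightly misdirected: that corollary's domain is defined by $\widetilde H$-weights, not $e^{4\NA^\kappa}$-moments. The bound $\rho_s\le\mathfrak{c}\,\sigma_\beta$ itself settles it. Write $\rho_s=\sigma_\beta^{1/2}X\sigma_\beta^{1/2}$ with $\|X\|\le\mathfrak{c}$; by the Gibbs hypothesis each factor $\widetilde H^{a}\sigma_\beta^{1/2}$ is Hilbert--Schmidt. Hence $\rho_s$ lies in every $D(\mathcal{W}_H^{a,b})$, and the corollary applies directly.
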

\begin{corollary}[Gibbs state preparation for Schwartz filter function]
\label{cor:SchwartzGibbs}
Under the same assumptions as in Theorem~\ref{thm:SchwartzGibbsDynamics} and assuming additionally that  $L_{\sigma_E,\widehat{f},H},$ the self-adjoint generator on $\mathscr{T}_2(\cH)$ associated to the Lindbladian $\cL_{\sigma_E,\widehat f,H},$ has a positive spectral gap 
$\lambda_2 \equiv \operatorname{gap}\left(L_{\sigma_E,\widehat f,H}\right) >0,$  
we can  achieve for all $\eps>0$
\begin{align}
\nn\left\|e^{t\cL^{\le M}_{\sigma_E,\widehat f,H_{\le M}}}(\rho) - \sigma_\beta\right\|_1 \le \eps\quad \text{with}\quad
t=\mathcal{O}\left(\frac{1}{\lambda_2}\log\left(\frac{\mathfrak{c}}{\eps}\right)\right) \quad\text{and}\quad M = \widetilde{\mathcal{O} }\left(\left(\log\left(\frac{\mathfrak{c}\,E_{\operatorname{Gibbs}}|\cA|}{\lambda_2 \eps}\right)\right)^{1/\kappa}\right).
\end{align}
where the $\mathcal{O}$ and $\widetilde{\mathcal{O}}$ notations hide constants independent of the displayed parameters and $\widetilde{\mathcal{O}}$ additionally suppresses subdominant $\operatorname{poly}\log\log$ factors.
\end{corollary}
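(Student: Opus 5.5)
The plan is to prove Theorem~\ref{thm:SchwartzGibbsDynamics} first and then deduce Corollary~\ref{cor:SchwartzGibbs} from it by a triangle inequality; the statement immediately above is the corollary, so we focus on that deduction. We split
\begin{align*}
\Big\|e^{t\cL^{\le M}_{\sigma_E,\widehat f,H_{\le M}}}(\rho)-\sigma_\beta\Big\|_1\le \Big\|\big(e^{t\cL^{\le M}_{\sigma_E,\widehat f,H_{\le M}}}-e^{t\cL_{\sigma_E,\widehat f,H}}\big)(\rho)\Big\|_1+\Big\|e^{t\cL_{\sigma_E,\widehat f,H}}(\rho)-\sigma_\beta\Big\|_1 ,
\end{align*}
and make each term at most $\eps/2$.

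For the second (mixing) term we use the $\mathscr{T}_2$ picture from Section~\ref{sec.fastconv}. By the non-commutative Radon--Nikodym argument preceding Lemma~\ref{T1semigroupgeneration}, $\rho\le\mathfrak c\sigma_\beta$ gives $\rho=\iota_2(x)$ with $x=\sigma_\beta^{1/4}X\sigma_\beta^{1/4}$ and $0\le X\le\mathfrak c I$. Hence $\|x\|_2^2=\Tr(\sigma_\beta^{1/2}X\sigma_\beta^{1/2}X)\le\mathfrak c\Tr(\rho X)\le\mathfrak c^2$, so $\|x\|_2\le\mathfrak c$. Relation \eqref{eqcLtoL} (valid for $\sigma_E<\infty$ by Proposition~\ref{prop.QMSforLinsigmaEappendix}) and H\"older's inequality $\|\sigma_\beta^{1/4}y\sigma_\beta^{1/4}\|_1\le\|y\|_2$ then give
\begin{align*}
\big\|e^{t\cL_{\sigma_E,\widehat f,H}}(\rho)-\sigma_\beta\big\|_1\le\big\|e^{tL_{\sigma_E,\widehat f,H}}(x)-\sqrt{\sigma_\beta}\big\|_2\le e^{-\lambda_2 t}\|x-\sqrt{\sigma_\beta}\|_2\le e^{-\lambda_2t}\,\mathfrak c .
\end{align*}
Here we used $\langle\sqrt{\sigma_\beta},x\rangle=\Tr\rho=1$, so that $x-\sqrt{\sigma_\beta}$ is orthogonal to the kernel, together with the spectral theorem for the self-adjoint generator with gap $\lambda_2$. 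Choosing $t=\lambda_2^{-1}\log(2\mathfrak c/\eps)$ makes this term at most $\eps/2$.

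For the first (truncation) term we invoke Theorem~\ref{thm:SchwartzGibbsDynamics} with this $t$ and accuracy $\eps/2$. This yields $M=\widetilde{\mathcal O}\big((\log(t\mathfrak cE_{\operatorname{Gibbs}}|\cA|/\eps))^{1/\kappa}\big)$. Substituting $t$ produces a factor $\log(\mathfrak c/\eps)/\lambda_2$ inside the logarithm, which equals $\log(1/\lambda_2)$ plus a $\log\log$ term. The $\log\log$ term is absorbed into the $\widetilde{\mathcal O}$, giving the stated $M$.

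The main work therefore lies in Theorem~\ref{thm:SchwartzGibbsDynamics}. Our plan for it is Duhamel's formula:
\begin{align*}
\big(e^{t\cL}-e^{t\cL'}\big)(\rho)=\int_0^t e^{(t-s)\cL'}(\cL-\cL')e^{s\cL}(\rho)\,ds,
\end{align*}
where $\cL'=\cL^{\le M}_{\sigma_E,\widehat f,H_{\le M}}$ is a genuine trace-norm contraction. The remaining step, and the hardest one, is to control the energies $E_2$ and $E_4$ of the intermediate states $e^{s\cL}(\rho)$ uniformly in $s$. For this we use that order is preserved: $e^{s\cL}(\rho)\le\mathfrak c\, e^{s\cL}(\sigma_\beta)=\mathfrak c\sigma_\beta$. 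It follows that $\Tr(e^{k\NA^\kappa}e^{s\cL}\rho)\le\mathfrak cE_{\operatorname{Gibbs}}$ for $k=2,4$, with $\sqrt{E_4}\le E_4$ since $E_4\ge 1$. Propositions~\ref{prop:generatorboundTrunc} and \ref{prop:generatorboungFullyFinite} then bound the integrand by $\mathfrak cE_{\operatorname{Gibbs}}\,\mathrm{poly}(|\cA|,M)e^{-M^\kappa}$. We must also check that $e^{s\cL}\rho$ lies in the domain where the integral representation holds, which is guaranteed by the exponential moment bound. Integrating over $s\in[0,t]$ gives the claimed bound, and solving $\mathrm{poly}(M)e^{-M^\kappa}\lesssim\eps/(t\mathfrak cE_{\operatorname{Gibbs}}|\cA|)$ for $M$ gives the stated choice of $M$.
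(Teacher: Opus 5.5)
Your proposal is correct and follows the paper's proof essentially verbatim. It uses the same triangle split, the same $\mathscr{T}_2$ spectral-gap bound on the mixing term via $\rho\le\mathfrak c\,\sigma_\beta$, and then Theorem~\ref{thm:SchwartzGibbsDynamics} with the chosen $t$ substituted; that theorem is itself proved by the same Duhamel and order-preservation argument you sketch. One minor point: your step $\Tr(\sigma_\beta^{1/2}X\sigma_\beta^{1/2}X)\le\mathfrak c\Tr(\rho X)$ is wasteful, since the left-hand side equals $\Tr(\rho X)\le\mathfrak c$ and so gives the paper's sharper $\|x\|_2\le\sqrt{\mathfrak c}$, though this does not change the stated $\mathcal O(\cdot)$ bounds.
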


\begin{proof}[Proof of Theorem~\ref{thm:SchwartzGibbsDynamics}]
We use that 
\begin{align*}
    \left(e^{t\cL^{\le M}_{\sigma_E,\widehat f,H_{\le M}}} -e^{t\cL_{\sigma_E,\widehat f,H}}\right)(\rho) = \int_0^t e^{(t-s)\cL^{\le M}_{\sigma_E,\widehat f,H_{\le M}}}\left(\cL^{\le M}_{\sigma_E,\widehat f,H_{\le M}} -\cL_{\sigma_E,\widehat f,H}\right)e^{s\cL_{\sigma_E,\widehat f,H}}(\rho)ds
\end{align*}
and therefore, denoting $\rho(s) :=e^{s\cL_{\sigma_E,\widehat f,H}}(\rho),$ using Proposition~\ref{prop:generatorboundTrunc} and~\ref{prop:generatorboungFullyFinite} we see
\begin{align*}
    &\left\|\left(e^{t\cL^{\le M}_{\sigma_E,\widehat f,H_{\le M}}} -e^{t\cL_{\sigma_E,\widehat f,H}}\right)(\rho)\right\|_1 \\
    & \qquad \le   t\sup_{s\in[0,t]}\left\|\left(\cL^{\le M}_{\sigma_E,\widehat f,H_{\le M}}-\cL_{\sigma_E,\widehat f,H}\right)(\rho(s))\right\|_1 \\
 &    \qquad\le t\sup_{s\in[0,t]}\left\|\left(\cL^{\le M}_{\sigma_E,\widehat f,H_{\le M}}-\cL^{\le M}_{\sigma_E,\widehat f,H}\right)(\rho(s))\right\|_1 +t\sup_{s\in[0,t]}\left\|\left(\cL^{\le M}_{\sigma_E,\widehat f,H}-\cL_{\sigma_E,\widehat f,H}\right)(\rho(s))\right\|_1\\
    &\qquad \le t\,C_1(\beta,r,\kappa,\sigma_E,f,C_A) p_4(|\cA|,M)e^{-M^\kappa} \sup_{s\in[0,t]}\max\left\{\sqrt{\Tr\left(e^{4\NA^\kappa}\rho(s)\right)}\,,\, \Tr\left(e^{2\NA^\kappa}\rho(s)\right) \right\}\\
   &  \qquad\le t\,\mathfrak{c}\,C_1(\beta,r,\kappa,\sigma_E,f,C_A) p_4(|\mathcal{A}|,M) e^{-M^\kappa} E_{\operatorname{Gibbs}},
\end{align*}
where $C_1(\beta,r,\kappa,\sigma_E,f,C_A)\ge 0$ is some constant depending only on the displayed parameters, $p_4(|\mathcal{A}|,M)$ is some polynomially bounded function and where we used in the last inequality that by \eqref{eq:InputDmaxConstraint}, positivity of the map $e^{s\cL_{\sigma_E,\widehat f,H}}$ and the fact that $\cL_{\sigma_E,\widehat f,H}$ is KMS-symmetric we have
\begin{align*}
   \rho(s) \le \mathfrak{c}\, e^{s\cL_{\sigma_E,\widehat f,H}}(\sigma_\beta ) = \mathfrak{c}\,\sigma_\beta. 
\end{align*}
\end{proof}

\begin{proof}[Proof of Corollary~\ref{cor:SchwartzGibbs}]
We split
\begin{align}
\label{eq:MajorTriangle}
   \left\|e^{t\cL^{\le M}_{\sigma_E,\widehat f,H_{\le M}}}(\rho) - \sigma_\beta\right\|_1  &\le \left\|e^{t\cL_{\sigma_E,\widehat f,H}}(\rho) - \sigma_\beta\right\|_1+\left\|\left(e^{t\cL^{\le M}_{\sigma_E,\widehat f,H_{\le M}}} -e^{t\cL_{\sigma_E,\widehat f,H}}\right)(\rho)\right\|_1  
\end{align}
For the first term, we argue as in Section~\ref{sec.fastconv} using positivity of spectral gap of $L_{\sigma_E,\widehat f,H}$ together with \eqref{eq:InputDmaxConstraint} to get
\begin{align*} \left\|e^{t\cL_{\sigma_E,\widehat{f},H}}(\rho)- \sigma_\beta\right\|_1 &\le  e^{-\lambda_2 t} \left\|\sigma_\beta^{-1/4}\rho\sigma^{-1/4}_\beta\right\|_2 = e^{-\lambda_2 t} \sqrt{\Tr\left(\sigma^{-1/2}_\beta\rho\sigma^{-1/2}_\beta\rho\right)} \\&\le \sqrt{\mathfrak{c}}\,e^{-\lambda_2 t} \sqrt{\Tr(\rho)}= \sqrt{\mathfrak{c}}\,e^{-\lambda_2 t},
\end{align*}
which gives the desired bound on the mixing time $t.$ The result then follows by Theorem~\ref{thm:SchwartzGibbsDynamics}.
\end{proof}

\subsection{Smoothly approximating generators with singular filter functions}\label{sec:singular}

\bin{Up to this point, we have developed a framework for implementing generators of Gibbs samplers associated with Schwartz functions $f$. However, in infinite dimensions, such generators typically fail to exhibit a spectral gap. In this section, we extend the framework by developing a perturbative approach that enables us to construct generators corresponding to singular potentials, in analogy with~\eqref{eq:Function}.}

The previous sections on efficient implementation of the Gibbs samplers, in particular Section~\ref{integralrepfschwartz}, heavily relied on strong regularity and fast decay of the filter function $f(t).$ More precisely, throughout all of these sections we assumed that $f,$ and hence also its Fourier transform $\widehat f,$ are Schwartz functions. This enabled us, among other things, to provide integral formulations of the generator $\cL_{\sigma_E,\widehat{f},H}$ in Section~\ref{integralrepfschwartz}, which we then utilised for the proceeding finite-dimensional approximation steps in Section~\ref{sec:FiniteTruncBareJumps}.

On the other hand, in Section~\ref{sec:AbsenceOfGap}, we have seen that fast decay of $\widehat f$ leads to the absence of spectral gap and, hence, no mixing time guarantees of the Gibbs samplers for Hamiltonians of the form $H=h(N)$ with superlinear functions $h$. In Section~\ref{sec:GeneralNumberviaBirthDeath}, we have solved this issue by proving that  for such Hamiltonians we get a positive spectral gap if we take instead the  Metropolis-type filter function, which in Fourier space is given by 
\begin{align}
\label{eq:Linsfunction}
\hatfM(\nu) = \exp\left(-\frac{\sqrt{1+(\beta\nu)^2} +\beta\nu}{4}\right)
.\end{align}
Crucially, $\hatfM$ does not decay for large negative Bohr frequencies as $\lim_{\nu\to-\infty}\hatfM(\nu)=1,$ enabling the existence of a positive spectral gap of the corresponding Gibbs sampler. On the flip side, we, however, clearly have that $
\hatfM(\nu),$ and therefore also $\fM(t),$ cannot be Schwartz functions. In fact $\fM$ is only defined as tempered distribution which formally diverges as $t\to 0.$ 

To bridge this gap between Section~\ref{sec:GeneralNumberviaBirthDeath}, which provides convergence guarantees of the Gibbs samplers, with Section~\ref{sec:implementations}, which provides efficient implementation, we establish in Proposition~\ref{prop:DeltaTo0} that generators relying on the filter function \eqref{eq:Linsfunction} can be approximated by generators with Schwartz filter functions. We then apply this result in Section~\ref{sec:FiniteDimPreperationSingular} to show, analogously as in Theorem~\ref{thm:SchwartzGibbsDynamics}, that the dynamics $e^{t\cL_{\sigma_E,\hatfM,H}}$ is well-approximated by a certain finite-dimensional Lindblad dynamics.

For that we consider in the following  a smooth function 
\begin{align*}
\R_{\ge 0}\times \R &\to [0,1]\quad\text{ with }\quad
(\delta,\nu) \mapsto \widehat \phi_\delta(\nu),
\end{align*}
such that $\widehat\phi_0(\nu) = 1$ for all $\nu\in\R$ and further that $\delta\mapsto \widehat\phi_\delta(\nu)$ is non-increasing and the partial derivative in the first variable, $\delta\mapsto \partial_\delta\widehat \phi_\delta(\nu),$ is non-decreasing. With that we define
\begin{align} 
\label{eq:Defeta}
\eta(\nu):=-\partial_\delta\Big|_{\delta=0}\widehat\phi_\delta(\nu) = -\inf_{\delta\ge 0}\partial_\delta\widehat\phi_\delta(\nu)\ge 0.
\end{align}
Furthermore, we define
\begin{align}
\label{eq:Deltahatf}
    \hatfMdelta(\nu) := \hatfM(\nu) \widehat \phi_\delta(\nu),  
\end{align}
which for $\delta>0$ and suitable choice $\widehat \phi_\delta(\nu)$ turns out to be  a Schwartz function as desired as we see below.

We consider in the following Hamiltonians  $H$ satisfying $H\ge -h_0$ for some $h_0\ge 0$ and having discrete spectrum $\spec(H)$ with corresponding spectral projections being denoted by $P_{E}$ for $E\in\spec(H)$. Further, define the functions
\begin{align}
\label{eq:WeirdG}
   \nn F_\eta(E) &:= \sum_{E'\in\spec(H)}  \eta(E'-E)|\hatfM(E'-E)|,\qquad\ F_1(E) := \sum_{E'\in\spec(H)} |\hatfM(E'-E)|,\\
   F_{\eta,\sigma_E,1}(E) &:=\sum_{E'\in \spec(H)} e^{-\frac{(E'-E)^2}{8\sigma^2_E}} F_{\eta}(E'),\qquad F_{\eta,\sigma_E,2}(E) := F_{\eta}(E)\sum_{E'\in \spec(H)} e^{-\frac{(E'-E)^2}{8\sigma^2_E}}.
\end{align}
for $E\in\spec(H).$ In the following proposition, we see that the generator $\cL_{\sigma_E,\hatfM,H}$ defined in \eqref{eq:DefLsigma_E}, i.e.
\begin{align*}
\mathcal{L}_{\sigma_E,\hatfM,H}(\rho)\!&:= \!\sum_{\alpha\in\cA}\sum_{\nu_1,\nu_2\in B(H)}e^{-\frac{(\nu_1-\nu_2)^2}{8\sigma_E^2}}\,\overline{\hatfM(\nu_1)}\hatfM(\nu_2)A^\alpha_{\nu_2}\rho (A^\alpha_{\nu_1})^\dagger+G_{\sigma_E}\rho+\rho G_{\sigma_E}^\dagger\,,\quad\text{where}\\
G_{\sigma_E}&:= \sum_{\substack{\alpha\in\mathcal{A}\\ \nu_1,\nu_2\in B(H)}}\!e^{-\frac{(\nu_1-\nu_2)^2}{8\sigma_E^2}}\!\!\! \frac{1}{1+e^{\frac{\beta(\nu_2-\nu_1)}{2}}}\overline{\hatfM(\nu_1)}\hatfM(\nu_2)(A^\alpha_{\nu_1})^\dagger A^\alpha_{\nu_2},
\end{align*}
and $\hatfM$ being the Metropolis type filter function \eqref{eq:Linsfunction}, is close to $\cL_{\sigma_E,\hatfMdelta,H}$ when evaluated on input states satisfying certain energy constraints depending on the growth of the functions \eqref{eq:WeirdG}.
\begin{proposition}
\label{prop:DeltaTo0}
Assume \Cref{eq:condAalphas} for $\mu\ge \gamma\ge 0$. Let $\sigma_E\in(0,\infty)$ and $H$ be such that $F_\eta(E),\, F_{\eta,\sigma_E}(E)<\infty$ for all $E\in\spec(H)$, and define the self-adjoint operators
$\widetilde F_\eta(H) = \sum_{E\in\spec(H)}  \widetilde F_\eta(E) P_{E},$  $\widetilde F_\eta(H) = \sum_{E\in\spec(H)}  \widetilde F_\eta(E) P_{E}$ and, for $i=1,2,$ $\widetilde F_{\eta,\sigma_E,i}(H) = \sum_{E\in\spec(H)}  \widetilde F_{\eta,\sigma_E,i}(E) P_{E}$ for some functions $\widetilde F_\eta(E),\, \widetilde F_1(E),\widetilde F_{\eta,\sigma_E,i}(E) > 0$
satisfying
\begin{align*}
    \sum_{E\in \spec(H)}\frac{F_\eta(E)}{ \widetilde F_\eta(E)}<\infty,\qquad  \sum_{E\in \spec(H)}\frac{F_1(E)}{ \widetilde F_1(E)}<\infty\qquad\text{and}\qquad    \sum_{E\in \spec(H)}\frac{F_{\eta,\sigma_E,i}(E)}{ \widetilde F_{\eta,\sigma_E,i}(E)}<\infty.
\end{align*}
Then for state $\rho$ we have 
\begin{align}
\label{eq:Deltato0}
&\left\|\left( \mathcal{L}_{\sigma_E,\hatfM,H} - \mathcal{L}_{\sigma_E,\hatfMdelta,H}\right)(\rho) \right\|_1\nn
\\&\le C_{H,\eta,\sigma_E}\,\delta 
    \sum_{\alpha\in\cA} \Big(\|A^\alpha \widetilde  H^{-\gamma}\|^2 \,
    \| \widetilde F_1(H) \widetilde H^{\gamma} \rho \widetilde H^{\gamma}\widetilde F_\eta(H) \|_1\nn
   \\&\qquad\qquad\qquad\qquad+ \|\widetilde H^{-\gamma}A^\alpha\|
      \|\widetilde H^{\gamma}A^\alpha \widetilde H^{-\mu}\|\,
      \left(\|\widetilde F_{\eta,\sigma_E,1}(H) \widetilde H^{\mu}\rho\|_1 +\|\widetilde F_{\eta,\sigma_E,2}(H) \widetilde H^{\mu}\rho\|_1\right)  
\Big) \nn\\& \le C_{H,\eta,\sigma_E}\,\delta 
    \sum_{\alpha\in\cA} \Big(\|A^\alpha \widetilde  H^{-\gamma}\|^2 \,
   \sqrt{ \Tr\left( \widetilde F^2_1(H) \widetilde H^{2\gamma}\rho\right)  \Tr\left( \widetilde F^2_\eta(H) \widetilde H^{2\gamma}\rho\right)} \nn
   \\&\qquad\qquad\qquad\qquad+ \|\widetilde H^{-\gamma}A^\alpha\|
      \|\widetilde H^{\gamma}A^\alpha \widetilde H^{-\mu}\|\,
      \sqrt{\Tr\left(\widetilde F^2_{\eta,\sigma_E,1}(H) \widetilde H^{2\mu}\rho\right)} +\sqrt{\Tr\left(\widetilde F^2_{\eta,\sigma_E,2}(H) \widetilde H^{2\mu}\rho\right)}  
\Big)
\end{align}
where $C_{H,\eta,\sigma_E}:=\max_{i=1,2}\left\{\left(\sum_{E\in\spec(H)} \tfrac{F_\eta(E)}{ \widetilde F_\eta(E)} \right)\left(\sum_{E''\in\spec(H)}\tfrac{F_1(E)}{\widetilde F_1(E)}\right)\,,\ 2\sum_{\substack{E\in\spec(H)}}\!\tfrac{F_{\eta,\sigma_E,i}(E)}{\widetilde F_{\eta,\sigma_E,i}(E)}\right\}.$ \bin{Furthermore, for $H$ having simple spectrum, the estimate \eqref{eq:Deltato0} can be improved by replacing all trace norms involving $\rho$ with operator norms.}
\end{proposition}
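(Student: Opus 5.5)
The plan is to work directly with the Bohr-frequency (spectral) form of $\mathcal{L}_{\sigma_E,\widehat f,H}$ from \eqref{eq:DefLsigma_E}, with $\widehat f=\hatfM$ and $\widehat f=\hatfMdelta$. Because the difference is linear in the products $\overline{\widehat f(\nu_1)}\widehat f(\nu_2)$, we will write
\[
\overline{\hatfM(\nu_1)}\hatfM(\nu_2)-\overline{\hatfMdelta(\nu_1)}\hatfMdelta(\nu_2)=\overline{\hatfM(\nu_1)}\hatfM(\nu_2)\big(1-\widehat\phi_\delta(\nu_1)\widehat\phi_\delta(\nu_2)\big).
\]
The key elementary estimate comes from $\widehat\phi_0=1$, from $\widehat\phi_\delta\in[0,1]$, and from monotonicity of $\partial_\delta\widehat\phi_\delta$ (so that $\partial_\delta\widehat\phi_\delta\ge\partial_\delta\widehat\phi_\delta|_{\delta=0}=-\eta$). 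By the mean value theorem these give $0\le 1-\widehat\phi_\delta(\nu)\le \delta\,\eta(\nu)$, and hence
\[
0\le 1-\widehat\phi_\delta(\nu_1)\widehat\phi_\delta(\nu_2)\le (1-\widehat\phi_\delta(\nu_1))+(1-\widehat\phi_\delta(\nu_2))\le \delta\big(\eta(\nu_1)+\eta(\nu_2)\big).
\]
This produces the overall factor $\delta$ and puts a weight $\eta$ on one of the two frequencies. We will also use $|e^{-(\nu_1-\nu_2)^2/8\sigma_E^2}|\le1$ and $|(1+e^{\beta(\nu_2-\nu_1)/2})^{-1}|\le 1$. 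The Gaussian envelope is kept only where it is needed to make sums over one free energy converge, which is why it appears in the $F_{\eta,\sigma_E,i}$.

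We then bound the CP part and the drift part separately in trace norm, first on $\rho\in\mathscr F$ (finite sums) and then extending by density/closedness exactly as in \Cref{prop.generationtheorem}. For the CP part, we expand each term as $P_{E_2}A^\alpha P_{E_2'}\rho P_{E_1'}(A^\alpha)^\dagger P_{E_1}$ with $\nu_i=E_i-E_i'$. By the triangle inequality and the two pieces $\eta(\nu_1)$ and $\eta(\nu_2)$, each piece is bounded by
\[
\delta\sum_{E_1,E_1',E_2,E_2'}\eta(E_2-E_2')|\hatfM(E_2-E_2')||\hatfM(E_1-E_1')|\,\|P_{E_2}A^\alpha P_{E_2'}\rho P_{E_1'}(A^\alpha)^\dagger P_{E_1}\|_1 .
\]
Using $\|P_{E_2}A^\alpha P_{E_2'}\rho P_{E_1'}(A^\alpha)^\dagger P_{E_1}\|_1\le \|A^\alpha\widetilde H^{-\gamma}\|^2\|\widetilde H^\gamma P_{E_2'}\rho P_{E_1'}\widetilde H^\gamma\|_1$ and summing over $E_2,E_1$ produces $F_\eta(E_2')F_1(E_1')$. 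We then insert $P_{E_2'}\widetilde F_\eta(H)\widetilde F_\eta(H)^{-1}$ and similarly for $E_1'$, which gives $\frac{F_\eta(E_2')}{\widetilde F_\eta(E_2')}\frac{F_1(E_1')}{\widetilde F_1(E_1')}\,\|\widetilde F_\eta(H)\widetilde H^\gamma\rho\widetilde H^\gamma\widetilde F_1(H)\|_1$ (up to adjoints and order of the factors). Summing over $E_1',E_2'$ yields the product constant in $C_{H,\eta,\sigma_E}$. For the drift term $G_{\sigma_E}\rho$, we write it via triple energy sums $P_{E''}(A^\alpha)^\dagger P_{E'}A^\alpha P_E$ with $\nu_2=E'-E$, $\nu_1=E'-E''$, and bound with $\|\widetilde H^{-\gamma}A^\alpha\|\|\widetilde H^\gamma A^\alpha\widetilde H^{-\mu}\|\|\widetilde H^\mu P_E\rho\|_1$. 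The $\eta(\nu_2)$ piece sums $E'$ to $F_\eta(E)$ and $E''$ against the Gaussian to give $F_{\eta,\sigma_E,2}(E)$. The $\eta(\nu_1)$ piece instead gives $\sum_{E''}e^{-(E''-E)^2/8\sigma_E^2}F_\eta(E'')$-type sums, where we bound $|\hatfM|\le1$ on the other factor; this is $F_{\eta,\sigma_E,1}(E)$. Inserting $\widetilde F_{\eta,\sigma_E,i}$ and summing over $E$ gives the second line, and $\rho G_{\sigma_E}^\dagger$ is handled symmetrically, which explains the factor $2$.

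The second inequality in \eqref{eq:Deltato0} will follow from Hölder: $\|XY\rho\|_1$-type terms are bounded by writing $\rho=\sqrt\rho\sqrt\rho$ and using Cauchy–Schwarz in $\mathscr T_2$, e.g.\ $\|A\rho B\|_1\le\|A\sqrt\rho\|_2\|\sqrt\rho B\|_2=\sqrt{\Tr(A^\dagger A\rho)\Tr(BB^\dagger\rho)}$, with $\Tr(\rho)=1$ for the one-sided terms. We expect the main obstacle to be bookkeeping in the drift term: making sure the index on which $\eta$ lands is matched with a sum that the Gaussian actually controls, so that the two $F_{\eta,\sigma_E,i}$ suffice. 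Here we would also check that the (degenerate) projections $P_E$ cost nothing beyond $\|P_E\rho\|_1\le\|\rho\|_1$. A further concern is justifying the passage from $\mathscr F$ to general $\rho$ through lower semicontinuity/closedness when the right-hand side is finite.
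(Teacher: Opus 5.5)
Your proposal is correct and follows essentially the same route as the paper: you expand in the spectral (Bohr-frequency) form and use the mean-value bound $0\le 1-\widehat\phi_\delta\le\delta\,\eta$ on one frequency at a time. As in the paper, you drop the Gaussian in the CP part, keep it in the drift part with the Fermi weight bounded by $1$ (which produces $F_{\eta,\sigma_E,1}$ and $F_{\eta,\sigma_E,2}$), and insert the weights $\widetilde F$ before summing. Two small caveats: first, because both the $\eta(\nu_1)$ and $\eta(\nu_2)$ pieces contribute to the CP part, your argument (like the paper's) actually gives a factor $2$ in front of the product constant there, which the stated $C_{H,\eta,\sigma_E}$ omits. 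Second, the extension from $\mathscr{F}$ is better done directly via the absolutely convergent spectral series, as the paper implicitly does, than via closedness, since a difference of two closed generators need not be closed and the sums are generally infinite even on $\mathscr{F}$.
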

\begin{remark}
Proposition \ref{prop:DeltaTo0} can be extended to the case $\sigma_E=\infty$ by using a slightly different strategy to bound the difference between the $G_{\sigma_E}$ terms: In particular in the current proof, we upper bound the Fermi weight $(1+e^{\frac{\beta(\nu_2-\nu_1)}{2}})^{-1}$ by 1 and explicitly use the Gaussian $e^{-\frac{(\nu_1-\nu_2)^2}{8\sigma_E}}$ to bound one of the appearing sum over energies of $H.$ In the case $\sigma_E =\infty$ we would, however, keep the Fermi weight and bound the respective sum over energies using this. 

As in this paper we usually take $\sigma_E \in(0,\infty)$ as a fixed constant, we only explicitly state and proof the current version of Proposition \ref{prop:DeltaTo0} which applies in this case.
\end{remark}
Before giving the proof of Proposition~\ref{prop:DeltaTo0}, we discuss in the following lemma and remark the growth of the functions in \eqref{eq:WeirdG} and therefore the required energy constraints on the input state in Proposition~\ref{prop:DeltaTo0}.  We then continue to discuss specific choices $\widehat \phi_\delta(\nu)$ and $\eta(\nu)$ and then give the proof of Proposition~\ref{prop:DeltaTo0} and Lemma~\ref{lem:GrowthF_eta} at the end of the section.

\begin{lemma}[Growth of $F_\eta(E)$, $F_1(E),$ $F_{\eta,\sigma_E,1}(E)$ and $F_{\eta,\sigma_E,2}(E)$]
\label{lem:GrowthF_eta}
Let $\eta$ be non-decreasing and symmetric, i.e. $\eta(\nu)=\eta(-\nu).$
Denoting \begin{align*}
   N(E):=|\{E'\in\spec(H)\big| E'< E\}|\qquad\text{and}\,\qquad B_j(E) := \left|\left\{ E'\in \spec(H): E'\in [E+j,E+(j+1))\right\}\right|,
\end{align*} for $j\in\N_0,$ 
we have
\begin{align}
\label{eq:F_etaF_1Growth}
   \nn F_1(E) &\le N(E) + \sum_{j=0}^{\infty} B_j(E)\, e^{-\beta j}, \\F_\eta(E) &\le N(E) \eta(E) + \sum_{j=0}^{\infty} B_j(E)\, e^{-\beta j}\eta(j+1). \end{align}
Moreover, we have
   \begin{align}
\label{eq:F_sigma_EGrowth}\nn F_{\eta,\sigma_E,1}(E)&\le  N^2(E)\eta(E) +N(E)\sum_{j=0}^{\infty} N(E+j+1)e^{-\beta j}\eta(j+1) \\&\nn+\sum_{l=0}^{\infty} \Big(B_l(E) N(E+l+1)\eta(E+l+1) + B_l(E)\sum_{j=0}^{\infty} N(E+j+l+2) e^{-\beta j}\eta(j+1)\Big)e^{-\frac{l^2}{8\sigma_E}}\\ 
    F_{\eta,\sigma_E,2}(E) &\le   \left(N(E) \eta(E) + \sum_{j=0}^{\infty} B_j(E) e^{-\beta j/2} \eta(j+1)\right)\left(N(E)+\sum_{j=0}^\infty e^{-\frac{j^2}{8\sigma^2_E}} B_j(E)\right)
\end{align}

\bin{Labelling the spectrum of $H$ in increasing order as $\spec(H) = \{E^{(k)}\}_{k\in\N_0}$ with $E^{(k+1)}\ge E^{(k)},$ we further assume that there exists $s\in\N$ and $\Delta>0$ such that
\begin{align}
\label{eq:GrowthConditionEigenEnergies}   E^{(k+s)} - E^{(k)} \ge \Delta
\end{align}
for all $k\in\N_0.$ Then we have
\begin{align}
\label{eq:GrowthFeta}
    F_\eta(E) \lesssim (E+h_0)\,\eta(E) + \sum_{j=0}^{\infty}e^{-\beta j}\eta(j+1),
\end{align}
where the $\lesssim$-notation hides a multiplicative constant which only depends on $s$ and $\Delta.$}
\end{lemma}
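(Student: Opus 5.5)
The plan is to split each sum over $E'\in\spec(H)$ according to whether $E'<E$ or $E'\ge E$, and to bound the filter in each regime.

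\textbf{Filter bounds.} From \eqref{eq:Linsfunction} we have $0<\hatfM(\nu)\le 1$ for every $\nu$. For $\nu\ge0$ we use $\sqrt{1+(\beta\nu)^2}\ge\beta\nu$, which gives $\hatfM(\nu)\le e^{-\beta\nu/2}$. The exponents $e^{-\beta j}$ in the statement are presumably meant up to such constants in $\beta$; I aim to prove the bounds with $e^{-\beta j/2}$, or to absorb the discrepancy into the convention.

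\textbf{Bounds for $F_1$ and $F_\eta$.} Split $F_1(E)=\sum_{E'<E}+\sum_{E'\ge E}$.
\begin{itemize}
\item For the first part, $|\hatfM|\le1$, so it is at most $N(E)$.
\item For the second part, partition $[E,\infty)$ into the unit intervals $[E+j,E+j+1)$. On the $j$-th interval there are $B_j(E)$ points, each with filter weight $\le e^{-\beta j/2}$. This gives \eqref{eq:F_etaF_1Growth} for $F_1$.
\end{itemize}
For $F_\eta$ we also need to control $\eta(E'-E)$. Since $\eta$ is symmetric and non-decreasing in $|\nu|$, we get $\eta(E'-E)\le\eta(j+1)$ on the $j$-th interval. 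For $E'<E$ we use $|E'-E|\le E+h_0$ and bound $\eta$ by $\eta(E)$, after shifting $H$ so that $h_0=0$ or absorbing the shift into constants. This yields the $N(E)\eta(E)$ term.

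\textbf{Bound for $F_{\eta,\sigma_E,2}$.} This is immediate: it is the product of the $F_\eta$ bound with the Gaussian sum $\sum_{E'}e^{-(E'-E)^2/8\sigma_E^2}$. Split that sum the same way: the part over $E'<E$ is at most $N(E)$, and the part over $E'\ge E$ is at most $\sum_j B_j(E)e^{-j^2/8\sigma_E^2}$.

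\textbf{Bound for $F_{\eta,\sigma_E,1}$ (main bookkeeping).} Insert the $F_\eta$ bound at the point $E'$:
\[
F_\eta(E')\le N(E')\eta(E')+\sum_j B_j(E')e^{-\beta j}\eta(j+1),
\]
then split the outer sum over $E'$.
\begin{itemize}
\item For $E'<E$, bound the Gaussian by $1$, and use $N(E')\le N(E)$ and $\eta(E')\le\eta(E)$. This gives $N(E)^2\eta(E)$. Bound $B_j(E')$ by the number of eigenvalues below $E'+j+1$, which is at most $N(E+j+1)$. This gives $N(E)\sum_j N(E+j+1)e^{-\beta j}\eta(j+1)$.
\item For $E'\in[E+l,E+l+1)$, the Gaussian is at most $e^{-l^2/8\sigma_E^2}$, and there are $B_l(E)$ such $E'$. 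Here $N(E')\le N(E+l+1)$ and $\eta(E')\le\eta(E+l+1)$. Also $B_j(E')\le N(E'+j+1)\le N(E+j+l+2)$.
\end{itemize}
Collecting these terms gives \eqref{eq:F_sigma_EGrowth}.

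The main obstacle is not conceptual. It is keeping track of the exact exponents ($e^{-\beta j}$ versus $e^{-\beta j/2}$, and $\sigma_E$ versus $\sigma_E^2$) and of the identification $\eta(E'-E)\le\eta(E)$ when $E'<E$, which requires $\spec(H)\subset[0,\infty)$. I would handle this by shifting $H$, or by replacing $\eta(E)$ with $\eta(E+h_0)$.
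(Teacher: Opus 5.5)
Your proposal is correct and follows essentially the paper's own argument. You split $\spec(H)$ into $\{E'<E\}$ and the unit shells $[E+j,E+j+1)$, and bound $\hatfM$ by $1$ on the first part and by $e^{-\beta(E'-E)/2}$ on the shells. For $F_{\eta,\sigma_E,1}$ you insert the $F_\eta$ bound at $E'$ and use monotonicity of $N$ and $\eta$ together with $B_j(E')\le N(E'+j+1)$, exactly as the paper does.

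Your caveats are substantive, not cosmetic:
\begin{itemize}
\item The paper's proof itself only produces $e^{-\beta j/2}$. The stated $e^{-\beta j}$ actually fails, e.g.\ for $H=N$, $E=0$ and $\beta\lesssim 2$.
\item The Gaussian factor should read $e^{-l^2/8\sigma_E^2}$.
\item The steps $\eta(E'-E)\le\eta(E)$ and $\eta(E')\le\eta(E)$ for $E'<E$ tacitly require $\spec(H)\subset[0,\infty)$. Otherwise one should use $\eta(E+h_0)$ instead.
\end{itemize}
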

\begin{remark} [Growth of $F_\eta(E)$, $F_1(E),$ $F_{\eta,\sigma_E,1}(E)$ and $F_{\eta,\sigma_E,2}(E)$ assuming Weyl-asymptotics]
For many models of interest, for example the Bose-Hubbard model with repulsive on-site interactions or trapped particles interacting via Coulomb potentials, it can be shown that the functions $N(E)$ and $B_j(E)= N(E+j+1)-N(E+j)$ satisfy \emph{Weyl-type asymptotics}, namely
\begin{align*}
    N(E) = \mathcal{O}( E^k), \qquad\text{and }\qquad B_j(E) = \mathcal{O}((E+j)^{k-1}).
\end{align*}
for some $k\ge 0.$ In this case, we see for  $\eta(E)$ growing subexponentially that the second terms in the upper bounds on $F_{\eta}(E)$ and $F_1(E)$ in \eqref{eq:F_etaF_1Growth} are constant and therefore
\begin{align*}
    F_{\eta}(E) = \mathcal{O}\left(N(E)\eta(E)\right) = \mathcal{O}\left(E^k\eta(E)\right)\qquad\text{and}\qquad F_{1}(E) =  \mathcal{O}\left(E^k\right).
\end{align*}
Similarly, we see under the same assumptions applied to the bounds in \eqref{eq:F_sigma_EGrowth} that
\begin{align*}
    F_{\eta,\sigma_E,i}(E)  = \mathcal{O}\left(E^{2k}\eta(E)\right)
\end{align*}
for $i=1,2.$ Hence, up to the polynomial corrections, the required energy constraints on the input states in Proposition~\ref{prop:DeltaTo0} are determined by the growth of the function $\eta(E).$
\end{remark}

\medskip 

 \noindent In the following we consider specific choices of $\widehat \phi_\delta(\nu)$ and $\eta(\nu).$ In particular we consider
\begin{align*}
    \widehat \phi_{\delta}(\nu) = e^{-\delta \eta(\nu)}
\end{align*}
for some smooth and non-negative function $\eta(\nu).$ Using this construction, it is easy to see that the assumptions around \eqref{eq:Defeta} are clearly satisfied. We want to use this for the specific choices
\begin{align*}
    \eta_1(\nu) := (\beta\nu)^2\qquad\text{or}\qquad \eta_{2,\theta}(\nu) := e^{(1+(\beta\nu)^2)^\theta}
\end{align*}
for some $\theta\in(0,1/2),$ which satisfy the assumptions of Lemma~\ref{lem:GrowthF_eta} and are subexponentially growing. Given that $\hatfM(\nu)$ is bounded, smooth, and generally well-behaved, we can convince ourselves that $\hatfMdelta = \hatfM \,\widehat\phi_\delta$ for $\delta>0$ is a Schwartz function for  both choices $\eta_1$ and $\eta_{2,\theta}.$

\begin{proof}[Proof of Proposition~\ref{prop:DeltaTo0}]

We denote
\begin{align*}
    \Phi_{\sigma_E,\hatfM,H}(\rho)&:=\sum_{\substack{\alpha\in\mathcal{A}\\ \nu_1,\nu_2}}e^{-\frac{(\nu_1-\nu_2)^2}{8\sigma_E^2}}\, \overline{\hatfM(\nu_1)}\hatfM(\nu_2)\, A^\alpha_{\nu_2} \rho (A^\alpha_{\nu_1})^\dagger,\\   \Phi_{\sigma_E,\hatfMdelta,H}(\rho)&:=\sum_{\substack{\alpha\in\mathcal{A}\\ \nu_1,\nu_2}}e^{-\frac{(\nu_1-\nu_2)^2}{8\sigma_E^2}}\, \overline{\hatfMdelta(\nu_1)}\hatfMdelta(\nu_2)\, A^\alpha_{\nu_2} \rho (A^\alpha_{\nu_1})^\dagger 
\end{align*}
and therefore

\begin{align}
\label{eq:DifferencePhiDelta0}
    \nn&\Phi_{\sigma_E,\hatfM,H}(\rho) - \Phi_{\sigma_E,\hatfMdelta,H}(\rho)\\&=\sum_{\substack{\alpha\in\cA\\\nu_1,\nu_2\in B(H)}}e^{-\frac{(\nu_1-\nu_2)^2}{8\sigma_E^2}}\overline{\hatfM(\nu_1)}\hatfM(\nu_2)\left(1-\widehat\phi_\delta(\nu_1) +\widehat\phi_\delta(\nu_1)(1-\widehat\phi_\delta(\nu_2)\right)\,A^\alpha_{\nu_2} \rho (A^\alpha_{\nu_1})^\dagger.
\end{align}
Taking the trace norm, the first term involving $1-\widehat\phi_\delta(\nu_1)$  can be bounded as
\begin{align*}
&\left\|\sum_{\substack{\nu_1,\nu_2\in B(H)}}e^{-\frac{(\nu_1-\nu_2)^2}{8\sigma_E^2}}\,\overline{\hatfM(\nu_1)}\hatfM(\nu_2)\left(1-\widehat\phi_\delta(\nu_1)\right)\,A^\alpha_{\nu_2} \rho (A^\alpha_{\nu_1})^\dagger\right\|_1 \\&\le \delta \|A^{\alpha} \widetilde H^{-\gamma}\|^2 \sum_{E,E',E'',E'''\in\spec(H)}  \eta(E'-E)|\hatfM(E'-E)||\hatfM(E'''-E'')| \left\|\widetilde H^{\gamma} P_{E''}\rho P_{E}\widetilde H^{\gamma}\right\|_1
\\&= \delta \|A^{\alpha} \widetilde H^{-\gamma}\|^2 \sum_{E,E''\in\spec(H)}  F_\eta(E) F_1(E'')\left\|\widetilde H^{\gamma} P_{E''}\rho P_{E}\widetilde H^{\gamma}\right\|_1 \\&\le 
\delta\|A^\alpha \widetilde H^{-\gamma}\|^2 \| \widetilde F_1(H)\widetilde H^{\gamma} \rho \widetilde H^{\gamma}\widetilde F_\eta(H)\|_1 \left(\sum_{E\in\spec(H)} \tfrac{F_\eta(E)}{ \widetilde F_\eta(E)}\right)\left( \sum_{E''\in\spec(H)} \tfrac{F_1(E'')}{ \widetilde F_1(E'')}\right).
\end{align*}
where in the first inequality we upper bounded the Gaussian\footnote{Keeping the Gaussian would usually lead to slightly weaker requirements on the energy of the input state as $\sum_{E',E'''\in\spec(H)} e^{-\frac{(E'+E''- E -E''')^2}{8\sigma^2_E}} \eta(E'-E)|\hatfM(E'-E)||\hatfM(E'''-E'')|$ in many cases of interest does not increase in $E''$ whereas $\sum_{E',E'''\in\spec(H)} \eta(E'-E)|\hatfM(E'-E)||\hatfM(E'''-E'')|$ does linearly. However, in the interest of an easier analysis we bound the Gaussian by 1.} by 1 and further used that
\begin{align*}
    0\le1-\widehat\phi_\delta(\nu) = \widehat\phi_0(\nu)-\widehat\phi_\delta(\nu)\le \delta\sup_{\delta\ge 0}(-\partial_\delta\widehat\phi_\delta(\nu))=  \delta\,\eta(\nu)
\end{align*} by the mean value theorem and the definition of $\eta(\nu)$ in \eqref{eq:Defeta}. 
 \bin{Furthermore, for $H$ having simple spectrum, we could improve the above inequality by using $\|P_{E} \widetilde H^{\gamma}\rho \widetilde H^{\gamma}P_{E''}\|_1 = \|P_{E} \widetilde H^{\gamma}\rho \widetilde H^{\gamma}P_{E''}\|$ in this case which results in all trace norms involving $\rho$ being replaced by operator norms.}
The second summand in \eqref{eq:DifferencePhiDelta0} including $\widehat\phi_\delta(\nu_1)(1-\widehat\phi_\delta(\nu_2))$ can be estimated the same way using $|\widehat\phi_\delta(\nu_1)|\le 1.$ 
Next denote
\begin{align*}
   G_{\sigma_E}&:= -\sum_{\substack{\alpha\in\mathcal{A}\\ \nu_1,\nu_2}}\!e^{-\frac{(\nu_1-\nu_2)^2}{8\sigma_E^2}}\!\!\! \frac{1}{1+e^{\frac{\beta(\nu_2-\nu_1)}{2}}}\overline{\hatfM(\nu_1)}\hatfM(\nu_2)(A^\alpha_{\nu_1})^\dagger A^\alpha_{\nu_2}\quad\text{and}\\G_{\sigma_E,\delta}&:= -\sum_{\substack{\alpha\in\mathcal{A}\\ \nu_1,\nu_2}}\!e^{-\frac{(\nu_1-\nu_2)^2}{8\sigma_E^2}}\!\!\! \frac{1}{1+e^{\frac{\beta(\nu_2-\nu_1)}{2}}}\overline{\hatfMdelta(\nu_1)}\hatfMdelta(\nu_2)(A^\alpha_{\nu_1})^\dagger A^\alpha_{\nu_2}
\end{align*}
and therefore
\begin{align}
\label{eq:DifferenceGDelta0}
G_{\sigma_E}-G_{\sigma_E,\delta}&:= -\sum_{\substack{\alpha\in\mathcal{A}\\ \nu_1,\nu_2}}\!e^{-\frac{(\nu_1-\nu_2)^2}{8\sigma_E^2}}\!\!\! \frac{1}{1+e^{\frac{\beta(\nu_2-\nu_1)}{2}}}\overline{\hatfM(\nu_1)}\hatfM(\nu_2)\left(1-\widehat\phi_\delta(\nu_1) +\widehat\phi_\delta(\nu_1)(1-\widehat\phi_\delta(\nu_2)\right)(A^\alpha_{\nu_1})^\dagger A^\alpha_{\nu_2}.
\end{align}
We focus on the first term involving $1-\widehat\phi_\delta(\nu_1)$ and see by taking its trace norm that
\bin{\footnote{In the first inequality we use $e^{-\frac{(E''-E)^2}{8\sigma_E^2}}\le 1$ and $\frac{1}{1+e^{\frac{\beta(E''-E)}{2}}}\le 1.$ However, keeping these to regularise the sums over $E$ and $E''$ would lead to slightly weaker requirements on the energy of the input state. However, in the interest of an easier analysis we proceed as below.}}
\begin{align*}
&\left\|\sum_{\nu_1,\nu_2}\!e^{-\frac{(\nu_1-\nu_2)^2}{8\sigma_E^2}}\!\!\! \frac{1}{1+e^{\frac{\beta(\nu_2-\nu_1)}{2}}}\overline{\hatfM(\nu_1)}\hatfM(\nu_2)\left(1-\widehat\phi_\delta(\nu_1)\right)(A^\alpha_{\nu_1})^\dagger A^\alpha_{\nu_2}\,\rho\right\|_1\\&\le \delta\sum_{E,E',E''\in\spec(H)}\!e^{-\frac{(E''-E)^2}{8\sigma_E^2}}|\hatfM(E'-E)|\,\eta(E'-E)\left\|P_E(A^\alpha)^\dagger P_{E'}A^\alpha P_{E''}\,\rho\right\|_1 
    \\&= \delta\sum_{E''\in\spec(H)} F_{\eta,\sigma_E,1}(E'') \left\|P_E(A^\alpha)^\dagger P_{E'}A^\alpha P_{E''}\,\rho\right\|_1  \\&\le \delta\|\widetilde H^{-\gamma}A^{\alpha} \|\|\widetilde H^{\gamma}A^\alpha \widetilde H^{-\mu}\| \|\widetilde F_{\eta,\sigma_E,1}(H)\widetilde{H}^{\mu}\rho\|_1\!\sum_{\substack{E''\in\spec(H)}}\!\tfrac{F_{\eta,\sigma_E,1}(E'')}{\widetilde F_{\eta,\sigma_E,1}(E'')}.
\end{align*}
For the second summand in \eqref{eq:DifferenceGDelta0} including $\widehat\phi_\delta(\nu_1)(1-\widehat\phi_\delta(\nu_2))$ we use $|\widehat\phi_\delta(\nu_1)|\le 1$ and see 
\begin{align*}
&\left\|\sum_{\nu_1,\nu_2}\!e^{-\frac{(\nu_1-\nu_2)^2}{8\sigma_E^2}}\!\!\! \frac{1}{1+e^{\frac{\beta(\nu_2-\nu_1)}{2}}}\overline{\hatfM(\nu_1)}\hatfM(\nu_2)\widehat \phi_\delta(\nu_1)\left(1-\widehat\phi_\delta(\nu_2)\right)(A^\alpha_{\nu_1})^\dagger A^\alpha_{\nu_2}\,\rho\right\|_1\\&\le \delta\sum_{E,E',E''\in\spec(H)}\!e^{-\frac{(E''-E)^2}{8\sigma_E^2}}|\hatfM(E'-E'')|\,\eta(E'-E'')\left\|P_E(A^\alpha)^\dagger P_{E'}A^\alpha P_{E''}\,\rho\right\|_1 
    \\&= \delta\sum_{E'',E\in\spec(H)} e^{-\frac{(E''-E)^2}{8\sigma^2_E}}F_{\eta}(E'') \left\|P_E(A^\alpha)^\dagger P_{E'}A^\alpha P_{E''}\,\rho\right\|_1  \\&\le \delta \|\widetilde H^{-\gamma}A^{\alpha} \|\|\widetilde H^{\gamma}A^\alpha \widetilde H^{-\mu}\| \|\widetilde F_{\eta,\sigma_E,2}(H)\widetilde{H}^{\mu}\rho\|_1\!\sum_{\substack{E''\in\spec(H)}}\!\tfrac{F_{\eta,\sigma_E,2}(E'')}{\widetilde F_{\eta,\sigma_E,2}(E'')}.
\end{align*}
\end{proof}

\begin{proof}[Proof of Lemma~\ref{lem:GrowthF_eta}]
Since 
\begin{align*}
    \hatfM (E'-E) &= e^{ -\frac{\sqrt{1 + (\beta (E'-E))^2} + \beta (E'-E)}{4}} \le e^{-\frac{\beta|E'-E| + \beta(E'-E)}{4}} =
    \begin{cases} 1 \quad&\text{ for }\, E'< E\\
e^{-\frac{\beta(E'-E)}{2}} \quad &\text{ else},
\end{cases} 
\end{align*}
 and using that $\eta$ is non-decreasing and symmetric we can bound $F_\eta$ in \eqref{eq:WeirdG} as follows
\begin{align}
\label{eq:GUpperbound}
    \nn F_\eta(E) &\le \left|\left\{E'\in\spec(H)\big| E'< E\right\}\right|\eta(E) + \sum_{\substack{E'\in\spec(H)\\E'\ge E}} e^{-\beta(E'-E)/2}\,\eta(E'-E)\\&\le 
      N(E) \eta(E) + \sum_{j=0}^{\infty} B_j(E) e^{-\beta j/2} \eta(j+1),
\end{align}
where for the second inequality we split the sum over $E'\ge E$ in a a sum over $j\in\N_0$ and over $\{E'\in\spec(H):E'\in[E+j,E+(j+1))\}$ and estimate the exponential and $\eta$ by the upper and lower end points of the energy intervals respectively.

Furthermore, we can bound the Gaussian sum as 
\begin{align*}
\sum_{E'\in\spec(H)} e^{-\frac{(E'-E)^2}{8\sigma^2_E}} = \sum_{\substack{E'\in\spec(H)\\E'< E}} e^{-\frac{(E'-E)^2}{8\sigma^2_E}} + \sum_{\substack{E'\in\spec(H)\\E'\ge E}} e^{-\frac{(E'-E)^2}{8\sigma^2_E}} \le N(E) + \sum_{j=0}^\infty e^{-\frac{j^2}{8\sigma^2_E}} B_j(E).
\end{align*}
Using this we obtain 
\begin{align*}
    F_{\eta,\sigma_E,2}(E) &= F_\eta(E) \sum_{E'\in\spec(H)} e^{-\frac{(E'-E)^2}{8\sigma^2_E}} \\&\le \left(N(E) \eta(E) + \sum_{j=0}^{\infty} B_j(E) e^{-\beta j/2} \eta(j+1)\right)\left(N(E)+\sum_{j=0}^\infty e^{-\frac{j^2}{8\sigma^2_E}} B_j(E)\right)
\end{align*}
Lastly for $F_{\eta,\sigma_E,1}(E)$ we argue similarly as 
\begin{align*}
F_{\eta,\sigma_E,1}(E) &= \sum_{E'\in\spec(H)} F_\eta(E')e^{-\frac{(E'-E)^2}{8\sigma_E}} \le    \sum_{E'\in\spec(H)}\left(N(E') \eta(E') + \sum_{j=0}^{\infty} B_j(E')\, e^{-\beta j}\eta(j+1)\right)e^{-\frac{(E'-E)^2}{8\sigma_E}} \\&= N(E)\left(N(E) \eta(E) + \sum_{j=0}^{\infty} \max_{E'<E}B_j(E')\, e^{-\beta j}\eta(j+1)\right) \\&\qquad+\sum_{E'\ge E} \left(N(E') \eta(E') +\sum_{j=0}^{\infty} B_j(E')\, e^{-\beta j}\eta(j+1)\right)e^{-\frac{(E'-E)^2}{8\sigma_E}} \\&\le N(E)\left(N(E)\eta(E) +\sum_{j=0}^{\infty} \max_{E'<E}B_j(E')e^{-\beta j}\eta(j+1) \right)\\&\qquad+\sum_{l=0}^{\infty} \Big(B_l(E) N(E+l+1)\eta(E+l+1) \\&\qquad\qquad+ B_l(E)\sum_{j=0}^{\infty} \max_{E'\in [E+l,E+(l+1))\cap\spec(H)} B_j(E') e^{-\beta j}\eta(j+1)\Big)e^{-\frac{l^2}{8\sigma_E}} \\ &\le N^2(E)\eta(E) +N(E)\sum_{j=0}^{\infty} N(E+j+1)e^{-\beta j}\eta(j+1) \\&\qquad+\sum_{l=0}^{\infty} \Big(B_l(E) N(E+l+1)\eta(E+l+1) \\&\qquad\qquad+ B_l(E)\sum_{j=0}^{\infty} N(E+j+l+2) e^{-\beta j}\eta(j+1)\Big)e^{-\frac{l^2}{8\sigma_E}}
\end{align*}
where we consistently used that $E\mapsto N(E)$ and $E\mapsto \eta(E)$ are non-decreasing and, furthermore, that by definition $B_j(E) \le N(E+j+1).$
\end{proof}
\subsubsection{Finite-dimensional Lindblad dynamics for singular filter functions}
\label{sec:FiniteDimPreperationSingular}
In this section we combine the results on finite-dimensional truncations for Schwartz filter functions, in particular Propositions~\ref{prop:generatorboundTrunc} and~\ref{prop:generatorboungFullyFinite}, with the approximation result Proposition~\ref{prop:DeltaTo0} for the Metropolis-type filter function to see in Theorem~\ref{thm:SingularGibbsdynamics} that the dynamics $e^{t\cL_{\sigma_E,\hatfM,H}}$ is well approximated by the finite-dimensional dynamics $e^{t\cL^{\le M}_{\sigma_E,\hatfMdelta,H_{\le M}}}.$ 
Moreover,  under the assumption of spectral gap of the corresponding unbounded generator on the Hilbert-Schmidt space,  we then apply Theorem~\ref{thm:SingularGibbsdynamics} to show efficient finite-dimensional preparation of the Gibbs state of $H$ in Corollary~\ref{cor:SingularGibbs}. 

We first start with an supporting technical lemma on the scaling of the $L^1(\R)$ norm and the $C_f$ and $C'_f$ constants from Sections~\ref{sec:TruncateJumps} and \ref{sec:FinitDimGenerator} for the function $\fMdelta.$
We then continue to 
state Theorem~\ref{thm:SingularGibbsdynamics} and Corollary~\ref{cor:SingularGibbs} and give their proofs at the end of the section.
\begin{lemma}
\label{lem:C_fMdeltaBounds}
Let $\beta>0$ and $\hatfM$ defined in \eqref{eq:filterFunction} and for $\delta\in(0,1]$ and $\theta\in(0,1/2)$ the Schwartz function
\begin{align}\label{hatfMdeltaeq}
    \hatfMdelta(\nu) := \hatfM(\nu)e^{-\delta\eta_{2,\theta}(\nu)}
\end{align}
with $\eta_{2,\theta}(\nu) = e^{(1+(\beta\nu)^2)^\theta}.$  Then for $t\in\R$ we have that the Fourier transform of $\hatfMdelta,$ i.e. $\fMdelta(t) = \frac{1}{2\pi}\int_\R \hatfMdelta(\nu) e^{-it\nu} d\nu,$ satisfies the pointwise estimate 
\begin{align}
\label{eq:fMdeltaPointWise}
    |\fMdelta(t)| \le \frac{C_{\theta}\,e^{-\frac{|t|}{2\beta}}}{\beta}\left(\left(\log(1/\delta)\right)^{\tfrac{1}{2\theta}} +1\right),
\end{align}
where $C_{\theta}\ge 0$ denotes some constant depending only on $\theta$. In particular for $r\ge 0$ and $\kappa\in(0,1/2)$ this shows 
\begin{align}
\label{eq:CfMdeltaBound}
\int_\R |\fMdelta(t)|e^{r|t|^{2\kappa}}dt \le C_{\beta,r,\kappa,\theta}\, \left(\left(\log(1/\delta)\right)^{\tfrac{1}{2\theta}} +1\right)
\end{align}
for some constant $C_{\beta,r,\kappa,\theta}\ge 0$ depending only on $\beta,r,\kappa$ and $\theta$.

\end{lemma}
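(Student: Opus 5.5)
The plan is to prove the pointwise bound \eqref{eq:fMdeltaPointWise} by a contour shift in the Fourier integral. The second estimate \eqref{eq:CfMdeltaBound} then follows at once: since $\kappa<1/2$, the weight $e^{r|t|^{2\kappa}}$ is dominated by $e^{|t|/(4\beta)}$ up to a constant $C_{\beta,r,\kappa}$, and $\int e^{-|t|/(4\beta)}dt<\infty$.

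For the pointwise bound, take $t>0$ (the case $t<0$ is symmetric) and write
\[
\fMdelta(t)=\tfrac{1}{2\pi}\int_\R \hatfM(\nu)e^{-\delta\eta_{2,\theta}(\nu)}e^{-it\nu}\,d\nu .
\]
The integrand extends holomorphically in $\nu$. The function $\sqrt{1+(\beta\nu)^2}$ has branch points only at $\nu=\pm i/\beta$, and $(1+(\beta\nu)^2)^\theta$ is analytic wherever $1+(\beta\nu)^2\notin(-\infty,0]$, which holds on the strip $|\Im\nu|<1/\beta$. I will shift the contour to $\nu=u-i/(2\beta)$, which lies inside this strip. This produces the factor $|e^{-it\nu}|=e^{-t/(2\beta)}$, which is exactly the claimed decay.

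Two points need checking. First, the horizontal ends of the rectangle must give vanishing contributions. Second, I need a bound on
\[
\int_\R\bigl|\hatfM(u-i/(2\beta))\bigr|\,\bigl|e^{-\delta\eta_{2,\theta}(u-i/(2\beta))}\bigr|\,du .
\]
Substituting $w=\beta u$ reduces everything to $\beta$-free quantities and gives the prefactor $1/\beta$. On the shifted line, $1+(w-i/2)^2=1+w^2-1/4-iw$. This number has real part at least $3/4$ and argument bounded away from $\pm\pi$. Hence $z:=(1+(w-i/2)^2)^\theta$ satisfies $|\arg z|\le\theta\pi/2<\pi/4$, so $\Re z\ge c_\theta|z|$ with $|z|\sim(1+w^2)^\theta$. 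Similarly, $\Re\sqrt{1+(w-i/2)^2}\ge |w|-O(1)$ and $\Re(w-i/2)=w$, so $|\hatfM|\le C e^{-(|w|+w)/4}$. This is bounded, and it decays exponentially for $w\to+\infty$.

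The main obstacle is the side $w\to-\infty$, where $\hatfM$ does not decay and only the regularizing factor helps. Its modulus is $\exp(-\delta\,\Re e^{z})=\exp(-\delta e^{\Re z}\cos(\Im z))$. Here $\Im z$ grows like $(1+w^2)^{\theta}\cdot O(1/|w|)$ times $\theta$, which tends to $0$ because $\theta<1/2$, so $\cos(\Im z)\ge 1/2$ for all $|w|\ge w_\theta$. On the remaining compact range $|w|\le w_\theta$, I expect $|\Im z|\le\theta\pi/2<\pi/2$ to keep $\cos(\Im z)>0$; I would verify this explicitly, or else choose a smaller shift depending on $\theta$, at the cost of constants.

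Granting this, the integrand on the negative side is bounded by $\min\{1,\exp(-\tfrac{\delta}{2}e^{c_\theta|w|^{2\theta}})\}$. Its integral is at most $W+\int_W^\infty \exp(-\tfrac{\delta}{2}e^{c_\theta w^{2\theta}})\,dw$. I choose $W$ with $e^{c_\theta W^{2\theta}}=4\log(1/\delta)/\delta$, so that $W\lesssim_\theta(\log(1/\delta))^{1/(2\theta)}+1$; the tail beyond $W$ is $O_\theta(1)$. This gives the factor $(\log(1/\delta))^{1/(2\theta)}+1$ in \eqref{eq:fMdeltaPointWise}.

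The same lower bound on $\Re$ of the exponent, applied for large $|\Re\nu|$ uniformly in $\Im\nu\in[-1/(2\beta),0]$, shows that the vertical sides of the rectangle vanish as the rectangle widens. This justifies the contour shift and completes the plan.
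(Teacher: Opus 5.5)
Your proposal is correct and follows essentially the same route as the paper. You shift the contour by $1/(2\beta)$, use that $|\hatfM|\le 1$ on the shifted line, and lower-bound $\Re\, e^{(1+z^2)^\theta}$ for large $|\Re z|$ via $\Im\,(1+z^2)^\theta\to 0$ (which is where $\theta<1/2$ enters). You then split the integral at a threshold of order $(\log(1/\delta))^{1/(2\theta)}$, and you additionally justify the vanishing of the vertical sides, which the paper leaves implicit.

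On the compact range you should not invoke the fallback of a smaller, $\theta$-dependent shift, since that would change the stated rate $e^{-|t|/(2\beta)}$. It is unnecessary anyway: for $\delta\le 1$ the integrand there is bounded by the $\theta$-dependent constant $\exp(\sup_{|w|\le w_\theta}|e^{z}|)$. In fact $|\Im z|$ stays uniformly below $\pi/2$ on the whole line $\Im w=-1/2$; note that your bound $|\arg z|\le\theta\pi/2$ controls the argument of $z$, not $\Im z$ itself.
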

We prove Lemma~\ref{lem:C_fMdeltaBounds} in Appendix~\ref{sec:fmDeltaProofs}.
\begin{remark}
    A similar result can be obtained for the alternative choice considered in Section~\ref{sec:singular}, namely $\hatfMdelta (\nu) =\hatfM(\nu) e^{-\delta\eta_1(\nu)}$ with $\eta_1(\nu) = (\beta\nu)^2.$ In this case it can be shown that the Fourier transform satisfies
    \begin{align*}
      |\fMdelta(t)| \le \frac{C_{\theta}\,e^{-\frac{|t|}{2\beta}}}{\beta}\left(\frac{1}{\sqrt{\delta}} +1\right)\quad\text{and hence}\quad  \int_\R |\fMdelta(t)|e^{r|t|^{2\kappa}}dt \le C_{\beta,r,\kappa,\theta}\, \left(\frac{1}{\sqrt{\delta}} +1\right)
    \end{align*}
    under the same parameter choices as in Lemma~\ref{lem:C_fMdeltaBounds}. As this scaling as $\delta\to 0$ is  much worse in terms of $\delta$ compared to the one obtained in Lemma~\ref{lem:C_fMdeltaBounds}, we focus on the choice $\eta_{2,\theta}$ in the following. On the other hand, it should be noted that choosing $\eta_1$ instead of $\eta_{2,\theta}$ leads to much weaker requirements on the energy of the input state for the approximation of $\cL_{\sigma_E,\hatfM,H}$ by $\cL_{\sigma_E,\hatfMdelta,H}$ in Proposition~\ref{prop:DeltaTo0}.
\end{remark}

\begin{theorem}[finite-dimensional approximation of $e^{t\cL_{\sigma_E,\hatfM,H}}$]
\label{thm:SingularGibbsdynamics}
For inverse temperature $\beta>0$ let $\hatfM$ defined in \eqref{eq:filterFunction} and consider a set of bare jumps $\{A^\alpha\}_{\alpha\in\cA}$ such that  \Cref{eq:condAalphas} is satisfied for some $\mu\ge \gamma\ge 0$. 
Let further $\NA$ self-adjoint and positive semidefinite and $\kappa\in(0,1/2)$ and assume that \eqref{eq:NormTruncatedJump}, \eqref{eq:AbstractFiniteTruncBareJumps}, \eqref{eq:NormTruncatedJump2} are satisfied and denote \begin{align*}
    C'_{A} := \max_{\alpha\in\cA}\left\{\|e^{\NA^{\kappa}}A^{\alpha}e^{-2\NA^{\kappa}}\|,\|A^{\alpha}e^{-\NA^{\kappa}}\|,\|e^{-\NA^{\kappa}}A^{\alpha}\|, \|A^\alpha\widetilde H^{-\gamma}\|,\|\widetilde H^{\gamma}A^\alpha\widetilde H^{-\mu}\|\right\}<\infty
\end{align*} 
for $\widetilde H= H + (h_0+1)I.$ Furthermore, assume that for $\kappa\in(0,1/2)$ as above \eqref{eq:HamiltonianLeakTrunc} holds and, for some $r\ge 0,$ the condition \eqref{eq:ExpBound} is satisfied.  For $\theta\in(0,1/2)$ consider $\eta_{2,\theta}(\nu) := e^{(1+(\beta\nu)^2)^\theta}$ and assume that the Gibbs state of $H,$ $\sigma_\beta,$ satisfies
\begin{align*}
E'_{\operatorname{Gibbs}}&:=\max\Big\{\,\Tr\left(e^{4\NA^\kappa}\, \sigma_\beta\right)\,,\,\Tr\left( \widetilde F^2_{\eta_{2,\theta}}(H) \widetilde H^{2\gamma}\sigma_\beta\right)\,,\, \sqrt{\Tr\left(\widetilde F^2_{\eta_{2,\theta},\sigma_E,1}(H) \widetilde H^{2\mu}\sigma_\beta\right)},\\&\qquad\qquad \sqrt{\Tr\left(\widetilde F^2_{\eta_{2,\theta},\sigma_E,2}(H) \widetilde H^{2\mu}\sigma_\beta\right)}\,\Big\}<\infty,
\end{align*}
where the functions $\widetilde F_{\eta_{2,\theta}},\widetilde F_{\eta_{2,\theta},\sigma_E,1}$ and $\widetilde F_{\eta_{2,\theta},\sigma_E,2}$ are defined in Proposition~\ref{prop:DeltaTo0}.
Let $\rho$ be a state such that 
\begin{align}
\label{eq:InputDmaxConstraint2}
    \rho \le \mathfrak{c}\,\sigma_\beta
\end{align}
for some $\mathfrak{c}\ge 1.$ 
Consider Metropolis-type filter function  $\hatfM$ defined in \eqref{eq:filterFunction} and for $\delta\in(0,1]$  the Schwartz function
 \begin{align*}
    \hatfMdelta(\nu) := \hatfM(\nu)e^{-\delta\eta_{2,\theta}(\nu)}.
\end{align*}
Then for all $\sigma_E\in(0,\infty),$ $t\ge 0$ and $M\in\N$ we have
    \begin{align}
\nn&\left\|\left(e^{t\cL_{\sigma_E,\hatfM,H}}-e^{t\cL^{\le M}_{\sigma_E,\hatfMdelta,H_{\le M}}}\right)(\rho)\right\|_1\nn\le  t\, \delta \,\mathfrak{c}\, |\cA|\,C_2\!\left(C_{H,\eta,\sigma_E},C_A\right) \, E'_{\operatorname{Gibss}} \\&\qquad\qquad\quad +t\,\mathfrak{c}\,C_3(\beta,r,\kappa,\sigma_E,C_A,\theta) \left((\log(1/\delta))^{\frac{1}{2\theta}} +1\right)^2 p_4(|\mathcal{A}|,M) e^{-M^\kappa} E'_{\operatorname{Gibbs}}
    \end{align}
    where $C_{H,\eta,\sigma_E}$ is the explicit constant defined in Proposition~\ref{prop:DeltaTo0}  and $C_2\!\left(C_{H,\eta,\sigma_E},C_A\right)\ge 0$ and $C_3(\beta,r,\kappa,\sigma_E,C_A)\ge 0$ are some constants depending only on the displayed parameters and $p_4(|\mathcal{A}|,M)$ is some polynomially bounded function. Therefore, we can achieve
    \begin{align}
\nn&\left\|\left(e^{t\cL_{\sigma_E,\hatfM,H}}-e^{t\cL^{\le M}_{\sigma_E,\hatfMdelta,H_{\le M}}}\right)(\rho)\right\|_1 \le \eps\qquad\text{with}\\& \delta=\Omega\left(\frac{\eps}{\mathfrak{c}\,E'_{\operatorname{Gibbs}}\,|\cA|\,t} \right) \quad\text{and}\quad M = \widetilde{\mathcal{O}}\left(\left(\log\left(\frac{t\,\mathfrak{c}\,E'_{\operatorname{Gibbs}}|\cA|}{ \eps}\right)\right)^{1/\kappa}\right),
   \end{align}
    where the $\Omega$ and $\widetilde{\mathcal{O}}$ notations hide constants independent of the displayed parameters and the $\widetilde \Omega$ notation suppresses  $\operatorname{poly}\log\log$ factors.
\end{theorem}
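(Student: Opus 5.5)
The plan is to split the error by the triangle inequality into a \emph{regularisation error} and a \emph{truncation error}:
\[
\Big\|\big(e^{t\cL_{\sigma_E,\hatfM,H}}-e^{t\cL^{\le M}_{\sigma_E,\hatfMdelta,H_{\le M}}}\big)(\rho)\Big\|_1\le \Big\|\big(e^{t\cL_{\sigma_E,\hatfM,H}}-e^{t\cL_{\sigma_E,\hatfMdelta,H}}\big)(\rho)\Big\|_1+\Big\|\big(e^{t\cL_{\sigma_E,\hatfMdelta,H}}-e^{t\cL^{\le M}_{\sigma_E,\hatfMdelta,H_{\le M}}}\big)(\rho)\Big\|_1 .
\]
Each term will be handled by a Duhamel formula, as in the proof of Theorem~\ref{thm:SchwartzGibbsDynamics}.

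For the first term, I would write
\[
\big(e^{t\cL_{\sigma_E,\hatfM,H}}-e^{t\cL_{\sigma_E,\hatfMdelta,H}}\big)(\rho)=\int_0^t e^{(t-s)\cL_{\sigma_E,\hatfMdelta,H}}\big(\cL_{\sigma_E,\hatfM,H}-\cL_{\sigma_E,\hatfMdelta,H}\big)\big(e^{s\cL_{\sigma_E,\hatfM,H}}(\rho)\big)\,ds .
\]
The outer semigroup is a trace-norm contraction by Lemma~\ref{T1semigroupgeneration}, applied to $\hatfMdelta$, which satisfies Condition~\ref{eq:condAalphas} because $0\le\widehat\phi_\delta\le 1$ and $\widehat\phi_\delta$ is even and real. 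The inner state $\rho(s):=e^{s\cL_{\sigma_E,\hatfM,H}}(\rho)$ satisfies $\rho(s)\le\mathfrak{c}\,\sigma_\beta$ by positivity and invariance of $\sigma_\beta$. I then apply the second inequality of Proposition~\ref{prop:DeltaTo0} to $\rho(s)$ with $\eta=\eta_{2,\theta}$. Every trace of the form $\Tr(X\rho(s))$ with $X\ge0$ is bounded by $\mathfrak{c}\Tr(X\sigma_\beta)$. The product-of-square-roots term is bounded by $\mathfrak{c}\,E'_{\operatorname{Gibbs}}$ after Cauchy--Schwarz; here $\widetilde F_1$ should be chosen dominated by $\widetilde F_{\eta}$, since $\eta\ge1$. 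The remaining single square roots are at most $\sqrt{\mathfrak{c}}\,E'_{\operatorname{Gibbs}}\le\mathfrak{c}\,E'_{\operatorname{Gibbs}}$. This gives the $t\,\delta\,\mathfrak{c}\,|\cA|\,C_2\,E'_{\operatorname{Gibbs}}$ contribution, with $C_2$ absorbing $C_{H,\eta,\sigma_E}$ and the $C'_A$ factors.

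For the second term, I apply Theorem~\ref{thm:SchwartzGibbsDynamics} verbatim with $f=\fMdelta$. This is legitimate because $\hatfMdelta$ is Schwartz and $\Tr(e^{4\NA^\kappa}\sigma_\beta)\le E'_{\operatorname{Gibbs}}$. The only thing to make explicit is the $\delta$-dependence of $C_1$. Tracing through Propositions~\ref{prop:generatorboundTrunc} and~\ref{prop:generatorboungFullyFinite}, $f$ enters only through $C_f^2$ and $\widetilde C_f=C_f\max\{C_f,C'_f\}$. Lemma~\ref{lem:C_fMdeltaBounds}, applied with $r$ and with a slightly larger $r$ to absorb the factor $|t|$, bounds both $C_f$ and $C'_f$ by $C_{\beta,r,\kappa,\theta}\big((\log(1/\delta))^{1/(2\theta)}+1\big)$. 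Hence $C_1\le C_3\big((\log 1/\delta)^{1/(2\theta)}+1\big)^2$, which yields the second displayed term.

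For the final parameter choice, I take $\delta=\eps/(2t\,\mathfrak{c}\,E'_{\operatorname{Gibbs}}|\cA|C_2)$ to make the first term at most $\eps/2$. I then choose $M$ so that $e^{-M^\kappa}p_4(|\cA|,M)$ is smaller than $\eps/2$ divided by the remaining prefactors. The factor $\big((\log 1/\delta)^{1/(2\theta)}+1\big)^2$ is only polylogarithmic in $t\mathfrak{c}E'_{\operatorname{Gibbs}}|\cA|/\eps$, so it contributes only $\operatorname{poly}\log\log$ corrections to $M^\kappa$. This gives $M=\widetilde{\mathcal O}\big((\log(t\mathfrak{c}E'_{\operatorname{Gibbs}}|\cA|/\eps))^{1/\kappa}\big)$.

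The main obstacle I anticipate is the first Duhamel step. There I need $\rho(s)$ to lie in the domain where Proposition~\ref{prop:DeltaTo0} applies, and both generators to agree with their series expressions on it. I would handle this by approximating $\rho(s)$ by $P_n\rho(s)P_n\in\mathscr F$ and using closedness of the generators together with the uniform weighted bounds inherited from $\rho(s)\le\mathfrak{c}\sigma_\beta$.
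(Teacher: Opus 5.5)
Your proposal is correct and is essentially the paper's argument. The paper uses a single Duhamel formula with intermediate state $e^{s\cL_{\sigma_E,\hatfM,H}}(\rho)\le\mathfrak{c}\,\sigma_\beta$ and splits the generator difference three ways, via Propositions~\ref{prop:generatorboungFullyFinite}, \ref{prop:generatorboundTrunc} and \ref{prop:DeltaTo0}. You instead split at the semigroup level and invoke Theorem~\ref{thm:SchwartzGibbsDynamics} as a black box for $\hatfMdelta$, which is equivalent because $e^{s\cL_{\sigma_E,\hatfMdelta,H}}(\rho)$ is likewise dominated by $\mathfrak{c}\,\sigma_\beta$; your choice $\widetilde F_1=\widetilde F_{\eta_{2,\theta}}$ (valid since $\eta_{2,\theta}\ge e$ gives $F_1\le F_{\eta_{2,\theta}}$) and your domain remark make explicit two points the paper's proof leaves implicit.
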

\begin{corollary}[Gibbs state preparation for singular filter function]\label{cor:SingularGibbs}

Under the same assumptions as in Theorem~\ref{thm:SingularGibbsdynamics} and assuming additionally that $L_{\sigma_E,\hatfM,H},$ the self-adjoint generator on $\mathscr{T}_2(\cH)$ associated to the Lindbladian $\cL_{\sigma_E,\hatfM,H},$ has a positive spectral gap 
$\lambda_2 \equiv \operatorname{gap}\left(L_{\sigma_E,\hatfM,H}\right) >0,$ we can achieve for all $\eps>0$
\begin{align}\nn&\left\|e^{t\cL^{\le M}_{\sigma_E,\hatfMdelta,H_{\le M}}}(\rho) - \sigma_\beta\right\|_1 \le \eps\qquad
\text{with} \qquad t=\mathcal{O}\left(\frac{1}{\lambda_2}\log\left(\frac{\mathfrak{c}}{\eps}\right)\right) \\&\delta=\Omega\left(\frac{\lambda_2\,\eps}{\mathfrak{c}\,E'_{\operatorname{Gibbs}}|\cA|\log\left(\frac{\mathfrak{c}}{\eps}\right)} \right) \qquad\text{and}\qquad M = \widetilde{\mathcal{O}}\left(\left(\log\left(\frac{\mathfrak{c}\,E'_{\operatorname{Gibbs}}|\cA|}{\lambda_2 \eps}\right)\right)^{1/\kappa}\right),
\end{align}
where the $\Omega,$ $\mathcal{O}$ and $\widetilde{\mathcal{O}}$ notations hide constants independent of the displayed parameters and the $\widetilde \Omega$ notation suppresses subdominant $\operatorname{poly}\log\log$ factors.

\end{corollary}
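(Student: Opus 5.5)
The statement to prove is Corollary~\ref{cor:SingularGibbs}. I would follow the proof of Corollary~\ref{cor:SchwartzGibbs} almost verbatim, with Theorem~\ref{thm:SingularGibbsdynamics} in place of Theorem~\ref{thm:SchwartzGibbsDynamics}. First apply the triangle inequality:
\begin{align*}
\left\|e^{t\cL^{\le M}_{\sigma_E,\hatfMdelta,H_{\le M}}}(\rho) - \sigma_\beta\right\|_1 \le \left\|e^{t\cL_{\sigma_E,\hatfM,H}}(\rho) - \sigma_\beta\right\|_1+\left\|\left(e^{t\cL_{\sigma_E,\hatfM,H}}-e^{t\cL^{\le M}_{\sigma_E,\hatfMdelta,H_{\le M}}}\right)(\rho)\right\|_1 .
\end{align*}
The goal is to make each of the two terms at most $\eps/2$.

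For the first term I would use the spectral gap $\lambda_2$ of $L_{\sigma_E,\hatfM,H}$, exactly as in Section~\ref{sec.fastconv} and the proof of Corollary~\ref{cor:SchwartzGibbs}. Write $\rho=\sigma_\beta^{1/4}x\sigma_\beta^{1/4}$. The constraint \eqref{eq:InputDmaxConstraint2} gives $\|x\|_2^2=\Tr(\sigma_\beta^{-1/2}\rho\sigma_\beta^{-1/2}\rho)\le\mathfrak{c}$, so the first term is at most $\sqrt{\mathfrak{c}}\,e^{-\lambda_2 t}$. The KMS-symmetric semigroup construction of Section~\ref{monotonicitygap} applies equally to $\hatfM$, since it satisfies \Cref{eq:condAalphas}. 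Choosing $t=\lambda_2^{-1}\log(2\sqrt{\mathfrak{c}}/\eps)=\mathcal{O}(\lambda_2^{-1}\log(\mathfrak{c}/\eps))$ makes this term at most $\eps/2$.

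For the second term I would insert this value of $t$ into Theorem~\ref{thm:SingularGibbsdynamics}. That theorem bounds the error by two pieces.
\begin{itemize}
\item The $\delta$-piece is $t\,\delta\,\mathfrak{c}\,|\cA|\,C_2\,E'_{\operatorname{Gibbs}}$. Setting it to $\eps/4$ forces $\delta=\Omega\big(\eps/(\mathfrak{c}E'_{\operatorname{Gibbs}}|\cA|t)\big)$. With $t$ as above this is $\delta=\Omega\big(\lambda_2\eps/(\mathfrak{c}E'_{\operatorname{Gibbs}}|\cA|\log(\mathfrak{c}/\eps))\big)$, as claimed.
\item The truncation piece is $t\,\mathfrak{c}\,C_3\,((\log(1/\delta))^{1/(2\theta)}+1)^2\,p_4(|\cA|,M)\,e^{-M^\kappa}E'_{\operatorname{Gibbs}}$. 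It is at most $\eps/4$ once $M^\kappa\ge \log\big(4t\mathfrak{c}C_3E'_{\operatorname{Gibbs}}/\eps\big)+2\log\big((\log(1/\delta))^{1/(2\theta)}+1\big)+\log p_4(|\cA|,M)$.
\end{itemize}

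The only point needing some care is solving this last condition for $M$. With the chosen $\delta$, the factor $(\log(1/\delta))^{1/\theta}$ is polylogarithmic in $\mathfrak{c}E'_{\operatorname{Gibbs}}|\cA|/(\lambda_2\eps)$. It therefore contributes only an additive $\mathcal{O}(\log\log)$ term to $M^\kappa$, which is absorbed into the $\widetilde{\mathcal O}$. Similarly, $\log p_4(|\cA|,M)=\mathcal{O}(\log|\cA|+\log M)$. Because $M^\kappa$ grows polynomially in $M$ while the $\log M$ term grows only logarithmically, a standard fixed-point argument handles it: substitute $M=C(\log X)^{1/\kappa}$ with $X=\mathfrak{c}E'_{\operatorname{Gibbs}}|\cA|/(\lambda_2\eps)$. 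The $\log M$ term then becomes $\mathcal{O}(\log\log X)$, so the condition holds for a suitable constant $C$, up to $\operatorname{poly}\log\log$ factors. This yields $M=\widetilde{\mathcal O}\big((\log(\mathfrak{c}E'_{\operatorname{Gibbs}}|\cA|/(\lambda_2\eps)))^{1/\kappa}\big)$.

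Adding the three contributions $\eps/2+\eps/4+\eps/4$ gives the bound $\eps$ and concludes the proof.
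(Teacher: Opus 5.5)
Your proposal is correct and follows the paper's proof. It uses the same triangle inequality, the spectral-gap bound $\sqrt{\mathfrak{c}}\,e^{-\lambda_2 t}$ for the first term (obtained from $\rho\le\mathfrak{c}\,\sigma_\beta$), and Theorem~\ref{thm:SingularGibbsdynamics} for the second, while additionally spelling out the choice of $t$, $\delta$ and $M$ (absorbing the $(\log(1/\delta))^{1/\theta}$ and $\log p_4(|\mathcal{A}|,M)$ factors) that the paper leaves implicit.
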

\begin{proof}[Proof of Theorem~\ref{thm:SingularGibbsdynamics}]

We use
\begin{align*}
   & \left(e^{t\cL^{\le M}_{\sigma_E,\hatfMdelta,H_{\le M}}} -e^{t\cL_{\sigma_E,\hatfM,H}}\right)(\rho) \\&= \int_0^t e^{(t-s)\cL^{\le M}_{\sigma_E,\hatfMdelta,H_{\le M}}}\left(\cL^{\le M}_{\sigma_E,\hatfMdelta,H_{\le M}} -\cL_{\sigma_E,\hatfM,H}\right)e^{s\cL_{\sigma_E,\hatfM,H}}(\rho)ds
\end{align*}
and therefore, denoting $\rho(s) :=e^{s\cL_{\sigma_E,\hatfM,H}}(\rho)$ we split
\begin{align}
\label{eq:MoreTriangle}
\nn&\left\|\left(e^{t\cL^{\le M}_{\sigma_E,\hatfMdelta,H_{\le M}}} -e^{t\cL_{\sigma_E,\hatfM,H}}\right)(\rho)\right\|_1  \le   t\sup_{s\in[0,t]}\left\|\left(\cL^{\le M}_{\sigma_E,\hatfMdelta,H_{\le M}}-\cL_{\sigma_E,\hatfM,H}\right)(\rho(s))\right\|_1 \\&\nn\le t\sup_{s\in[0,t]}\left\|\left(\cL^{\le M}_{\sigma_E,\hatfMdelta,H_{\le M}}-\cL^{\le M}_{\sigma_E,\hatfMdelta,H}\right)(\rho(s))\right\|_1 +t\sup_{s\in[0,t]}\left\|\left(\cL^{\le M}_{\sigma_E,\hatfMdelta,H}-\cL_{\sigma_E,\hatfMdelta,H}\right)(\rho(s))\right\|_1\\&\quad+t\sup_{s\in[0,t]}\left\|\left(\cL_{\sigma_E,\hatfMdelta,H}-\cL_{\sigma_E,\hatfM,H}\right)(\rho(s))\right\|_1.
\end{align}
For the first two terms, we use Proposition~\ref{prop:generatorboundTrunc} and~\ref{prop:generatorboungFullyFinite} which gives 
\begin{align*}
    &t\sup_{s\in[0,t]}\left\|\left(\cL^{\le M}_{\sigma_E,\hatfMdelta,H_{\le M}}-\cL^{\le M}_{\sigma_E,\hatfMdelta,H}\right)(\rho(s))\right\|_1 +t\sup_{s\in[0,t]}\left\|\left(\cL^{\le M}_{\sigma_E,\hatfMdelta,H}-\cL_{\sigma_E,\hatfMdelta,H}\right)(\rho(s))\right\|_1\\&\le t \,C_{\fMdelta}\max\{C_{\fMdelta},C'_{\fMdelta}\}\,C_4(\beta,r,\kappa,\sigma_E,C_A) p_4(|\cA|,M)e^{-M^\kappa} \sup_{s\in[0,t]}\max\left\{\sqrt{\Tr\left(e^{4\NA^\kappa}\rho(s)\right)}\,,\, \Tr\left(e^{2\NA^\kappa}\rho(s)\right) \right\}\\&\le t C_3(\beta,r,\kappa,\sigma_E,C_A,\theta) \left((\log(1/\delta))^{\frac{1}{2\theta}} +1\right)^2 p_4(|\cA|,M)e^{-M^\kappa} \sup_{s\in[0,t]}\max\left\{\sqrt{\Tr\left(e^{4\NA^\kappa}\rho(s)\right)}\,,\, \Tr\left(e^{2\NA^\kappa}\rho(s)\right) \right\}
     \\&\le t\,\mathfrak{c}\,C_3(\beta,r,\kappa,\sigma_E,C_A,\theta) \left((\log(1/\delta))^{\frac{1}{2\theta}} +1\right)^2 p_4(|\mathcal{A}|,M) e^{-M^\kappa} E'_{\operatorname{Gibbs}}.
\end{align*}
Here, $C_4(\beta,r,\kappa,\sigma_E,C_A)\ge 0$ and $C_2(\beta,r,\kappa,\sigma_E,C_A,\theta)\ge 0$ are some constants depending only on the displayed parameters, $p_4(|\mathcal{A}|,M)$ is some polynomially bounded function and we denoted
\begin{align*}
    C_{\fMdelta} := \int_{-\infty}^{\infty} e^{r|t|^{2\kappa}} |\fMdelta(t)| dt\quad\text{and}\quad C'_{\hatfMdelta} := \int_{-\infty}^{\infty} |t|e^{r|t|^{2\kappa}} |\fMdelta(t)| dt
\end{align*}
and used in the second inequality that by Lemma~\ref{lem:C_fMdeltaBounds} we have
\begin{align*}
    \max\left\{C_{\fMdelta}\,,\, C'_{\fMdelta}\right\}\le C_{\beta,r,\kappa,\theta}\left((\log(1/\delta))^{\frac{1}{2\theta}} +1\right)
\end{align*}
 and for the last inequality that by \eqref{eq:InputDmaxConstraint}, positivity of the map $e^{s\cL_{\sigma_E,\hatfM,H}}$ and the fact that $\cL_{\sigma_E,\hatfM,H}$ is KMS-symmetric we have
\begin{align}
\label{eq:BoundEverythingByGibbs}
   \rho(s) \le \mathfrak{c}\, e^{s\cL_{\sigma_E,\hatfM,H}}(\sigma_\beta ) = \mathfrak{c}\,\sigma_\beta. 
\end{align}
Lastly, for the third term in \eqref{eq:MoreTriangle} we use Proposition~\ref{prop:DeltaTo0} which yields
\begin{align*}
&t\sup_{s\in[0,t]}\left\|\left(\cL_{\sigma_E,\hatfMdelta,H}-\cL_{\sigma_E,\hatfM,H}\right)(\rho(s))\right\|_1 \le t\, \delta \,|\cA|\,C_2\!\left(C_{H,\eta,\sigma_E},C_A\right) \times\\&\times\!\!\sup_{s\in[0,t]}\max\Big\{\Tr\left( \widetilde F^2_{\eta_{2,\theta}}(H) \widetilde H^{2\gamma}\rho(s)\right)\,,\, \sqrt{\Tr\left(\widetilde F^2_{\eta_{2,\theta}\sigma_E,1}(H) \widetilde H^{2\mu}\rho(s)\right)}, \sqrt{\Tr\left(\widetilde F^2_{\eta_{2,\theta},\sigma_E,2}(H) \widetilde H^{2\mu}\rho(s)\right)}\,\Big\} \\&\le t\,\delta \,\mathfrak{c}\, |\cA|\,C_2\!\left(C_{H,\eta,\sigma_E},C_A\right) \, E'_{\operatorname{Gibss}}.
\end{align*}
Here, $C_{H,\eta,\sigma_E}$ is the explicit constant defined in Proposition~\ref{prop:DeltaTo0} and
$C_2\!\left(C_{H,\eta,\sigma_E},C_A\right)\ge 0$  is some constant depending only on the displayed parameters and for the last inequality we have used  \eqref{eq:BoundEverythingByGibbs}.

\end{proof}
\begin{proof}[Proof of Corollary~\ref{cor:SingularGibbs}]
We split
\begin{align}
\label{eq:MajorTriangle2}
   \left\|e^{t\cL^{\le M}_{\sigma_E,\hatfMdelta,H_{\le M}}}(\rho) - \sigma_\beta\right\|_1  &\le \left\|e^{t\cL_{\sigma_E,\hatfM,H}}(\rho) - \sigma_\beta\right\|_1+\left\|\left(e^{t\cL^{\le M}_{\sigma_E,\hatfMdelta,H_{\le M}}} -e^{t\cL_{\sigma_E,\hatfM,H}}\right)(\rho)\right\|_1  
\end{align}
For the first term, we argue as in Section~\ref{sec.fastconv} using positivity of spectral gap of $L_{\sigma_E,\hatfM,H}$ together with \eqref{eq:InputDmaxConstraint2} to get
\begin{align*} \left\|e^{t\cL_{\sigma_E,\hatfM,H}}(\rho)- \sigma_\beta\right\|_1 &= e^{-\lambda_2 t} \left\|\sigma_\beta^{-1/4}\rho\sigma^{-1/4}_\beta\right\|_2 = e^{-\lambda_2 t} \sqrt{\Tr\left(\sigma^{-1/2}_\beta\rho\sigma^{-1/2}_\beta\rho\right)} \\&\le \sqrt{\mathfrak{c}}\,e^{-\lambda_2 t} \sqrt{\Tr(\rho)}= \sqrt{\mathfrak{c}}\,e^{-\lambda_2 t},
\end{align*}
which gives the desired bound on the mixing time $t.$ The result then follows by Theorem~\ref{thm:SingularGibbsdynamics}.
\end{proof}

\subsection{Finite-dimensional circuit implementation}
\label{sec:FullFiniteDimPipeline}

\noindent In this final implementation section we show how the dynamics resulting from the generator $\cL_{\smash{\sigma_E,\widehat{f},H_{\le M}}}^{\le M}$ can be implemented by a quantum circuit through discretizations of the integral representations of its jump and coherent parts. By the integral representation in Proposition~\ref{integralrep}, we note that the generator $\cL_{\smash{\sigma_E,\widehat{f},H_{\le M}}}^{\le M}$ is constructed from a continuous family of jump operators. In Section~\ref{sec:IntegralDiscretization} below we, therefore, first establish that this generator can be written as a Gaussian integral over certain generators of the form \eqref{eq:GKLS} which can then be approximated by a Lindblad generator with a finite number of jumps using Gauss-Hermite quadratures, c.f.~Proposition ~\ref{prop:DiscreteGauss}. After this discretization, and assuming the filter function satisfies \cite[Assumption 13]{ding2025efficient}, as recalled in \Cref{cond:Window} and \ref{condfgevrey}, we can invoke \cite[Theorem 18]{ding2025efficient} to obtain an efficient circuit implementation of the finite-dimensional Lindblad dynamics.

Finally, combining this with Theorem~\ref{thm:SchwartzGibbsDynamics} yields an efficient finite-dimensional circuit implementation of the infinite-dimensional dynamics \(e^{t\cL_{\sigma_E,\widehat f,H}}\) for Schwartz filter functions \(\widehat f\) in Theorem~\ref{thm:CircuitDynamicsSchwartz}, while combining it with Theorem~\ref{thm:SingularGibbsdynamics} yields the corresponding result for \(e^{t\cL_{\sigma_E,\hatfM,H}}\) in Theorem~\ref{thm:CircuitDynamicsSingular}, with \(\hatfM\) given by \eqref{eq:filterFunction}. Given positive spectral gap of the associated generators on the space of Hilbert-Schmidt operators this gives efficient Gibbs state preparation in Corollary~\ref{cor:GibbsCircuitSchwartz} and \ref{cor:GibbsCircuitSingular} respectively.

\subsubsection{Integral discretizations}
\label{sec:IntegralDiscretization}

We start by recalling \cite[Assumption 13]{ding2025efficient} for the cut-off part of the filter function.
\begin{definition}[Gevrey function]
 A complex-valued $C^\infty$ function
$h : \mathbb{R} \to \mathbb{C}$ is called a \emph{Gevrey function of order} $s \ge 0$ if there exist
constants $C_1, C_2 > 0$ such that, for every nonnegative integer
$n $, the derivatives of $h$ satisfy
\begin{equation}
\label{eq:gevrey-def}
\| h^{(n)}\|_{L^\infty(\mathbb{R})}
\le
C_1 C_2^{n} n^{n s}.
\end{equation}
For fixed constants $C_1, C_2, s$, the set of such Gevrey functions is denoted by $\mathcal{G}^s_{C_1,C_2}$. Furthermore, we define
\[
\mathcal{G}^s = \bigcup_{C_1,C_2>0} \mathcal{G}^s_{C_1,C_2}.
\]

\end{definition}

\begin{condition}
\label{cond:Window}
 In what follows, we also consider a cut-off function $w\in \mathcal{G}^s_{\xi_q,\xi_w}$ for some $\xi_w\ge 1$, with $w(\nu) = \overline{w(-\nu)}$ for all $\nu\in\mathbb{R}$ and such that 
    \[
    w(\nu) = 1 \quad \text{when } |\nu| \le \tfrac12,
    \qquad
    w(\nu) = 0 \quad \text{when } |\nu| \ge 1.
    \]
    Finally, we denote by $\kappa(\nu)=w(\nu/S)$ for some $S>0$ which we will choose later.
\end{condition}

\noindent Next, for filter function $\widehat f\in\mathcal{S}(\R)$ we aim at showing that the generator $\cL_{\smash{\sigma_E,\widehat{f},H_{\le M}}}^{\le M},$ which was defined in Section~\ref{sec:FinitDimGenerator} and studied in Sections~\ref{sec:FiniteDimPreperationSingular} and \ref{sec:FiniteDimPreperationSingular},  can be approximated by a generator made of a constant number of jumps by integral discretization. We recall that 
\begin{align*}
\cL_{\smash{\sigma_E,\widehat{f},H_{\le M}}}^{\le M}(\rho)=\sum_{\alpha\in\cA} \int_\R \gamma(t) X_{t,\le M}^{\alpha,\le M}\rho \left(X_{t,\le M}^{\alpha,\le M}\right)^\dagger dt\, + G^{\le M}_{\sigma_E,\le M}\rho +\rho \left(G^{\le M}_{\sigma_E,\le M}\right)^\dagger 
\end{align*}
with 
\begin{align*}
G^{\le M}_{\sigma_E,\le M} &=- \int_\R g(t) (X_{t,\le M}^{\alpha,\le M})^\dagger X_{t,\le M}^{\alpha,\le M}dt,\quad &
X_{t,\le M}^{\alpha,\le M}:=e^{itH_{\le M}}L_{\le M}^{\alpha,\le M}e^{-itH_{\le M}},\\\gamma(t)&:=\sigma_E\sqrt{\frac{2}{\pi}}e^{-2\sigma_E^2 t^2}\qquad\qquad\qquad\text{and}&g(t):=\frac{1}{2\pi}\int _\R \frac{e^{-\nu^2/8\sigma_E^2}}{1+e^{\beta\nu/2}}e^{-i\nu t}d\nu
\end{align*}
and $L^{\alpha,\le M}_{\le M}$ has been defined in \eqref{eq:TruncatedL}.
  By decomposing the function $g$ into real and imaginary part as
\begin{align*}
g(t)&=\frac{1}{2\pi}\int \frac{e^{-\nu^2/8\sigma_E^2}}{1+e^{\beta\nu/2}}e^{-i\nu t}d\nu=\frac{1}{2}\gamma(t)+\frac{1}{2\pi}\int {e^{-\nu^2/8\sigma_E^2}}e^{-i\nu t}  \frac{1-e^{\beta\nu/2}}{2(1+e^{\beta\nu/2})} d\nu \\&=\frac{1}{2}\gamma(t)+\frac{1}{4\pi}\int {e^{-\nu^2/8\sigma_E^2}}e^{-i\nu t} \tanh(-\beta \nu/4) d\nu,
\end{align*}
we get that 
\begin{align*}
\cL_{\smash{\sigma_E,\widehat{f},H_{\le M}}}^{\le M}(\rho)&=\sum_\alpha \int \gamma(t)\left( X_{t,\le M}^{\alpha,\le M}\rho \left(X_t^{\alpha,\le M}\right)^\dagger-\frac{1}{2}\left\{\left(X_{t,\le M}^{\alpha,\le M}\right)^\dagger X_{t,\le M}^{\alpha,\le M},\rho\right\}\right) dt-i\big[B^{\le M}_{\sigma_E},\rho\big],
\end{align*}
with
\begin{align*}
B_{\sigma_E}^{\le M}:=\sum_{\alpha\in\mathcal{A}}\frac{(-i)}{4\pi}\int\int {e^{-\nu^2/8\sigma_E^2}}e^{-i\nu t}  \tanh(-\beta\nu/4)  e^{itH_{\le M}}(L_{\le M}^{\alpha,\le M})^\dagger L_{\le M}^{\alpha,\le M} e^{-itH_{\le M}}\,d\nu dt=(B_{\sigma_E}^{\le M})^\dagger,
\end{align*}
where the self-adjointness follows from the antisymmetry of $\tanh$. Decomposing into the eigenbasis of $H_{\le M} = \sum_{E\in\spec(H_{\le M})} E \,P_E$ and denoting $\big[(L_{\le M}^{\alpha,\le M})^\dagger L_{\le M}^{\alpha,\le M}\big]_\mu=\sum_{E'-E=\mu}P_{E'}(L_{\le M}^{\alpha,\le M})^\dagger L_{\le M}^{\alpha,\le M}P_{E}$, we get
\begin{align*}
B_{\sigma_E}^{\le M}&=\sum_{\alpha\in\mathcal{A}}\sum_{\mu\in B(H_{\le M})}\frac{(-i)}{4\pi}\int\int {e^{-\nu^2/8\sigma_E^2}}e^{-i(\nu-\mu) t}  \tanh(-\beta\nu/4)  \big[(L_{\le M}^{\alpha,\le M})^\dagger L_{\le M}^{\alpha,\le M} \big]_\mu\,d\nu dt\\
&=\sum_{\alpha\in\mathcal{A}}\sum_{\mu\in B(H_{\le M})}\frac{(-i)}{2} {e^{-\mu^2/8\sigma_E^2}}  \tanh(-\beta\mu/4)  \big[(L_{\le M}^{\alpha,\le M})^\dagger L_{\le M}^{\alpha,\le M} \big]_\mu.
\end{align*}
Using the notation from \Cref{condfgevrey}, we define the Schwartz function $\widehat{t}_\kappa(\mu):=-\frac{i}{2}\tanh(-\beta\mu/4)\kappa(\mu)$ for $S\ge 4\|H_{\le M}\|$, and using that for all Bohr frequencies of the truncated Hamiltonian $\mu\in B(H_{\le M})$ we have $|\mu|\le 2\|H_{\le M}\| \le S/2$ and therefore $\widehat t_\kappa(\mu) = -\frac{i}{2}\tanh(-\beta \mu/4),$ we see
\begin{align*}
B_{\sigma_E}^{\le M}&=\sigma_E\sqrt{\frac{2}{\pi}}\sum_{\alpha\in\mathcal{A}}\sum_{\mu\in B(H_{\le M})}\int e^{-i\mu t-2\sigma_E^2 t^2 }   \widehat{t}_\kappa(\mu)  \big[(L_{\le M}^{\alpha,\le M})^\dagger L_{\le M}^{\alpha,\le M} \big]_\mu dt\\
&=\sigma_E\sqrt{\frac{2}{\pi}}\sum_{\alpha\in\mathcal{A}}\sum_{\mu\in B(H_{\le M})}\int e^{-2\sigma_E^2 t^2 }   \widehat{t}_\kappa(\mu) e^{-itH} \big[(L_{\le M}^{\alpha,\le M})^\dagger L_{\le M}^{\alpha,\le M} \big]_\mu e^{itH}\,dt\\
&=\sum_{\alpha\in\mathcal{A}}\int \gamma(t)   {t}_\kappa(s) e^{i(s-t)H_{\le M}} (L_{\le M}^{\alpha,\le M})^\dagger L_{\le M}^{\alpha,\le M}  e^{-i(s-t)H_{\le M}}\,ds\,dt\\
&= \int \gamma(t)\,e^{-it H_{\le M}}B^{\le M}e^{itH_{\le M}}dt ,
\end{align*}
where in the last equality we have denoted 
\begin{align*}
   B^{\le M} := \sum_{\alpha\in\cA} \int_\R t_\kappa(s) \,e^{isH_{\le M}}\left(L^{\alpha,\le M}_{\le M}\right)^\dagger L^{\alpha,\le M}_{\le M}e^{-isH_{\le M}} \,ds. 
\end{align*}
All in all, we have found that 
\begin{align}\label{convcomb}
\mathcal{L}_{\sigma_E,\widehat{f},H_{\le M}}^{\le M}(\rho)=\int \gamma(t) \,\mathcal{L}_{\widehat{f},H_{\le M}}^{\le M,\,t}(\rho)\, dt,
\end{align}
where $\mathcal{L}_{\widehat{f},H_{\le M}}^{\le M,\,t}$ coincides with the generator $\mathcal{L}_{\widehat{f},H_{\le M}}$ defined in \eqref{eq:GKLS} with filter function $\widehat{f}$, jumps $\left(A^{\alpha}\right)^{\le M}_t:= e^{itH_{\le M}} (A^{\alpha})^{\le M} e^{-itH_{\le M}}$ and Hamiltonian $H_{\le M}$.

Next, we need to argue that, without loss of generality, the filter function $\widehat{f}$ can be replaced by  $\widehat{f}_{\kappa}(\nu):=\widehat{f}(\nu)\kappa(\nu)$ with $\kappa$ satisfying \Cref{cond:Window} for some choice of parameter $S$. This is done by observing that
\begin{align}
\label{eq:LMkappaVsNoKappa}
\nn L_{\le M}^{\alpha,\le M}=\int f(s)\,e^{isH_{\le M}}(A^{\alpha})^{\le M}e^{-isH_{\le M}}\,ds&=\sum_{\nu\in B(H_{\le M})}\widehat{f}(\nu) ((A^{\alpha})^{\le M})_{\nu}\\\nn
&=\sum_{\nu\in B(H_{\le M})}\widehat{f}_\kappa(\nu) ((A^{\alpha})^{\le M})_{\nu}\\
&=\int f_\kappa(s)\,e^{is H_{\le M}}(A^{\alpha})^{\le M}e^{-is H_{\le M}} ds,
\end{align}
for $S\ge 4\|H_{\le M}\|$. Thus, without loss of generality, we can replace $\mathcal{L}_{\widehat{f},H_{\le M}}^{\le M,\,t}$ by $\mathcal{L}_{\widehat{f}_\kappa,H_{\le M}}^{\le M,\,t}$ in \Cref{convcomb}.

We invoke Gaussian-Hermite quadratures, recalled in the following supporting lemma, to find a discretization of the Gaussian integral in \eqref{convcomb} in Proposition~\ref{prop:DiscreteGauss} below.

\begin{lemma}
\label{lem:GaussHermite}
Let $X$ be a Banach space, $:\mathbb R\to X$ be $C^\infty$ and assume that
\[
\|F^{(m)}(t)\|_X \le K\,C^m
\qquad
\text{for all }m\in\mathbb N_0,\ t\in\mathbb R,
\]
for some constants $C,\,K\ge 0$. Let $\{(x_k,w_k)\}_{k=1}^n \subset \R\times \R$ be the $n$-point Gauss--Hermite nodes and weights with $\max_k|x_k|\le 2\sqrt{n}$ and $\max_k w_k=\mathcal{O}(n^{-1/2})$, and define
\[
Q_n(f)
:=
\frac{1}{\sqrt{\pi}}
\sum_{k=1}^n
w_k\,F\!\left(\frac{x_k}{\sqrt2\,\sigma_E}\right).
\]
Then, 
\[
n \ge \frac{C^2}{2\sigma_E^2} + \log_2\!\left(\frac{K}{\varepsilon}\right)\quad \Rightarrow\quad 
\left\|
\int_{\mathbb R} \gamma(t)\,F(t)\,dt
-
Q_n(F)
\right\|_X
\le \varepsilon.
\]
\end{lemma}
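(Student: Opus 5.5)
The plan is to reduce the statement to the classical scalar Gauss--Hermite error estimate and then control the resulting derivative via the assumed bound $\|F^{(m)}\|_X\le K C^m$. First I substitute $t=x/(\sqrt2\sigma_E)$. Since $\gamma(t)=\sigma_E\sqrt{2/\pi}\,e^{-2\sigma_E^2t^2}$, this gives
\[
\int_{\mathbb R}\gamma(t)F(t)\,dt=\frac{1}{\sqrt\pi}\int_{\mathbb R}e^{-x^2}G(x)\,dx,\qquad G(x):=F\!\left(\tfrac{x}{\sqrt2\sigma_E}\right),
\]
and $Q_n(F)=\pi^{-1/2}\sum_k w_kG(x_k)$. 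The derivatives of $G$ then satisfy $\|G^{(m)}\|\le K\,(C/(\sqrt2\sigma_E))^m=:K\,\tilde C^m$.

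Next I pass from Banach-valued to scalar functions. For any $\ell\in X^*$ with $\|\ell\|\le1$, the function $\ell\circ G$ is a smooth scalar function with the same derivative bounds. By Hahn--Banach,
\[
\Big\|\int\gamma F-Q_n(F)\Big\|_X=\sup_{\|\ell\|\le1}\bigl|E_n(\ell\circ G)\bigr|,
\]
where $E_n$ denotes the scalar quadrature error. If needed, I treat real and imaginary parts separately, since the weights $w_k$ are real.

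For the scalar function I use the standard Gauss--Hermite remainder
\[
E_n(g)=\frac{n!\sqrt\pi}{2^n(2n)!}\,g^{(2n)}(\xi)\cdot\frac{1}{\sqrt\pi}
\]
for some $\xi\in\mathbb R$. This follows from Hermite interpolation at the $n$ nodes, exactness on polynomials of degree $\le 2n-1$, and $\int e^{-x^2}\prod(x-x_k)^2\,dx=n!\sqrt\pi/2^n$. The mean-value form is valid for real $g$; to avoid it, I instead bound the interpolation remainder by $\sup|g^{(2n)}|/(2n)!$ times $\prod(x-x_k)^2\ge0$, and integrate. This gives
\[
|E_n(g)|\le \frac{n!}{2^n(2n)!}\,K\,\tilde C^{2n}.
\]

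The step I expect to be the main obstacle is turning this into the stated threshold on $n$. With $n!/(2n)!\le 1/n^n$, the bound becomes $K(\tilde C^2/(2n))^n=K\bigl(C^2/(4\sigma_E^2 n)\bigr)^n$. If $n\ge C^2/(2\sigma_E^2)$, then the base is $\le 1/2$, so the error is $\le K2^{-n}$. The additional requirement $n\ge\log_2(K/\varepsilon)$ then gives error $\le\varepsilon$. Since the stated hypothesis is the sum of these two lower bounds, it implies both, which concludes the proof.

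The node and weight bounds $|x_k|\le2\sqrt n$ and $w_k=\mathcal O(n^{-1/2})$ are not needed for this estimate; they only enter later, in the circuit implementation.
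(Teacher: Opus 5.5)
Your proposal is correct and follows the paper's proof essentially step for step: rescale to the weight $e^{-x^2}$, test against functionals in $X^*$, apply the scalar Gauss--Hermite remainder with $\sup|g^{(2n)}|$, and use $n!/(2n)!\le n^{-n}$ to reach the bound $K\bigl(C^2/(4\sigma_E^2 n)\bigr)^n$. One small point, which the paper shares since its proof ends with a $\max$ condition: the inference ``the sum implies both'' tacitly needs $K\ge\varepsilon$, but the sum condition suffices on its own, because with $a:=C^2/(2\sigma_E^2)$ and $t:=a/n$ it gives $K\bigl(a/(2n)\bigr)^n\le\varepsilon\,2^{-a}(a/n)^n=\varepsilon\,(t\,2^{-t})^n\le\varepsilon$, using $t\le 2^t$.
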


\begin{proof}
Set 
\[
I(F):=\int_{\mathbb R} \gamma(t)\,F(t)\,dt.
\]
With the change of variables
$
x=\sqrt2\,\sigma_E t,
$ we get
\[
I(F)
=
\frac{1}{\sqrt{\pi}}
\int_{\mathbb R}
e^{-x^2}
F\!\left(\frac{x}{\sqrt2\,\sigma_E}\right)\,dx.
\]
For an element  of $X'$, the dual of $X$, i.e. a bounded linear functional $\varphi: X\to \C$, define
\(
g_\varphi(x):=
\varphi\!\left(F\!\left(\frac{x}{\sqrt2\,\sigma_E}\right)\right).
\)
Then
\[
\varphi(I(F)-Q_n(F))
=
\frac{1}{\sqrt{\pi}}
\left(
\int e^{-x^2} g_\varphi(x)\,dx
-
\sum_{k=1}^n w_k g_\varphi(x_k)
\right).
\]
By the scalar Gauss-Hermite error formula,
\[
\left|
\int e^{-x^2} g_\varphi(x)\,dx
-
\sum_{k=1}^n w_k g_\varphi(x_k)
\right|
\le
\frac{\sqrt{\pi}\,n!}{2^n(2n)!}
\sup_{x\in\R} |g_\varphi^{(2n)}(x)|.
\]
By the chain rule we have,
\[
g_\varphi^{(2n)}(x)
=
\left(\frac{1}{\sqrt2\,\sigma_E}\right)^{2n}
\varphi\!\left(F^{(2n)}\!\left(\frac{x}{\sqrt2\,\sigma_E}\right)\right),
\]
and therefore
\[
|g_\varphi^{(2n)}(x)|
\le
\|\varphi\|
\left(\frac{1}{2\sigma_E^2}\right)^n
\left\|
F^{(2n)}\!\left(\frac{x}{\sqrt2\,\sigma_E}\right)
\right\|_X.
\]
Using the assumption,
$
\left\|
F^{(2n)}(t)
\right\|_X
\le K\,
C^{2n},$
 we obtain
\[
\sup_{x\in\R}|g_\varphi^{(2n)}(x)|
\le K
\|\varphi\|
\left(\frac{C^2}{2\sigma_E^2}\right)^n.
\]
\bin{Since the Hermite nodes satisfy $|x_k|\lesssim \sqrt{n}$, we bound
\[
\sup_x |g_\varphi^{(2n)}(x)|
\lesssim K
\|\varphi\|
\left(\frac{C^2 n}{\sigma_E^2}\right)^n.
\]}
Combining the estimates and using the dual expression of the norm on $X$ gives
\[
\|I(F)-Q_n(F)\|_X = \sup_{\substack{\phi\in X'\\\|\phi\|\le 1}} \left|\phi\left(I(F)) - Q_n(F)\right)\right|
\le  K
\frac{n!}{(2n)!}
\left(\frac{C }{2\sigma_E}\right)^{2n}.
\]
Using $
\frac{n!}{(2n)!}\le \frac{1}{n^n}$
we obtain
\[
\|I(F)-Q_n(F)\|_X
\le K
\left(\frac{C^2}{4
\sigma_E^2 n}\right)^n.
\]
Hence, for $n\ge \max\left\{\frac{C^2}{2\sigma^2_E},\log_2\left(\frac{K}{\eps}\right)\right\}$ this gives $\|I(F)-Q_n(F)\|_X\le \eps.$
\end{proof}

\begin{proposition}
\label{prop:DiscreteGauss}
 Let $H$ be such that the truncated Hamiltonian $H_{\le M}$, defined in Section~\ref{sec:FinitDimGenerator}, satisfies $\|H_{\le M}\| \le  p_5(\cA,M)$ for some polynomially bounded function $p_5.$ Moreover, let $\{\cA^\alpha\}_{\alpha\in\cA}$ be a set of bare jumps satisfying \eqref{eq:NormTruncatedJump}, i.e. $\|\left(A^\alpha\right)^{\le M}\|\le q'(M)$ for some polynomially bounded function $q'(M).$ Let $\kappa$ satisfy  \Cref{cond:Window} for $s\ge 1$ and $\xi_q,\xi_w\ge 0$ and  
 $S=4p_5(\cA|,M).$ 
 Let furthermore $\beta>0$ and $\widehat f\in\mathcal{S}(\R)$ and denote 
\begin{align*}
    K:=|\mathcal{A}|\xi_q\,(1+\log\left(((\beta+\xi_w/S)\max\{S,1\}\right))   \,\left(q'(M)\right)^2 \min\left\{\|f\|^2_{L^1(\R)},\|f_\kappa\|^2_{L^1(\R)}\right\}.
\end{align*} 
Then for $\beta,\eps>0$ and $t\ge 0$ we can achieve
\begin{align}
\label{eq:DiscreteGaussEvolution}
 \Big\|e^{t\mathcal{L}^{\le M}_{\sigma_E,\widehat{f},H_{\le M}}}-e^{t\mathcal{L}^{\le M,\mathfrak{n}}_{\sigma_E,\widehat{f},H_{\le M}}}\Big\|_{1\to 1}\le \epsilon \qquad \text{with}\qquad \mathfrak{n}=\mathcal{O}\left(  \frac{S^2}{\sigma_E^2}+\log\Big(\frac{K t}{\epsilon}\Big)\right)
\end{align}
with, given the $\mathfrak{n}$-point Gauss-Hermite nodes and weights $\{(x_k,w_k)\}_{k=1}^{\mathfrak{n}}$ with $\max_{k}w_k=\mathcal{O}(\mathfrak{n}^{-1/2})$ and $\max_k|x_k|\le \mathfrak{n}^{1/2}$, 
\begin{align}
\label{eq:ImplementedGenerator}
\mathcal{L}^{\le M,\mathfrak{n}}_{\sigma_E,\widehat{f},H_{\le M}}:=\frac{1}{\sqrt{\pi}}
\sum_{k=1}^{\mathfrak{n}}
w_k\,\mathcal{L}_{\widehat{f},H_{\le M}}^{\le M,x_k/(\sqrt{2}\sigma_E)}
\end{align}
\end{proposition}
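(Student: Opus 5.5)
The proof will pass the Gaussian quadrature error from the generators to the semigroups in two stages. Equation \eqref{convcomb}, combined with \eqref{eq:LMkappaVsNoKappa}, writes $\mathcal{L}^{\le M}_{\sigma_E,\widehat f,H_{\le M}}=\int\gamma(t)\,F(t)\,dt$. Here $F(t):=\mathcal{L}^{\le M,t}_{\widehat f_\kappa,H_{\le M}}$ is an element of the Banach space $X$ of bounded maps on $\mathscr{T}_1$ with the $1\to1$ norm. The discretized generator \eqref{eq:ImplementedGenerator} is exactly $Q_{\mathfrak n}(F)$. The plan is therefore:
\begin{enumerate}
\item Verify the derivative bound $\|F^{(m)}(t)\|_{1\to1}\le K C^m$ required by Lemma~\ref{lem:GaussHermite}, with $C=\mathcal{O}(S)$ and $K$ as in the statement.
\item Apply Lemma~\ref{lem:GaussHermite} to get $\|\mathcal{L}-\mathcal{L}^{\mathfrak n}\|_{1\to1}\le\epsilon/t$.
\item Use Duhamel's formula to obtain the semigroup bound.
\end{enumerate}

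For step 1, $F(t)$ depends on $t$ only through conjugation of the jumps by $e^{itH_{\le M}}$. Hence $F(t)=\mathcal{U}_t\circ F(0)\circ\mathcal{U}_t^{-1}$ with $\mathcal{U}_t(\rho)=e^{itH_{\le M}}\rho e^{-itH_{\le M}}$, and equivalently $F(t)=e^{t\,\mathrm{ad}}F(0)$ where $\mathrm{ad}(\Psi)=[\mathcal{H},\Psi]$ and $\mathcal{H}=i[H_{\le M},\cdot]$. Each derivative then brings a factor of norm at most $2\|\mathcal{H}\|_{1\to1}\le4\|H_{\le M}\|\le S$, so $\|F^{(m)}(t)\|_{1\to1}\le S^m\|F(0)\|_{1\to1}$, and $C=S$ works. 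Since $\mathcal{U}_t$ is isometric, $\|F(t)\|=\|F(0)\|$ for all $t$.

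It remains to bound $\|F(0)\|_{1\to1}$, which consists of two pieces.
\begin{itemize}
\item \emph{Dissipative part.} It is bounded by a multiple of $\sum_\alpha\|L^{\alpha,\le M}_{\le M}\|^2$. Using \eqref{eqLalphaMMbound}, written with either $f$ or $f_\kappa$ (they give the same operator by \eqref{eq:LMkappaVsNoKappa}), this is at most $|\mathcal{A}|\,q'(M)^2\min\{\|f\|_1^2,\|f_\kappa\|_1^2\}$.
\item \emph{Coherent part.} The term $B^{\le M}$ involves $\|t_\kappa\|_{L^1}$, the Fourier transform of $-\tfrac i2\tanh(-\beta\mu/4)\,w(\mu/S)$. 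This is where the factor $\xi_q\bigl(1+\log((\beta+\xi_w/S)\max\{S,1\})\bigr)$ comes from. I would take it from the Gevrey estimate in \cite[Assumption 13 / Lemma]{ding2025efficient}: integrate by parts twice away from a neighbourhood of size $\sim1/(\beta+\xi_w/S)$ of the origin, and bound trivially near it. This yields an $L^1$ bound of order $\xi_q\log((\beta+\xi_w/S)S)$.
\end{itemize}
I expect this $L^1$ estimate for $t_\kappa$ to be the main technical obstacle. Its bound should be cited rather than reproved.

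For steps 2 and 3, Lemma~\ref{lem:GaussHermite} with tolerance $\epsilon/t$ gives $\|\mathcal{L}-\mathcal{L}^{\mathfrak n}\|_{1\to1}\le\epsilon/t$ once $\mathfrak n\ge S^2/(2\sigma_E^2)+\log_2(Kt/\epsilon)$. Both semigroups are CPTP: each $\mathcal{L}^{\le M,x_k/(\sqrt2\sigma_E)}_{\widehat f_\kappa,H_{\le M}}$ is a GKLS generator, and the Gauss--Hermite weights are positive. The Duhamel identity
\[
e^{t\mathcal{L}}-e^{t\mathcal{L}^{\mathfrak n}}=\int_0^t e^{(t-s)\mathcal{L}^{\mathfrak n}}\bigl(\mathcal{L}-\mathcal{L}^{\mathfrak n}\bigr)e^{s\mathcal{L}}\,ds
\]
and contractivity in trace norm then give \eqref{eq:DiscreteGaussEvolution} with $\mathfrak n=\mathcal{O}(S^2/\sigma_E^2+\log(Kt/\epsilon))$.
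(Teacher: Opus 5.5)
Your proposal is correct and follows the paper's proof essentially step for step. You reduce via Duhamel and trace-norm contractivity to the generator difference, apply Lemma~\ref{lem:GaussHermite} to $F(t)=\mathcal{L}^{\le M,t}_{\widehat f,H_{\le M}}$ with derivative bounds of order $(4\|H_{\le M}\|)^m|\mathcal{A}|\,q'(M)^2\min\{\|f\|_1^2,\|f_\kappa\|_1^2\}(1+\|t_\kappa\|_{L^1})$, and cite \cite[Lemma 30]{ding2025efficient} for $\|t_\kappa\|_{L^1}$, exactly as the paper does. Beyond that, your conjugation identity $F(t)=\mathcal{U}_t\circ F(0)\circ\mathcal{U}_t^{-1}$ gives a cleaner derivation of the derivative bound. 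You also correctly carry the factor $t$ from Duhamel into the tolerance $\epsilon/t$; the paper's displayed inequality drops it, though the $\log(Kt/\epsilon)$ in the statement accounts for it. Finally, you justify contractivity of the discretized semigroup via positivity of the Gauss--Hermite weights, which the paper only asserts.
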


\begin{proof}
We use
\begin{align*}
e^{t\mathcal{L}^{\le M}_{\sigma_E,\widehat{f},H_{\le M}}}-e^{t\mathcal{L}^{\le M,\mathfrak{n}}_{\sigma_E,\widehat{f},H_{\le M}}} = \int^t_0 e^{s\mathcal{L}^{\le M}_{\sigma_E,\widehat{f},H_{\le M}}}\left(\mathcal{L}^{\le M}_{\sigma_E,\widehat{f},H_{\le M}} - \mathcal{L}^{\le M,\mathfrak{n}}_{\sigma_E,\widehat{f},H_{\le M}}\right) e^{(1-s)\mathcal{L}^{\le M,\mathfrak{n}}_{\sigma_E,\widehat{f},H_{\le M}}} \,dt
\end{align*}
and, therefore, as both dynamics are contractive in trace norm and the $1\to 1$ norm is submultiplicative, we have
\begin{align*}
\Big\|e^{t\mathcal{L}^{\le M}_{\sigma_E,\widehat{f},H_{\le M}}}-e^{t\mathcal{L}^{\le M,\mathfrak{n}}_{\sigma_E,\widehat{f},H_{\le M}}}\Big\|_{1\to 1} \le \left\|\mathcal{L}^{\le M}_{\sigma_E,\widehat{f},H_{\le M}} - \mathcal{L}^{\le M,\mathfrak{n}}_{\sigma_E,\widehat{f},H_{\le M}}\right\|_{1\to 1}
\end{align*}
Hence, using Lemma~\ref{lem:GaussHermite} for the function $t\mapsto F(t):=\mathcal{L}_{\widehat{f},H_{\le M}}^{\le M,t}$, it suffices to compute estimate the  derivatives as  
\begin{align*}
\|F^{(m)}(t)\|_{1\to 1}&\le 2|\mathcal{A}|\, (4\|H_{\le M}\|)^m\max_{\alpha\in\mathcal{A}}\left\|L_{\le M}^{\alpha,\le M}\right\|^2+2^m\|H_{\le M}\|^m \|B^{\le M}\|\\
&\le 2|\mathcal{A}|\, (4\|H_{\le M}\|)^m\max_{\alpha\in\mathcal{A}}\left\|L_{\le M}^{\alpha,\le M}\right\|^2(1+\|t_\kappa\|_{L^1(\mathbb{R})})\\
&\le 2|\mathcal{A}|\, (4\|H_{\le M}\|)^m\min\left\{\|f\|^2_{L^1(\R)},\|f_\kappa\|^2_{L^1(\R)}\right\} \max_{\alpha\in\mathcal{A}}\left\|\left(A^\alpha\right)^{\le M}\right\|^2(1+\|t_\kappa\|_{L^1(\mathbb{R})})\\ &\le 2|\mathcal{A}|\, (4\|H_{\le M}\|)^m\min\left\{\|f\|^2_{L^1(\R)},\|f_\kappa\|^2_{L^1(\R)}\right\} (q'(M))^2(1+\|t_\kappa\|_{L^1(\mathbb{R})}),
\end{align*}
where in the second to last inequality we have used \eqref{eq:LMkappaVsNoKappa} to bound the operator norm of $L_{\le M}^{\alpha,\le M}.$
Using now \Cref{cond:Window} combined with \cite[Lemma 30]{ding2025efficient}, we see
\begin{align*}
\|t_\kappa\|_{L^1(\mathbb{R})} = \mathcal{O}\big(\xi_q\,(1+\log((\beta+\xi_w/S)\max\{S,1\})),
\end{align*}
which finishes the proof.
\end{proof}

We can now employ the \cite[Theorem 18]{ding2025efficient} to provide a circuit implementation of the finite-dimensional dynamics. For that we first recall the condition on the filter function in \cite[Assumption 13]{ding2025efficient}.

\begin{condition} \label{condfgevrey}
For $\beta > 0$, we consider a filter function $\widehat{f}$ such that $\nu\mapsto \widehat{f}(\nu/\beta)\in   \mathcal{G}^s_{\smash{\xi_q,\xi_u}}$ for some constants $\xi_q,\xi_u \ge 1$ and $s \ge 1$. In addition, we assume
$
\frac{d}{d\nu}\widehat{f}(\nu) \in L^1(\mathbb{R}),
$
and denote
\[
C_{1,f}
:=
\left\|
\frac{d}{d\nu}\widehat{f}(\nu/\beta)
\right\|_{L^1(\mathbb{R})}.
\]
\end{condition}

\noindent The next result shows that \Cref{condfgevrey} is satisfied for the function $\hatfMdelta$ defined in \eqref{eq:hatfMdeltaeqIntro} and analysed in Sections~\ref{sec:singular} and~\ref{sec:FiniteDimPreperationSingular}.

\begin{lemma}
\label{LemmGevreycond}
Let $\beta>0,$ $\delta\ge 0$ and $\theta\in [0,1/2).$ and denote
$
h_{\delta,\theta}(\nu)
:=
\hatfMdelta\!\left(\nu/\beta\right),$ where the function $\hatfMdelta$ is defined in \eqref{eq:hatfMdeltaeqIntro}. Then \(h_{\delta,\theta}\) satisfies \Cref{condfgevrey}. More precisely, there exists $\xi_{\theta},\, C_{1,\theta}\ge 0$ only depending on $\theta$ but independent of $\delta$ such that \(h_{\delta,\theta}\in \mathcal G^1_{\xi_\theta,4}\) and 
\begin{align*}
\left\|\frac{d}{d\nu} h_{\theta,\delta}\right\|_1 \le C_{1}
\end{align*}
for some $C_1\ge 0$ independent of $\theta$ and $\delta.$
\end{lemma}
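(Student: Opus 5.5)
We need to show that $h_{\delta,\theta}(\nu)=\hatfM(\nu/\beta)e^{-\delta\eta_{2,\theta}(\nu/\beta)}$ is Gevrey of order $1$ with $\delta$-uniform constants, and that its derivative has a $\delta$-uniform $L^1$ bound. After rescaling, $\beta$ disappears entirely:
\[
h_{\delta,\theta}(\nu)=u(\nu)\,v_\delta(\nu),\qquad u(\nu):=\exp\!\Big(-\tfrac{\sqrt{1+\nu^2}+\nu}{4}\Big),\qquad v_\delta(\nu):=\exp\!\big(-\delta e^{(1+\nu^2)^\theta}\big).
\]
This makes all constants automatically $\beta$-independent. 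The plan is to bound derivatives of $u$ and $v_\delta$ separately and then combine them with the Leibniz rule. Using $\sum_k\binom nk k^k(n-k)^{n-k}\le 2^n n^n$ turns two $\mathcal G^1$ bounds with parameter $2$ into a $\mathcal G^1$ bound with parameter $4$.

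\textbf{The factor $u$.} The map $z\mapsto \sqrt{1+z^2}+z$ is holomorphic on the strip $|\operatorname{Im} z|<1/2$, away from the branch points $\pm i$. Moreover, $\operatorname{Re}(\sqrt{1+z^2}+z)$ is bounded below there. This holds because $\sqrt{1+z^2}+z\to 0$ as $\operatorname{Re} z\to-\infty$, with imaginary part bounded, and grows as $\operatorname{Re}z\to+\infty$. Hence $u$ extends to a bounded holomorphic function on that strip. Cauchy's estimates on discs of radius $1/2$ then give $|u^{(k)}(\nu)|\le C\,2^k k!\le C\,2^k k^k$ for all real $\nu$.

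\textbf{The factor $v_\delta$, which is the main obstacle.} Here the difficulty is that $\eta$ grows super-polynomially, and we need bounds uniform in $\delta\in[0,1]$. The same strip argument is the approach to try first. Let $w(z)=(1+z^2)^\theta$, which is holomorphic on $|\operatorname{Im} z|<1/2$, and write $z=x+iy$. For $|x|$ large, $\arg(1+z^2)=O(|y|/|x|)$, so $\operatorname{Re} e^{w(z)}=e^{\operatorname{Re} w}\cos(\operatorname{Im} w)$, where
\[
\operatorname{Im} w=O\big(\theta\,|x|^{2\theta-1}|y|\big).
\]
Since $\theta<1/2$, this is small uniformly in $x$ once the strip width $c_\theta$ is small enough. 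On the compact middle region, $e^{w}$ is near its real value, so its real part stays positive. Thus on $|\operatorname{Im}z|<c_\theta$ we have $\operatorname{Re}\,e^{w(z)}\ge 0$, and therefore $|v_\delta(z)|\le 1$ uniformly in $\delta\ge0$. Cauchy's estimates again yield $|v_\delta^{(k)}|\le c_\theta^{-k}k!$. This produces a $\mathcal G^1$ bound with a $\theta$-dependent second parameter. To land on the stated parameter $4$, I would either absorb the discrepancy into $\xi_\theta$ by using the freedom $C_2^nn^n$ versus $\xi_\theta$ where the discrepancy is harmless, or restrict $\theta$-dependence by bounding only large $n$ with the strip and small $n$ by hand. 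I expect getting exactly the constant $4$ to be the most delicate bookkeeping point. If the strip positivity fails near the middle region, the fallback is a Faà di Bruno bound on $e^{-\delta e^{w}}$, using $\delta^je^{jw}e^{-\delta e^{w}}\le (j/e)^j$.

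\textbf{Derivative in $L^1$.} Write $h'=u'v_\delta+u v_\delta'$. The term $u'v_\delta$ is controlled by $\|u'\|_{L^1}$, which is finite because $u'$ decays like $e^{-\nu/2}$ for $\nu\to+\infty$ and like $|\nu|^{-2}$ for $\nu\to-\infty$. For the second term, note that $v_\delta$ is monotone on each half-line, since $\eta$ is symmetric and increasing in $|\nu|$. Hence
\[
\|u v_\delta'\|_{L^1}\le \|u\|_\infty \|v_\delta'\|_{L^1}\le 2\|u\|_\infty,
\]
because the total variation of $v_\delta$ is at most $2$. This gives a bound $C_1$ independent of $\delta$ and $\theta$. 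The case $\theta=0$ is trivial, since then $v_\delta$ is a constant.
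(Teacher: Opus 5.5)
Your overall route coincides with the paper's. For the Gevrey bound you extend holomorphically to a strip and apply Cauchy estimates. For $\|h_{\delta,\theta}'\|_{L^1}$ you split $h'=u'v_\delta+uv_\delta'$, use $|u|,|v_\delta|\le 1$ on $\mathbb R$, and use that the piecewise monotone $v_\delta$ has total variation $2v_\delta(0)\le 2$; this part is exactly the paper's argument and is fine.

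The gap is the second Gevrey parameter. On a strip of $\theta$-dependent half-width $c_\theta$ you get $|v_\delta^{(k)}|\le c_\theta^{-k}k!$. Your Leibniz combination then only gives $|h_{\delta,\theta}^{(n)}|\lesssim (2\max\{2,c_\theta^{-1}\})^n n^n$. That is membership in $\mathcal G^1_{\xi,C_\theta}$ with some $C_\theta$ that is in general larger than $4$, not in $\mathcal G^1_{\xi_\theta,4}$ as stated. Neither of your proposed repairs works. The discrepancy is a factor $(C_\theta/4)^n$, which is unbounded in $n$, so it cannot be absorbed into the $n$-independent prefactor $\xi_\theta$. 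Handling small $n$ by hand says nothing about large $n$, which is precisely where the second parameter matters.

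The missing observation is that the strip width need not depend on $\theta$. Take $z=x+iy$ with $|y|\le\frac12$. Then $\Re(1+z^2)\ge\frac34+x^2$ and $|\Im(1+z^2)|\le|x|$. Hence $|\arg(1+z^2)|\le|x|/(\frac34+x^2)$ and $|1+z^2|\le\frac{\sqrt5}{2}(1+x^2)$. Using $\theta<\frac12$ and $|x|\sqrt{1+x^2}\le x^2+\frac12$, you get
\[
\bigl|\Im\,(1+z^2)^\theta\bigr|\le\theta\,|1+z^2|^\theta\,|\arg(1+z^2)|<\tfrac12\bigl(\tfrac54\bigr)^{1/4}<\tfrac{\pi}{2}.
\]
Since $\Re\,e^{w}=e^{\Re w}\cos(\Im w)$, this gives $\Re\,e^{(1+z^2)^\theta}>0$. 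Therefore $|v_\delta|\le 1$ on the whole closed strip $|\Im z|\le\frac12$, for every $\delta\ge0$ and $\theta\in[0,\frac12)$.

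With this, your Leibniz step does produce the parameter $4$. It is simpler, as the paper does, to bound the product $|h_{\delta,\theta}|\le 1$ on that strip, with Lemma~\ref{lem:AaaahFm} handling $u$, and apply Cauchy's estimate once. This even yields $|h_{\delta,\theta}^{(n)}|\le 2^n n!$.

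The paper also works on this fixed strip, but proves positivity only for $|\Re z|\ge C_\theta$ (Lemma~\ref{lem:RealPartBig}) and covers the compact middle region by continuity. That way $\theta$ enters only through $\xi_\theta$. Note, however, that this middle-region bound is uniform in $\delta$ only for bounded $\delta$ (e.g.\ $\delta\le 1$), whereas the positivity estimate above covers all $\delta\ge 0$.
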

We prove Lemma~\ref{LemmGevreycond} in Appendix~\ref{sec:fmDeltaProofs}.

\noindent Following the notation of Proposition~\ref{prop:DiscreteGauss}, it remains to argue that the evolution generated by the Lindbladian $\mathcal{L}^{\le M,\mathfrak{n}}_{\smash{\sigma_E,\widehat{f},H_{\le M}}}$ can be efficiently implemented on a quantum computer. For this, we assume access to the same oracles as in \cite{ding2025efficient}, with renormalized bare jumps $ (A^{\alpha})^{\le M}/q'(M)$ and Hamiltonian simulation of $H_{\le M}$. Therefore, using the fact that the Gauss-Hermite nodes satisfy $\max_{k\in\{1,\cdots,\mathfrak{n}\}} |x_k|\le \mathfrak{n}^{1/2},$  for $\alpha\in\cA$ and $k\in\{1,\cdots,\mathfrak{n}\}$, the time-evolved jump $(A^{\alpha})^{\le M}_{x_k/(\sqrt{2}\sigma_E)}/q'(M)$ can be prepared via total Hamiltonian simulation time of order $  \mathfrak{n}^{1/2}/\sigma_E$. Then, since each query to a block encoding of the jump operators of $\mathcal{L}^{\le M,\mathfrak{n}}_{\smash{\sigma_E,\widehat{f},H_{\le M}}}$ requires a single query to a block encoding of the bare jumps, and each query to the block encoding of coherent part requires two queries of it, we can bound the total Hamiltonian simulation time using \cite[Theorem 18]{ding2025efficient} as follows. 

\begin{theorem}[{\cite[Theorem 18]{ding2025efficient}}]
\label{thm:Lin}
Let $\beta>0$ and $M,\,\mathfrak{n}\in\N.$ Moreover, let $\widehat f$ satisfy \Cref{condfgevrey} and $\kappa$ satisfy \Cref{cond:Window} for some $s,S\ge 1.$  For filter functions $\widehat f_\kappa(\nu) = \widehat f(\nu)\kappa(\nu)$ and $\widehat t_\kappa(\nu) := -\frac{i}{2}\tanh(-\beta\nu/4)\kappa(\nu)$
assume oracle access to the respective Fourier transforms ${f}_\kappa$ and  $t_\kappa$, block encodings of the subnormalised bare jumps $(A^{\alpha})^{\le M}/q'(M)$, and controlled Hamiltonian simulation $U_{H_{\le M}}$. The Lindbladian evolution generated by $\mathcal{L}^{\le M,\mathfrak{n}}_{\smash{\sigma_E,\widehat{f},H_{\le M}}},$ defined via \eqref{eq:ImplementedGenerator} with $\max_{k\in\{1,\cdots,\mathfrak{n}\}} |x_k|\le \mathfrak{n}^{1/2}$ can be simulated up to time $t$ within $\epsilon$-diamond distance with total Hamiltonian simulation time
\begin{align*}
\widetilde {\mathcal{O}}\Big(\sigma_E^{-1} (\beta+1)\left(\log^{2+s}(S)+1\right)\mathfrak{n}^{5/2}|\cA|^2 \ t\ \operatorname{poly}(M)\log^{1+s}(1/\eps)\Big).
\end{align*}
In addition the algorithm requires 
\begin{align*}
\widetilde{\mathcal{O}}\Big(\log\left(S^2 +S\|H_{\le M}\|\right) + \log^2\left(t|\cA|/\eps\right) + \log\left(\beta+1\right)\Big)
\end{align*}
many ancilla qubits.
Here the $\widetilde{\mathcal{O}}$ absorbs subdominant polylogarithmic dependencies on all the parameters. 

\end{theorem}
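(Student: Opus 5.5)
The proof reduces the statement to \cite[Theorem 18]{ding2025efficient}. The generator $\mathcal{L}^{\le M,\mathfrak{n}}_{\sigma_E,\widehat f,H_{\le M}}$ in \eqref{eq:ImplementedGenerator} is a convex combination, with weights $w_k/\sqrt{\pi}$, of $\mathfrak{n}$ finite-dimensional generators $\mathcal{L}^{\le M,x_k/(\sqrt2\sigma_E)}_{\widehat f,H_{\le M}}$. Each of these has exactly the form \eqref{eq:GKLS} treated there, with Hamiltonian $H_{\le M}$ and bare jumps $(A^\alpha)^{\le M}_{t_k}$, where $t_k=x_k/(\sqrt2\sigma_E)$. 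By \eqref{eq:LMkappaVsNoKappa}, since $S\ge 4\|H_{\le M}\|$, I may replace $\widehat f$ by $\widehat f_\kappa$ and the coherent filter by $\widehat t_\kappa$ without changing the generator. The combined generator is therefore a single Lindbladian with $\mathfrak{n}|\cA|$ jumps $\sqrt{w_k/\sqrt\pi}\,e^{it_kH_{\le M}}L^{\alpha,\le M}_{\le M}e^{-it_kH_{\le M}}$. Its coherent part is $\sum_k (w_k/\sqrt\pi)\, e^{-it_kH_{\le M}}B^{\le M}e^{it_kH_{\le M}}$.

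The first step builds block encodings. A block encoding of the normalised time-evolved bare jump $(A^\alpha)^{\le M}_{t_k}/q'(M)$ costs one query to the bare-jump block encoding and two controlled Hamiltonian simulations of duration $|t_k|\le \mathfrak{n}^{1/2}/(\sqrt2\sigma_E)$. Substituting this as the bare jump in the LCU construction of \cite{ding2025efficient} gives block encodings of each filtered jump. The same holds for the coherent term, which uses two queries to the jump encoding. This step uses Condition~\ref{condfgevrey} on $\widehat f$ and Condition~\ref{cond:Window} on $\kappa$, which bound the $L^1$ norms of $f_\kappa$ and $t_\kappa$ via \cite[Lemma 30]{ding2025efficient} and produce the $\log^{2+s}(S)$ and $(\beta+1)$ factors. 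I then take the weighted sum over $k$ and $\alpha$ by a further LCU layer with prepare register of size $\mathfrak{n}|\cA|$. Since $\sum_k w_k/\sqrt\pi=1$, this costs only a constant factor in the subnormalisation and $\mathcal{O}(\log(\mathfrak{n}|\cA|))$ ancillas.

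Next I feed these block encodings into the generic Lindblad simulation algorithm underlying \cite[Theorem 18]{ding2025efficient}. Its query complexity scales as $t\,\|\mathcal{L}\|_{\mathrm{be}}\,\mathrm{polylog}(1/\eps)$. Here the block-encoding norm is bounded by $\mathfrak{n}|\cA|$ times the squared subnormalisation $q'(M)^2\|f_\kappa\|_1^2$, which gives the $\operatorname{poly}(M)$ and $|\cA|^2$ factors. Multiplying the number of queries by the per-query Hamiltonian simulation time $\mathfrak{n}^{1/2}/\sigma_E$ plus the filter-integration time yields the stated total time, with the remaining powers of $\mathfrak{n}$ collecting the LCU overhead. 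The ancilla count follows from the time and frequency discretisation grids in \cite{ding2025efficient}, which scale as $\log(S^2+S\|H_{\le M}\|)$, plus $\log^2(t|\cA|/\eps)$ for the simulation algorithm.

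The main obstacle is bookkeeping. The delicate point is that the quadrature-error argument of \cite{ding2025efficient} for discretising the $f_\kappa$ and $t_\kappa$ integrals must hold uniformly over the shifted jumps. This is true because conjugating by $e^{it_kH_{\le M}}$ preserves all operator norms, so the discretisation error estimates carry over unchanged. Tracking the exact powers of $\mathfrak{n}$ requires care but no new ideas.
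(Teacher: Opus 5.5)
Your reduction is the one the paper uses. You regard $\mathcal{L}^{\le M,\mathfrak{n}}_{\sigma_E,\widehat f,H_{\le M}}$ as a Lindbladian of the type in \cite{ding2025efficient} with the $\mathfrak{n}|\cA|$ time-evolved truncated bare jumps $(A^\alpha)^{\le M}_{t_k}/q'(M)$, with the weights $w_k/\sqrt{\pi}$ absorbed as you do. Each such bare-jump query costs one query to the given block encoding plus $\mathcal{O}(\mathfrak{n}^{1/2}/\sigma_E)$ Hamiltonian simulation time, and two queries for the coherent part. You use $S\ge 4\|H_{\le M}\|$ to pass to $\widehat f_\kappa,\widehat t_\kappa$, and then invoke \cite[Theorem 18]{ding2025efficient}. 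Your remark that conjugation by $e^{it_kH_{\le M}}$ leaves all norms, and hence all discretisation errors, unchanged is exactly why the black box applies.

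What does not hold up is the complexity bookkeeping. You first claim the convex weights cost only a constant factor. You then bound the block-encoding norm by $\mathfrak{n}|\cA|\,q'(M)^2\|f_\kappa\|_1^2$ and say this produces the $|\cA|^2$ factor, although that bound is linear in $|\cA|$. The leftover powers of $\mathfrak{n}$ are then assigned to an unspecified ``LCU overhead''. The paper needs no re-derivation. The factor $\mathfrak{n}^{5/2}|\cA|^2=(\mathfrak{n}|\cA|)^2\cdot\mathfrak{n}^{1/2}$ is just the bound of \cite[Theorem 18]{ding2025efficient} with $|\cA|$ replaced by the number $\mathfrak{n}|\cA|$ of new bare jumps, multiplied by the per-query simulation time $\mathfrak{n}^{1/2}/\sigma_E$. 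The factor $\operatorname{poly}(M)$ accounts for the normalisation $q'(M)$ of the bare jumps.

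One further point needs to be made precise. The $\mathfrak{n}|\cA|$ jumps must be combined through a jump-index register, i.e.\ a block encoding of $\sum_{k,\alpha}|k,\alpha\rangle\otimes\sqrt{w_k/\sqrt{\pi}}\,e^{it_kH_{\le M}}L^{\alpha,\le M}_{\le M}e^{-it_kH_{\le M}}$. A prepare--select--unprepare LCU would instead sum the jumps and produce the wrong dissipator. An LCU over $k$ is appropriate only for the coherent term. Feeding the time-evolved bare jumps directly into \cite[Theorem 18]{ding2025efficient}, as the paper does, handles this automatically.
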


\begin{remark}[Circuit implementation of \(a_i^{\le M}\) and \((a_i^{\le M})^\dagger\)]
\label{rem:CircuitTruncJumps}
For an $m$-mode continuous variable system with Hilbert space $(L^2(\R))^\otimes \cong L^2(\R^m),$ we usually consider bare jumps being the annihilation and creation operators on each site, i.e.~$\{A^\alpha\}_{\alpha\in\cA} \equiv \{a_i,a^\dagger_i\}_{i=1}^{m}.$ Furthermore, we then consider their Fock basis truncations $a_i^{\le M}$ and $\left(a_i^{\le M}\right)^\dagger$ (c.f.~\eqref{eq:DefTruncAnnCrea}), whose operator norms satisfy $\|a^{\le M}_i\|\,,\, \|(a^{\le M}_i)^\dagger\|\le \sqrt{M}.$
For fixed mode $i$ these truncated operators can be seen as acting on a $(M+1)$-dimensional space and we, hence, argue in the following how to implement them on a \(q=\lceil \log_2(M+1)\rceil\) qubit device. In binary encoding, the  subnormalised operators
\(a_i^{\le M}/\sqrt{M}\) and \((a_i^{\le M})^\dagger/\sqrt{M}\) are weighted shift operators. A naive
implementation via a flat lookup of the \(M+1\) coefficients has depth
\(\mathcal O(M)\), up to precision overhead. 

However, this can be drastically improved
by noting that the shift part \(\ket{n}_i \mapsto \ket{n\pm1}_i\) is simply reversible
increment/decrement on the \(q=\lceil \log_2(M+1)\rceil\) qubits, while the corresponding
matrix elements are given by the efficiently computable functions
\(n\mapsto \sqrt{ n/M}\) and \(n\mapsto \sqrt{(n+1)/M}\). Using reversible arithmetic to compute
these coefficients to accuracy \(\varepsilon\) and loading them into an ancilla rotation
yields a block-encoding of the subnormalized operators
\(a_i^{\le M}/\sqrt M\) and \((a_i^{\le M})^\dagger/\sqrt M\) with operator-norm error at
most \(\varepsilon\), and with $$\mathcal{O}\left(\operatorname{poly}(\log(M),\log(1/\eps))\right)$$ circuit depth.
\end{remark}

\begin{remark}[Circuit implementation of $e^{itH_{\le M}}$] 
For an $m$-mode continuous variable system with Hilbert space $(L^2(\R))^\otimes \cong L^2(\R^m),$ we usually consider the projection onto the system register given by $P_M = \pi^{\otimes m}_M$ with local projections $\pi_M = \sum_{n=0}^{M} \kb{n}$. Due to the tensor product structure of $P_M$, we see for $H$ being an $\mathcal{O}(1)$ degree polynomial in the creation and annihilation operators $a_i$ and $a^\dagger_i,$ that also the truncated Hamiltonian $H_{\le M} = P_MHP_M$ is a sum of local terms. In particular, encoding the system register $\operatorname{im}(P_M)$ by $q=\ceil{m\log_2(M+1)}$ many qubits, we see that the locality of $H_M$ is of order $\log(M).$ Therefore, the unitary evolution $e^{itH_{\le M}}$ can be implemented efficiently using QSVT techniques in $$\mathcal{O}\left(\operatorname{poly}(M,\log(1/\eps))\right)$$ circuit depth
\label{rem:CircuitTruncHam}
\cite{Low_2017,low2017hamiltoniansimulationuniformspectral,Low2019hamiltonian,PhysRevX.6.041067,10.1145/3313276.3316366}.
\end{remark}

\subsubsection{Circuit implementation for Schwartz filter functions}
Combining Proposition~\ref{prop:DiscreteGauss} and Theorem~\ref{thm:Lin} with Theorem~\ref{thm:SchwartzGibbsDynamics} yields the following result on efficient implementation of the Lindblad dynamics for Schwartz filter functions.  Recall that the image of the finite rank projection $P_M$ involved in the truncation of the Hamiltonian in Section~\ref{sec:FinitDimGenerator}, i.e. $H_{\le M} = P_MHP_M,$ is referred to as the system register, which governs the size of the quantum computer on which can provide an circuit implementation of the Lindblad dynamics. To make the following results more explicit we consider the scaling $\log(\operatorname{im}(P_M)) = \mathcal{O}(|\cA|\log(M))$ which is the typical scaling required for quantum many-body or multi-mode continuous variable systems.We note, however, that Theorems~\ref{thm:SchwartzGibbsDynamics} and~\ref{thm:CircuitDynamicsSingular}, as well as Corollaries~\ref{cor:GibbsCircuitSchwartz} and~\ref{cor:GibbsCircuitSingular}, remain valid for more general projections $P_M$, provided the remaining assumptions are satisfied. The corresponding number of required qubits is then simply 
\(
\log_2\left( \dim\bigl(\operatorname{im}(P_M)\bigr)\right).
\)

\begin{theorem}[Circuit implementation of $e^{t\cL_{\sigma_E,\widehat f,H}}$ for $\widehat f$ Schwartz]
\label{thm:CircuitDynamicsSchwartz}
Let $\beta>0,$ $t\ge 0$ and $\widehat f\in\mathcal{S}(\R).$
 Under the assumptions of Theorem~\ref{thm:SchwartzGibbsDynamics},~\ref{thm:Lin} and  Proposition~\ref{prop:DiscreteGauss} and assuming oracle access to ${f}_\kappa$ and  $t_\kappa$, block encodings of the subnormalised bare jumps $(A^{\alpha})^{\le M}/q'(M)$, controlled Hamiltonian simulation $U_{H_{\le M}}$, where 
\begin{align}
\label{eq:MDimension}
 M =\widetilde\Theta\left(\operatorname{poly}\left(\log\left(\frac{t\,\mathfrak{c}\,E_{\operatorname{Gibbs}}|\cA|}{ \eps}\right)\right)\right)
    \end{align}
    and assume oracle access to a state preparation circuit for input state $\rho$ satisfying
    \begin{align*}
        \rho \le \mathfrak{c}\, \sigma_\beta.
    \end{align*}
Then for $\sigma_E\in(0,\infty)$ the state $e^{t\cL_{\sigma_E,\widehat f, H}}(\rho)$ can be prepared within $\eps$-trace distance on a quantum computer with $\mathcal{O}\left(|\cA|\log(M)\right)$ many qubits with total Hamiltonian simulation time 
    \begin{align*}
\widetilde{\mathcal{O}}\left( \,t\,\operatorname{poly}\left(|\mathcal{A}|\,,\, \log\left(\frac{\mathfrak{c}\,E_{\operatorname{Gibbs}}}{\epsilon}\right)\right)\right).
    \end{align*} 
Here, $\widetilde{\mathcal{O}}$ and $\widetilde\Theta$ treat all non-displayed parameters as constant
and further absorb  polylogarithmic dependencies in the leading order.
\end{theorem}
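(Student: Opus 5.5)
The plan is a three-stage triangle inequality with an error budget of $\eps/3$ per stage. We approximate $e^{t\cL_{\sigma_E,\widehat f,H}}(\rho)$ by $e^{t\cL^{\le M}_{\sigma_E,\widehat f,H_{\le M}}}(\rho)$, then by $e^{t\cL^{\le M,\mathfrak{n}}_{\sigma_E,\widehat f,H_{\le M}}}(\rho)$, and finally by the circuit output. Each stage is one of the results already proved; the work lies in choosing $M$, $\mathfrak{n}$ and the circuit accuracy consistently and in tracking how the resources scale.

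\textbf{Stage 1 (choice of $M$).} We apply Theorem~\ref{thm:SchwartzGibbsDynamics} with accuracy $\eps/3$. The bound $t\,C_1\,\mathfrak{c}\,p_4(|\cA|,M)E_{\operatorname{Gibbs}}e^{-M^\kappa}\le \eps/3$ is guaranteed once $M^\kappa\ge \log(3tC_1\mathfrak{c}E_{\operatorname{Gibbs}}/\eps)+\log p_4(|\cA|,M)$. Since $p_4$ is polynomial, the second term contributes only $\mathcal{O}(\log|\cA|+\log M)$. This gives the choice \eqref{eq:MDimension}, with exponent $1/\kappa$ hidden in $\operatorname{poly}$. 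With $\log\dim\operatorname{im}(P_M)=\mathcal{O}(|\cA|\log M)$, the qubit count is as claimed; the polylogarithmically many ancillas from Theorem~\ref{thm:Lin} are absorbed.

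\textbf{Stage 2 (choice of $\mathfrak{n}$).} We apply Proposition~\ref{prop:DiscreteGauss} with accuracy $\eps/3$, taking $S=4p_5(|\cA|,M)$. This gives
\begin{align*}
\mathfrak{n}=\mathcal{O}\Big(\tfrac{S^2}{\sigma_E^2}+\log\tfrac{Kt}{\eps}\Big).
\end{align*}
Here $S=\operatorname{poly}(|\cA|,M)$ and $K$ is polynomial in $|\cA|$, $q'(M)$, $\log S$ and $\|f\|_{L^1}$. Since $f$ is Schwartz and $\beta,\sigma_E$ are fixed, $\|f\|_{L^1}$ is a constant, so $\mathfrak{n}=\operatorname{poly}(|\cA|,\log(\mathfrak{c}E_{\operatorname{Gibbs}}|\cA|t/\eps))$. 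The $1\to1$ bound applies directly to $\rho$.

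\textbf{Stage 3 (circuit).} We invoke Theorem~\ref{thm:Lin} with diamond accuracy $\eps/3$, applied to the prepared input $\rho$; diamond distance controls trace distance on states. Condition~\ref{condfgevrey} for the Schwartz $\widehat f$ is part of the hypotheses. The total Hamiltonian simulation time is
\begin{align*}
\widetilde{\mathcal{O}}\big(\sigma_E^{-1}(\beta+1)\log^{2+s}(S)\,\mathfrak{n}^{5/2}|\cA|^2\,t\,\operatorname{poly}(M)\log^{1+s}(1/\eps)\big).
\end{align*}
We then substitute the choices above. Because $M$ is polylogarithmic, $\operatorname{poly}(M)$ and $\log S$ are $\operatorname{poly}(\log(\cdot))$, and $\mathfrak{n}^{5/2}$ is $\operatorname{poly}(|\cA|,\log(\cdot))$. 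The remaining $\log t$ factors are absorbed by $\widetilde{\mathcal{O}}$ into the leading factor $t$, which yields the stated bound.

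\textbf{Main obstacle: the $S^2$ term in $\mathfrak{n}$.} The term $S^2/\sigma_E^2$ is polynomial in $M$ and $|\cA|$. This is harmless only because $M$ itself is polylogarithmic; otherwise it would destroy efficiency. I would check this explicitly and verify that every polynomial in $M$ appearing in the bound becomes $\operatorname{poly}\log$ after substitution. I would also check that the $\operatorname{poly}\log\log$ corrections from $\log p_4(|\cA|,M)$ in the choice of $M$ stay subdominant.
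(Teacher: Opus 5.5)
Your proposal is correct and follows the paper's proof: the same $\eps/3$ split via Theorem~\ref{thm:SchwartzGibbsDynamics}, Proposition~\ref{prop:DiscreteGauss} and Theorem~\ref{thm:Lin}. Keeping the $|\cA|$-dependence of $S^2/\sigma_E^2$ inside $\mathfrak{n}$ is, if anything, more careful than the paper's $\mathfrak{n}=\widetilde\Theta(\operatorname{poly}(M)+\log(|\cA|t/\eps))$. One caveat: the $\widetilde{\mathcal{O}}(\log^2(t|\cA|/\eps))$ ancillas of Theorem~\ref{thm:Lin} are not in general absorbed into $\mathcal{O}(|\cA|\log M)$ (for instance when $|\cA|$ is fixed and $\eps\to 0$), so the qubit count should be read as counting the system register only, as the paper implicitly does.
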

\begin{remark}
To illustrate the above theorem, let us consider an $m$-mode continuous variable system on the Hilbert space $(L^2(\R))^{\otimes m} \cong L^2(\R^m),$ choice of bare jumps being $\{A^\alpha\}_{\alpha\in\cA} \equiv \{a_i,a^\dagger_i\}_{i=1}^m$ and with local truncations in the Fock basis $a^{\le M}_i$ and $(a^{\le M}_i)^{\le M}$ being defined in \eqref{eq:DefTruncAnnCrea}. In this case all assumptions on the jump operators in the above theorem are naturally satisfied for the choice $\NA = N_{\operatorname{tot}} = \sum_{i=1}^m a^\dagger_ia_i$ at the relevant places in Section~\ref{sec:TruncateJumps}.
Furthermore, as noted in Remark~\ref{rem:CircuitTruncJumps}, the truncated jumps $a^{\le M}_i$ and $(a^{\le M}_i)^{\le M}$ can be implemented efficiently by a quantum circuit.

Moreover, for same choice and Hamiltonian $H$ being a bounded degree polynomial in the annihilation and creation operators, e.g. the Bose-Hubbard Hamiltonian, all relevant assumptions in the above theorem are satisfied as well as we have seen in Section~\ref{sec:FinitDimGenerator}. As seen in Remark~\ref{rem:CircuitTruncHam}, Hamiltonian simulation with respect to the truncated Hamiltonian $H_{\le M}$ can also be efficiently implemented on a quantum circuit.
\end{remark}

\begin{proof}[Proof of Theorem~\ref{thm:CircuitDynamicsSchwartz}]
By Theorem~\ref{thm:SchwartzGibbsDynamics} we know for $\sigma_E\in(0,\infty)$ fixed that
\begin{align*}
\left\|\left(e^{t\mathcal{L}_{\sigma_E,\widehat{f},H}}-e^{t\mathcal{L}^{\le M}_{\sigma_E,\widehat{f},H_{\le M}}}\right)(\rho)\right\|_1  \le \eps/3 
\end{align*}
for some $M$ satisfying \eqref{eq:MDimension}. 

Furthermore, by Proposition~\ref{prop:DiscreteGauss} we have 
\begin{align*}
\Big\|e^{t\mathcal{L}^{\le M}_{\sigma_E,\widehat{f},H_{\le M}}}-e^{t\mathcal{L}^{\le M,\mathfrak{n}}_{\sigma_E,\widehat{f},H_{\le M}}}\Big\|_{1\to 1}\le \epsilon/3
\end{align*}
for  $$\mathfrak{n} = \widetilde\Theta\left(\operatorname{poly}(M)+ \log\left(\frac{|\cA|t}{\eps}\right) \right) =\widetilde \Theta\left( \operatorname{poly}\left(\log\left(\frac{t\,\mathfrak{c}\,E_{\operatorname{Gibbs}}|\cA|}{\eps}\right)\right)\right).$$ The result follows by Theorem~\ref{thm:Lin}.

\end{proof}
As a direct consequence of Theorem~\ref{thm:CircuitDynamicsSchwartz} and Corollary~\ref{cor:SchwartzGibbs}, we find the following result on efficient Gibbs state preparation under the assumption of positive spectral gap.
\begin{corollary} [Gibbs state preparation for Schwartz filter functions]
\label{cor:GibbsCircuitSchwartz}

Under the same assumptions as in Theorem~\ref{thm:CircuitDynamicsSchwartz} and assuming additionally that  $L_{\sigma_E,\widehat{f},H},$ the self-adjoint generator on $\mathscr{T}_2(\cH)$ associated to the Lindbladian $\cL_{\sigma_E,\widehat f,H},$ has a positive spectral gap 
$\lambda_2 \equiv \operatorname{gap}\left(L_{\sigma_E,\widehat f,H}\right) >0,$ 
the Gibbs state of the Hamiltonian can be prepared within $\eps$-trace distance on a quantum computer with $\mathcal{O}(|\cA|\log(M))$ many qubits, for some 
\begin{align*}    
M =\widetilde\Theta\left(\operatorname{poly}\left(\log\left(\frac{\mathfrak{c}\,E_{\operatorname{Gibbs}}|\cA|}{\lambda_2 \eps}\right)\right)\right), 
\end{align*}
with total Hamiltonian simulation time 
\begin{align*}
\widetilde{\mathcal{O}}\left(\frac{1}{\lambda_2}\operatorname{poly}\left(|\mathcal{A}|\,,\,\log\left(\frac{\mathfrak{c}E_{\operatorname{Gibbs}}}{\epsilon}\right)\right)\right).
    \end{align*} 
Here, $\widetilde{\mathcal{O}}$ and $\widetilde\Theta$ treat all non-displayed parameters as constant
and further absorb  polylogarithmic dependencies in the leading order.
\end{corollary}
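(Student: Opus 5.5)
The plan is to combine the mixing estimate with the circuit result: first fix the evolution time from the spectral gap, then apply Theorem~\ref{thm:CircuitDynamicsSchwartz} at that time. Split the error by the triangle inequality,
\[
\big\|\widetilde\rho_t-\sigma_\beta\big\|_1\le \big\|e^{t\cL_{\sigma_E,\widehat f,H}}(\rho)-\sigma_\beta\big\|_1+\big\|\widetilde\rho_t-e^{t\cL_{\sigma_E,\widehat f,H}}(\rho)\big\|_1,
\]
where $\widetilde\rho_t$ denotes the output of the circuit of Theorem~\ref{thm:CircuitDynamicsSchwartz}. We aim to make each term at most $\eps/2$.

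For the first term, argue exactly as in the proof of Corollary~\ref{cor:SchwartzGibbs}. KMS-symmetry and the spectral gap $\lambda_2$ of $L_{\sigma_E,\widehat f,H}$ give
\[
\|e^{t\cL_{\sigma_E,\widehat f,H}}(\rho)-\sigma_\beta\|_1\le e^{-\lambda_2 t}\|\sigma_\beta^{-1/4}\rho\sigma_\beta^{-1/4}\|_2 .
\]
The constraint $\rho\le\mathfrak c\,\sigma_\beta$ bounds the Hilbert--Schmidt norm by $\sqrt{\mathfrak c}$. Hence the choice $t=\lambda_2^{-1}\log(2\sqrt{\mathfrak c}/\eps)=\mathcal O(\lambda_2^{-1}\log(\mathfrak c/\eps))$ suffices.

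For the second term, invoke Theorem~\ref{thm:CircuitDynamicsSchwartz} with accuracy $\eps/2$ at this value of $t$. It supplies a truncation level $M=\widetilde\Theta(\operatorname{poly}\log(t\,\mathfrak c\,E_{\operatorname{Gibbs}}|\cA|/\eps))$. Substituting $t$, the argument of the logarithm becomes
\[
\frac{\mathfrak c\,E_{\operatorname{Gibbs}}|\cA|\log(\mathfrak c/\eps)}{\lambda_2\,\eps}.
\]
The nested $\log\log$ factor is absorbed into $\widetilde\Theta$, which yields the stated $M$. The qubit count is $\mathcal O(|\cA|\log M)$, inherited directly from the theorem.

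The total Hamiltonian simulation time is $\widetilde{\mathcal O}(t\operatorname{poly}(|\cA|,\log(\mathfrak c E_{\operatorname{Gibbs}}/\eps)))$. With the above $t$ this is $\lambda_2^{-1}\log(\mathfrak c/\eps)\operatorname{poly}(\dots)$, and the extra logarithm is absorbed into the polynomial.

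The only delicate point is bookkeeping. The hidden dependence of $M$ on $t$ brings in $\log(1/\lambda_2)$. One must check that it appears only polylogarithmically, both inside $\operatorname{poly}\log$ and in the simulation-time bound, which comes from the $\operatorname{poly}(M)$ and $\mathfrak n$ dependence in Theorem~\ref{thm:Lin}. Since $\lambda_2$ is treated as a displayed parameter entering only through logs, this is consistent with the statement.
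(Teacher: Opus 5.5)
Your proposal is correct and follows the route the paper intends. The paper presents this corollary as a direct consequence of Theorem~\ref{thm:CircuitDynamicsSchwartz} combined with the spectral-gap mixing argument from the proof of Corollary~\ref{cor:SchwartzGibbs}: the same triangle-inequality split, with $\|\sigma_\beta^{-1/4}\rho\,\sigma_\beta^{-1/4}\|_2\le\sqrt{\mathfrak{c}}$ and $t=\mathcal{O}(\lambda_2^{-1}\log(\mathfrak{c}/\eps))$. Your handling of the resulting $\log(1/\lambda_2)$ and $\log\log$ factors by absorbing them into $\widetilde{\mathcal{O}}$ and $\widetilde\Theta$ matches how the statement is phrased.
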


\subsubsection{Circuit implementation for singular filter functions}
Combining Proposition~\ref{prop:DiscreteGauss} and Theorem~\ref{thm:Lin} with Theorem~\ref{thm:SingularGibbsdynamics} yields the following result on efficient implementation of the Lindblad dynamics for Metropolis-type filter function \eqref{eq:filterFunction}. 
\begin{theorem}[Circuit implementation of $e^{t\cL_{\sigma_E,\hatfM,H}}$]
\label{thm:CircuitDynamicsSingular}
Let $\beta>0,$ $t\ge 0$ and $\hatfM$ be the Metropolis-type filter function defined in \eqref{eq:filterFunction} and $\hatfMdelta$ from \eqref{eq:hatfMdeltaeqIntro} with
\begin{align}
\label{eq:deltCircuit}
    \delta = \Theta\left(\frac{\eps}{\mathfrak{c}\,E'_{\operatorname{Gibbs}}\,|\cA|\,t} \right).
\end{align}
 Under the assumptions of Theorem~\ref{thm:SingularGibbsdynamics},~\ref{thm:Lin} and  Proposition~\ref{prop:DiscreteGauss} and assuming oracle access to\footnote{Analogously as before, $\fMdeltakappa$ is defined as the Fourier transform of the function $\nu\mapsto\hatfMdelta(\nu)\kappa(\nu)$ where $\kappa$ satisfies the assumptions of Proposition~\ref{prop:DiscreteGauss}.} $\fMdeltakappa$  and  $t_\kappa$, block encodings of the subnormalised bare jumps $(A^{\alpha})^{\le M}/q'(M)$, controlled Hamiltonian simulation $U_{H_{\le M}}$, where 
\begin{align}
\label{eq:MDimension1}
 M =\widetilde\Theta\left(\operatorname{poly}\left(\log\left(\frac{t\,\mathfrak{c}\,E'_{\operatorname{Gibbs}}|\cA|}{ \eps}\right)\right)\right)
    \end{align}
    and assume oracle access to a state preparation circuit for input state $\rho$ satisfying
    \begin{align*}
        \rho \le \mathfrak{c}\, \sigma_\beta.
    \end{align*}
Then for $\sigma_E\in(0,\infty)$ the state $e^{t\cL_{\sigma_E,\hatfM, H}}(\rho)$ can be prepared within $\eps$-trace distance on a quantum computer with $\mathcal{O}\left(|\cA|\log(M)\right)$ many qubits with total Hamiltonian simulation time 
    \begin{align*}
\widetilde{\mathcal{O}}\left( \,t\,\operatorname{poly}\left(|\mathcal{A}|\,,\, \log\left(\frac{\mathfrak{c}\,E'_{\operatorname{Gibbs}}}{\epsilon}\right)\right)\right).
    \end{align*} 
Here, $\widetilde{\mathcal{O}}$ and $\widetilde\Theta$ treat all non-displayed parameters as constant
and further absorb  polylogarithmic dependencies in the leading order.
\end{theorem}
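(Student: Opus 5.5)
The argument is the Schwartz case (Theorem~\ref{thm:CircuitDynamicsSchwartz}) with the filter $\hatfMdelta$ in place of $\widehat f$, plus one extra regularisation error. We run a three-way triangle inequality, each piece allotted error $\eps/3$. The chain is
\[
e^{t\cL_{\sigma_E,\hatfM,H}}(\rho)\;\longrightarrow\; e^{t\cL^{\le M}_{\sigma_E,\hatfMdelta,H_{\le M}}}(\rho)\;\longrightarrow\; e^{t\cL^{\le M,\mathfrak{n}}_{\sigma_E,\hatfMdelta,H_{\le M}}}(\rho),
\]
after which the last dynamics is handed to the circuit construction of Theorem~\ref{thm:Lin}.

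\emph{First link.} This is exactly Theorem~\ref{thm:SingularGibbsdynamics}. The regularisation term $t\,\delta\,\mathfrak{c}\,|\cA|\,C_2\,E'_{\operatorname{Gibbs}}$ is at most $\eps/6$ by the choice \eqref{eq:deltCircuit} of $\delta$. The truncation term carries the prefactor $\bigl((\log(1/\delta))^{1/(2\theta)}+1\bigr)^2$. Since $\log(1/\delta)=\mathcal{O}\bigl(\log(t\mathfrak{c}E'_{\operatorname{Gibbs}}|\cA|/\eps)\bigr)$, this prefactor is only polylogarithmic. It is therefore absorbed by $e^{-M^\kappa}$ once $M$ is chosen as in \eqref{eq:MDimension1}, the extra polynomial and $\log\log$ factors being hidden in $\widetilde\Theta$.

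\emph{Second link.} Apply Proposition~\ref{prop:DiscreteGauss} with $\widehat f=\hatfMdelta$, which is Schwartz. The constant $K$ there contains $\min\{\|\fMdelta\|_{L^1}^2,\|\fMdeltakappa\|_{L^1}^2\}$. We bound $\|\fMdelta\|_{L^1}$ by Lemma~\ref{lem:C_fMdeltaBounds} (the case $r=0$ of \eqref{eq:CfMdeltaBound}), which gives $\mathcal{O}\bigl((\log(1/\delta))^{1/(2\theta)}+1\bigr)$, again polylogarithmic in the target quantities. With $S=4p_5(|\cA|,M)$ polynomial in $M$, this yields $\mathfrak{n}=\widetilde\Theta\bigl(\operatorname{poly}(M)+\log(Kt/\eps)\bigr)$, which is polylogarithmic in $t\mathfrak{c}E'_{\operatorname{Gibbs}}|\cA|/\eps$.

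\emph{Third step.} Invoke Theorem~\ref{thm:Lin} for $\widehat f=\hatfMdelta$, first checking its filter assumption \Cref{condfgevrey}. This is supplied by Lemma~\ref{LemmGevreycond}: $\nu\mapsto\hatfMdelta(\nu/\beta)$ lies in $\mathcal G^1_{\xi_\theta,4}$ and its derivative has $L^1$ norm bounded uniformly in $\delta$. The resulting Hamiltonian simulation time is
\[
\widetilde{\mathcal O}\bigl(t\,\mathfrak{n}^{5/2}|\cA|^2\operatorname{poly}(M)\log^{2+s}(S)\log^{1+s}(1/\eps)\bigr).
\]
Inserting the polylogarithmic $M$ and $\mathfrak{n}$ gives the claimed $\widetilde{\mathcal O}\bigl(t\operatorname{poly}(|\cA|,\log(\mathfrak{c}E'_{\operatorname{Gibbs}}/\eps))\bigr)$. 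The qubit count is $\log_2\dim\operatorname{im}(P_M)=\mathcal{O}(|\cA|\log M)$, plus ancillas that are polylogarithmic.

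\emph{Main obstacle.} The delicate point is that all constants must stay uniform in $\delta$ as $\delta\to0$. The Gevrey constants and $C_{1,f}$ must not blow up, which Lemma~\ref{LemmGevreycond} asserts. The quantities $C_f$, $C'_f$ and $\|f\|_{L^1}$ may grow only like $(\log(1/\delta))^{1/(2\theta)}$, which Lemma~\ref{lem:C_fMdeltaBounds} gives. If any of these grew polynomially in $1/\delta$, as happens for the $\eta_1$ choice, the final bounds would pick up $\operatorname{poly}(t\mathfrak{c}E'_{\operatorname{Gibbs}}/\eps)$ factors and the result would fail. The plan therefore leans on these two lemmas and tracks the $\log(1/\delta)$ powers explicitly through the choice of $M$ and $\mathfrak{n}$.
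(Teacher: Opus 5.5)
Your proposal is correct and follows the paper's proof essentially step for step. Both use the same split into three $\eps/3$ pieces: Theorem~\ref{thm:SingularGibbsdynamics} for the regularisation and truncation errors; Proposition~\ref{prop:DiscreteGauss} for the Gauss--Hermite discretisation, with $\|\fMdelta\|_{L^1}$ controlled by Lemma~\ref{lem:C_fMdeltaBounds}; and Theorem~\ref{thm:Lin} for the circuit, with \Cref{condfgevrey} supplied uniformly in $\delta$ by Lemma~\ref{LemmGevreycond}. One harmless imprecision, which the paper shares: since $S=4p_5(|\cA|,M)$, the term $S^2/\sigma_E^2$ makes $\mathfrak{n}$ polynomial rather than polylogarithmic in $|\cA|$, which is still consistent with the claimed $\operatorname{poly}(|\cA|,\log(\mathfrak{c}E'_{\operatorname{Gibbs}}/\eps))$ simulation time.
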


\begin{proof}
By Theorem~\ref{thm:SingularGibbsdynamics} we know for $\sigma_E\in(0,\infty)$ fixed that
\begin{align*}
\left\|\left(e^{t\mathcal{L}_{\sigma_E,\hatfM,H}}-e^{t\mathcal{L}^{\le M}_{\sigma_E,\hatfMdelta,H_{\le M}}}\right)(\rho)\right\|_1  \le \eps/3 
\end{align*}
for some $\delta$ and $M$ satisfying \eqref{eq:deltCircuit} and \eqref{eq:MDimension1} respectively.
From Lemma~\ref{LemmGevreycond}, we know that the function $\hatfMdelta$ satisfies \Cref{condfgevrey} with $\xi_q,\xi_u,s,C_{1,\hatfMdelta}$ being constant in the relevant free parameters and, furthermore, by Lemma~\ref{lem:C_fMdeltaBounds} we see
\begin{align*}
\|\hatfMdelta\|_1 = \widetilde{\mathcal{O}}\left(\operatorname{poly}(\log(1/\delta)\right)
\end{align*}
Hence, we can apply  Proposition~\ref{prop:DiscreteGauss} which gives
\begin{align*}
\Big\|e^{t\mathcal{L}^{\le M}_{\sigma_E,\hatfMdelta,H_{\le M}}}-e^{t\mathcal{L}^{\le M,\mathfrak{n}}_{\sigma_E,\hatfMdelta,H_{\le M}}}\Big\|_{1\to 1}\le \epsilon/3
\end{align*}
for  $$\mathfrak{n} = \widetilde\Theta\left(\operatorname{poly}(M)+ \log\left(\frac{|\cA|t\  \operatorname{poly}\!\left(\log(1/\delta)\right)}{\eps}\right) \right) =\widetilde \Theta\left( \operatorname{poly}\left(\log\left(\frac{t\,\mathfrak{c}\,E'_{\operatorname{Gibbs}}|\cA|}{\eps}\right)\right)\right).$$ The result follows by Theorem~\ref{thm:Lin}.

\end{proof}

As a direct consequence of Theorem~\ref{thm:CircuitDynamicsSingular}, we find the following result on efficient Gibbs state preparation under the assumption of positive spectral gap.

\begin{corollary} [Gibbs state preparation for singular filter functions]
\label{cor:GibbsCircuitSingular}

Under the same assumptions as in Theorem~\ref{thm:CircuitDynamicsSingular} and assuming additionally that  $L_{\sigma_E,\hatfM,H},$ the self-adjoint generator on $\mathscr{T}_2(\cH)$ associated to the Lindbladian $\cL_{\sigma_E,\hatfM,H},$ has a positive spectral gap 
$\lambda_2 \equiv \operatorname{gap}\left(L_{\sigma_E,\hatfM,H}\right) >0,$ 
the Gibbs state of the Hamiltonian can be prepared within $\eps$-trace distance on a quantum computer with $\widetilde{\mathcal{O}}(|\cA|\log(M))$ many qubits, for some 
\begin{align*}    
M =\widetilde\Theta\left(\operatorname{poly}\left(\log\left(\frac{\mathfrak{c}\,E'_{\operatorname{Gibbs}}|\cA|}{\lambda_2 \eps}\right)\right)\right), 
\end{align*}
with total Hamiltonian simulation time 
\begin{align*}
\widetilde{\mathcal{O}}\left(\frac{1}{\lambda_2}\operatorname{poly}\left(|\mathcal{A}|\,,\,\log\left(\frac{\mathfrak{c}E'_{\operatorname{Gibbs}}}{\epsilon}\right)\right)\right).
    \end{align*} 
Here, $\widetilde{\mathcal{O}}$ and $\widetilde\Theta$ treat all non-displayed parameters as constant
and further absorb  polylogarithmic dependencies in the leading order.
\end{corollary}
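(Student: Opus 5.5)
The statement is Corollary~\ref{cor:GibbsCircuitSingular}. I would prove it exactly as Corollary~\ref{cor:GibbsCircuitSchwartz} follows from Theorem~\ref{thm:CircuitDynamicsSchwartz}: combine the mixing-time bound of Corollary~\ref{cor:SingularGibbs} with the circuit construction of Theorem~\ref{thm:CircuitDynamicsSingular}, using a triangle inequality. Write $\widetilde\rho_t$ for the output of the circuit of Theorem~\ref{thm:CircuitDynamicsSingular}, run to time $t$ with accuracy $\eps/2$. Then
\[
\|\widetilde\rho_t-\sigma_\beta\|_1\le \|\widetilde\rho_t-e^{t\cL_{\sigma_E,\hatfM,H}}(\rho)\|_1+\|e^{t\cL_{\sigma_E,\hatfM,H}}(\rho)-\sigma_\beta\|_1 .
\]
The first term is at most $\eps/2$ by Theorem~\ref{thm:CircuitDynamicsSingular}.

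For the second term I would use the spectral-gap argument from the proof of Corollary~\ref{cor:SingularGibbs}. The assumption $\rho\le\mathfrak{c}\,\sigma_\beta$ gives $\|\sigma_\beta^{-1/4}\rho\,\sigma_\beta^{-1/4}\|_2\le\sqrt{\mathfrak{c}}$. This in turn gives $\|e^{t\cL_{\sigma_E,\hatfM,H}}(\rho)-\sigma_\beta\|_1\le\sqrt{\mathfrak{c}}\,e^{-\lambda_2 t}$. Choosing $t=\lambda_2^{-1}\log(2\sqrt{\mathfrak{c}}/\eps)=\mathcal{O}(\lambda_2^{-1}\log(\mathfrak{c}/\eps))$ makes this term at most $\eps/2$.

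It remains to substitute this $t$ into the resource bounds of Theorem~\ref{thm:CircuitDynamicsSingular}.
\begin{itemize}
\item The regularisation parameter becomes $\delta=\Theta\big(\lambda_2\eps/(\mathfrak{c}E'_{\operatorname{Gibbs}}|\cA|\log(\mathfrak{c}/\eps))\big)$.
\item The truncation level becomes $M=\widetilde\Theta\big(\operatorname{poly}\log(t\,\mathfrak{c}E'_{\operatorname{Gibbs}}|\cA|/\eps)\big)$. Since $\log t=\mathcal{O}(\log(1/\lambda_2)+\log\log(\mathfrak{c}/\eps))$, the $\log\log$ pieces are absorbed into $\widetilde\Theta$, and this is the claimed $M=\widetilde\Theta\big(\operatorname{poly}\log(\mathfrak{c}E'_{\operatorname{Gibbs}}|\cA|/(\lambda_2\eps))\big)$.
\item The qubit count stays $\mathcal{O}(|\cA|\log M)$, which is consistent with the stated $\widetilde{\mathcal{O}}(|\cA|\log M)$.
\item The total Hamiltonian simulation time $\widetilde{\mathcal{O}}(t\operatorname{poly}(|\cA|,\log(\mathfrak{c}E'_{\operatorname{Gibbs}}/\eps)))$ becomes $\widetilde{\mathcal{O}}(\lambda_2^{-1}\operatorname{poly}(\ldots))$. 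The extra factor $\log(\mathfrak{c}/\eps)$ coming from $t$ is absorbed into the polynomial.
\end{itemize}

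The only delicate step is this bookkeeping. The dependence of $\mathfrak{n}$, $M$ and $\delta$ on $t$ enters only polylogarithmically, through $\log(1/\delta)$ and Lemma~\ref{lem:C_fMdeltaBounds}. I would therefore confirm that no polynomial dependence on $1/\lambda_2$ enters except through the linear factor $t$. I would also check that $\log(1/\lambda_2)$ terms can legitimately be placed inside $M$'s polylog, as the statement does.
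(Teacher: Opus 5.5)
Your proposal is correct and follows the paper's route. The paper states this corollary as a direct consequence of Theorem~\ref{thm:CircuitDynamicsSingular}, combined with the spectral-gap bound $\|e^{t\cL_{\sigma_E,\hatfM,H}}(\rho)-\sigma_\beta\|_1\le\sqrt{\mathfrak{c}}\,e^{-\lambda_2 t}$ from the proof of Corollary~\ref{cor:SingularGibbs}, and then substitutes $t=\mathcal{O}(\lambda_2^{-1}\log(\mathfrak{c}/\eps))$ into the resource bounds with the polylogarithmic $t$-dependence of $\delta$, $M$ and $\mathfrak{n}$ absorbed into $\widetilde{\Theta}$ and $\widetilde{\mathcal{O}}$, exactly as you do.
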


\appendix

\section{\Cref{eq:condAalphas} for Schr\"odinger operators}
\label{appendixC}
In this section we prove Theorem \ref{thm:Schroedinger_operators}. To verify Condition \theconditionA \ for Schr\"odinger operators, we start by defining the generalized Sobolev spaces
\begin{definition}
Let $s,\sigma \ge 0$. We define
\[
\mathcal{H}^{s,\sigma}(\mathbb{R}^d)
:=
\Bigl\{
f\in L^2(\mathbb{R}^d)
:\;
\langle x\rangle^\sigma f \in L^2(\mathbb{R}^d),
\quad
\langle \xi\rangle^s \widehat{f}(\xi)\in L^2(\mathbb{R}^d)
\Bigr\},
\]
where $\langle x\rangle := (1+|x|^2)^{1/2}$ and $\widehat{f}$ denotes the Fourier transform of $f$.
\end{definition}

\begin{proposition}
\label{prop:interp}
Let $s,\sigma \ge 0$ and $0<\theta<1$. Then
\[
\bigl[L^2(\mathbb{R}^d),\mathcal{H}^{s,\sigma}(\mathbb{R}^d)\bigr]_\theta
=
\mathcal{H}^{\theta s,\theta\sigma}(\mathbb{R}^d)
\]
with equivalence of norms.
\end{proposition}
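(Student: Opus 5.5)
The plan is to realise $\mathcal{H}^{s,\sigma}$ as the domain of a single positive self-adjoint operator on $L^2(\mathbb{R}^d)$ and then use the standard fact that complex interpolation between $L^2$ and the domain of a positive self-adjoint operator $T$ gives the domain of $T^\theta$. The natural candidate is
\[
T:=\langle D\rangle^{2s}+\langle x\rangle^{2\sigma},\qquad D=-i\nabla .
\]
It is essentially self-adjoint on $\mathcal{S}(\mathbb{R}^d)$ and satisfies $T\ge 1$. For $f\in\mathcal{S}$ we have $\|T^{1/2}f\|^2=\|\langle D\rangle^s f\|^2+\|\langle x\rangle^\sigma f\|^2$. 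So the form domain $D(T^{1/2})$, which is the closure of $\mathcal{S}$ in this norm, coincides with $\mathcal{H}^{s,\sigma}$ with equal norms. The density of $\mathcal{S}$ in $\mathcal{H}^{s,\sigma}$ follows by a mollify-and-cutoff argument.

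The key step is to invoke the spectral-theoretic interpolation identity $[L^2,D(A)]_\theta=D(A^\theta)$, valid for any positive self-adjoint $A\ge 1$. Its proof uses the spectral theorem to reduce to the weighted-$L^2$ statement $[L^2(\mu),L^2(\lambda^2\mu)]_\theta=L^2(\lambda^{2\theta}\mu)$ via Stein–Weiss. Applying it with $A=T^{1/2}$ gives
\[
[L^2,\mathcal{H}^{s,\sigma}]_\theta=D(T^{\theta/2}).
\]
It then remains to prove that
\[
D(T^{\theta/2})=\mathcal{H}^{\theta s,\theta\sigma}\quad\text{with}\quad \|T^{\theta/2}f\|\simeq\|\langle D\rangle^{\theta s}f\|+\|\langle x\rangle^{\theta\sigma}f\| .
\]

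This last equivalence is the main obstacle, because $\langle D\rangle^{2s}$ and $\langle x\rangle^{2\sigma}$ do not commute. I would handle it with two operator inequalities:
\[
\langle D\rangle^{2s}\le T\quad\text{and}\quad\langle x\rangle^{2\sigma}\le T .
\]
By Löwner–Heinz operator monotonicity of $t\mapsto t^{\theta}$, these give $\langle D\rangle^{2\theta s}\le T^\theta$ and $\langle x\rangle^{2\theta\sigma}\le T^\theta$, and hence $\|T^{\theta/2}f\|\gtrsim\|\langle D\rangle^{\theta s}f\|+\|\langle x\rangle^{\theta\sigma}f\|$. For the reverse bound, set
\[
S_\theta:=\langle D\rangle^{2\theta s}+\langle x\rangle^{2\theta\sigma}.
\]
One shows $T^\theta\lesssim S_\theta$ in the form sense, i.e. $D(S_\theta^{1/2})\subseteq D(T^{\theta/2})$. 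If direct operator-inequality arguments fail here, I would fall back on pseudodifferential calculus: $T$ is elliptic in the Shubin/anisotropic class with weight $m(x,\xi)=\langle\xi\rangle^{2s}+\langle x\rangle^{2\sigma}$. Its complex powers $T^{\theta/2}$ are then pseudodifferential operators with weight $m^{\theta/2}\simeq\langle\xi\rangle^{\theta s}+\langle x\rangle^{\theta\sigma}$, and a parametrix argument gives the two-sided equivalence.

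An alternative route for the upper inclusion uses the reiteration/retraction principle. The space $\mathcal{H}^{s,\sigma}$ is the intersection $D(\langle D\rangle^s)\cap D(\langle x\rangle^\sigma)$, and interpolation of each factor is exact. The inclusion $[L^2,X\cap Y]_\theta\subseteq[L^2,X]_\theta\cap[L^2,Y]_\theta$ is then automatic. So only the reverse inclusion needs the genuinely non-commutative input above.
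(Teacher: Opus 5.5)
The gap is the step you yourself flag as the main obstacle. The inclusion $\mathcal{H}^{\theta s,\theta\sigma}\subseteq D(T^{\theta/2})$, i.e.\ $T^{\theta}\lesssim S_\theta$, is asserted (``one shows'') but never proved.

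The rest of your plan is sound. Defining $T$ directly as the form sum gives $D(T^{1/2})=\mathcal{H}^{s,\sigma}$ isometrically, with no density or essential self-adjointness argument needed. The identity $[L^2,D(T^{1/2})]_\theta=D(T^{\theta/2})$ is standard, and Heinz's inequality yields $D(T^{\theta/2})\subseteq\mathcal{H}^{\theta s,\theta\sigma}$. But that half is just the automatic inclusion $[L^2,X\cap Y]_\theta\subseteq[L^2,X]_\theta\cap[L^2,Y]_\theta$.

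The missing half cannot come from positivity and operator monotonicity alone. On $\mathcal{K}=\bigoplus_n\mathbb{C}^2$ take $A=\bigoplus_n(1+\lambda_nP_n)$ and $B=\bigoplus_n(1+\lambda_nQ_n)$, with $\lambda_n\to\infty$. Choose rank-one projections $P_n,Q_n$ so that $P_n+Q_n$ has smallest eigenvalue $\epsilon_n=\lambda_n^{-2\theta}$, with unit eigenvector $w_n$. Then $\langle w_n,P_nw_n\rangle=\langle w_n,Q_nw_n\rangle=\epsilon_n/2$, so $\|A^\theta w_n\|^2+\|B^\theta w_n\|^2\to 3$. On the other hand, $\langle w_n,(1+A^2+B^2)^\theta w_n\rangle\sim\lambda_n^{2\theta(1-\theta)}\to\infty$. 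Hence $[\mathcal{K},D(A)]_\theta\cap[\mathcal{K},D(B)]_\theta=D(A^\theta)\cap D(B^\theta)$ is strictly larger than $[\mathcal{K},D(A)\cap D(B)]_\theta=D((1+A^2+B^2)^{\theta/2})$.

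So the reverse inclusion must use the specific structure of $\langle D\rangle$ and $\langle x\rangle$. Your pseudodifferential fallback is the right kind of input, but as stated it rests on an unproved and uncited functional-calculus claim. That claim is that $T^{\theta/2}$ lies in a Weyl--H\"ormander class for the metric $dx^2/\langle x\rangle^2+d\xi^2/\langle\xi\rangle^2$ with the non-product weight $(\langle\xi\rangle^{2s}+\langle x\rangle^{2\sigma})^{\theta/2}$, and that is where all the work lies.

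For comparison, the paper follows your ``alternative route''. It writes $\mathcal{H}^{s,\sigma}=H^s\cap L^2(\langle x\rangle^{2\sigma}dx)$ and obtains the reverse inclusion by citing that complex interpolation preserves intersections of Hilbert couples. The example above is a pair of Hilbert couples for which that statement fails. So at this point the paper also relies on structure it does not spell out, and your diagnosis of where the difficulty lies is the accurate one.

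A short self-contained way to close the gap is a phase-space retraction.
\begin{itemize}
\item Take a Schwartz window $g$ with $\|g\|_2=1$. The short-time Fourier transform $V_g$ satisfies $V_g^*V_g=I$.
\item Plancherel in $\xi$ together with Peetre's inequality gives $\int|V_gf|^2\langle x\rangle^{2\sigma}\,dx\,d\xi\simeq\|\langle x\rangle^\sigma f\|^2$. The same computation on the Fourier side gives the $\langle\xi\rangle^{2s}$ version. Hence $\|V_gf\|_{L^2(\mathbb{R}^{2d},m^2)}\simeq\|f\|_{\mathcal{H}^{s,\sigma}}$ for $m=\langle x\rangle^\sigma+\langle\xi\rangle^s$.
\item Because $m$ is polynomially moderate, $V_gV_g^*$ is bounded on $L^2(\mathbb{R}^{2d},m^2)$ by Schur's test. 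Therefore $V_g^*$ maps $L^2(m^2)$ boundedly into $\mathcal{H}^{s,\sigma}$.
\item It follows that $(L^2,\mathcal{H}^{s,\sigma})$ is a retract of $(L^2(\mathbb{R}^{2d}),L^2(\mathbb{R}^{2d},m^2))$, where both weights act by multiplication.
\item Stein--Weiss gives $L^2(m^{2\theta})$ for the interpolated weighted couple. Since $m^{\theta}\simeq\langle x\rangle^{\theta\sigma}+\langle\xi\rangle^{\theta s}$, this returns $\mathcal{H}^{\theta s,\theta\sigma}$.
\end{itemize}
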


\begin{proof}
We write
\[
\mathcal{H}^{s,\sigma}
=
\{f\in L^2:\ \langle x\rangle^\sigma f\in L^2,\ \langle D\rangle^s f\in L^2\}
\]
with norm
\[
\|f\|_{\mathcal{H}^{s,\sigma}}^2
=
\|\langle x\rangle^\sigma f\|_{L^2}^2
+
\|\langle D\rangle^s f\|_{L^2}^2.
\]
Thus
\[
\mathcal{H}^{s,\sigma}
=
L^2(\langle x\rangle^{2\sigma}dx) \cap H^s(\mathbb{R}^d)
\]
with equivalent norms.

We use that complex interpolation preserves intersections of compatible Hilbert couples \cite{Peetre1974}
\[
[L^2(\mathbb R^d), \mathcal H^{s,0}(\mathbb R^d)\cap \mathcal H^{0,\sigma}(\mathbb R^d)]_\theta
=
[L^2(\mathbb R^d),\mathcal H^{s,0}(\mathbb R^d)]_\theta \cap [L^2(\mathbb R^d),\mathcal H^{0,\sigma}(\mathbb R^d)]_\theta.
\]

It is standard \cite[Thm. 5.5.3]{BerghLofstrom1976} that
\[
[L^2(\mathbb{R}^d), \mathcal H^{0,\sigma}(\mathbb R^d)]_\theta
=
\mathcal H^{0,\theta \sigma}(\mathbb R^d),
\]
and
\[
[L^2(\mathbb{R}^d),H^{s}(\mathbb{R}^d)]_\theta
=
H^{\theta s}(\mathbb{R}^d).
\]

Combining these two identities yields the result.
\end{proof}

\noindent We illustrate these assumptions, especially \eqref{eq:BoundBareJumpsWithHam}, for Schr\"odinger operators and $A^{\alpha}$ the standard creation and annihilation operators in the following Proposition. 

For positive self-adjoint operators one has \cite[Theo.4.17]{Lunardi2018} 
\begin{theorem}[Interpolation of domains for positive self-adjoint operators]
\label{thm:interpolation}
Let $\mathcal H$ be a Hilbert space and let $A \ge 0$ be a positive self-adjoint operator on $H$. 
For $s \ge 0$, define
\[
D(A^s) := \left\{ u \in \mathcal H : \int_0^\infty \lambda^{2s} \, d\|E_\lambda u\|^2 < \infty \right\},
\]
where $(E_\lambda)$ is the spectral resolution of $A$.

Then for $0 \le \alpha < \beta$ and $\theta \in (0,1)$,
\[
\bigl[ D(A^\alpha), \, D(A^\beta) \bigr]_\theta 
= D\bigl(A^{(1-\theta)\alpha + \theta \beta}\bigr),
\]
with equivalence of norms.
\end{theorem}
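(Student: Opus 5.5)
This is the standard interpolation result for fractional powers of a positive self-adjoint operator, and the plan is to reduce it, via the spectral theorem, to interpolation of weighted $L^2$ spaces.

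\emph{Spectral reduction.} By the multiplication-operator form of the spectral theorem, there is a $\sigma$-finite measure space $(\Omega,\mu)$, a measurable function $a\ge 0$ on $\Omega$, and a unitary $U:\mathcal H\to L^2(\Omega,\mu)$ with $UAU^{-1}$ equal to multiplication by $a$. Under $U$, the space $D(A^s)$ with graph norm $\|u\|^2+\|A^su\|^2$ becomes the weighted space $L^2(\Omega,w_s\,d\mu)$ with weight $w_s:=1+a^{2s}$. Since $w_s\simeq(1+a)^{2s}$ with constants depending only on $s$, it suffices to work with the weights $(1+a)^{2s}$. So the statement reduces to
\[
\bigl[L^2((1+a)^{2\alpha}d\mu),\,L^2((1+a)^{2\beta}d\mu)\bigr]_\theta=L^2\bigl((1+a)^{2((1-\theta)\alpha+\theta\beta)}d\mu\bigr).
\]

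\emph{Inclusion $\supseteq$.} Set $\gamma:=(1-\theta)\alpha+\theta\beta$ and $\phi:=1+a\ge1$. For $g$ in the right-hand space, define
\[
F(z):=\phi^{\gamma-(1-z)\alpha-z\beta}\,g,\qquad 0\le\operatorname{Re}z\le1.
\]
Then $F(\theta)=g$. On $\operatorname{Re}z=0$ the weighted $\alpha$-norm of $F$ equals $\|\phi^\gamma g\|$, and on $\operatorname{Re}z=1$ the weighted $\beta$-norm equals the same quantity, since the imaginary parts of the exponents contribute only unimodular factors. To make $F$ bounded and continuous with values in the sum space, I first truncate $g$ to a set $\{\phi\le n\}$ and then use density.

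\emph{Inclusion $\subseteq$.} This is the dual argument, carried out via the three-lines lemma. Take an admissible $F$ with $F(\theta)=u$ and fix a test function $h$ with $\|\phi^{-\gamma}h\|\le1$. Pair $F(z)$ against $\phi^{(1-z)\alpha+z\beta-\gamma}\cdots$, suitably adjusted by the weights $\phi^{2\alpha}$ and $\phi^{2\beta}$, so that the pairing is bounded on both boundary lines by the respective boundary norms of $F$. The three-lines lemma then gives $|\langle \phi^{2\gamma}u,h\rangle|\lesssim\|F\|_{\mathcal F}$, and taking the supremum over $h$ bounds the weighted $\gamma$-norm of $u$. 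Equivalently, I could cite the Stein–Weiss interpolation theorem for weighted $L^2$ spaces, which is exactly this identity.

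\emph{Main obstacle.} The only delicate points are technical: checking that $F$ lies in the admissible class (bounded and continuous on the closed strip, analytic in its interior), and handling the equivalence between the inhomogeneous graph norms and the weights $\phi^{2s}$. Both are resolved by the truncation to $\{\phi\le n\}$ followed by a density argument. Since the weights are comparable with constants depending only on $s$, the resulting constants in the norm equivalence depend only on $\alpha$, $\beta$ and $\theta$.
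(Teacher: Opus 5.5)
Your proof is correct. The paper gives no argument of its own: it imports the statement from \cite{Lunardi2018}, that is, from the abstract theory of interpolating domains of fractional powers, whose key input is boundedness of the imaginary powers of the operator. So your proof is a genuinely different, self-contained route.

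The two routes are closely related. Under the spectral theorem, your $F(z)=\phi^{\gamma-(1-z)\alpha-z\beta}g$ is exactly $(I+A)^{\gamma-(1-z)\alpha-z\beta}u$. Your boundary estimates are just the fact that $(I+A)^{it}=\int(1+\lambda)^{it}\,dE_\lambda$ is unitary.

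Your version buys transparency and sharp constants. The identity $[L^2(\phi^{2\alpha}d\mu),L^2(\phi^{2\beta}d\mu)]_\theta=L^2(\phi^{2\gamma}d\mu)$ holds isometrically, so the only loss in the norm equivalence is $1+a^{2s}\simeq(1+a)^{2s}$. Using $\phi=1+a$ rather than $a$ also handles a possible kernel of $A$ and the inhomogeneous graph norms in one stroke. The abstract route buys generality (Banach spaces, non-self-adjoint sectorial operators with bounded imaginary powers), which the statement as posed does not need.

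There is one bookkeeping slip in the inclusion $\subseteq$: the pairing must be the unweighted one. As you wrote it, $\langle\phi^{2\gamma}u,h\rangle$ with $\|\phi^{-\gamma}h\|\le1$ would control $\|\phi^{3\gamma}u\|$, not $\|\phi^{\gamma}u\|$. Here is the corrected version.
\begin{itemize}
\item Set $X_0:=L^2(\phi^{2\alpha}d\mu)$ and $X_1:=L^2(\phi^{2\beta}d\mu)$.
\item Take $h$ supported in $\{\phi\le n\}$ with $\|\phi^{-\gamma}h\|\le 1$, and set $G(z):=\int F(z)\,\phi^{(1-z)\alpha+z\beta-\gamma}\,\bar h\,d\mu$.
\item Then $|G(it)|\le\|F(it)\|_{X_0}$, $|G(1+it)|\le\|F(1+it)\|_{X_1}$, and $G(\theta)=\int u\,\bar h\,d\mu$.
\item The supremum of $|G(\theta)|$ over such $h$ is $\|\phi^\gamma u\,\mathbf{1}_{\{\phi\le n\}}\|$.
\item The three-lines lemma and letting $n\to\infty$ give $\|\phi^\gamma u\|\le\max_{j=0,1}\sup_t\|F(j+it)\|_{X_j}$.
\end{itemize}
This is also where truncation is genuinely needed. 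It must be applied to the test function, so that $G$ is bounded on the strip even though $F(z)$ is a priori only in $X_0+X_1=X_0$. Truncating $g$ in the inclusion $\supseteq$ is harmless but inessential.
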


Let $D(\tilde H) = \mathcal H^{2,2}(\mathbb R^d).$ Then, we have by Theorem \ref{thm:interpolation} and Proposition \ref{prop:interp} that $D(\tilde H^{1/2}) = \mathcal H^{1,1}(\mathbb R^d).$ On the other hand, the creation and annihilation operators are continuous linear operators $a_j,a_j^{\dagger}:\mathcal H^{k,k}(\mathbb R^d) \to  \mathcal H^{k-1,k-1}(\mathbb R^d)$ for $k \in \mathbb N.$

\begin{lemma}[Domain of the quantum harmonic oscillator]
Let
\[
H_0:=-\Delta+|x|^2
\]
initially on Schwartz space \(\mathcal S(\mathbb R^d)\subset L^2(\mathbb R^d)\). Then \(H_0\)
is essentially self-adjoint, and its self-adjoint realization satisfies
\[
D(H_0)
=
\{u\in H^2(\mathbb R^d): |x|^2u\in L^2(\mathbb R^d)\}.
\]
Equivalently,
\[
D(H_0)
=
\{u\in L^2(\mathbb R^d): -\Delta u\in L^2(\mathbb R^d),\ |x|^2u\in L^2(\mathbb R^d)\}.
\]
Moreover, the graph norm of \(H_0\) is equivalent to
\[
u\mapsto \|u\|+\|\Delta u\|+\||x|^2u\|.
\]
\end{lemma}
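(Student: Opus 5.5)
We reduce to the one-dimensional harmonic oscillator via separation of variables and the Hermite basis. Write $H_0=\sum_{j=1}^d(D_j^2+x_j^2)=\sum_j(2a_j^\dagger a_j+1)$. The tensor products of one-dimensional Hermite functions $\{h_\alpha\}_{\alpha\in\mathbb N_0^d}$ form an orthonormal basis of $L^2(\mathbb R^d)$ contained in $\mathcal S(\mathbb R^d)$, and $H_0h_\alpha=(2|\alpha|+d)h_\alpha$.

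\textbf{Essential self-adjointness.} Since $H_0$ is symmetric on $\mathcal S$ and has a complete set of eigenvectors lying in $\mathcal S$, the ranges $\operatorname{Ran}(H_0\pm i)$ contain $\operatorname{span}\{h_\alpha\}$, which is dense. By the basic criterion, $H_0$ is essentially self-adjoint on $\mathcal S$. The closure $\overline{H_0}$ is the diagonal operator with domain
\[
\Big\{u:\sum_\alpha(2|\alpha|+d)^2|\langle h_\alpha,u\rangle|^2<\infty\Big\}.
\]

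\textbf{Identification of the domain.} Let $\mathcal D:=\{u\in H^2:|x|^2u\in L^2\}$ with norm $N(u):=\|u\|+\|\Delta u\|+\||x|^2u\|$. The inclusion $\mathcal D\subseteq D(\overline{H_0})$ is direct: for $u\in\mathcal D$, the distribution $-\Delta u+|x|^2u$ lies in $L^2$ and its pairing with each $h_\alpha$ equals $(2|\alpha|+d)\langle h_\alpha,u\rangle$. Hence $u\in D(H_0^*)=D(\overline{H_0})$, with $\|H_0u\|\le N(u)$.

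The converse and the norm equivalence carry the weight. The key estimate to prove is
\[
\|\Delta u\|^2+\||x|^2u\|^2\le C\big(\|H_0u\|^2+\|u\|^2\big),\qquad u\in\mathcal S.
\]
Expanding $\|H_0u\|^2$ gives
\[
\|\Delta u\|^2+\||x|^2u\|^2+2\operatorname{Re}\langle -\Delta u,|x|^2u\rangle.
\]
Integrating by parts,
\[
\operatorname{Re}\langle-\Delta u,|x|^2u\rangle=\sum_{j,k}\|x_k\partial_ju\|^2-\operatorname{Re}\sum_j\langle\partial_ju,2x_ju\rangle,
\]
and $-\operatorname{Re}\sum_j\langle\partial_ju,2x_ju\rangle=d\|u\|^2$. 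So the cross term is $\ge -C\|u\|^2$, and in fact it is nonnegative up to the constant $d\|u\|^2$. This yields the estimate.

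By essential self-adjointness, $\mathcal S$ is a core for $\overline{H_0}$. Take $u\in D(\overline{H_0})$ and $u_n\in\mathcal S$ with $u_n\to u$ and $H_0u_n\to\overline{H_0}u$. The estimate shows that $(u_n)$ is Cauchy in the $N$-norm, which is complete on $\mathcal D$: the limits of $\Delta u_n$ and $|x|^2u_n$ are identified distributionally, and elliptic regularity gives $H^2$. Hence $u\in\mathcal D$, and the graph norm dominates $N$.

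\textbf{Equivalent description.} The second characterization, $-\Delta u\in L^2$ and $|x|^2u\in L^2$, is equivalent to $u\in H^2$, because $\|u\|_{H^2}\simeq\|u\|+\|\Delta u\|$ via the Fourier transform. The main obstacle is the cross-term commutator computation and the justification that $\mathcal S$ is a core. The latter we settle via the Hermite basis, which avoids the need for Kato–Rellich-type arguments.
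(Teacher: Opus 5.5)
Your proof is correct, but it takes a different route from the paper's in the two places where the real work lies. The a priori estimate on $\mathcal S(\mathbb R^d)$ is the same integration by parts the paper performs; the paper records the exact identity $\|H_0u\|^2=\|\Delta u\|^2+\||x|^2u\|^2+2\||x|\nabla u\|^2-2d\|u\|^2$.

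Your intermediate formula has a sign slip. In fact $\Re\langle-\Delta u,|x|^2u\rangle=\||x|\nabla u\|^2+\Re\sum_j\langle\partial_ju,2x_ju\rangle=\||x|\nabla u\|^2-d\|u\|^2$. The lower bound $-d\|u\|^2$ that you then use is the correct one, so nothing downstream changes.

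The real difference is in the remaining steps. The paper obtains $\{u\in L^2:\Delta u,\,|x|^2u\in L^2\}\subseteq D(\overline{H_0})$ by asserting that Schwartz functions $u_n$ exist with $u_n\to u$, $\Delta u_n\to\Delta u$ and $|x|^2u_n\to|x|^2u$ simultaneously. This is a cutoff-and-mollification density statement that it does not prove. It then gets essential self-adjointness by asserting that the minimal and maximal operators coincide. That would further require showing that $(-\Delta+|x|^2)u\in L^2$ forces $\Delta u$ and $|x|^2u$ into $L^2$ separately.

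You replace both steps. The Hermite eigenbasis in $\mathcal S$ gives essential self-adjointness immediately. The inclusion $\mathcal D\subseteq D(H_0^*)=D(\overline{H_0})$ is read off from the distributional definition of the adjoint. The a priori estimate then transfers from the core $\mathcal S$ to the closure. So the chain maximal domain $=D(\overline{H_0})=\mathcal D$ closes without any approximation lemma.

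What your route gives up is robustness. It depends on the explicit Schwartz eigenfunctions of the exactly solvable oscillator. The paper's estimate-plus-density strategy, once its density step is supplied, uses no spectral information and adapts more readily to potentials without explicit eigenfunctions.
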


\begin{proof}
For \(u\in \mathcal S(\mathbb R^d)\), the harmonic oscillator is symmetric and hence closable. We compute
\[
\|H_0u\|^2
=
\|-\Delta u+|x|^2u\|^2
=
\|\Delta u\|^2+\||x|^2u\|^2-2\Re\langle \Delta u, |x|^2u\rangle.
\]
By integration by parts,
\[
-\langle \Delta u, |x|^2u\rangle
=
\sum_{j=1}^d \langle \partial_j u,\partial_j(|x|^2u)\rangle
=
\sum_{j=1}^d \langle \partial_j u,2x_ju+|x|^2\partial_j u\rangle.
\]
Taking real parts gives
\[
-\Re\langle \Delta u, |x|^2u\rangle
=
\||x|\nabla u\|^2
+
2\Re\sum_{j=1}^d \langle \partial_j u,x_ju\rangle.
\]
Now
\[
2\Re\langle \partial_j u,x_ju\rangle
=
\int_{\mathbb R^d} x_j\,\partial_j(|u|^2)\,dx
=
-\int_{\mathbb R^d}|u|^2\,dx
=
-\|u\|^2,
\]
hence
\[
2\Re\sum_{j=1}^d \langle \partial_j u,x_ju\rangle
=
-d\|u\|^2.
\]
Therefore
\[
-\Re\langle \Delta u, |x|^2u\rangle
=
\||x|\nabla u\|^2-d\|u\|^2,
\]
and so
\[
\|H_0u\|^2
=
\|\Delta u\|^2+\||x|^2u\|^2+2\||x|\nabla u\|^2-2d\|u\|^2.
\]
In particular,
\[
\|\Delta u\|^2+\||x|^2u\|^2
\le
\|H_0u\|^2+2d\|u\|^2.
\]
Since also
\[
\|H_0u\|\le \|\Delta u\|+\||x|^2u\|,
\]
we obtain 
\[
\|u\|+\|H_0u\|
\text{ is equivalent to }
\|u\|+\|\Delta u\|+\||x|^2u\|
\text{ for }u\in\mathcal S(\mathbb R^d).
\]

Now let \(H_0\) denote the closure of the operator on \(\mathcal S(\mathbb R^d)\).
By the graph norm equivalence, \(u\in D(H_0)\) if and only if there exists a
sequence \(u_n\in\mathcal S(\mathbb R^d)\) such that
\[
u_n\to u,\qquad \Delta u_n\to v,\qquad |x|^2u_n\to w
\quad\text{in }L^2(\mathbb R^d)
\]
for some \(v,w\in L^2(\mathbb R^d)\). Passing to distributions shows that
\(v=\Delta u\) and \(w=|x|^2u\). Thus
\[
D(H_0)\subset
\{u\in L^2(\mathbb R^d): \Delta u\in L^2(\mathbb R^d),\ |x|^2u\in L^2(\mathbb R^d)\}.
\]

Conversely, if \(u\in L^2(\mathbb R^d)\) satisfies
\[
\Delta u\in L^2(\mathbb R^d),\qquad |x|^2u\in L^2(\mathbb R^d),
\]
then \(u\in H^2(\mathbb R^d)\). Choosing \(u_n\in \mathcal S(\mathbb R^d)\) with
\[
u_n\to u,\qquad \Delta u_n\to \Delta u,\qquad |x|^2u_n\to |x|^2u
\quad\text{in }L^2(\mathbb R^d),
\]
we obtain
\[
H_0u_n=-\Delta u_n+|x|^2u_n\to -\Delta u+|x|^2u
\quad\text{in }L^2(\mathbb R^d).
\]
Hence \(u\in D(H_0)\). Therefore
\[
D(H_0)=
\{u\in L^2(\mathbb R^d): \Delta u\in L^2(\mathbb R^d),\ |x|^2u\in L^2(\mathbb R^d)\}.
\]
The minimal operator $H_0$ is equal to the maximal operator, thus $H_0$ is the self-adjoint realization. 
Finally, since \(u,\Delta u\in L^2\) implies \(u\in H^2(\mathbb R^d)\), this is
equivalent to
\[
D(H_0)=
\{u\in H^2(\mathbb R^d): |x|^2u\in L^2(\mathbb R^d)\}.
\]

\end{proof}

It is well-known that 
\[ V \in L^{\max\{2,d/2\}}(\mathbb R^d)+ \langle x \rangle^{\alpha} L^{\infty}(\mathbb R^d)\text{ with } \alpha<2\]
is relatively zero-bounded by \cite[Theorem X.15]{reed1975ii} and \cite[Theorem 4.28]{Zworski2012} with respect to the harmonic oscillator $H_0$. Thus, Kato-Rellich's theorem \cite[Theorem X.12]{reed1975ii} shows the self-adjointness of $H_0+V$ on $\mathcal H^{2,2}(\mathbb R^d).$

\begin{proposition}
Let
\[
H=-\Delta+V(x)
\quad\text{on }L^2(\mathbb R^d),
\]
where $V\in C^\infty(\mathbb R^d)$ is real-valued and satisfies
\[
V(x)\ge c\langle x\rangle^r-C_0,
\qquad c>0,\quad r\ge 1,
\]
as well as
\[
|\partial_x^\alpha V(x)|\le C_\alpha \langle x\rangle^{r-|\alpha|},
\qquad \alpha\in\mathbb N^d.
\]
For $h_0 \ge C_0$, we have 
\[
\widetilde H:=H+(h_0+1)I.
\]
Let $a_j,a_j^{\dagger}$ be the annihilation and creation operators.
Then for every $n\in\mathbb N$,
\[
\widetilde H^n a_j \widetilde H^{-n-1},\widetilde H^n a_j^{\dagger} \widetilde H^{-n-1}\in \mathcal B(L^2(\mathbb R^d)).
\]
\end{proposition}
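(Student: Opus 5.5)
The plan is to work in the Weyl–H\"ormander pseudodifferential calculus adapted to the anisotropic weight
\[
\lambda(x,\xi):=\bigl(\langle\xi\rangle^2+\langle x\rangle^{r}\bigr)^{1/2},
\]
so that the symbol $p(x,\xi)=|\xi|^2+V(x)+h_0+1$ of $\widetilde H$ is elliptic of order $\lambda^2$.

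\emph{Step 1: the symbol class.} The assumptions $|\partial_x^\alpha V|\le C_\alpha\langle x\rangle^{r-|\alpha|}$ and $V\ge c\langle x\rangle^r-C_0$ with $h_0\ge C_0$ place $p$ in the class $S(\lambda^2,g)$ for the metric
\[
g=\frac{dx^2}{\langle x\rangle^2}+\frac{d\xi^2}{\lambda^2}.
\]
Its Planck function is $h=(\langle x\rangle\lambda)^{-1}\le\lambda^{-1}$, which tends to zero since $r\ge1$, and this metric is admissible (slowly varying and temperate). Since $p\ge\lambda^2$ up to a constant, $p$ is elliptic in $S(\lambda^2,g)$.

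\emph{Step 2: powers and resolvents of $\widetilde H$.} By the standard construction of a parametrix (Helffer–Robert, or the functional-calculus results in Zworski's book), $\widetilde H^{-1}$ is a pseudodifferential operator with symbol in $S(\lambda^{-2},g)$. Hence $\widetilde H^{k}\in \mathrm{Op}\,S(\lambda^{2k},g)$ for every $k\in\mathbb Z$. I would first prove this for $\widetilde H^{-1}$ via a parametrix plus a Beals-type commutator characterisation, and then take powers by composition.

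\emph{Step 3: the jump operators.} The operators $a_j,a_j^\dagger=(x_j\pm\partial_j)/\sqrt2$ have symbols $(x_j\pm i\xi_j)/\sqrt2$. These lie in $S(\langle x\rangle+\langle\xi\rangle,g)$; derivatives are harmless because $\partial_x x_j=1\le\langle x\rangle\cdot\langle x\rangle^{-1}$. Since $r\ge1$, we have $\langle x\rangle+\langle\xi\rangle\lesssim\lambda^2$, so the composition
\[
\widetilde H^n a_j\widetilde H^{-n-1}
\]
has symbol in $S(\lambda^{2n}\cdot\lambda^2\cdot\lambda^{-2n-2},g)=S(1,g)$. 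The Calder\'on–Vaillancourt theorem for $S(1,g)$ then gives $L^2$-boundedness, and the same argument applies to $a_j^\dagger$.

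The main obstacle is Step 2: justifying that the complex or integer powers of the operator closure lie in the calculus. Two points need care. First, the ellipticity constant must handle the region where $\langle x\rangle^r$ dominates. Second, the weight must be temperate for general $r\ge1$, with no upper restriction on $r$.

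As a fallback, if the calculus route stalls, I would argue by induction on $n$ using commutators. The identity
\[
\widetilde H^n a_j\widetilde H^{-n-1}=a_j\widetilde H^{-1}+[\widetilde H^n,a_j]\widetilde H^{-n-1}
\]
reduces the claim to bounding the commutator term. Note that $[\widetilde H,a_j]$ is a combination of $\partial_j V$ and $\partial_j$. By the derivative bound, $\partial_jV$ is of order $\langle x\rangle^{r-1}\lesssim\lambda^2$, and the pure momentum part is controlled similarly. One then bounds these terms by a power of $\widetilde H$ using elliptic estimates of the form $\|\langle x\rangle^r u\|+\|\Delta u\|\lesssim\|\widetilde H u\|+\|u\|$, proved with the same integration-by-parts computation as in the harmonic-oscillator domain lemma.
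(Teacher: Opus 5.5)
Your argument is correct. It takes a genuinely different route from the paper, and that route avoids a real problem in the paper's argument.

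\textbf{The paper's route and where it breaks.} The paper works in the product-type scattering calculus $\Psi^{m,l}_{\mathrm{sc}}$. It places $\widetilde H\in\Psi^{2,r}_{\mathrm{sc}}$ and $a_j,a_j^\dagger\in\Psi^{1,1}_{\mathrm{sc}}$. It then asserts that $\widetilde H$ is totally elliptic, so that $\widetilde H^{-n-1}\in\Psi^{-2n-2,-r(n+1)}_{\mathrm{sc}}$, and composes to land in $\Psi^{-1,1-r}_{\mathrm{sc}}$. Total ellipticity in $S^{2,r}_{\mathrm{sc}}$ would require $|\xi|^2+V(x)+h_0+1\gtrsim\langle\xi\rangle^{2}\langle x\rangle^{r}$ near infinity. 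This fails when $|x|$ and $|\xi|$ are large simultaneously. Already for the harmonic oscillator, $\widetilde H^{-1}\in\Psi^{-2,-2}_{\mathrm{sc}}$ would force $\langle x\rangle^2\langle D\rangle^2u\in L^2$ for every $u\in D(\widetilde H)$. That is false for $u=\langle x\rangle^{-3}\sin(x^2)$ in $d=1$.

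\textbf{Why your route works.} Your non-product weight $\lambda^2=\langle\xi\rangle^2+\langle x\rangle^r$ matches the additive structure of the symbol. So $p\ge\min(1,c)\lambda^2$ is genuine ellipticity. The count $\lambda^{2n}\cdot\lambda^{2}\cdot\lambda^{-2n-2}=1$ is the real mechanism, and $r\ge 1$ enters exactly through $\langle x\rangle\lesssim\lambda^2$.

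\textbf{What each approach buys.} Had the ellipticity held, the paper's route would have been very short: a citation of composition and $L^2$-boundedness in a standard calculus. Your route costs the Weyl--H\"ormander/Helffer--Robert machinery:
\begin{itemize}
\item admissibility of $g$ and temperance of $\lambda$;
\item identification of the self-adjoint $\widetilde H^{-1}$ with an element of $\mathrm{Op}\,S(\lambda^{-2},g)$, via a parametrix with $O(h^\infty)$ remainder.
\end{itemize}
You correctly single out the second point as the main thing to check.

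\textbf{A slip in Step 3.} The symbol $b=(x_j\pm i\xi_j)/\sqrt2$ does not lie in $S(\langle x\rangle+\langle\xi\rangle,g)$ when $r>2$. Indeed $\partial_\xi b=O(1)$, whereas $(\langle x\rangle+\langle\xi\rangle)\lambda^{-1}\sim\langle x\rangle^{1-r/2}\to0$ along $\xi=0$; you only checked the $x$-derivatives. The conclusion survives if you verify $b\in S(\lambda^2,g)$ directly:
\begin{itemize}
\item $|b|\lesssim\lambda^2$;
\item $|\partial_\xi b|\lesssim 1\le\lambda$;
\item $|\partial_x b|\lesssim 1\le\lambda^2\langle x\rangle^{-1}$;
\item higher derivatives vanish.
\end{itemize}
Alternatively, use the coarser metric $dx^2/\langle x\rangle^2+d\xi^2/\langle\xi\rangle^2$ with the same weight $\lambda^2$. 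There $p\in S(\lambda^2)$, $1/p\in S(\lambda^{-2})$ and $b\in S(\langle x\rangle+\langle\xi\rangle)\subset S(\lambda^2)$ all hold, and admissibility is standard. This keeps most of the simplicity the paper was after while remaining correct.
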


\begin{proof}
We work in the scattering calculus $\Psi^{m,l}_{\mathrm{sc}}(\mathbb R^d)$. Recall \cite[(1.3)]{HassellJiaSussman2025} that a smooth function $a(x,\xi)\in S^{m,l}_{\mathrm{sc}}$ if
\[
|D_x^\alpha D_\xi^\beta a(x,\xi)|
\le C_{\alpha\beta}\langle x\rangle^{l-|\alpha|}\langle \xi\rangle^{m-|\beta|}
\]
for all multi-indices $\alpha,\beta.$
Its quantization \cite[(1.1)]{HassellJiaSussman2025} defines $\Psi^{m,l}_{\mathrm{sc}}$.

We first show that
\[
\widetilde H\in \Psi^{2,r}_{\mathrm{sc}}.
\]
Indeed, its symbol is
\[
\widetilde{p}(x,\xi)=|\xi|^2+V(x)+\lambda.
\]
The term $|\xi|^2$ satisfies
\[
|D_x^\alpha D_\xi^\beta |\xi|^2|
\le C_{\alpha\beta}\langle \xi\rangle^{2-|\beta|}
\le C_{\alpha\beta}\langle x\rangle^{r-|\alpha|}\langle \xi\rangle^{2-|\beta|},
\]
while for $V(x)+\lambda$ we use the assumption on derivatives of $V$. 
By the composition property of the scattering calculus \cite[(2.1)]{HassellJiaSussman2025},
\[
\Psi^{m_1,l_1}_{\mathrm{sc}}\circ \Psi^{m_2,l_2}_{\mathrm{sc}}
\subset \Psi^{m_1+m_2,\;l_1+l_2}_{\mathrm{sc}}.
\]
Since $\widetilde H\in \Psi^{2,r}_{\mathrm{sc}}$, an induction gives
\begin{equation}\label{eq:Hn}
\widetilde H^n\in \Psi^{2n,rn}_{\mathrm{sc}}
\qquad\text{for all }n\in\mathbb N.
\end{equation}

Moreover, $\widetilde H$ (totally) elliptic, hence we have \cite[Prop. 2.1]{HassellJiaSussman2025},
\begin{equation}\label{eq:Hneg}
\widetilde H^{-n-1}\in \Psi^{-2n-2,-r(n+1)}_{\mathrm{sc}}.
\end{equation}

The operator
\[
a_j=\frac1{\sqrt2}(x_j+\partial_{x_j}) \text{ and }a_j^{\dagger}=\frac1{\sqrt2}(x_j-\partial_{x_j})
\]
have symbols
\[
b_j(x,\xi)=\frac1{\sqrt2}(x_j+i\xi_j) \text{ and }\overline{b_j}(x,\xi)=\frac1{\sqrt2}(x_j-i\xi_j).
\]
A direct inspection shows that for all $\alpha,\beta$,
\[
|D_x^\alpha D_\xi^\beta b_j(x,\xi)|
\le C_{\alpha\beta}\langle x\rangle^{1-|\alpha|}\langle \xi\rangle^{1-|\beta|},
\]
hence
\begin{equation}\label{eq:aj}
a_j,a_j^{\dagger}\in \Psi^{1,1}_{\mathrm{sc}}.
\end{equation}
Combining \eqref{eq:Hn}, \eqref{eq:aj}, and \eqref{eq:Hneg}, we obtain by the composition rule \cite[(2.1)]{HassellJiaSussman2025}
\[
\widetilde H^n a_j \widetilde H^{-n-1},\widetilde H^n a_j^{\dagger}
 \widetilde H^{-n-1}
\in
\Psi^{2n,rn}_{\mathrm{sc}}
\circ
\Psi^{1,1}_{\mathrm{sc}}
\circ
\Psi^{-2n-2,-r(n+1)}_{\mathrm{sc}} \subset \Psi^{-1,\;1-r}_{\mathrm{sc}}.
\]
Since $-1\le 0$ and $1-r\le 0$ (because $r\ge 1$), the standard boundedness
result for the scattering calculus \cite[Prop. 3.6]{HassellJiaSussman2025} implies that every operator in
$\Psi^{m,l}_{\mathrm{sc}}$ with $m\le 0$ and $l\le 0$ is bounded on $L^2(\mathbb R^d)$.
Therefore
\[
\widetilde H^n a_j \widetilde H^{-n-1},\widetilde H^n a_j^{\dagger} \widetilde H^{-n-1}\in \mathcal B(L^2(\mathbb R^d)).
\]

\end{proof}

\section{Auxiliary results to study general one-mode Hamiltonians}

\noindent The following Lemma is key in establishing Theorem \ref{thm:SpectralGapGeneralh(N)}:

\begin{lemma}
\label{lem:EnEquivalences}
Let $n_0\in\N_0$ and $(E_n)_{n\in\N_0}$ be a sequence of real numbers which is non-decreasing for all $n\ge n_0$. The following are equivalent:
\begin{enumerate}
     \item  \label{it:LowerBoundEnDiff}There exists $\delta>0$ and  $s\in\N$ such that for all $m\ge n_0$ we have $E_{m+s} - E_m\ge \delta.$
\item
\label{it:LowerBoundEnDiffstrong} For all $\delta>0$ there exists $s\in\N$ such that for all $m\ge n_0$ we have $E_{m+s} - E_m\ge \delta$.
    \item \label{it:FiniteSum}For all $\beta>0$ we have\,\,$\sup_{m} \sum_{j\ge m} e^{-\beta(E_{j}-E_m)} <\infty.$
    \item \label{it:FiniteSumGamma} For all $\beta>0$ there exists $\gamma \in(0,1)$ such that\,\, $\sup_{m} \sum_{j\ge m} \gamma^{-(j-m)}e^{-\beta(E_{j}-E_m)} <\infty.$
\end{enumerate}
Furthermore, if condition \eqref{it:LowerBoundEnDiff} holds true, we have for all $\gamma\in (e^{-\beta\delta/s},1]$ that
\begin{align}
\label{eq:ExplicitBoundGamma}
    \sup_{m,k\ge 0}\sum_{j\ge m} \gamma^{-(j-m)}e^{-\beta(E_{j+k} - E_{m+k}+E_{j} - E_m)/2}  \le 
(n_0+1)\gamma^{-n_0} e^{\beta \Delta_E} \frac{e^{\beta\delta}}{1-q}  <\infty,
\end{align}
where we denoted $q:=\gamma^{-1}e^{-\beta\delta/s}<1$ and $\Delta_E :=\max_{0\le j,m\le 2n_0} |E_j-E_m|.$

\end{lemma}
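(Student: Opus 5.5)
We prove the cycle of implications \eqref{it:LowerBoundEnDiff}$\Rightarrow$\eqref{it:LowerBoundEnDiffstrong}$\Rightarrow$\eqref{it:FiniteSumGamma}$\Rightarrow$\eqref{it:FiniteSum}$\Rightarrow$\eqref{it:LowerBoundEnDiff}, and obtain the explicit bound \eqref{eq:ExplicitBoundGamma} along the way.

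\emph{From \eqref{it:LowerBoundEnDiff} to \eqref{it:LowerBoundEnDiffstrong}.} Monotonicity for $m\ge n_0$ lets us telescope: $E_{m+ks}-E_m=\sum_{i=0}^{k-1}(E_{m+(i+1)s}-E_{m+is})\ge k\delta$. Given any $\delta'>0$, we then choose $k$ with $k\delta\ge\delta'$ and replace $s$ by $ks$.

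\emph{From \eqref{it:LowerBoundEnDiffstrong} to \eqref{it:FiniteSumGamma}, and the explicit bound.} Fix $\beta>0$ and take $\delta,s$ from \eqref{it:LowerBoundEnDiffstrong}. For $m\ge n_0$ and $j\ge m$, writing $j-m=ls+p$ with $0\le p<s$, telescoping and monotonicity give $E_j-E_m\ge l\delta\ge \delta(j-m)/s-\delta$. Hence
\[
\gamma^{-(j-m)}e^{-\beta(E_j-E_m)}\le e^{\beta\delta}q^{j-m},\qquad q=\gamma^{-1}e^{-\beta\delta/s},
\]
which is summable with sum at most $e^{\beta\delta}/(1-q)$ whenever $\gamma\in(e^{-\beta\delta/s},1]$. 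For \eqref{eq:ExplicitBoundGamma} we apply the same bound to both exponents $E_{j+k}-E_{m+k}$ and $E_j-E_m$; each is $\ge\delta(j-m)/s-\delta$ whenever the relevant base index is $\ge n_0$, so the half-sum of exponents is bounded in the same way.

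The main nuisance is the finitely many indices below $n_0$. For $m<n_0$ we split the sum into $j\le n_0$ (at most $n_0+1$ terms) and $j>n_0$. On the first block, $|E_j-E_m|\le\Delta_E$ and $\gamma^{-(j-m)}\le\gamma^{-n_0}$. On the second block we factor $E_j-E_m=(E_j-E_{n_0})+(E_{n_0}-E_m)$ and reduce to the case $m=n_0$, at the cost of a factor $e^{\beta\Delta_E}\gamma^{-n_0}$. For the shifted indices $j+k,m+k$ with $m+k<n_0$ we proceed likewise, using that $\Delta_E$ is taken over indices up to $2n_0$. Collecting these factors gives the stated constant $(n_0+1)\gamma^{-n_0}e^{\beta\Delta_E}e^{\beta\delta}/(1-q)$, possibly after a harmless adjustment of the prefactor.

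\emph{From \eqref{it:FiniteSumGamma} to \eqref{it:FiniteSum}.} This is trivial, since $\gamma^{-(j-m)}\ge1$.

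\emph{From \eqref{it:FiniteSum} to \eqref{it:LowerBoundEnDiff}.} We argue by contraposition. If \eqref{it:LowerBoundEnDiff} fails, then for every $s$ there is $m\ge n_0$ with $E_{m+s}-E_m<1$. By monotonicity, $E_j-E_m<1$ for all $m\le j\le m+s$, so
\[
\sum_{j\ge m}e^{-\beta(E_j-E_m)}\ge (s+1)e^{-\beta}.
\]
Letting $s\to\infty$ shows the supremum in \eqref{it:FiniteSum} is infinite, a contradiction.
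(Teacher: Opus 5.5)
Your proposal is correct and follows essentially the same route as the paper. Both use the telescoping estimate $E_j-E_m\ge\lfloor (j-m)/s\rfloor\delta$ to get the geometric bound $e^{\beta\delta}/(1-q)$, the same reduction of indices below $n_0$ (and of the shifted indices, via $\Delta_E$ over $[0,2n_0]$) to the case $m=n_0$, and the same contrapositive counting argument; you only rearrange the order of the implication cycle.

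Your hedge about adjusting the prefactor is unnecessary. If you split at $j<n_0$ versus $j\ge n_0$, as the paper does, you get $n_0\gamma^{-n_0}e^{\beta\Delta_E}+\gamma^{-n_0}e^{\beta\Delta_E}\,e^{\beta\delta}/(1-q)$. This is already at most the stated constant, since $e^{\beta\delta}/(1-q)\ge 1$.
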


\begin{proof}
It is obvious that  \eqref{it:LowerBoundEnDiffstrong} $\implies$ \eqref{it:LowerBoundEnDiff} and also that \eqref{it:FiniteSumGamma} $\implies$ \eqref{it:FiniteSum}. Hence, we focus in the following on the non-trivial directions. First, we assume \eqref{it:LowerBoundEnDiff} and show that this implies \eqref{it:LowerBoundEnDiffstrong}: Since the $E_n$ are non-decreasing for all $n\ge n_0$, we have for all $m\in\N_0$ such that $m\ge n_0$ and $j\ge m$ that
\begin{align*}
    E_{j}- E_m &\ge E_{m+s\lfloor(j-m)/s}\rfloor- E_m =  \sum_{k=1}^{\lfloor\frac{j-m}{s}\rfloor} E_{m+ks} - E_{m+(k-1)s} \ge \lfloor\frac{j-m}{s}\rfloor \delta.
\end{align*}
This already shows \eqref{it:LowerBoundEnDiffstrong}: for any $\delta'>0$, choose $s'$ such that $\lfloor s'/s\rfloor \delta\ge \delta'$, and hence for all $m$, the above shows that $E_{m+s'}-E_m\ge \delta'$.

Next, we prove \eqref{eq:ExplicitBoundGamma} from condition \eqref{it:LowerBoundEnDiff}:
For $\gamma \in (e^{-\beta\delta/s},1]$ we denote $q = \gamma^{-1} e^{-\beta\delta/s} <1$ and see by the above that 
\begin{align}
\label{eq:GammaSumBound}
   &\sup_{\substack{m\ge n_0\\k\ge 0}} \sum_{j\ge m} \gamma^{-(j-m)}e^{-\beta(E_{j+k} -E_{m+k} +E_j- E_m)/2}  \le \sum_{j=0}^\infty \gamma^{-j}e^{-\beta\lfloor\frac{j}{s}\rfloor\delta}  \le 
    e^{\beta\delta}\sum_{j=0}^\infty q^j = \frac{e^{\beta\delta}}{1-q} <\infty,
\end{align}
where for the second inequality, we have used that $\lfloor\frac{j}{s}\rfloor \ge \frac{j}{s} -1$.

Furthermore, for $m<n_0$ and $k\ge0$, we have
\begin{align}
\label{eq:FiniteSumGamma}
    \sum_{j=m}^{n_0-1} \gamma^{-(j-m)}e^{-\beta(E_{j+k}-E_{m+k}+E_j-E_m)/2} &\le n_0\max_{m\le j\le n_0-1}\gamma^{-(j-m)}e^{-\beta(E_{j+k}-E_{m+k}+E_j-E_m)/2} \\
    \nn&
   \le n_0\gamma^{-n_0}\max_{0\le j,m\le 2n_0}\left\{e^{-\beta(E_{j}-E_{m})},1\right\} 
\end{align}
where we have used the fact that 
\begin{equation}
\begin{split}
\label{eq:Maaan}
\sup_{k\ge 0}e^{-\beta(E_{j +k}-E_{m+k})/2} \le \max_{0\le k\le n_0}\left\{e^{-\beta(E_{j +k}-E_{m+k})/2},1\right\} \le \max_{0\le j',m'\le 2n_0}\left\{e^{-\beta(E_{j}-E_{m})/2},1\right\}
\end{split}
\end{equation}
where the first inequality follows from the fact that for $k\ge n_0$ we have $E_{j+k}\ge E_{m+k}$ as $j\ge m$, and furthermore that $j,m\le n_0.$
Moreover, we see 
\begin{align}
\label{eq:SeriesGamma}
    \sum_{j\ge n_0} \gamma^{-(j-m)}   e^{-\beta(E_{j+k}-E_{m+k}+E_j-E_m)/2}  &
      = e^{-\beta(E_{n_0+k}-E_{m+k}+E_{n_0}-E_m)/2}\gamma^{-(n_0-m)}\nn
      \\
      &\quad\nn\times\sum_{j\ge n_0}   \gamma^{-(j-n_0)}e^{-\beta(E_{j+k}-E_{n_0+k}+E_j-E_{n_0})/2} \\
      &\le \nn e^{-\beta(E_{n_0+k}-E_{m+k}+E_{n_0}-E_m)/2}\gamma^{-(n_0-m)}\frac{e^{\beta\delta}}{1-q}\\&\le \gamma^{-n_0}\max_{0\le j,m\le 2n_0}\left\{e^{-\beta(E_{j}-E_{m})},1\right\}\frac{e^{\beta\delta}}{1-q},
\end{align}
where, for the second to last inequality, we have used \eqref{eq:GammaSumBound} and for the last \eqref{eq:Maaan} for $j=n_0.$

Combining \eqref{eq:GammaSumBound}, \eqref{eq:FiniteSumGamma}, and \eqref{eq:SeriesGamma}, we obtain
\begin{align*}
     \sup_{m,k\ge 0}\sum_{j\ge m} \gamma^{-(j-m)}e^{-\beta(E_{j+k} - E_{m+k}+E_{j} - E_m)/2} &\le 
(n_0+1)\gamma^{-n_0} \max_{0\le j,m\le 2n_0}\left\{ e^{-\beta(E_j-E_m)} ,1\right\}\frac{e^{\beta\delta}}{1-q}  
\\&\le (n_0+1)\gamma^{-n_0}  e^{\beta\Delta_E} \frac{e^{\beta\delta}}{1-q}
<\infty,  
\end{align*}
which shows that \eqref{eq:ExplicitBoundGamma} 
holds true. Furthermore, \eqref{it:FiniteSumGamma} immediately follows by restricting to $k=0.$

We finish the proof by showing that \eqref{it:FiniteSum} implies \eqref{it:LowerBoundEnDiffstrong}: Assume \eqref{it:LowerBoundEnDiffstrong} is not satisfied, i.e., that there exists $\delta>0$ such that for all $s\in\N$ there exists $m_s\ge n_0$ such that $E_{m_s+s}- E_{m_s}<\delta.$ \bin{For $s$ large enough we can without loss of generality\footnote{To see this assume that there exists $\delta>$ such that for all $s\in\N$ there exists $m_s<n_0$ satisfying $E_{m_s+s}-E_{m_s}<\delta.$ But this implies that the subsequence $\left(E_{m_s+s}\right)_{s\in \N}$ is uniformly bounded as $E_{m_s+s}<\delta +\max_{m<n_0}E_m$. From this we immediately see that \eqref{it:FiniteSum} cannot hold true.} assume that $m_s\ge n_0$ and } Since the sequence $(E_n)_{n\in\N_0}$ is non-decreasing for $n\ge n_0$, we therefore have that $E_{j}- E_{m_s}<\delta$ for all $m_s\le j\le m_s+s$ and consequently for $\beta>0$ that
\begin{align*}
    \sum_{j\ge m_s} e^{-\beta(E_j-E_{m_s})} > s e^{-\beta\delta}.
\end{align*}
Since $s\in \N$ was arbitrary, we see
\begin{align*}
     \sup_{m\in\N}\sum_{j\ge m} e^{-\beta(E_j-E_m)} = \infty,
\end{align*}
which finishes the proof.

\end{proof}

\section{Auxiliary results on the filter function $\hatfMdelta$}
\label{sec:fmDeltaProofs}
In this section we prove Lemma~\ref{lem:C_fMdeltaBounds} and~\ref{LemmGevreycond}. For that we first state and prove two supporting lemmas.
\begin{lemma}
\label{lem:AaaahFm} 
    Denote the closed strip of the complex plane $
\overline S:=\{z\in\mathbb C:\ |\Im z|\le \tfrac12\}$ and the function $h_1(z):= \exp\left(-\frac{\sqrt{1+z^2} +z}{4}\right),$ where $\sqrt{1+z^2}$ denotes the principal square root. Then we have 
\begin{align}
    \sup_{z\in\overline S}|h_1(z)|\le 1.
\end{align}
\end{lemma}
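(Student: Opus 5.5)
Since $|h_1(z)| = \exp\bigl(-\tfrac14\operatorname{Re}(\sqrt{1+z^2}+z)\bigr)$, the claim is equivalent to
\[
\operatorname{Re}\sqrt{1+z^2}+\operatorname{Re} z\ \ge\ 0 \qquad \text{for all } z\in\overline S .
\]
The plan is to prove this inequality directly. We write $z=x+iy$ with $|y|\le \tfrac12$ and split according to the sign of $x$.

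The case $x\ge 0$ is immediate. The principal square root always has non-negative real part, so both summands are non-negative.

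The case $x<0$ is the only real work, and we use a squaring trick. We set $w=1+z^2=(1+x^2-y^2)+2ixy$ and $u=\operatorname{Re}\sqrt{w}$. The standard formula
\[
u^2=\tfrac12\bigl(|w|+\operatorname{Re} w\bigr)
\]
reduces the inequality $u\ge -x=|x|$ to
\[
|w|\ \ge\ 2x^2-\operatorname{Re} w \;=\; x^2+y^2-1 .
\]
If $x^2+y^2\le 1$, the right-hand side is non-positive and there is nothing to prove. Otherwise we square both sides. Using
\[
|w|^2=(1+x^2-y^2)^2+4x^2y^2 ,
\]
the difference $|w|^2-(x^2+y^2-1)^2$ simplifies to $4x^2$, after expanding and cancelling the $(x^2-y^2)^2$-type terms. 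Since $4x^2\ge 0$, the inequality holds. This holds for all $z$ and does not use $|y|\le\tfrac12$. The only place the strip matters is the branch choice: off the cuts $\{iy : |y|\ge 1\}$ the principal root is well defined, so on $\overline S$ the function $h_1$ is analytic. The formula for $u^2$ holds for the principal root everywhere anyway.

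The main obstacle is this algebraic simplification in the case $x<0$, which I expect to close cleanly. With it established, we conclude $|h_1(z)|\le e^{0}=1$ on $\overline S$, as claimed.
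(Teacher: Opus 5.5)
Your proposal is correct and is essentially the paper's argument: the same formula $\operatorname{Re}\sqrt{w}=\sqrt{(|w|+\operatorname{Re} w)/2}$ and the same identity $|w|^2-(x^2+y^2-1)^2=4x^2$. The only difference is that the paper avoids your case split. It uses $|w|\ge |x^2+y^2-1|$ to get $|w|+\operatorname{Re} w\ge 2x^2$, hence $\operatorname{Re}\sqrt{1+z^2}\ge|\operatorname{Re} z|$ for every $z$, and then concludes from $|x|+x\ge 0$.
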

\begin{proof}
Let $z\in\overline S$ and 
$
w:=1+z^2=1+u^2-v^2+2iuv.$
For the principal square root we use the known relation
\begin{align}
\label{eq:PrincSquareRoot}
\Re\sqrt{w}=\sqrt{\frac{|w|+\Re w}{2}}.
\end{align}
From a direct computation we see $
|w|^2-(u^2-1+v^2)^2=4u^2\ge 0,
$ and hence
$
|w|\ge |u^2-1+v^2|,
$
which gives 
$
|w|+\Re w \ge  |u^2-1+v^2| + 1 + u^2 -v^2 \ge 2u^2$
Hence, using \eqref{eq:PrincSquareRoot} we see
\[
\Re(\sqrt{1+z^2}+z)=\Re\sqrt{w}+u\ge |u|+u\ge 0,
\]
and thus
\[
|h_1(z)|
=
\exp\!\left(-\frac{\Re(\sqrt{1+z^2}+z)}{4}\right)
\le 1.
\]
\end{proof}

\begin{lemma}
\label{lem:RealPartBig}
Denoting the closed strip of the complex plane $
\overline S:=\{z\in\mathbb C:\ |\Im z|\le \tfrac12\},
$ we have for all $\theta\in[0,1/2)$ that there exists a constant  $C_\theta\ge 0$ such that for all $z\in \overline S$ with $|\Re z| \ge  C_\theta$ we have
\begin{align}
\label{eq:LowerBoundRealPart}
\Re\left(e^{(1+z^2)^\theta}\right) \ge \frac{1}{2} \exp\left(\frac{|\Re z|^{2\theta}}{2}\right).
\end{align}
\end{lemma}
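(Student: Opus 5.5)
Write $z=u+iv$ with $|v|\le \tfrac12$ and set $w:=1+z^2=1+u^2-v^2+2iuv$. The plan is to compute $w^\theta$ explicitly in polar form, using the principal branch: $w^\theta=|w|^\theta e^{i\theta\arg w}$. Then
\[
e^{w^\theta}=\exp\!\big(|w|^\theta\cos(\theta\arg w)\big)\,e^{i|w|^\theta\sin(\theta\arg w)},
\]
so that
\[
\Re\, e^{w^\theta}=\exp\!\big(|w|^\theta\cos(\theta\arg w)\big)\cos\!\big(|w|^\theta\sin(\theta\arg w)\big).
\]
It therefore suffices to control the modulus factor from below and the cosine of the phase from below by $\tfrac12$.

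\textbf{Step 1 (argument of $w$).} For $|u|\ge C$ large we have $\Re w\ge 1+u^2-\tfrac14>0$. Moreover $|\Im w|=2|u||v|\le |u|$, so
\[
|\arg w|\le \arctan\!\frac{|u|}{u^2+3/4}\le \frac{1}{|u|}.
\]

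\textbf{Step 2 (modulus).} Since $|w|\ge \Re w\ge u^2$, we get $|w|^\theta\ge |u|^{2\theta}$. We also need an upper bound: $|w|\le 1+u^2+\tfrac14+|u|\le 2u^2$ for $|u|$ large, hence $|w|^\theta\le 2^\theta |u|^{2\theta}$.

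\textbf{Step 3 (phase; this is the key step where $\theta<1/2$ is used).} Combining Steps 1 and 2,
\[
\big||w|^\theta\sin(\theta\arg w)\big|\le 2^\theta|u|^{2\theta}\cdot\frac{\theta}{|u|}=\theta\,2^\theta|u|^{2\theta-1}\longrightarrow 0
\quad\text{as }|u|\to\infty,
\]
because $2\theta-1<0$. Choose $C_\theta$ so that this quantity is at most $\pi/3$; then the cosine of the phase is at least $\tfrac12$. For $\theta=0$ the phase vanishes identically and the bound is trivial.

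\textbf{Step 4 (real exponent).} We have $\cos(\theta\arg w)\ge\cos(\theta/|u|)\ge \tfrac12$ once $|u|\ge 1$, so
\[
\exp\!\big(|w|^\theta\cos(\theta\arg w)\big)\ge \exp\!\big(|u|^{2\theta}/2\big).
\]
Multiplying the bounds from Steps 3 and 4 gives the claimed inequality \eqref{eq:LowerBoundRealPart}.

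The only genuine obstacle is Step 3. It fails for $\theta\ge 1/2$, since then the phase $|w|^\theta\sin(\theta\arg w)$ no longer tends to zero and the cosine factor can become negative. With $\theta<1/2$, everything reduces to the elementary estimates above, and $C_\theta$ is taken as the maximum of the finitely many thresholds imposed in Steps 1–4.
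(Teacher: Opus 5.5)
Your proof is correct and follows the paper's argument essentially step by step: polar form of the principal power, $|\arg(1+z^2)|\le 1/|\Re z|$ from $\Re(1+z^2)\ge(\Re z)^2$ and $|\Im(1+z^2)|\le|\Re z|$, two-sided bounds on $|1+z^2|$ (you use $2u^2$ where the paper uses $3(\Re z)^2$), and $\theta<1/2$ forcing the phase $|1+z^2|^\theta\sin(\theta\arg(1+z^2))$ to zero so that the cosine factor exceeds $\tfrac12$. As a minor aside, your closing remark overstates the failure at the endpoint: at $\theta=\tfrac12$ your own Step 3 bound gives a phase of at most $\tfrac{\sqrt2}{2}<\pi/3$, so the conclusion still holds there; only for $\theta>\tfrac12$ does the phase grow and the cosine change sign.
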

\begin{proof}
We use
\begin{align}
\label{eq:RealExponential}
\Re\left(e^{(1+z^2)^\theta}\right)  = e^{\Re((1+z^2)^\theta)} \cos(\Im((1+z^2)^\theta)).
\end{align}
and show in the following for $z\in \overline S$ that $\Im((1+z^2)^\theta)$ is small and $\Re((1+z^2))^\theta)$ is large for $|\Re z|$ large.
For that we write 
\begin{align*}
    R(z) &:= |(1+z^2)|, \qquad \phi(z) := \operatorname{arg}(1+z^2) 
\end{align*}
and hence 
\begin{align*}
    (1+z^2)^\theta = R^\theta(\nu) e^{i\theta \phi(z)} = R^{\theta}(z)\Big(\cos(\theta \phi(z)) + i\sin(\theta\phi(z))\Big)
\end{align*}
 We see using $|\Im z|\le 1/2$ and a direct computation that $\Re(1+z^2)\ge (\Re\nu)^2\ge 0$ and $|\Im(1+z^2)| \le |\Re z|$ and therefore
\begin{align}
\label{eq:phi(nu)UpperBound}
    \left|\phi(z)\right| = \left|\arctan\left(\frac{\Im\left(1+z^2\right)}{\Re\left(1+z^2\right)}\right)\right| \le \frac{1}{|\Re z|},
\end{align}
where we used $|\!\arctan(x)|\le |x|.$  By a direct computation we see for $|\Re z|\ge 1$ that \begin{align}
\label{eq:R(nu)bound}
  (\Re z)^2 \le R(z) \le 3(\Re z)^2
\end{align} and using $|\!\sin(x)| \le |x|$ together with \eqref{eq:phi(nu)UpperBound}, we get
\begin{align*}
   \left| \Im\left(((1+z^2)^\theta\right)\right| = R^\theta(z)|\sin(\theta\phi(z))|\le  \theta 3^\theta \, |\Re z|^{2\theta -1} \xrightarrow[|\Re z|\to\infty]{} 0 
\end{align*}
as $\theta<1/2.$ Therefore, we can pick $C_\theta\ge 0$ such that for all $|\Re z|\ge C_\theta$ we have
\begin{align}
\label{eq:CosImLowerBound}
\cos\left(\Im\left((1+z^2)^\theta\right)\right) \ge \frac{1}{2}.
\end{align}
Furthermore, using \eqref{eq:R(nu)bound} and \eqref{eq:phi(nu)UpperBound} again, we see, after possibly increasing $C_\theta,$ that for $|\Re z| \ge C_\theta$ we have
\begin{align}
    \Re((1+z^2)^\theta = R^{\theta}(z) \cos\left(\theta \phi(z)\right) \ge \frac{1}{2}|\Re z|^{2\theta}. 
\end{align}
Combining this with \eqref{eq:RealExponential} and  \eqref{eq:CosImLowerBound} finishes the proof.
\end{proof}

\begin{proof} [Proof of Lemma~\ref{lem:C_fMdeltaBounds}]

We start by realising that $\nu\to \hatfMdelta(\nu)$ is analytic\footnote{Here, we choose the branch such that $\sqrt{1+(\beta\nu)^2}$ and $(1+(\beta\nu)^2)^{\theta}$ are positive for $\nu\in\R.$} on the strip in the complex plane with $|\Im(\nu)|<1/\beta.$ Hence, for $a:=\frac{1}{2\beta}>0$ we can shift the contour of integration in $\fMdelta(t) = \frac{1}{2\pi}\int_\R \hatfMdelta(\nu)e^{-it\nu}d\nu$ for $t>0$ as $\nu\mapsto \nu - ia$ and for $t<0$ as $\nu\mapsto \nu+ia$ yielding 
\begin{align}
\label{eq:ExpDecayfMdelta(t)}
    |\fMdelta(t)| \le e^{-a|t|} \int_\R \left|\hatfMdelta(\nu-i\operatorname{sgn}(t) a)\right|d\nu.
\end{align}
We focus on bounding the integral on the right hand side.
From Lemma~\ref{lem:AaaahFm} we have $|\hatfM(\nu-i\operatorname{sgn}(t) a)| \le 1$
and therefore 
\begin{align}
\label{eq:RemainderIntegral}
   \nn \int_\R \left|\hatfMdelta(\nu-i\operatorname{sgn}(t)a)\right|d\nu &\le \int_\R \exp\left(-\delta\,\Re\left(e^{((1+(\beta(\nu-i\operatorname{sgn}(t)a))^2)^\theta}\right)\right)d\nu\\ &=\frac{1}{\beta}\int_\R \exp\left(-\delta\,\Re\left(e^{((1+(\nu-i\operatorname{sgn}(t)/2))^2)^\theta}\right)\right)d\nu
\end{align}
where we have used that $|e^z| = e^{\Re(z)}.$ 
We use Lemma~\ref{lem:RealPartBig} to see that there exists a constant $\nu_0(\theta)\ge 0$ only depending on $\theta$ such that for all $|\nu|\ge \nu_0(\theta)$ we have
\begin{align}
\label{eq:LowerBoundRealPart2}
\Re\left(e^{(1+(\nu-i\operatorname{sgn}(t)/2)^2)^\theta}\right) \ge \frac{1}{2} \exp\left(\frac{|\nu|^{2\theta}}{2}\right).
\end{align}
 Using \eqref{eq:LowerBoundRealPart2} with \eqref{eq:RemainderIntegral}, we get
\begin{align*}
    &\beta\int_\R \left|\hatfMdelta(\nu-i\operatorname{sgn}(t)/2)\right|d\nu \le 2\nu_0(\theta)+ \int_{|\nu|\ge \nu_0(\theta)} \exp\left(-\frac{\delta}{2}\exp\left(\frac{|\nu|^{2\theta}}{2}\right)\right)d\nu\\&\le 2\nu_0(\theta)+ (2\log\left(1/\delta)\right))^{\tfrac{1}{2\theta}} + \int_{\left\{|\nu|\ge (2\log\left(1/\delta)\right))^{1/(2\theta)}\right\}} \exp\left(-\frac{\delta}{2}\exp\left(\frac{|\nu|^{2\theta}}{2}\right)\right)d\nu.
\end{align*}
For the remainder of the proof, we bound the integral on the right-hand side of the last inequality. 

For $\nu>0$  consider the change of variable $y = \delta\exp\left(\nu^{2\theta}/2\right)$ which gives
\begin{align*}
  &\int_{\left\{|\nu|\ge (2\log\left(1/\delta)\right))^{1/(2\theta)}\right\}} \exp\left(-\frac{\delta}{2}\exp\left(\frac{|\nu|^{2\theta}}{2}\right)\right)d\nu =2\int^{\infty}_{(2\log\left(1/\delta)\right))^{1/(2\theta)}} \exp\left(-\frac{\delta}{2}\exp\left(\frac{\nu^{2\theta}}{2}\right)\right)d\nu\\&=\frac{2}{\theta}\int^{\infty}_{1} \frac{(2\log(y/\delta))^{\tfrac{1-2\theta}{2\theta}}}{y}\,e^{-y/2}dy  \le C_\theta\,\left( (\log(1/\delta))^{\tfrac{1-2\theta}{2\theta}} +1\right)
\end{align*}
for some finite constant $C_\theta\ge 0.$ Using \eqref{eq:ExpDecayfMdelta(t)} this shows \eqref{eq:fMdeltaPointWise}. Furthermore, using $\kappa\in(0,1/2)$ gives \eqref{eq:CfMdeltaBound}.

\end{proof}

\begin{proof}[Proof of Lemma~\ref{LemmGevreycond}]
Denote $$h_{\delta,\theta}(\nu):= \hatfMdelta(\nu/\beta) = \exp\left(-\tfrac{\sqrt{1+\nu^2}+\nu}{4}\right)\,
\exp\!\left(-\delta e^{(1+\nu^2)^\theta}\right). $$
We prove that \(h_{\delta,\theta}\) extends holomorphically and boundedly to a fixed complex strip, and then apply Cauchy's estimate. Set
\[
S_1:=\{z\in\mathbb C:\ |\Im z|<\tfrac 23\}.
\]
Write \(z=u+iv\). Then for \(z\in S\) we have $
\Re(1+z^2)=1+u^2-v^2 > 5/9$. In particular, the function $S_1\to \C,\ z\to 1+z^2$ does not meet the branch cut \((-\infty,0]\) and hence the principal branches
$
\sqrt{1+z^2}$
$(1+z^2)^\theta$
are holomorphic on \(S_1\). Therefore
\[
h_{\delta,\theta}(z)
=
\exp\!\left(-\frac{\sqrt{1+z^2}+z}{4}\right)\,
\exp\!\left(-\delta e^{(1+z^2)^\theta}\right)
\]
is holomorphic on \(S_1\). Next we show that \(h_{\delta,\theta}\) is bounded on the closed strip
\[
\overline S:=\{z\in\mathbb C:\ |\Im z|\le \tfrac12\}.
\]
From Lemma~\ref{lem:AaaahFm} we know that the first factor
\(
h_1(z):=\exp\!\left(-\frac{\sqrt{1+z^2}+z}{4}\right)
\)
satisfies 
$
|h_1(z)|\le 1$ for all $z\in \overline S$.

Now consider the second factor
$
h_2(z):=\exp\!\left(-\delta e^{(1+z^2)^\theta}\right).
$
From Lemma~\ref{lem:RealPartBig} we know that there exists $C_\theta$ such that for all $|\Re z| \ge C_\theta$ we have
\[
|h_2(z)|
=
\exp\!\left(-\delta\,\Re\!\left(e^{(1+z^2)^\theta}\right)\right)
\le
\exp\!\left(-\tfrac{\delta}{2}e^{\frac{|\Re z|^{2\theta}}{2}}\right) \le 1.
\]
As $h_2$ is continuous, we also have that $h_2(z)$ is uniformly bounded for $|\Re z|\le C_\theta$ which in summary gives $\sup_{z\in \overline S} |h_2(z)|<\infty.$ We have hence shown
\[
\xi_{\theta}:=\sup_{z\in \overline S}|h_{\delta,\theta}(z)|<\infty,
\]
where $\xi_{\theta}$ depends on $\theta$ but is independent on $\delta.$
Note that for all \(x\in\mathbb R\) the closed disc
$
B_{1/4}(x) := \left\{z\in\C\, :\, |z|\le 1/4 \right\}$
lies inside the open strip \(S\). Since \(h_{\delta,\theta}\) is holomorphic on \(S\), Cauchy's estimate gives
\[
|h_{\delta,\theta}^{(n)}(x)|
\le
4^n n!
\sup_{|z-x|\le \frac14}|h_{\delta,\theta}(z)|
\le
\xi_{\theta}\,(4n)^n.
\]
Taking the supremum over \(x\in\mathbb R\), we get
\[
\|h_{\delta,\theta}^{(n)}\|_{L^\infty(\mathbb R)}
\le \xi_{\theta}\ (4n)^n.
\]
Therefore, we have shown \(h_{\delta,\theta}\in \mathcal G^1_{\xi_{\theta},4}\), as claimed.

Next, we prove $\frac{d}{d\nu} h_{\delta,\theta} \in L^1(\R).$ We use the product rule $\frac{d}{d\nu} h_{\delta,\theta} = h_2 \frac{d}{d\nu}h_1 + h_1\frac{d}{d\nu}h_2$ and treat term individually. For the first term we use $|h_2(\nu)|\le 1$ for $\nu\in\R$ and furthermore \cite[Lemma 28]{ding2025efficient} which gives 
\begin{align*}
   \left\|h_2 \frac{d}{d\nu}h_1 \right\|_1 \le \left\|\frac{d}{d\nu}h_1 \right\|_1 \le c _1,
\end{align*}
for some $c_1\ge 0,$ which is, just as $h_1,$ independent of $\theta$ and $\delta.$ 
For the second part, assume without loss of generality $\delta>0$ and $\theta>0$ as otherwise $\frac{d}{d\nu} h_2=0,$ and note
\begin{align*}
    \frac{d}{d\nu} h_2(\nu) = -\frac{2\delta \theta \, \nu}{(1+\nu^2)^{1-\theta}} e^{(1+\nu^2)^\theta }\exp\left(-\delta e^{(1+\nu^2)^\theta}\right),
\end{align*}
from which we see
\begin{align*}
  \left| \frac{d}{d\nu} h_2(\nu)\right| = \begin{cases}
   -\frac{d}{d\nu} h_2(\nu),\quad\text{for} \, \nu\ge 0,\\
   \frac{d}{d\nu} h_2(\nu),\qquad\text{for} \, \nu<0.
   \end{cases} 
\end{align*}
Using further $|h_1(\nu)|\le 1,$ we finish the proof by noting
\begin{align*}
    \left\|h_1\frac{d}{d\nu} h_2\right\|_1 \le \left\|\frac{d}{d\nu} h_2\right\|_1 = -\int_0^\infty \frac{d}{d\nu} h_2(\nu) d\nu + \int_{-\infty}^0 \frac{d}{d\nu} h_2(\nu) d\nu = 2 h_2(0) = 2 e^{-\delta e}\le 2,
\end{align*}
where in the second to last equality we have used that $h_2(\nu)\xrightarrow[|\nu|\to \infty]{}0.$
\end{proof}

\bin{By standard arguments\footnote{To see this note $|\fMdelta(t)| \le \frac{1}{2\pi}\|\hatfMdelta\|_1.$  Furthermore, by  partially integrating twice $\int \hatfMdelta(\nu) e^{-it\nu}d\nu = \tfrac{-1}{t^2} \int \hatfMdelta''(\nu) e^{-it\nu}d\nu$, where we used that the boundary terms vanish as $\delta>0,$ and therefore $|\fMdelta(t)| \le \tfrac{1}{t^2}\|\hatfMdelta''\|_1.$ Combining both, we see for $R>0$ that $\|\fMdelta\|_1 \le \frac{1}{2\pi}\left(\int_{|t|\le R} \|\hatfMdelta\|_1 dt + \int_{|t|> R} \tfrac{\|\hatfMdelta''\|_1 }{t^2}dt\right) = \tfrac{1}{\pi}\left(R\|\hatfMdelta\|_1+\tfrac{\|\hatfMdelta''\|_1}{R}\right).$ Choosing $R = \sqrt{\|\hatfMdelta''\|_1/\|\hatfMdelta\|_1}$ yields \eqref{eq:StandardL1Bound}.}
\begin{align}
\label{eq:StandardL1Bound}
\left\|\fMdelta\right\|_1 \le \frac{2}{\pi}\sqrt{\left\|\hatfMdelta\right\|_1,\left\|\hatfMdelta''\right\|_1}
\end{align}}

\bibliographystyle{unsrtnat}
\bibliography{ref}

\end{document}